\def\confversion{0}
\def\ifconf{\ifnum\confversion=1}
\def\ifnotconf{\ifnum\confversion=0}
\def\showauthornotes{0}
\def\showkeys{0}
\def\showdraftbox{0}
\definecolor{darkred}{rgb}{0.5,0,0}
\definecolor{darkgreen}{rgb}{0,0.35,0}
\definecolor{darkblue}{rgb}{0,0,0.55}
\newcommand{\Authornote}[2]{{\sf\small\color{red}{[#1: #2]}}}
\newcommand{\Authorcomment}[2]{{\sf \small\color{gray}{[#1: #2]}}}
\newcommand{\Authorfnote}[2]{\footnote{\color{red}{#1: #2}}}
\newcommand{\Authornote}[2]{}
\newcommand{\Authorcomment}[2]{}
\newcommand{\Authorfnote}[2]{}
\newcommand{\draftbox}{\begin{center}
  \fbox{%
    \begin{minipage}{2in}%
      \begin{center}%
        \begin{Large}%
          \textsc{Working Draft}%
        \end{Large}\\
        Please do not distribute%
      \end{center}%
    \end{minipage}%
  }%
\end{center}
\vspace{0.2cm}}
\newcommand{\draftbox}{}
\newtheorem{theorem}{Theorem}[section]
\newtheorem{definition}[theorem]{Definition}
\newtheorem{lemma}[theorem]{Lemma}
\newtheorem{remark}[theorem]{Remark}
\newtheorem{corollary}[theorem]{Corollary}
\newtheorem{claim}[theorem]{Claim}
\newtheorem{fact}[theorem]{Fact}
\newtheorem{algo}[theorem]{Algorithm}
\newenvironment{algorithm}[3]
        {\noindent\begin{boxedminipage}{\textwidth}\begin{algo}[#1]\ \par
        {\begin{tabularx}{\textwidth}{r X}
        \textbf{Input} & #2\\
        \textbf{Output} & #3
        \end{tabularx}\par\enskip}}
        {\end{algo}\end{boxedminipage}}
\def\FullBox{\hbox{\vrule width 6pt height 6pt depth 0pt}}
\def\qed{\ifmmode\qquad\FullBox\else{\unskip\nobreak\hfil
\penalty50\hskip1em\null\nobreak\hfil\FullBox
\parfillskip=0pt\finalhyphendemerits=0\endgraf}\fi}
\def\qedsketch{\ifmmode\Box\else{\unskip\nobreak\hfil
\penalty50\hskip1em\null\nobreak\hfil$\Box$
\parfillskip=0pt\finalhyphendemerits=0\endgraf}\fi}
\def\to{\rightarrow}
\def\epsilon{\varepsilon}
\def\phi{\varphi}
\def\implies{\Rightarrow}
\renewcommand{\bar}{\overline} 
\newcommand{\ie}{i.e.,\xspace}
\newcommand{\etal}{et al.\xspace}
\newcommand{\mper}{\,.}
\newcommand{\mcom}{\,,}
\newcommand{\R}{{\mathbb R}}
\newcommand{\E}{{\mathbb E}}
\newcommand{\N}{{\mathbb{N}}}
\newcommand{\F}{{\mathbb F}}
\newcommand{\FF}{{\mathbb F}}
\newcommand{\abs}[1]{\ensuremath{\left\lvert #1 \right\rvert}}
\newcommand{\norm}[1]{\ensuremath{\left\lVert #1 \right\rVert}}
\newcommand{\smallnorm}[1]{\ensuremath{\lVert #1 \rVert}}
\newcommand{\mydot}[2]{\ensuremath{\left\langle #1, #2 \right\rangle}}
\newcommand{\ip}[2] {\ensuremath{\langle #1 , #2 \rangle}}
\newcommand{\one}{{\mathbf{1}}}
\newcommand{\qary}{[q]}
\newcommand{\Esymb}{\mathbb{E}}
\newcommand{\Psymb}{\mathbb{P}}
\newcommand{\Varsymb}{\mathrm{Var}}
\DeclareMathOperator*{\ExpOp}{\Esymb}
\DeclareMathOperator*{\ProbOp}{\Psymb}
\def\Ex#1{%
    \ProbabilityRender{\Esymb}{#1}%
}
\def\Var#1{%
    \ProbabilityRender{\Varsymb}{#1}%
}
\def\ProbabilityRender#1#2{%fancy probability command
  \@ifnextchar\bgroup%
  {\renderwithdist{#1}{#2}}
   {\singlervrender{#1}{#2}}
}
\def\singlervrender#1#2{%
   \ensuremath{\mathchoice
       {{#1}\left[ #2 \right]}
       {{#1}[ #2 ]}
       {{#1}[ #2 ]}
       {{#1}[ #2 ]}
   }
}
\def\renderwithdist#1#2#3{%
   \@ifnextchar\bgroup
   {\superfancyrender{#1}{#2}{#3}}
   {\ensuremath{\mathchoice
      {\underset{#2}{#1}\left[ #3 \right]}
      {{#1}_{#2}[ #3 ]}
      {{#1}_{#2}[ #3 ]}
      {{#1}_{#2}[ #3 ]}
     }
   }
}
\def\superfancyrender#1#2#3#4#5{
   \ensuremath{\mathchoice
      {\underset{#1}{{#1}}\left#4 #3 \right#5}
      {{#1}_{#2}#4 #3 #5}
      {{#1}_{#2}#4 #3 #5}
      {{#1}_{#2}#4 #3 #5}
   }
}
\newfont{\inhead}{eufm10 scaled\magstep1}
\newcommand{\deffont}{\sf}
\newcommand{\poly}{{\mathrm{poly}}}
\newcommand{\polylog}{{\mathrm{polylog}}}
\DeclareMathOperator\supp{Supp}
\DeclareMathOperator{\cov}{\operatorname {Cov}}
\DeclareMathOperator*{\argmax}{\arg\!\max}
\renewcommand{\bar}[1]{\ensuremath{\overline{#1}}}
\newcommand{\problemmacro}[1]{\textsf{#1}}
\newcommand{\maxkcsp}{\problemmacro{MAX k-CSP}\xspace}
\newcommand{\inparen}[1]{\left(#1\right)}             %\inparen{x+y}  is (x+y)
\newcommand{\inbraces}[1]{\left\{#1\right\}}           %\inbrace{x+y}  is {x+y}
\newcommand{\insquare}[1]{\left[#1\right]}             %\insquare{x+y}  is [x+y]
\DeclareSymbolFont{extraup}{U}{zavm}{m}{n}
\DeclareMathSymbol{\varheart}{\mathalpha}{extraup}{86}
\DeclareMathSymbol{\vardiamond}{\mathalpha}{extraup}{87}
\def\RR{\mathbb R}
\def\matr#1{\mathsf{#1}}
\def\rand#1{\mathbf{#1}}
\def\rv#1{\rand #1}
\def\conj{\overline}
\def\Ray{\mathcal R}
\def\One{\mathbb 1}
\def\one{\mathbf 1}
\def\Pred{\mathcal P}
\def\Cc{\mathcal C}
\def\OPT{\mathsf{OPT}}
\def\SAT{\mathsf{SAT}}
\def\SDP{\mathsf{SDP}}
\def\Ins{\mathfrak I}
\def\assn{\sigma}
\def\tree{\mathcal T}
\def\Aye{\matr A}
\def\Ess{\matr S}
\def\Emm{\matr M}
\def\Jay{\matr J}
\def\aye{\mathfrak a}
\def\ess{\mathfrak s}
\def\tee{\mathfrak t}
\def\bee{\mathfrak b}
\DeclareMathOperator{\SwapT}{Swap}
\DeclareMathOperator{\Rank}{rank}
\DeclareMathOperator{\PExp}{\widetilde \E}
\DeclarePairedDelimiter\set{\lbrace}{\rbrace}
\DeclarePairedDelimiter\parens{\lparen}{\rparen}
\DeclarePairedDelimiter\sqbr{[ }{]}
\DeclareMathOperator{\bias}{bias}
\DeclareMathOperator{\lift}{lift}
\DeclareMathOperator{\dsum}{dsum}
\DeclareMathOperator{\dprod}{dprod}
\newcommand{\wswap}[2]{\Ess_{#1,#2}^{\circ}}
\newcommand{\fswap}[2]{\mathfrak{S}_{#1,#2}}
\newcommand{\lict}[1]{$(1/2-#1)$-\text{close}}
\begin{document}

\title{List Decoding of Direct Sum Codes}

\author{
Vedat Levi Alev\thanks{ Supported by NSERC Discovery Grant 2950-120715, NSERC Accelerator Supplement 2950-120719, and partially supported by NSF awards CCF-1254044 and CCF-1718820. {\tt University of Waterloo}. {\tt vlalev@uwaterloo.ca}.} 
\and 
Fernando Granha Jeronimo\thanks{Supported in part by NSF grants CCF-1254044 and CCF-1816372.  {\tt University of Chicago}. {\tt granha@uchicago.edu}. } 
\and
Dylan Quintana\thanks{{\tt University of Chicago.} {\tt dquintana@uchicago.edu}}
\and
Shashank Srivastava\thanks{{\tt TTIC.} {\tt shashanks@ttic.edu }}
\and
Madhur Tulsiani\thanks{Supported by NSF grants CCF-1254044 and CCF-1816372. {\tt TTIC}. {\tt madhurt@ttic.edu}} 
}

\setcounter{page}{0}

\date{}

\maketitle
\draftbox
\thispagestyle{empty}

We consider families of codes obtained by "lifting" a base code
$\mathcal{C}$ through operations such as $k$-XOR applied to "local
views" of codewords of $\mathcal{C}$, according to a suitable
$k$-uniform hypergraph. The $k$-XOR operation yields the direct sum
encoding used in works of [Ta-Shma, STOC 2017] and [Dinur and Kaufman,
FOCS 2017].

We give a general framework for list decoding such lifted codes, as
long as the base code admits a unique decoding algorithm, and the
hypergraph used for lifting satisfies certain expansion properties. We
show that these properties are indeed satisfied by the collection of
length $k$ walks on a sufficiently strong expanding graph, and by
hypergraphs corresponding to high-dimensional expanders.
Instantiating our framework, we obtain list decoding algorithms for
direct sum liftings corresponding to the above hypergraph families.
Using known connections between direct sum and direct product, we also
recover (and strengthen) the recent results of Dinur et al. [SODA
2019] on list decoding for direct product liftings.

Our framework relies on relaxations given by the Sum-of-Squares (SOS)
SDP hierarchy for solving various constraint satisfaction problems
(CSPs). We view the problem of recovering the closest codeword to a
given (possibly corrupted) word, as finding the optimal solution to an
instance of a CSP. Constraints in the instance correspond to edges of
the lifting hypergraph, and the solutions are restricted to lie in the
base code $\mathcal{C}$. We show that recent algorithms for
(approximately) solving CSPs on certain expanding hypergraphs by some
of the authors also yield a decoding algorithm for such lifted codes.

We extend the framework to list decoding, by requiring the SOS
solution to minimize a convex proxy for negative entropy. We show that
this ensures a covering property for the SOS solution, and the
"condition and round" approach used in several SOS algorithms can then
be used to recover the required list of codewords.

\newpage

\ifnotconf
\pagenumbering{roman}
\tableofcontents
\clearpage
\fi

\pagenumbering{arabic}
\setcounter{page}{1}

\section{Introduction}\label{sec:intro}

We consider the problem of list decoding binary codes obtained by
starting with a binary base code $\Cc$ and amplifying its distance by
``lifting'' $\Cc$ to a new code $\Cc'$ using an expanding or
pseudorandom structure.
Examples of such constructions include \emph{direct products} where
one lifts (say) $\Cc \subseteq \F_2^n$ to $\Cc' \subseteq
(\F_2^k)^{n^k}$ with each position in $y \in \Cc'$ being a $k$-tuple
of bits from $k$ positions in $z \in \Cc$.
Another example is \emph{direct sum} codes where
$\Cc' \subseteq \F_2^{n^k}$ and each position in $y$ is the parity of
a $k$-tuple of bits in $z \in \Cc$.
Of course, for many applications, it is interesting to consider a
small ``pseudorandom'' set of $k$-tuples, instead of considering the
complete set of size $n^k$.

This kind of distance amplification is well known in coding
theory~\cite{ABNNR92, IW97, GI01, Ta-Shma17} and it can draw on the
vast repertoire of random and pseudorandom expanding
objects \cite{HooryLW06,Lubotzky18}.
Such constructions are also known to have several applications to the
theory of Probabilitically Checkable Proofs
(PCPs) \cite{ImpagliazzoKW09, DinurS14, DDGEKS15, Chan16, A02:icm}.
However, despite having several useful properties, it might not always
be clear how to \emph{decode} the codes resulting from such
constructions, especially when constructed using sparse pseudorandom
structures.
An important example of this phenomenon is Ta-Shma's explicit
construction of binary codes of arbitrarily large distance near the
(non-constructive) Gilbert-Varshamov bound~\cite{Ta-Shma17}. Although
the construction is explicit, efficient decoding is not known.
Going beyond unique-decoding algorithms, it is also useful to have
efficient list-decoding algorithms for complexity-theoretic
applications \cite{S00,G01, STV99, T04:codes}.

The question of list decoding such pseudorandom constructions of
direct-product codes was considered by Dinur \etal \cite{DinurHKNT19},
extending a unique-decoding result of Alon \etal \cite{ABNNR92}. While
Alon \etal proved that the code is unique-decodable when the lifting
hypergraph (collection of $k$-tuples) is a good ``sampler'', Dinur \etal
showed that when the hypergraph has additional structure (which they
called being a ``double sampler'') then the code is also list
decodable. They also posed the question of understanding structural
properties of the hypergraph that might yield even unique decoding
algorithms for the \emph{direct sum} based liftings.

We develop a generic framework to understand properties of the
hypergraphs under which the lifted code $\Cc'$ admits efficient list
decoding algorithms, assuming only efficient unique decoding
algorithms for the base code $\Cc$. Formally, let $X$ be a
downward-closed hypergraph (simplicial complex) defined by taking the
downward closure of a $k$-uniform hypergraph, and let
$g:\F_2^k \to \F_2$ be any boolean function. $X(i)$ denotes the
collection of sets of size $i$ in $X$ and $X(\leq d)$ the collection
of sets of size at most $d$.  We consider the lift $\Cc'
= \lift_{X(k)}^g(\Cc)$, where $\Cc \subseteq \F_2^{X(1)}$ and
$\Cc' \subseteq \F_2^{X(k)}$, and each bit of $y \in \Cc'$ is obtained
by applying the function $g$ to the corresponding $k$ bits of
$z \in \Cc$. We study properties of $g$ and $X$ under which this
lifting admits an efficient list decoding algorithm.

We consider two properties of this lifting, \emph{robustness}
and \emph{tensoriality}, formally defined later, which are sufficient
to yield decoding algorithms. The first property (robustness)
essentially requires that for any two words in $\F_2^{X(1)}$ at a
moderate distance, the lifting amplifies the distance between them.
While the second property is of a more technical nature and is
inspired by the Sum-of-Squares (SOS) SDP hierarchy used for our
decoding algorithms, it is implied by some simpler combinatorial
properties. Roughly speaking, this combinatorial property,
which we refer to as \emph{splittability}, requires
that the graph on (say) $X(k/2)$ defined by connecting $\ess, \tee \in
X(k/2)$ if $\ess \cap \tee = \emptyset$ and $\ess \cup \tee \in X(k)$,
is a sufficiently good expander (and similarly for graphs on $X(k/4)$,
$X(k/8)$, and so on). Splittability requires that the $k$-tuples can be
(recursively) split into disjoint pieces such that at each step the
graph obtained between the pairs of pieces is a good expander.

\noindent \textbf{Expanding Structures.} \enspace
We instantiate the above framework with two specific structures: the
collection of $k$-sized hyperedges of a high-dimensional expander
(HDX) and the collection of length $k$ walks~\footnote{Actually, we
will be working with length $k-1$ walks which can be represented as
$k$-tuples, though this is an unimportant technicality.
The reason is to be consistent in the number of vertices
(allowing repetitions) with $k$-sized hyperedges.} on an expander graph.  HDXs are
downward-closed hypergraphs satisfying certain expansion properties.
We will quantify this expansion using Dinur and Kaufman's notion of a
$\gamma$-HDX~\cite{DinurK17}.

HDXs were proved to be splittable by some of the authors \cite{AJT19}.
For the expander walk instantiation, we consider a 
variant of splittability where a walk of length $k$ is split into two halves,
which are walks of length $k/2$ (thus we do \emph{not} consider all
$k/2$ size subsets of the walk). The spectrum of the graphs obtained
by this splitting can easily be related to that of the underlying
expander graph. In both cases, we take the function $g$ to be $k$-XOR
which corresponds to the direct sum lifting. We also obtain results
for direct product codes via a simple (and standard) reduction to the
direct sum case.

\noindent \textbf{Our Results.} \enspace
Now we provide a quantitative version of our main result. For this,
we split the main result into two cases (due to their difference in
parameters): HDXs and length $k$ walks on expander graphs. We start
with the former expanding object.

\begin{theorem}[Direct Sum Lifting on HDX (Informal)]\label{theo:dsum_hdx_intro}
  Let $\epsilon_0 < 1/2$ be a constant and $\epsilon \in (0,\epsilon_0)$.  Suppose
  $X(\le d)$ is a $\gamma$-HDX on $n$ vertices with $\gamma \le (\log(1/\epsilon))^{-O( \log(1/\epsilon))}$
  and $d =\Omega\left((\log(1/\epsilon))^2/\epsilon^2\right)$.

  For every linear code $\mathcal{C}_1 \subset \mathbb{F}_2^n$ with relative
  distance $\ge 1/2-\epsilon_0$, there exists a direct sum lifting
  $\mathcal{C}_k \subset \mathbb{F}_2^{X(k)}$
  with $k = O\left(\log(1/\epsilon)\right)$ and relative distance $\ge 1/2-\epsilon^{\Omega_{\epsilon_0}(1)}$
  satisfying the following:
  \begin{itemize}
    \item{[Efficient List Decoding]} If $\tilde{y}$ is \lict{\epsilon} to $\mathcal{C}_k$, then we can compute the list of all the codewords of
                                     $\mathcal{C}_k$ that are \lict{\epsilon} to $\tilde{y}$ in time $n^{\epsilon^{-O\left(1\right)}} \cdot f(n)$,
                                     where $f(n)$ is the running time of a unique decoding algorithm for $\mathcal{C}_1$.
    \item{[Rate]} The rate \footnote{In the rate computation, $X(k)$ is viewed as a multi-set where each $\ess \in X(k)$ is repeated a certain number of times for technical reasons.} $r_k$ of $\mathcal{C}_k$ is  $r_k = r_1 \cdot \abs{X(1)}/\abs{X(k)}$,
                  where $r_1$ is the rate of $\mathcal{C}_1$.
  \end{itemize}
\end{theorem}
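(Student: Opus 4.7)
The plan is to instantiate the paper's general framework: reduce list decoding of $\mathcal{C}_k$ to verifying two properties of the $k$-XOR lifting on $X(k)$, namely \emph{robustness} (distance amplification) and \emph{tensoriality} (a technical property on pseudo-distributions produced by the SOS hierarchy). Both will be derived from the \emph{splittability} of the HDX, which was established for $\gamma$-HDXs in \cite{AJT19} for $\gamma$ small enough in terms of the target spectral parameter. The hierarchy of graphs obtained by splitting sets in $X(k), X(k/2), X(k/4), \ldots$ will be shown to have second eigenvalue bounded by a function of $\gamma$, which we pick small enough (namely $\gamma \leq (\log(1/\epsilon))^{-O(\log(1/\epsilon))}$) so that splittability holds at all $\log k = \log \log(1/\epsilon)$ levels of the recursion.

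First I would prove distance amplification. Given two distinct codewords $z, z' \in \mathcal{C}_1$, their difference $z - z'$ is a vector of Hamming weight at least $(1/2-\epsilon_0)n$, and the relative Hamming weight of the lifted difference $\lift^{\oplus}_{X(k)}(z-z')$ equals $\Pr_{\ess \in X(k)}[\oplus_{i\in\ess}(z_i-z'_i) = 1]$. Using the splittability of $X(k)$ (an expander mixing style bound iterated $\log k$ times), this probability is $\tfrac{1}{2} - \tfrac{1}{2}(1-2\delta)^k \pm \eta$ where $\delta \geq 1/2-\epsilon_0$ and $\eta$ is a small error coming from the expansion parameter at each split. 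Choosing $k = O(\log(1/\epsilon))$ and the HDX expansion small enough drives this to $1/2 - \epsilon^{\Omega_{\epsilon_0}(1)}$, which is the claimed distance of $\mathcal{C}_k$ and, by the same calculation applied to the corrupted word, also establishes the robustness condition of the framework.

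Next I would set up the SOS-based list decoding. Given $\widetilde{y}$ that is $\tfrac{1}{2}-\epsilon$ close to $\mathcal{C}_k$, I consider the CSP whose variables index $X(1)$, whose constraints (indexed by $X(k)$) enforce the $k$-XOR values observed in $\widetilde{y}$, and whose feasible assignments are restricted to lie in $\mathcal{C}_1$. I then solve an SOS relaxation of constant degree $D = \poly(1/\epsilon)$ augmented with a convex negative-entropy proxy on the local distributions, and invoke the framework's main rounding lemma: splittability of $X$ implies that the pseudo-distribution is close in statistical distance on every constant-size face to a genuine distribution, and the entropy proxy guarantees a \emph{covering} property saying that every codeword of $\mathcal{C}_k$ at distance $\tfrac{1}{2}-\epsilon$ from $\widetilde{y}$ receives noticeable pseudo-mass. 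Conditioning on a $\poly(1/\epsilon)$-sized set of variables and rounding each branch produces a candidate word in $\mathbb{F}_2^{X(1)}$, which we decode to $\mathcal{C}_1$ using the assumed unique decoder (running in time $f(n)$) and then lift and test against $\widetilde{y}$; enumerating branches yields the full list. The runtime $n^{\epsilon^{-O(1)}} \cdot f(n)$ reflects solving an SOS of size $n^{O(D)}$ and invoking the base decoder per branch.

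Finally, the rate bound is immediate: since $k$-XOR is linear, the lifting map $\mathbb{F}_2^{X(1)} \to \mathbb{F}_2^{X(k)}$ is $\mathbb{F}_2$-linear, and it is injective on $\mathcal{C}_1$ provided $\mathcal{C}_k$ has positive relative distance (which we have just established), so $\dim \mathcal{C}_k = \dim \mathcal{C}_1$ and the rate is $r_1 \cdot |X(1)|/|X(k)|$ as stated. The main obstacle I anticipate is the tight coupling between the HDX parameter $\gamma$ and the number of splitting levels: one needs to track how spectral error accumulates across $\log k$ recursive splits, in the distance amplification computation, in the tensoriality/covering proof, and in the SOS rounding, and to verify that the single bound $\gamma \leq (\log(1/\epsilon))^{-O(\log(1/\epsilon))}$ with $d = \Omega((\log(1/\epsilon))^2/\epsilon^2)$ suffices for all three uses simultaneously.
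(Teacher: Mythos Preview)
Your high-level plan matches the paper's framework, but two of your technical routes diverge from what the paper actually does.

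\textbf{Robustness (parity sampling).} You derive distance amplification by iterating an expander-mixing bound across the $\log k$ levels of the splitting tree. The paper does \emph{not} do this for HDXs. Instead, it proves parity sampling (Lemma~4.8) via the sampler property of the bipartite inclusion graph $M_{1,d}$ between $X(1)$ and $X(d)$: one first samples a top face $\tee\in X(d)$, observes that the restriction $X|_\tee$ is a complete complex on $d$ vertices, and then uses the elementary fact (Claim~4.6) that on a complete complex the $k$-XOR bias is approximately $\bias(z|_\tee)^k$. The sampler property of $M_{1,d}$ (which follows from $\sigma_2(M_{1,d})^2\le 1/d+O(d\gamma)$) controls the deviation of $\bias(z|_\tee)$ from the global bias. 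This argument is what drives the requirement $d=\Omega((\log(1/\epsilon))^2/\epsilon)$ and is conceptually unrelated to splittability. Your splittability-based route is the one the paper uses for expander walks (where it cites Ta-Shma), and it may well work for HDXs with comparable parameters, but you would have to check the accumulated spectral error carefully; the paper sidesteps this entirely.

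\textbf{The SOS program and the role of $\mathcal{C}_1$.} You write that feasible assignments in the CSP are ``restricted to lie in $\mathcal{C}_1$.'' The paper's SOS program (Section~6.2) imposes \emph{no} such restriction: the only problem-specific constraint is the agreement constraint $\E_{\ess}\widetilde{\E}[\tilde y_\ess\cdot \rv Y_\ess]\ge 2\sqrt{\epsilon}$. The code $\mathcal{C}_1$ enters only \emph{after} rounding, in the Cover Purification stage (Section~6.6.4): rounding produces an arbitrary $z'\in\{\pm1\}^{X(1)}$ whose lift has weak correlation ($\ge 2\epsilon$) with some list element; robustness of the lifting then forces $z'$ or $-z'$ to lie inside the unique-decoding ball of $\mathcal{C}_1$, and one runs the unique decoder there. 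This separation is what lets the framework treat the SOS/rounding machinery and the base code completely modularly.

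The rest of your outline (entropy proxy for covering, conditioning-and-rounding, runtime accounting, rate via injectivity) is in line with the paper.
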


A consequence of this result is a method of decoding the direct
product lifting on a HDX via a reduction to the direct sum case.

\begin{corollary}[Direct Product Lifting on HDX (Informal)]\label{cor:dprod_hdx_intro}
Let $\epsilon_0 < 1/2$ be a constant and $\epsilon > 0$. Suppose $X(\leq d)$ is a $\gamma$-HDX on $n$ vertices with
$\gamma \leq (\log(1/\epsilon))^{-O(\log(1/\epsilon))}$ and $d = \Omega((\log(1/\epsilon))^2/\epsilon^2)$.

For every linear code $\mathcal C_1 \subset \F_2^n$ with relative distance $\geq 1/2 - \epsilon_0$, there exists a
direct product encoding $\mathcal C_{\ell} \subset (\F_2^{\ell})^{X(\ell)}$ with $\ell = O(\log(1/\epsilon))$ that can
be efficiently list decoded up to distance $(1 - \epsilon)$.
\end{corollary}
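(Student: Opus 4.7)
The plan is to reduce direct product list decoding to the direct sum list decoding provided by \cref{theo:dsum_hdx_intro} using the standard ``subset XOR'' trick. For each non-empty $S \subseteq [\ell]$, define the projection $\pi_S(y)_e = \bigoplus_{j \in S} y_{e,j}$ for $y \in (\F_2^\ell)^{X(\ell)}$. If $y$ is the direct product codeword of $z \in \mathcal{C}_1$, then $\pi_S(y)_e = \bigoplus_{j \in S} z_{e_j}$ is a direct-sum-style codeword of arity $|S|$ indexed by $X(\ell)$.

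Next, I would establish the following averaging bound. For any nonzero error $b \in \F_2^\ell$, a uniformly random non-empty $S \subseteq [\ell]$ satisfies $\Pr_S[\langle b, \mathds{1}_S \rangle = 0] = (2^{\ell-1}-1)/(2^\ell - 1) \le \nicefrac{1}{2}$. Therefore, if $\tilde{y}$ agrees with $y$ (as $\ell$-tuples) on at least an $\epsilon$ fraction of positions,
\[
\E_S \Pr_e \bigl[\pi_S(\tilde{y})_e = \pi_S(y)_e\bigr] \;\geq\; \epsilon + (1-\epsilon)\cdot\frac{2^{\ell-1}-1}{2^{\ell}-1} \;\geq\; \frac{1}{2} + \frac{\epsilon}{3},
\]
for $\ell = \Theta(\log(1/\epsilon))$. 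Hence some non-empty $S^*$ yields a $\pi_{S^*}(\tilde{y})$ that is \lict{\epsilon/3} to $\pi_{S^*}(y)$, which is within the regime handled by \cref{theo:dsum_hdx_intro}.

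The algorithm then enumerates all $2^\ell - 1 = \poly(1/\epsilon)$ non-empty $S$, computes $\pi_S(\tilde{y})$, invokes the direct sum list decoder of \cref{theo:dsum_hdx_intro} on each, and for every candidate $z \in \mathcal{C}_1$ returned by some invocation, retains $z$ if its direct product encoding is within direct product distance $1-\epsilon$ of $\tilde{y}$. This adds only a $\poly(1/\epsilon)$-factor overhead, preserving the running time bound $n^{\epsilon^{-O(1)}}\cdot f(n)$ inherited from \cref{theo:dsum_hdx_intro}.

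The main obstacle will be verifying that each of the $S$-XOR liftings actually fits into the framework of \cref{theo:dsum_hdx_intro}. The cleanest route is to recast the $S$-XOR lifting on $X(\ell)$ as a direct sum lifting of arity $|S|$ on the $|S|$-uniform hypergraph obtained by projecting each $\ell$-face onto its $S$-coordinates; since faces and links of a $\gamma$-HDX are themselves $\gamma$-HDXs (with a mild loss in parameters), this projected hypergraph inherits the splittability, hence the tensoriality, hypotheses required by the framework. Alternatively, one can note that \cref{theo:dsum_hdx_intro} (as proved in the body of the paper) allows the function $g$ to be any fixed XOR pattern on a subset of the $\ell$ coordinates with no change to the argument, so the list decoder applies directly to $\pi_S(\tilde{y})$ for every $S$.
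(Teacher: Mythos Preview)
Your subset-XOR averaging is exactly the paper's starting point, but the execution diverges in two places that matter.

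First, your averaging only yields \emph{some} non-empty $S^*$ with agreement $\geq 1/2 + \epsilon/3$; it does not control $|S^*|$. If $|S^*|$ happens to be small (say $|S^*|=1$), the $S^*$-XOR is not a parity sampler and neither of your two routes into \cref{theo:dsum_hdx_intro} applies, so that codeword could be missed entirely. The paper inserts a Chernoff step: a uniformly random subset of an $\ell$-set has size concentrated in a window of width $O(\sqrt{\ell \log(1/\epsilon)})$ around $\ell/2$, so subsets outside that window contribute negligibly to the average and one may restrict to sizes $k$ in that window. This restores the parity-sampling hypothesis and also cuts the number of decoder calls from $2^\ell-1$ to $O(\sqrt{\ell\log(1/\epsilon)})$.

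Second, faces of $X(\ell)$ are \emph{unordered} sets, so ``the $S$-coordinates of a face'' is not well-defined; even after imposing orderings, the projected hypergraph is not $X(|S|)$ and does not obviously inherit HDX structure, so your first route does not go through. The paper's reduction is different and avoids this: it produces a corrupted word directly on $X(k)$. Using $D$-flatness of $\Pi_\ell$, it duplicates $k$-faces so that $\Pi_k$ becomes uniform and assigns to each copy $\ess$ a containing $\ell$-face $\tee_\ess$ (distributed as $\Pi_\ell$), then sets $(\rho_k(\tilde{x}))_\ess = \sum_{i\in\ess}(\tilde{x}_{\tee_\ess})_i$. The resulting $\rho_k(\tilde{x})\in\F_2^{X(k)}$ is $(1/2-\Theta(\epsilon))$-close to $\dsum_{X(k)}(z)$, and \cref{theo:dsum_hdx_intro} applies verbatim because $X(k)$ is a level of the original HDX.
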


\begin{remark}
  List decoding the direct product lifting was first established by
  Dinur et al. in \cite{DinurHKNT19} using their notion of double samplers.
  Since constructions of
  double samplers are only known using HDXs, we can compare some
  parameters. In our setting, we obtain $d =
  O(\log(1/\epsilon)^2/\epsilon^2)$ and $\gamma =
  (\log(1/\epsilon))^{-O(\log(1/\epsilon))}$ whereas
  in~\cite{DinurHKNT19} $d = O(\exp(1/\epsilon))$ and $\gamma =
  O(\exp(-1/\epsilon))$.
\end{remark}

Given a graph $G$, we denote by $W_G(k)$ the collection of
all length $k-1$ walks of $G$, which plays the role of the local views
$X(k)$. If $G$ is sufficiently expanding, we have the following
result.

\begin{theorem}[Direct Sum Lifting on Expander Walks (Informal)]\label{theo:dsum_expander_walk_intro}
  Let $\epsilon_0 < 1/2$ be a constant and $\epsilon \in (0,\epsilon_0)$.  Suppose
  $G$ is a $d$-regular $\gamma$-two-sided spectral expander graph on
  $n$ vertices with $\gamma \le \epsilon^{O(1)}$.

  For every linear code $\mathcal{C}_1 \subset \mathbb{F}_2^n$ with relative distance $\ge 1/2-\epsilon_0$,
  there exists a direct sum encoding $\mathcal{C}_k \subset \mathbb{F}_2^{W_G(k)}$ 
  with $k = O\left(\log(1/\epsilon)\right)$ and relative distance $\ge 1/2-\epsilon^{\Omega_{\epsilon_0}(1)}$ satisfying the following:

  \begin{itemize}
    \item{[Efficient List Decoding]} If $\tilde{y}$ is \lict{\epsilon} to $\mathcal{C}_k$, then we can compute the list of all the codewords of
                                     $\mathcal{C}_k$ that are \lict{\epsilon} to $\tilde{y}$ in time $n^{\epsilon^{-O\left(1\right)}} \cdot f(n)$,
                                     where $f(n)$ is the running time of a unique decoding algorithm for $\mathcal{C}_1$.
   
    \item{[Rate]} The rate $r_k$ of $\mathcal{C}_k$ is $r_k = r_1 / d^{k-1}$, where $r_1$ is the rate of $\mathcal{C}_1$.
  \end{itemize}
\end{theorem}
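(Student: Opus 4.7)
My plan is to deduce the theorem by instantiating the general list decoding framework (of the robustness plus tensoriality/splittability type sketched in the introduction) with the lifting hypergraph $W_G(k)$, the collection of length-$(k-1)$ walks on the spectral expander $G$, and with $g$ the $k$-XOR, so that $\mathcal{C}_k = \lift_{W_G(k)}^{g}(\mathcal{C}_1)$. Three ingredients must be supplied: (a) the rate formula; (b) robustness of the lifting, which yields the claimed distance of $\mathcal{C}_k$; and (c) splittability of $W_G(k)$, quantitatively strong enough to drive the SOS-based decoder.

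The rate is essentially free. Since $G$ is $d$-regular on $n$ vertices, $|W_G(k)| = n\cdot d^{k-1}$, and the lifting map is injective on $\mathcal{C}_1$ (whenever the lift is robust, distinct codewords map to distinct lifts), so $\dim(\mathcal{C}_k)=\dim(\mathcal{C}_1)$ and $r_k = r_1/d^{k-1}$. For robustness I would invoke the standard Ta-Shma style bias analysis for direct sum along walks on a two-sided spectral expander: for $z,z' \in \F_2^n$ with relative Hamming distance in $[1/2-\epsilon_0,\, 1/2+\epsilon_0]$, the bias of $y\oplus y'$ (where $y,y'$ are the lifts) over a uniformly random walk of length $k-1$ is bounded by essentially $(2\epsilon_0 + O(\gamma))^{\lfloor k/2\rfloor}$, obtained by writing the bias as a product of the transfer operator of $G$ alternated with the $\pm 1$ diagonal of $z\oplus z'$ and bounding it spectrally. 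With $\gamma \le \epsilon^{O(1)}$ and $k = O(\log(1/\epsilon))$ this gives bias $\epsilon^{\Omega_{\epsilon_0}(1)}$, whence $\mathcal{C}_k$ has relative distance $\ge 1/2 - \epsilon^{\Omega_{\epsilon_0}(1)}$.

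The heart of the proof is splittability of $W_G(k)$. I would split a length-$(k-1)$ walk at its midpoint into two length-$(k/2-1)$ walks; the bipartite graph on $W_G(k/2)\times W_G(k/2)$ whose edges are concatenations forming a valid length-$(k-1)$ walk factors through transfer operators of $G$, so its second singular value is at most a fixed polynomial in $\gamma$. Iterating this recursive bisection $O(\log k) = O(\log\log(1/\epsilon))$ times keeps every intermediate bipartite graph a $\poly(\gamma)$-spectral expander, which is well within the splittability thresholds required by the framework because $\gamma \le \epsilon^{O(1)}$. Plugging the three ingredients into the framework, the SOS-based ``condition and round'' decoder (driven by the convex negative-entropy proxy and using the unique decoder for $\mathcal{C}_1$ as a subroutine) recovers the entire list of codewords $(1/2-\epsilon)$-close to $\tilde y$ in time $n^{\epsilon^{-O(1)}}\cdot f(n)$.

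The main technical obstacle I expect is quantitatively matching the walk-splittability bounds to what the SOS rounding actually consumes. The walks collection $W_G(k)$ is not downward-closed in the simplicial sense used for the HDX case, so one must either verify directly that the framework's tensoriality hypothesis is captured by the midpoint-bisection splittability above, or encode $W_G(k)$ as an auxiliary complex on which the framework applies verbatim. Either way, tracking how expansion degrades across $O(\log k)$ recursive levels while still leaving the final SOS rounding parameters in the regime where the entropy-proxy covering property kicks in is the delicate accounting step.
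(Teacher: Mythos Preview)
Your approach is essentially the same as the paper's: rate by counting walks, robustness via Ta-Shma's parity-sampling bound for expander walks, and tensoriality via recursive splitting of walks into sub-walks, all fed into the SOS list-decoding framework with the entropic proxy.

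The ``delicate accounting step'' you flag as the main obstacle turns out to be much cleaner than you anticipate. The paper observes that the swap operator $\wswap{k_1}{k_2}$ --- the bipartite operator between $W_G(k_1)$ and $W_G(k_2)$ whose edges are concatenations into walks in $W_G(k_1+k_2)$ --- can be written, after reordering rows and columns by the last vertex of the first sub-walk and the first vertex of the second, as exactly $A_G \otimes J/d^{k_2-1}$, where $A_G$ is the normalized walk matrix of $G$ and $J$ is all-ones. Hence its second singular value is \emph{exactly} $\sigma_2(A_G) \le \gamma$, with no degradation whatsoever across recursive levels. So the expansion does not compound over the $O(\log k)$ splits; every split graph is a $\gamma$-expander, and the splittability threshold $\gamma \le \epsilon^{O(1)}$ suffices directly. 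The tuples-vs-subsets issue you raise is handled in the paper by simply defining the swap operators and the splittability notion for ordered sub-walks rather than subsets, and verifying that the local-to-global variance argument from the HDX case goes through verbatim; no auxiliary complex is needed.
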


The results in \cref{theo:dsum_hdx_intro}, \cref{cor:dprod_hdx_intro},
and \cref{theo:dsum_expander_walk_intro} can all be extended (using a
simple technical argument) to nonlinear base codes $\mathcal C_1$ with
similar parameters. We also note that
applying \cref{theo:dsum_hdx_intro} to explicit objects derived from
Ramanujan complexes~\cite{LubotzkySV05a,LubotzkySV05b} and applying
\cref{theo:dsum_expander_walk_intro} to Ramanujan graphs~\cite{LPS88} yield explicit
constructions of codes with constant relative distance and rate,
starting from a base code with constant relative distance and rate.
With these constructions, the rate of the lifted code satisfies
$r_k \geq
r_1 \cdot \exp\left(-(\log(1/\epsilon))^{O(\log(1/\epsilon))}\right)$
in the HDX case and $r_k \geq
r_1 \cdot \epsilon^{O(\log(1/\epsilon))}$ for expander walks.  The
precise parameters of these applications are given
in~\cref{cor:ramanujan_instantiation} of~\cref{sec:list_dec_xor_hdx}
and in~\cref{cor:ramanujan_instantiation_exp}
of~\cref{sec:expander_walks}, respectively.

\noindent \textbf{Our techniques.} \enspace
We connect the question of decoding lifted codes to finding good
solutions for instances of Constraint Satisfaction Problems (CSPs)
which we then solve using the Sum-of-Squares (SOS) hierarchy. Consider
the case of direct sum lifting, where for the lifting $y$ of a
codeword $z$, each bit of $y$ is an XOR of $k$ bits from $z$. If an
adversary corrupts some bits of $y$ to give $\tilde{y}$, then finding
the closest codeword to $\tilde{y}$ corresponds to finding
$z' \in \Cc$ such that appropriate $k$-bit XORs of $z'$ agree with as
many bits of $\tilde{y}$ as possible. If the corruption is small, the
distance properties of the code ensure that the unique choice for
$z'$ is $z$. Moreover, the distance amplification (robustness)
properties of the lifting can be used to show that it suffices to
find \emph{any} $z'$ (not necessarily in $\Cc$) satisfying
sufficiently many constraints. We then use results by a subset of the
authors \cite{AJT19} showing that splittability (or the tensorial
nature) of the hypergraphs used for lifting can be used to yield
algorithms for approximately solving the related CSPs. Of course, the
above argument does not rely on the lifting being direct sum and works
for any lifting function $g$.

For list decoding, we solve just a single SOS program whose solution
is rich enough to ``cover'' the list of codewords we intend to
retrieve. In particular, the solutions to the CSP are obtained by
``conditioning'' the SDP solution on a small number of variables, and
we try to ensure that in the list decoding case, conditioning the SOS
solution on different variables yields solutions close to different
elements of the list.
To achieve this covering property we consider a convex proxy $\Psi$
for negative entropy measuring how concentrated (on a few codewords)
the SOS solution is.  Then we minimize $\Psi$ while solving the SOS
program.
A similar technique was also independently used by Karmalkar, Klivans,
and Kothari~\cite{KarmalkarKK19} and
Raghavendra--Yau~\cite{RaghavendraY19} in the context of learning
regression.
Unfortunately, this SOS cover comes with only some weak guarantees
which are, a priori, not sufficient for list decoding.
However, again using the robustness property of the lifting, we are
able to convert weak covering guarantees for the lifted code $\Cc'$ to
strong guarantees for the base code $\Cc$, and then appeal to the
unique decoding algorithm.
We regard the interplay between these two properties leading to the
final list decoding application as our main technical contribution. A
more thorough overview is given in~\cref{sec:strategy} after
introducing some objects and notation
in~\cref{sec:prelim}. In~\cref{sec:strategy}, we also give further
details about the organization of the document.

\noindent \textbf{Related work.} \enspace
The closest result to ours is the list decoding framework of Dinur et
al.~\cite{DinurHKNT19} for the direct product encoding, where the
lifted code is not binary but rather over the alphabet $\F_2^k$.
Our framework instantiated for the direct sum encoding on HDXs
(c.f.~\cref{theo:dsum_hdx_intro}) captures and strengthens some of
their parameters in~\cref{cor:dprod_hdx_intro}.
While Dinur \etal also obtain list decoding by solving an SDP for a
specific CSP (Unique Games), the reduction to CSPs in their case uses
the combinatorial nature of the double sampler instances and is also
specific to the direct product encoding.
They recover the list by iteratively solving many CSP instances, where
each newly found solution is pruned from the instance by reducing the
alphabet size by one each time.
On the other hand, the reduction to CSPs is somewhat generic in our
framework and the recovery of the list is facilitated by including an
entropic proxy in the convex relation.
As mentioned earlier, a similar entropic proxy was also
(independently) used by Karmalkar et al.~\cite{KarmalkarKK19} and
Raghavendra--Yau~\cite{RaghavendraY19} in the context of list decoding
for linear regression and mean estimation.
Direct products on expanders were also used as a building block by
Guruswami and Indyk \cite{GI03} who used these to
construct \emph{linear time} list decodable codes over large
alphabets. They gave an algorithm for recovering the list based on
spectral partitioning techniques.

\section{Preliminaries}\label{sec:prelim}

\subsection{Simplicial Complexes}\label{subsec:complexes}

It will be convenient to work with hypergraphs satisfying a certain
downward-closed property (which is straightforward to obtain).

\begin{definition}
A {\deffont simplicial complex} $X$ with ground set $[n]$ is a
downward-closed collection of subsets of $[n]$, \ie for all sets $\ess
\in X$ and $\tee \subseteq \ess$, we also have $\tee \in X$. The sets
in $X$ are referred to as {\deffont faces} of $X$.
We use the notation $X(i)$ for the set of all faces of a simplicial
complex $X$ with cardinality $i$ and $X(\leq d)$ for the set of all
faces of cardinality at most $d$.~\footnote{Note that it is more
  common to associate a geometric representation to simplicial
  complexes, with faces of cardinality $i$ being referred to as faces
  of \emph{dimension} $i-1$ (and the collection being denoted by
  $X(i-1)$ instead of $X(i)$). However, we prefer to index faces by
  their cardinality to improve readability of related expressions.}
By convention, we take $X(0) := \set{\varnothing}$.

A simplicial complex $X(\leq d)$ is said to be a {\deffont pure
  simplicial complex} if every face of $X$ is contained in some face
of size $d$. Note that in a pure simplicial complex $X(\leq d)$, the
top slice $X(d)$ completely determines the complex.
\end{definition}

Simplicial complexes are equipped with the following probability
measures on their sets of faces.

\begin{definition}[Probability measures $(\Pi_1, \ldots, \Pi_d)$]
Let $X(\leq d)$ be a pure simplicial complex and let $\Pi_d$ be an arbitrary probability measure on
$X(d)$. We define a coupled array of random variables $(\ess^{(d)}, \ldots,
\ess^{(1)})$ as follows: sample $\ess^{(d)} \sim \Pi_d$ and (recursively) for
each $i \in [d]$, take $\ess^{(i-1)}$ to be a uniformly random subset of
$\ess^{(i)}$ of size $i-1$.
The distributions $\Pi_{d-1}, \ldots, \Pi_1$ are then defined to be the marginal
distributions of the random variables $\ess^{(d-1)}, \ldots, \ess^{(1)}$. We also define
the joint distribution of $(\ess^{(d)}, \ldots, \ess^{(1)})$ as $\Pi$.
Note that the choice of $\Pi_d$ determines each other distribution $\Pi_i$ on $X(i)$.
\end{definition}

In order to work with the HDX and expander walk instantiations in a
unified manner, we will use also use the notation $X(k)$ to indicate
the set of all length $k-1$ walks on a graph $G$. In this case, $X(k)$
is a set of $k$-tuples rather than subsets of size $k$.  This
distinction will be largely irrelevant, but we will use $W_G(k)$ when
referring specifically to walks rather than subsets.  The set of walks
$W_G(k)$ has a corresponding distribution $\Pi_k$ as well (see
\cref{def:walk_collection}).

\subsection{Codes and Lifts}

\subsubsection*{Codes}

We briefly recall some standard code terminology. Let $\Sigma$ be a
finite alphabet with $q \in \mathbb{N}$ symbols. We will be mostly
concerned with the case $\Sigma=\mathbb{F}_2$. Given $z,z' \in
\Sigma^n$, recall that the relative Hamming distance between $z$ and
$z'$ is \text{$\Delta(z,z') \coloneqq \abs{\set{i \mid z_i\ne
      z_i'}}/n$}. Any set $\Cc \subset \Sigma^n$ gives rise to a
$q$-ary code. The distance of $\Cc$ is defined as $\Delta(\Cc)
\coloneqq \min_{z\ne z'} \Delta(z,z')$ where $z,z' \in \Cc$. We say
that $\Cc$ is a linear code~\footnote{In this case, $q$ is required to
  be a prime power.}  if $\Sigma = \mathbb{F}_q$ and $\Cc$ is a linear
subspace of $\mathbb{F}_q^n$. The rate of $\Cc$ is
$\log_q(\abs{\Cc})/n$.

Instead of discussing the distance of a binary code, it will often be more natural to phrase results in terms of its bias.

\begin{definition}[Bias]
  The {\deffont bias} of a word~\footnote{Equivalently, the bias of $z \in \{\pm 1\}^n$ is $\bias(z) \coloneqq \abs{\E_{i \in [n]} z_i}$.}
  $z \in \F_2^n$ is $\bias(z) \coloneqq \abs{\E_{i \in [n]} (-1)^{z_i}}$. The bias of a code $\Cc$ is the
  maximum bias of any non-zero codeword in $\Cc$.
\end{definition}

\subsubsection*{Lifts} \label{subsec:lifts}

Starting from a code $\Cc_1 \subset \Sigma^{X(1)}_1$, we amplify its
distance by considering a \textit{lifting} operation defined as
follows.

\begin{definition}[Lifting Function]
Let $g: \Sigma_1^k \to \Sigma_k$ and $X(k)$ be a collection of $k$-uniform hyperedges or walks of length $k-1$ on the set $X(1)$.
For $z \in \Sigma_1^{X(1)}$, we define $\lift_{X(k)}^g(z) = y$ such that $y_{\ess} = g(z\vert_{\ess})$ for all $\ess \in X(k)$, where $z\vert_{\ess}$ is the restriction of $z$ to the indices in $\ess$.

The lifting of a code $\mathcal C_1 \subseteq \Sigma_1^{X(1)}$ is
	$$\lift_{X(k)}^g(\mathcal C_1) = \{\lift_{X(k)}^g(z) \mid z \in \mathcal C_1\},$$
which we will also denote $\mathcal C_k$.
We will omit $g$ and $X(k)$ from the notation for lifts when they are clear from context.
\end{definition}

We will call liftings that amplify the distance of a code \emph{robust}.
\begin{definition}[Robust Lifting]\label{def:lift_rob}
  We say that $\lift_{X(k)}^g$ is $(\delta_0,\delta)$-robust
  if for every $z,z' \in \Sigma_1^{X(1)}$ we have
  $$
  \Delta(z,z') \ge \delta_0 \implies \Delta(\lift(z), \lift(z')) \ge \delta.
  $$
\end{definition}

For us the most important example of lifting is when the function $g$
is \text{$k$-XOR} and $\Sigma_1=\Sigma_k=\mathbb{F}_2$, which has been
extensively studied in connection with codes and
otherwise~\cite{Ta-Shma17, STV99, GNW95, ABNNR92}. In our language of
liftings, $k$-XOR corresponds to the \emph{direct sum lifting}.

\begin{definition}[Direct Sum Lifting]
 Let $\mathcal C_1 \subseteq \F_2^n$ be a base code on $X(1) = [n]$.
 The {\deffont direct sum lifting} of a word $z \in \F_2^n$ on a
 collection $X(k)$ is $\dsum_{X(k)}(z) = y$ such that $y_{\ess} = \sum_{i \in \ess} z_i$ 
 for all $\ess \in X(k)$.
\end{definition}

We will be interested in cases where the direct sum lifting reduces
the bias of the base code; in~\cite{Ta-Shma17}, structures with such a
property are called \emph{parity samplers}, as they emulate the reduction in
bias that occurs by taking the parity of random samples.

\begin{definition}[Parity Sampler]
  Let $g \colon \mathbb{F}_2^k \to \mathbb{F}_2$. We say that
    $\lift_{X(k)}^g$ is an $(\beta_0, \beta)$-parity sampler if
  for all $z \in \F_2^{X(1)}$ with $\bias(z) \leq \beta_0$, we have
  $\bias(\lift(z)) \leq \beta$.
\end{definition}

\subsection{Constraint Satisfaction Problems (CSPs)}
A $k$-CSP instance $\Ins(H, \Pred, w)$ with alphabet size $q$
consists of a $k$-uniform hypergraph $H$, a set of constraints 
\[ \Pred= \set*{ \Pred_\aye \subseteq [q]^{\aye} : \aye \in H},\]
and a non-negative weight function $w \in \RR_+^H$ on the
constraints satisfying $\sum_{\aye \in H}w(a) = 1$.

We will think of the constraints as predicates that are satisfied by an assignment
$\assn$ if we have $\assn|_{\aye} \in \Pred_\aye$,
\ie~the restriction of $\assn$ on $\aye$ is contained in
$\Pred_\aye$. We write $\SAT_\Ins(\assn)$ for the (weighted) fraction
of the constraints satisfied by the assignment $\assn$, \ie
\[ 
\SAT_\Ins(\assn) 
~=~ \sum_{\aye \in H} w(\aye) \cdot \one[\assn|_\aye \in \Pred_\aye] 
~=~ \Ex{\aye \sim w}{\one[\assn|_\aye \in \Pred_\aye]}
\mper
\]
We denote by $\OPT(\Ins)$ the maximum of $\SAT_\Ins(\assn)$ over all
$\assn \in [q]^{V(H)}$.

A particularly important class of $k$-CSPs for our work will be
$k$-XOR: here the input consists of a $k$-uniform hypergraph $H$ with
weighting $w$, and a (right-hand side) vector $r \in \FF_2^H$. The
constraint for each $\aye \in H$ requires
\[ \sum_{i \in \aye} \sigma(i) = r_\aye \pmod{2}.\]
In this case we will use the notation $\Ins(H, r, w)$ to refer to the
$k$-XOR instance. When the weighting $w$ is implicitly clear, we will
omit it and just write $\Ins(H, r)$. 

Any $k$-uniform hypergraph $H$ can be associated with a pure
simplicial complex in a canonical way by setting $X_{\Ins} = \set*{\bee
  : \exists\ \aye \in H ~\text{with}~ \aye \supseteq \bee }$; notice that
$X_{\Ins}(k) = H$. 
We will refer to this complex as the \emph{constraint complex} of the instance $\Ins$.
The probability distribution $\Pi_{k}$ on
$X_{\Ins}(k)$ will be derived from the weight function $w$ of the
constraint:
\[ \Pi_{k}(\aye) = w(\aye) \quad\forall \aye \in X_{\Ins}(k) = H.\]
\subsection{Sum-of-Squares Relaxations and $t$-local PSD Ensembles}\label{subsec:sos-relax}

\newcommand{\V}[2]{{v}_{(#1,#2)}}
\newcommand{\W}[2]{{w}_{(#1,#2)}}
\newcommand{\Vempty}{\V{\emptyset}{\emptyset}}

The Sum-of-Squares (SOS) hierarchy gives a sequence of increasingly
tight semidefinite programming relaxations for several optimization
problems, including CSPs. Since we will use relatively few facts about
the SOS hierarchy, already developed in the analysis of Barak,
Raghavendra, and Steurer \cite{BarakRS11}, we will adapt their notation
of \emph{$t$-local distributions} to describe the relaxations. For
a $k$-CSP instance $\Ins = (H,\Pred, w)$ on $n$ variables, we consider
the following semidefinite relaxation given by $t$-levels of the SOS
hierarchy, with vectors $\V{S}{\alpha}$ for all $S \subseteq [n]$ with
$\abs{S} \leq t$, and all $\alpha \in [q]^S$. Here, for $\alpha_1 \in
\qary^{S_1}$ and $\alpha_2 \in \qary^{S_2}$, $\alpha_1 \circ \alpha_2
\in \qary^{S_1 \cup S_2}$ denotes the partial assignment obtained by
concatenating $\alpha_1$ and $\alpha_2$.
\begin{table}[h]
\hrule
\vline
\begin{minipage}[t]{0.99\linewidth}
\vspace{-5 pt}
{\small
\begin{align*}
\mbox{maximize}\quad ~~
\Ex{\aye \sim w}{\sum_{\alpha \in \Pred_{\aye}} \smallnorm{\V{\aye}{\alpha}}^2}&~=:~ \SDP(\Ins)\\
\mbox{subject to}\quad \quad ~
 \mydot{\V{S_1}{\alpha_1}}{\V{S_2}{\alpha_2}} &~=~ 0 
  & \forall~ \alpha_1|_{S_1 \cap S_2} \neq \alpha_2|_{S_1 \cap S_2}\\
 \mydot{\V{S_1}{\alpha_1}}{\V{S_2}{\alpha_2}} 
  &~=~ \mydot{\V{S_3}{\alpha_3}}{\V{S_4}{\alpha_4}}
  & \forall~ S_1 \cup S_2 = S_3 \cup S_4, 
  ~\alpha_1 \circ \alpha_2 = \alpha_3 \circ \alpha_4 \\
 \sum_{j \in [q]} \smallnorm{\V{\{i\}}{j}}^2 &~=~ 1 &\forall i \in [n]\\
 \smallnorm{\Vempty}^2 &~=~ 1 &
\end{align*}}
\vspace{-14 pt}
\end{minipage}
\hfill\vline
\hrule
\end{table}

For any set $S$ with $|S| \leq t$, the vectors $\V{S}{\alpha}$ induce
a probability distribution $\mu_S$ over $\qary^S$ such that the
assignment $\alpha \in \qary^S$ appears with probability
$\smallnorm{\V{S}{\alpha}}^2$.
Moreover, these distributions are consistent on intersections: for
$T \subseteq S \subseteq [n]$, we have $\mu_{S|T} = \mu_T$, where
$\mu_{S|T}$ denotes the restriction of the distribution $\mu_S$ to the
set $T$.
We use these distributions to define a collection of random variables
$\rv Z_1, \ldots, \rv Z_n$ taking values in $\qary$, such that for any
set $S$ with $\abs{S} \leq t$, the collection of variables
$\inbraces{\rv Z_i}_{i \in S}$ has a joint distribution $\mu_S$. Note
that the entire collection $(\rv Z_1, \ldots, \rv Z_n)$ \emph{may not}
have a joint distribution: this property is only true for
sub-collections of size $t$. We will refer to the collection $(\rv
Z_1, \ldots, \rv Z_n)$ as a \emph{$t$-local ensemble} of random
variables.

We also have that that for any $T \subseteq [n]$ with $\abs{T} \leq
t-2$, and any $\xi \in \qary^T$, we can define a $(t-\abs{T})$-local
ensemble $(\rv Z_1', \ldots, \rv Z_n')$ by ``conditioning'' the local
distributions on the event $\rv Z_T = \xi$, where $\rv Z_T$ is
shorthand for the collection $\inbraces{\rv Z_i}_{i \in T}$. For any
$S$ with $\abs{S} \leq t-\abs{T}$, we define the distribution of $\rv
Z_S'$ as $\mu_S' := \mu_{S \cup T} | \set{\rv Z_T = \xi}$.
Finally, the semidefinite program also ensures that for any such
conditioning, the conditional covariance matrix
\[
\Emm_{(S_1, \alpha_1)(S_2,\alpha_2)} ~=~ \cov\inparen{\one[\rv Z_{S_1}' = \alpha_1],
  \one[\rv Z_{S_2}' = \alpha_2]}
\]
is positive semidefinite, where $\abs{S_1}, \abs{S_2} \leq
(t-\abs{T})/2$.  Here, for each pair $S_1, S_2$ the covariance is
computed using the joint distribution $\mu_{S_1 \cup S_2}'$.
In this paper, we will only consider $t$-local ensembles such that for
every conditioning on a set of size at most $t-2$, the conditional
covariance matrix is PSD. We will refer to these as \emph{$t$-local PSD ensembles}.
We will also need a simple corollary of the above definitions.
\begin{fact}\label{fact:set-ensemble}
  Let $(\rv Z_1, \ldots, \rv Z_n)$ be a $t$-local PSD ensemble, and
  let $X$ be any collection with $X(1)=[n]$.  Then, for all $s \leq
  t/2$, the collection $\inbraces{\rv Z_{\aye}}_{\aye \in X(\le s)}$
  is a $(t/s)$-local PSD ensemble, where $X(\le s) = \bigcup_{i = 1}^s
  X(i)$.
\end{fact}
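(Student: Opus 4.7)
The plan is to transfer both defining properties of a $(t/s)$-local PSD ensemble---the existence of consistent joint local distributions and the PSDness of every conditional covariance matrix---from the base ensemble $(\rv Z_1, \ldots, \rv Z_n)$ to the super-variable ensemble $\{\rv Z_{\aye}\}_{\aye \in X(\le s)}$. The guiding observation throughout is that each super-variable event $\{\rv Z_{\aye} = \alpha\}$ is literally the base conjunction $\{\rv Z_i = \alpha(i) : i \in \aye\}$, so the whole argument should reduce to careful bookkeeping of set sizes.

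First I would construct the super-variable local distributions. Given any $r \leq t/s$ faces $\aye_1, \ldots, \aye_r \in X(\le s)$, their union $U := \aye_1 \cup \cdots \cup \aye_r$ has $|U| \leq rs \leq t$, so the base $t$-local ensemble supplies a joint distribution $\mu_U$ over $[q]^U$. Pushing $\mu_U$ forward along the map that regroups an assignment on $U$ into the tuple of its restrictions to $\aye_1, \ldots, \aye_r$ yields a joint distribution for $\rv Z_{\aye_1}, \ldots, \rv Z_{\aye_r}$, and consistency of these distributions on sub-collections of the $\aye_j$'s is inherited directly from the consistency of the $\mu_S$ in the base ensemble.

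For the conditional PSD property, I would fix a conditioning event $\{\rv Z_{\tee_1} = \xi_1, \ldots, \rv Z_{\tee_m} = \xi_m\}$ on $m \leq (t/s) - 2$ super-variables. This event coincides with the base-level event $\{\rv Z_i = \xi(i) : i \in T\}$ for $T := \tee_1 \cup \cdots \cup \tee_m$, which has size $|T| \leq ms \leq t - 2s \leq t - 2$, so the base ensemble's conditional PSD guarantee applies. Then for any two super-variable pairs $(\mathcal{S}_1, \alpha_1), (\mathcal{S}_2, \alpha_2)$ with $\mathcal{S}_j \subseteq X(\le s)$ and $|\mathcal{S}_j| \leq ((t/s) - m)/2$, the indicator $\one[\rv Z_{\mathcal{S}_j} = \alpha_j]$ equals the base indicator $\one[\rv Z_{U_j} = \beta_j]$ for $U_j := \bigcup_{\aye \in \mathcal{S}_j} \aye$, and $|U_j| \leq s|\mathcal{S}_j| \leq (t - ms)/2 = (t - |T|)/2$. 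Thus the super-variable conditional covariance matrix indexed by these $(\mathcal{S}_j, \alpha_j)$ pairs is a principal submatrix of a base conditional covariance matrix that is PSD by hypothesis, so it is PSD as well.

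The only mildly delicate point is the parameter accounting: one must verify that the base PSD bound $|U_j| \leq (t - |T|)/2$ matches up exactly with the desired super-variable bound $|\mathcal{S}_j| \leq ((t/s) - m)/2$, which works because each super-variable contributes at most $s$ underlying indices. Once these inequalities line up, the PSD transfer is automatic, and no structural property of $X$ is actually used beyond $X(\le s) \subseteq \binom{[n]}{\leq s}$.
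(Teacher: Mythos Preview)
Your proof is correct and is exactly the natural verification the paper has in mind; the paper itself states this as a Fact (``a simple corollary of the above definitions'') and gives no proof at all, so there is nothing to compare against. One small wording point: the super-variable conditional covariance matrix is not literally a \emph{principal submatrix} of the base one, since distinct pairs $(\mathcal{S},\alpha)$ can map to the same $(U,\beta)$ and inconsistent $\alpha$'s map to the zero indicator; rather, each super-variable indicator $\one[\rv Z_{\mathcal{S}}=\alpha]$ is either some base indicator $\one[\rv Z_U=\beta]$ or identically zero, so the super-variable covariance matrix has the form $A M A^{\mathsf T}$ for a $\{0,1\}$-matrix $A$ with at most one nonzero entry per row, which is PSD whenever $M$ is. This does not affect your argument.
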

For random variables $\rv Z_S$ in a $t$-local PSD ensemble, we use the
notation $\inbraces{\rv Z_S}$ to denote the distribution of $\rv Z_S$
(which exists when $\abs{S} \leq t$). We also define $\Var{\rv Z_S}$
as
$$
\Var{\rv Z_S} \coloneqq \sum_{\alpha \in {\qary^S}} \Var{\one\insquare{\rv Z_S = \alpha}}.
$$

\subsubsection*{Pseudo-expectation Formulation}

An equivalent way of expressing this local PSD ensemble is through the
use of a pseudo-expectation operator, which is also a language commonly
used in the SOS literature (e.g., \cite{BHKKMP16,BKS17}). The
exposition of some of our results is cleaner in this equivalent
language. Each variable $\rv Z_i$ with $i \in [n]$ is modeled by a
collection of indicator local random variables~\footnote{Note that
  $\set{\rv Z_{i,a}}_{i \in [n],a \in [q]}$ are formal variables in
  the SOS formulation.}  $\set{\rv Z_{i,a}}_{a \in [q]}$ with the
intent that $\rv Z_{i,a} = 1$ iff $\rv Z_i = a$. To ensure they behave
similarly to indicators we add the following restrictions to the SOS
formulation:
\begin{align*}
    \rv Z_{i,a}^2 & = \rv Z_{i,a} & \forall i \in [n], a \in [q]\\
    \sum_{a \in [q]} \rv Z_{i,a} & = 1 & \forall i \in [n]
\end{align*}
Let $\mathcal{R} = \mathbb{R}[\rv Z_{1,1},\dots, \rv Z_{n,q}]$ be the
ring of polynomials on $\set{\rv Z_{i,a}}_{i\in[n],a\in[q]}$. We will write
$\Ray^{\le d}$ for the restriction of $\Ray$ to polynomials of degree at most $d$. A
feasible solution at the $(2t)$-th level of the SOS hierarchy is a
linear operator $\widetilde{\mathbb{E}} : \mathcal{R}^{\le 2t} \to
\mathbb{R}$ called the pseudo-expectation operator. This operator
satisfies the following problem-independent constraints: (i)
$\widetilde{\mathbb{E}}[1] = 1$ (normalization) and (ii)
$\widetilde{\mathbb{E}}[P^2] \ge 0$ for every $P \in \mathcal{R}^{\le
  t}$ (non-negative on Sum-of-Squares)~\footnote{From condition (ii),
  we can recover the PSD properties from the local PSD ensemble
  definition.}. It also satisfies the problem-dependent constraints
$$
\widetilde{\mathbb{E}}\left[\rv Z_{i,a}^2 \cdot P\right] = \widetilde{\mathbb{E}}\left[\rv Z_{i,a} \cdot P\right] \qquad \text{and} \qquad \widetilde{\mathbb{E}}\left[\left(\sum_{a \in [q]} \rv Z_{i,a}\right) \cdot Q\right] = \widetilde{\mathbb{E}}\left[Q\right],
$$
for every $i \in [n]$, $a \in [q]$, $P \in \mathcal{R}^{\le 2t-2}$,
and $Q \in \mathcal{R}^{\le 2t-1}$. Note that for any collection of
local random variables $\rv Z_{i_1},\dots, \rv Z_{i_j}$ with $j \le
2t$ we have the joint distribution
$$
\ProbOp(\rv Z_{i_1}=a_1,\dots,\rv Z_{i_j}=a_j) = \widetilde{\mathbb{E}}\left[\rv Z_{i_1,a_1}\dots \rv Z_{i_j,a_j}\right].
$$
Even though we may not have a global distribution we can implement a
form of pseudo-expectation conditioning on a random variable $\rv Z_i$
taking a given value $a \in [q]$ as long as $\ProbOp\left[\rv Z_i =
  a\right] = \widetilde{\mathbb{E}}[\rv Z_{i,a} ] > 0$. This can be
done by considering the new operator $\widetilde{\mathbb{E}}_{\vert
  Z_{i}= a} \colon \mathcal{R}^{\le 2t-2} \to \mathbb{R}$ defined as
$\widetilde{\mathbb{E}}_{\vert \rv Z_{i}= a}[\cdot] = \widetilde{\mathbb{E}}[\rv Z_{i,a}^2 \cdot ]/\widetilde{\mathbb{E}}[\rv Z_{i,a}^2 ]$,
which is a valid pseudo-expectation operator at the $(2t-2)$-th level.
This conditioning can be naturally generalized to a set of variables
$S \subseteq [n]$ with $\abs{S} \le t$ satisfying $\rv Z_S = \alpha$
for some $\alpha \in [q]^{{S}}$.
\subsection*{Notation}\label{subsec:notation}

We make some systematic choices for our parameters in order to
syntactically stress their qualitative behavior.
\begin{itemize}\setlength\itemsep{0.0001in}
  \item $1/2-\epsilon_0$ is a lower bound on the distance of the base code $\mathcal{C}_1$.  
  \item $1/2-\epsilon$ is a lower bound on the distance of the lifted code $\mathcal{C}_k$.
  \item $\kappa$ is
      a parameter that will control the list-decodability of the lifted
        code $\Cc_k$.
  \item $\mu,\theta,\eta$ are parameters that can be made arbitrarily small by increasing
        the SOS degree and/or the quality of expansion.
  \item $\beta,\delta$ are arbitrary error parameters.
  \item $\lambda_1 \geq \lambda_2 \geq \cdots$ are the eigenvalues of a graph's adjacency matrix (in $[-1,1]$).
  \item $\sigma_1 \geq \sigma_2 \geq \cdots$ are the singular values of a graph's adjacency matrix (in $[0,1]$).
\end{itemize}

SOS is an analytic tool so we will identify~\footnote{For this, we can
  use any bijection from $\F_2 \to \set{\pm 1}$.} words over $\F_2$
with words over $\set{\pm 1}$. We also make some choices for words and
local variables to distinguish the ground space $\F_2^{X(1)}$ or
$\set{\pm 1}^{X(1)}$ form the lifted space $\F_2^{X(k)}$ or $\set{\pm
  1}^{X(k)}$.
\begin{itemize}\setlength\itemsep{0.0001in}
  \item $z,z',z'',\dots$ are words in the ground space $\F_2^{X(1)}$ or $\set{\pm 1}^{X(1)}$.
  \item $y,y',y'',\dots$ are words in the lifted space $\F_2^{X(k)}$ or $\set{\pm 1}^{X(k)}$
  \item $\rv Z \coloneqq \set{\rv Z_1,\dots, \rv Z_n}$ is a local PSD ensemble on the ground set $X(1)$.
  \item $\rv Y \coloneqq \set{\rv Y_{\ess} \coloneqq \left(\lift(\rv Z)\right)_{\ess} ~\vert~\ess \in X(k) }$ is a local ensemble on $X(k)$.
\end{itemize}

\section{Proof Strategy and Organization}\label{sec:strategy}

As discussed earlier, we view the problem of finding the closest
codeword(s) as that of finding suitable solution(s) to an instance of
a CSP (which is $k$-XOR in the case of direct sum). We now discuss
some of the technical ingredients required in the decoding procedure.

\paragraph{Unique Decoding.} Given $\Cc_k = \dsum_{X(k)}(\Cc_1)$ with the lifting function as
$k$-XOR, we can view the problem of finding the closest codeword to a
given $\tilde{y} \in \F_2^{X(k)}$ as that of finding the unique $z \in \Cc_1$
satisfying the maximum number of equations of the form
$\sum_{i \in \ess}z_i = \tilde{y}_{\ess}\pmod 2$, with one equation
for each $\ess \in X(k)$. By this property,
 $y=\dsum(z)$ is the unique codeword of $\Cc_k$ closest to
$\tilde{y}$. Using the results of \cite{AJT19}, it is
indeed possible to find $z' \in \F_2^n$ such that
$\Delta(\dsum(z'), \tilde y) \leq \Delta(\dsum(z), \tilde y)
+ \beta$ for any $\beta > 0$.
We then argue that $z'$ or its complement $\bar{z'}$ must be close to
$z \in \Cc_1$, which can then be recovered by unique decoding.

If this is not the case, then $z-z'$ must have bias bounded away from
1, which would imply by robustness (parity sampling property of the
hypergraph) that $\dsum(z-z')$ has bias close to zero, \ie
$\Delta(\dsum(z), \dsum(z')) \approx 1/2$. However, if
$\Delta(\tilde{y},\Cc_k) \leq \eta$, then we must have
\[
\Delta(\dsum(z), \dsum(z'))
~\leq~
\Delta(\dsum(z), \tilde{y})	+ \Delta(\dsum(z'), \tilde{y})	
~\leq~
2\eta + \beta \mcom
\]
which leads to a contradiction if $\eta$ is significantly below $1/4$
and $\beta$ is sufficiently small.

\paragraph{List Decoding.} 

We start by describing an abstract list decoding framework which only
assumes two general properties of a lifting $\lift_{X(k)}^g$: (i) it
is distance amplifying (\textit{robust}) and (ii) it is amenable to
SOS rounding (\textit{tensorial}).

Suppose $\tilde{y} \in \F_2^{X(k)}$ is a word promised to
be \lict{\sqrt{\epsilon}} to a lifted code $\mathcal{C}_k=\lift(\mathcal{C}_1)$
where $\mathcal{C}_k$ has distance at least $1/2-\epsilon$ and $\mathcal{C}_1$
has distance at least $1/2-\epsilon_0$. By list decoding $\tilde{y}$, we mean
finding a list $\mathcal{L} \subseteq \mathcal{C}_k$ of all codewords
\lict{\sqrt{\epsilon}} to $\tilde{y}$. 

Our framework for list decoding $\tilde{y}$ consists of three
stages. In the first stage, we set up and solve a natural SOS program
which we treat abstractly in this discussion~\footnote{The precise SOS
program used is given in~\cref{sec:sos_list_dec}.}.
One issue with using a rounding algorithm for this relaxation to do
list decoding is that this natural SOS program may return 
a solution that is ``concentrated'', e.g., a SOS solution
corresponding to single codeword in $\mathcal{L}$.
Such a solution will of course not have enough information to recover the entire list.
To address this issue we now ask not only for feasibility in our SOS
program but also to minimize a convex function $\Psi$ measuring how
concentrated the SOS solution is. Specifically, if $\rv Z$ is the PSD
ensemble corresponding to the solution of the SOS program and if $\rv
Y$ is the lifted ensemble, then we minimize $\Psi
~:=~ \E_{\ess, \tee \in X(k)}\sqbr*{\parens*{\PExp[{\rv Y}_{\ess} {\rv
Y}_{\tee}]}^2}$.

The key property of the function $\Psi$ is that if the SOS solution
``misses'' any element in the list $\mathcal{L}$ then it is possible
to decrease it. Since our solution is a minimizer~\footnote{Actually
an approximate minimizer is enough in our application.} of $\Psi$,
this is impossible. Therefore, our solution does ``cover'' the list
$\mathcal{L}$. Even with this SOS cover of $\mathcal{L}$, the list
decoding task is not complete. So far we have not talked about
rounding, which is necessary to extract codewords out of the
(fractional) solution. For now, we will simply assume that rounding is
viable (this is handled by the second stage of the framework) and
resume the discussion.

Unfortunately, the covering guarantee is somewhat weak, namely, for
$y \in \mathcal{L}$ we are only able to obtain a word $y' \in \F_2^{X(k)}$
with weak agreement $\abs{\ip{y'}{y}} \ge
2\cdot\epsilon$. Converting a word $y'$ from the cover into an actual
codeword $y$ is the goal of the third and final stage of the list
decoding framework, dubbed \textit{Cover Purification}.  At this point
we resort to the robustness properties of the lifting and the fact
that we actually have ``coupled'' pairs $(z,y=\lift(z))$ and
$(z',y'=\lift(z'))$ for some $z,z' \in \F_2^{X(1)}$. Due to
this robustness (and up to some minor technicalities) even a weak
agreement between $y$ and $y'$ in the lifted space translates into a
much stronger agreement between $z$ and $z'$ in the ground
space. Provided the latter agreement is sufficiently strong, $z'$ will
lie in the unique decoding ball centered at $z$ in $\mathcal{C}_1$. In
this case, we can uniquely recover $z$ and thus also
$y=\lift(z)$. Furthermore, if $\mathcal{C}_1$ admits an efficient
unique decoder, we can show that this step in list decoding $\tilde y$
can be done efficiently.

Now we go back to fill in the rounding step, which constitutes the
second stage of the framework, called \textit{Cover Retrieval}. We
view the SOS solution as composed of several ``slices'' from which the
weak pairs $(z',y')$ are to be extracted. Note that the framework
handles, in particular, $k$-XOR liftings where it provides not just a
single solution but a list of them. Hence, some structural assumption
about $X(k)$ is necessary to ensure SOS tractability.  Recall that
random $k$-XOR instances are hard for
SOS~\cite{Grigoriev01,KothariMOW17}.  For this reason, we impose a
sufficient tractability condition on $X(k)$ which we denote
the \textit{two-step tensorial} property. This notion is a slight
strengthening of a \textit{tensorial} property which was (implicitly)
first investigated by Barak et al.~\cite{BarakRS11} when $k=2$ and
later generalized for arbitrary $k \ge 2$ in~\cite{AJT19}. Roughly
speaking, if $X(k)$ is tensorial then the SOS local random variables
in a typical slice of the solution behave approximately as product
variables from the perspective of the local views $\ess \in X(k)$. A
two-step tensorial structure is a tensorial structure in which the
local random variables between pairs of local views $\ess,\tee \in
X(k)$ are also close to product variables, which is an extra property
required to perform rounding in this framework. With the two-step
tensorial assumption, we are able to round the SOS solution to obtain
a list of pairs $(z', y')$ weakly agreeing with elements of the code
list that will be refined during cover purification.

To recapitulate, the three stages of the abstract list decoding
framework are summarized in~\cref{fig:list_dec_framework} along with
the required assumptions on the lifting.

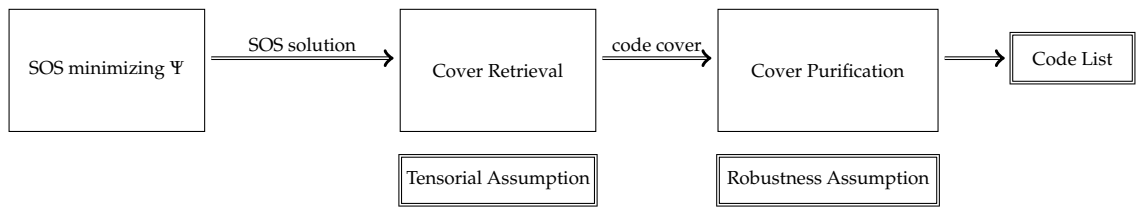
\begin{figure}[h!]
  \centering
  \begin{tikzpicture}[scale=0.65, every node/.style={scale=0.65}]]
    \draw  (-7,1) rectangle node {SOS minimizing $\Psi$}(-3,-1.5);
    \draw  (1,1) rectangle node {Cover Retrieval} (5,-1.5);
    \draw  (7.5,1) rectangle node {Cover Purification} (12,-1.5);
    \node (v1) at (-3,0) {};
    \node (v2) at (1,0) {};
    \node (v3) at (5,0) {};
    \node (v4) at (7.5,0) {};
    \draw [double] (v1) edge[double,->] node[above] {SOS solution} (v2);
    \draw [double] (v3) edge[double,->] node[above] {code cover} (v4);
    \node (v5) at (12,0) {};
    \draw [double] (1,-2) rectangle node {Tensorial Assumption} (5,-3);
    \draw [double] (7.5,-2) rectangle node {Robustness Assumption} (12,-3);
    \draw [double] (13.5,0.5) rectangle node {Code List} (16,-0.5);
    \node (v6) at (13.5,0) {};
    \draw [double] (v5) edge[double,->] (v6);
  \end{tikzpicture}
  \caption{List decoding framework with the assumptions required in
    each stage.}\label{fig:list_dec_framework}
\end{figure}

\paragraph{Finding suitable hypergraphs.} Fortunately, objects satisfying the necessary tensorial and robustness
assumptions do exist.  HDXs were shown to be tensorial
in~\cite{AJT19}, and here we strengthen this result to two-step
tensorial as well as prove that HDXs possess the particular robustness
property of parity sampling.
Walks on expander graphs are already known to be
robust~\cite{Ta-Shma17}, and we use a modified version of the methods
in~\cite{AJT19} to show they are also two-step tensorial.
For both HDXs and expander walks, we describe how to use known
constructions of these objects to get explicit direct sum encodings
that can be decoded using our abstract framework.

\paragraph{Reduction from direct product to direct sum.}
Finally, we describe how to use list decoding results for direct sum
codes to obtain results for direct product codes. Given a direct
product lifting $\Cc_k$ on the hypergraph $X(k)$, if
$\Delta(\tilde{y}, y) \leq 1-\epsilon$ for $y \in \Cc_k$, then we must
have that
\[
\Pr_{\ess \in X(k)}\insquare{y_{\ess} = \tilde{y}_{\ess}} 
~=~
\Ex{\ess \in X(k)}{\Ex{\tee \subseteq \ess}{\chi_{\tee}(y_{\ess}+\tilde{y}_{\ess})}} 
~\geq~ \epsilon \mper
\]
Since $\chi_{\tee}(y_{\ess})$ can be viewed as part of a direct sum lifting,
we get by grouping subsets $\tee$ by size that there must exist a size
$i$ such that the direct sum lifting using $X(i)$ has correlation at
least $\epsilon$ with the word $y'$ defined as $y'_{\tee}
= \chi_{t}(\tilde{y}_{\ess})$ for all $\tee \in X(i)$. We can then
apply the list decoding algorithm for direct sum codes on $X(i)$. A
standard concentration argument can also be used to control the size
$i$ to be approximately $k/2$.

\subsection*{Organization of Results}

In~\cref{sec:parity_sampling}, we show how the direct sum lifting on
HDXs can be used to reduce bias, establishing that HDXs are parity
samplers. This will give a very concrete running example of a lifting
that can be used in our framework. Before addressing list decoding, we
remark in~\cref{sec:unique_decoding} how this lifting can be used in
the simpler regime of unique decoding using a $k$-CSP algorithm on
expanding instances~\cite{AJT19}. The abstract list decoding framework
is given in~\cref{sec:list_dec}. Next, we instantiate the framework
with the direct sum lifting on HDXs in~\cref{sec:list_dec_xor_hdx}. As
an interlude between the first and second
instantiation,~\cref{sec:direct_prod} describes how the first concrete
instantiation of~\cref{sec:list_dec_xor_hdx} captures the direct
product lifting on HDXs via a reduction to the direct sum
lifting. Finally, in~\cref{sec:expander_walks}, we show how to
instantiate the framework with the direct sum lifting on the
collection of length $k-1$ walks of an expander graph.

\section{Pseudorandom Hypergraphs and Robustness of Direct Sum}\label{sec:parity_sampling}

The main robustness property we will consider is parity sampling
applied to the case of the direct sum lifting.  As this section
focuses on this specific instance of a lifting, here we will say that
a collection $X(k)$ is a parity sampler if its associated direct sum
lifting $\dsum_{X(k)}$ is a parity sampler.  Recall that for such a
parity sampler, the direct sum lifting brings the bias of a code close
to zero, which means it boosts the distance almost to 1/2. 

\subsection{Expander Walks and Parity Sampling}\label{subsec:expander_walks_parity_sampling}

A known
example of a parity sampler is the set $X(k)$ of all walks of length
$k$ in a sufficiently expanding graph, as shown by Ta-Shma.

\begin{theorem}[Walks on Expanders are Parity Samplers~\cite{Ta-Shma17}]\label{theo:expander_parity_sampler}
  Suppose $G$ is a graph with second largest singular value at most $\lambda$, and let $X(k)$ be the set of all walks of
  length $k$ on $G$.  Then $X(k)$ is a $(\beta_0, (\beta_0 +
  2\lambda)^{\lfloor k/2 \rfloor})$-parity sampler.
\end{theorem}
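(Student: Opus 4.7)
The plan is to reduce the bias of $\dsum_{X(k)}(z)$ to a spectral quantity and then bound the operator norm of a carefully chosen two-step operator. Set $f_i = (-1)^{z_i}$, let $D = \diag(f)$, and let $W$ be the normalized adjacency matrix of $G$. Interpreting a walk of length $k$ as a tuple of $k$ vertices (with $k-1$ edges) sampled from the stationary distribution (uniform, by regularity), one has
$$
\bias(\dsum_{X(k)}(z)) ~=~ \left| \Ex{(v_1,\ldots,v_k)}{\prod_{i=1}^{k} f_{v_i}} \right| ~=~ \frac{1}{n}\left|\mathbf{1}^\top D (WD)^{k-1} \mathbf{1}\right| ~=~ \left| e^\top (DW)^k e \right|,
$$
where $e = \mathbf{1}/\sqrt{n}$ and the last equality uses $W\mathbf{1} = \mathbf{1}$ to absorb an extra factor of $W$.

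The heart of the argument is to show that the two-step operator $T := DWDW$ satisfies $\|T\|_{\mathrm{op}} \leq \beta_0 + 2\lambda$. Using the decomposition $W = ee^\top + E$ with $\|E\|_{\mathrm{op}} \leq \lambda$, and writing $De = \bias(z)\cdot e + h$ with $h \perp e$ and $\|h\| \leq 1$, I would take an arbitrary unit vector $v$, split it as $v = (e^\top v) e + v^\perp$, and track the two applications of $DW$. The first application gives $DWv = (e^\top v) De + DW v^\perp$, and a second application yields
$$
DWDW v ~=~ (e^\top v)\,\bias(z)\, De + r,
$$
where $r$ collects the contributions of $D W h$ (bounded by $\lambda$) and $DWDWv^\perp$ (bounded by $\lambda$ since $Wv^\perp$ is orthogonal to $e$). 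Applying $D$ preserves $\ell_2$-norms, so $\|Tv\| \leq |\bias(z)| + 2\lambda \leq \beta_0 + 2\lambda$, and the operator bound follows.

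To conclude, I would use the trivial contractive estimate $\|DW\|_{\mathrm{op}} \leq 1$ and split the exponent according to parity. If $k$ is even, then $(DW)^k = T^{k/2}$ and $|e^\top (DW)^k e| \leq \|T\|_{\mathrm{op}}^{k/2} \leq (\beta_0 + 2\lambda)^{k/2}$. If $k$ is odd, then $(DW)^k = (DW)\, T^{(k-1)/2}$ and a Cauchy--Schwarz step, together with $\|DWe\| \leq 1$, gives $|e^\top (DW)^k e| \leq (\beta_0 + 2\lambda)^{(k-1)/2}$. In both cases the exponent equals $\lfloor k/2 \rfloor$, matching the claim.

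The main (and essentially only) subtle point is that the single-step operator norm $\|DW\|_{\mathrm{op}}$ is just $\leq 1$ and enjoys no decay with $\lambda$ or $\beta_0$, so naive iteration would prove nothing. The two-step bound is what makes the argument work, and it exploits the precise interplay between the fact that $W$ fixes $\mathbf{1}$ (so the $e$-component is preserved by $W$) and the fact that $D$ reduces the $e$-component by a factor of $\bias(z)$ at each step. Once this is in place, everything else is straightforward operator arithmetic.
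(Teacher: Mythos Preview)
The paper does not give a proof of this theorem; it is stated as a cited result from Ta-Shma and used as a black box. Your argument is correct and is essentially the standard spectral proof: reduce the bias to the bilinear form $e^\top (DW)^k e$, establish the two-step bound $\|DWDW\|_{\mathrm{op}} \le \beta_0 + 2\lambda$ by tracking the $e$-component and orthogonal component through two applications of $DW$, and then iterate, absorbing a leftover $DW$ by contractivity when $k$ is odd. One minor remark: you assume $G$ is regular so that the stationary distribution is uniform and matches the unweighted bias; the theorem statement does not say this explicitly, but it is consistent with how the result is used throughout the paper (all instantiations are to $d$-regular graphs).
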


Our goal in this section is to prove a similar result for
high-dimensional expanders, where $X(k)$ is the set of $k$-sized
faces.

\subsection{High-dimensional Expanders}

A high-dimensional expander (HDX) is a particular kind of simplicial
complex satisfying an expansion requirement.  We recall the notion of
high-dimensional expansion considered in~\cite{DinurK17}.  For a
complex $X(\leq d)$ and $\ess \in X(i)$ for some $i \in [d]$, we
denote by $X_{\ess}$ the \emph{link complex}
\[
  X_\ess ~:=~ \set*{ \tee \backslash \ess \mid \ess \subseteq \tee \in X } \mper 
\]

When $\abs{\ess} \leq d-2$, we also associate a natural weighted graph
$G(X_{\ess})$ to a link $X_{\ess}$, with vertex set $X_{\ess}(1)$ and
edge set $X_{\ess}(2)$. The edge weights are taken to be proportional
to the measure $\Pi_2$ on the complex $X_{\ess}$, which is in turn
proportional to the measure $\Pi_{\abs{\ess}+2}$ on $X$. The graph
$G(X_{\ess})$ is referred to as the \emph{skeleton} of $X_{\ess}$.

Dinur and Kaufman~\cite{DinurK17} define high-dimensional expansion in
terms of spectral expansion of the skeletons of the links.
\begin{definition}[$\gamma$-HDX from~\cite{DinurK17}]\label{def:hdx_dk} 
  A simplicial complex $X(\leq d)$ is said to be {\deffont $\gamma$-High Dimensional Expander ($\gamma$-HDX)}
  if for every $0 \le i \le d-2$ and for every $\ess \in X(i)$, the graph $G(X_\ess)$ satisfies
  $\sigma_2(G(X_{\ess})) \leq \gamma$.
\end{definition}

We will need the following
theorem relating $\gamma$ to the spectral properties of the graph
between two layers of an HDX.

\begin{theorem}[Adapted from~\cite{DinurK17}]\label{theo:exp_up_walk_hdx}
  Let $X$ be a $\gamma$-HDX and let $M_{1,d}$ be the weighted
  bipartite containment graph between $X(1)$ and $X(d)$, where each edge
  $(\set{i}, \ess)$ has weight $(1/d) \Pi_d(\ess)$. Then the second largest singular value
  $\sigma_2$ of $M_{1,d}$ satisfies
	$$\sigma_2^2 \leq \frac{1}{d} + O(d \gamma).$$
\end{theorem}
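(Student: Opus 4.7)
The plan is to reduce $\sigma_2(M_{1,d})^2$ to the spectral gap of the $1$-skeleton of $X$, using that $\Pi_2$ is the marginal of $\Pi_d$ under uniform down-sampling. First, I would interpret $M_{1,d}$ as the stochastic operator $M_{1,d}\colon L^2(X(d),\Pi_d)\to L^2(X(1),\Pi_1)$ whose adjoint sends a face $\ess$ to the uniform distribution on $\ess$, and which itself sends a vertex $v$ to the distribution on $\ess\ni v$ proportional to $\Pi_d(\ess)$. With the weighted inner products on both sides, the constant vectors give a top singular value of $1$, so $\sigma_2(M_{1,d})^2 = \lambda_2(M_{1,d}M_{1,d}^T)$.

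The core step is a walk decomposition. The Markov kernel $M_{1,d}M_{1,d}^T$ on $X(1)$ is: from $v$, sample $\ess\ni v$ according to the conditional of $\Pi_d$, then sample $v'$ uniformly from $\ess$. The return probability is exactly $1/d$. Conditioned on $v'\neq v$, the distribution over $v'$ coincides with the random walk on the $1$-skeleton $G(X_\emptyset)$ of $X$; this follows from the identity
\[
\sum_{\ess\supset\{v,v'\}}\Pi_d(\ess) ~=~ \binom{d}{2}\,\Pi_2(\{v,v'\}),
\]
which is precisely how $\Pi_2$ arises from $\Pi_d$ via the coupled array $(\ess^{(d)},\dots,\ess^{(2)})$. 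Therefore
\[
M_{1,d}M_{1,d}^T ~=~ \frac{1}{d}\,I ~+~ \frac{d-1}{d}\,G(X_\emptyset).
\]

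Finally, the $\gamma$-HDX condition applied to the empty face $\emptyset\in X(0)$ gives $\sigma_2(G(X_\emptyset))\le\gamma$, and since $G(X_\emptyset)$ is a self-adjoint stochastic operator with top eigenvalue $1$, this yields $\lambda_2(G(X_\emptyset))\le\gamma$. Combining,
\[
\sigma_2(M_{1,d})^2 ~=~ \lambda_2\!\left(M_{1,d}M_{1,d}^T\right) ~\le~ \frac{1}{d} + \frac{d-1}{d}\,\gamma,
\]
which implies the claimed $\frac{1}{d}+O(d\gamma)$ bound (in fact somewhat more strongly).

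The only potentially delicate part is verifying the walk decomposition: tracking that the measure-theoretic identity between $\Pi_d$ and $\Pi_2$ holds under the given sub-sampling convention, and being explicit about the stochastic normalization under which $M_{1,d}$ has the constant vectors as its top singular vectors. Once this bookkeeping is set up, the spectral conclusion is an immediate one-line consequence of the HDX hypothesis at the empty face, so no inductive or trickling-down argument on the link complexes is needed for this particular statement.
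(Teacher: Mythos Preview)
The paper does not actually prove this statement; it is quoted as a black-box result adapted from Dinur--Kaufman~\cite{DinurK17}. So there is no in-paper proof to compare against, only the cited source.

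Your argument is correct. The walk decomposition
\[
M_{1,d}M_{1,d}^T \;=\; \tfrac{1}{d}\,I \;+\; \tfrac{d-1}{d}\,G(X_\emptyset)
\]
holds exactly: the return probability is $1/d$, and for $i\neq i'$ one computes
\[
(M_{1,d}M_{1,d}^T)_{i,i'} \;=\; \frac{1}{d^2\,\Pi_1(i)}\sum_{\ess\ni i,i'}\Pi_d(\ess)
\;=\; \frac{\binom{d}{2}\Pi_2(\{i,i'\})}{d^2\,\Pi_1(i)}
\;=\; \frac{d-1}{d}\cdot\frac{\Pi_2(\{i,i'\})}{2\Pi_1(i)},
\]
which is $\tfrac{d-1}{d}$ times the transition kernel of $G(X_\emptyset)$. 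Since the paper's Definition~\ref{def:hdx_dk} explicitly includes $i=0$ (the empty face), you may invoke $\sigma_2(G(X_\emptyset))\le\gamma$ directly and conclude $\sigma_2(M_{1,d})^2 \le \tfrac{1}{d}+\tfrac{d-1}{d}\gamma$, which is sharper than the stated $\tfrac{1}{d}+O(d\gamma)$.

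The discrepancy in the constant is explained by the route Dinur--Kaufman take: their analysis decomposes the up-down walk level by level and accumulates a $\gamma$ contribution from each of the $\Theta(d)$ intermediate links, yielding the $O(d\gamma)$ term. That approach is needed under weaker hypotheses where expansion is only assumed at proper (nonempty) links and the global skeleton's expansion must be \emph{derived} via trickling-down. Under this paper's definition, your one-step reduction to the empty link is both valid and tighter; no inductive argument is required.
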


We will be defining codes using HDXs by associating each face in some
$X(i)$ with a position in the code.  The distance between two
codewords does not take into account any weights on their entries,
which will be problematic when decoding since the distributions
$\Pi_i$ are not necessarily uniform.  To deal with this issue, we will
work with HDXs where the distributions $\Pi_i$ satisfy a property only
slightly weaker than uniformity.
\begin{definition}[Flatness (from~\cite{DinurHKNT19})]
  We say that a distribution $\Pi$ on a finite probability space
  $\Omega$ is $D$-flat if there exits $N$ such that each singleton
  $\omega \in \Omega$ has probability in $\{1/N, \dots, D/N\}$.
\end{definition}

Using the algebraically deep construction of Ramanujan complexes by
Lubotzky, Samuels, and Vishne \cite{LubotzkySV05a,LubotzkySV05b}, Dinur
and Kaufman \cite{DinurK17} showed that sparse $\gamma$-HDXs do exist,
with flat distributions on their sets of faces.  The following lemma
from~\cite{DinurHKNT19} is a refinement of~\cite{DinurK17}.

\begin{lemma}[Extracted from~\cite{DinurHKNT19}]\label{lemma:hdx_existence}
  For every $\gamma > 0$ and every $d \in \mathbb{N}$ there exists an
  explicit infinite family of bounded degree $d$-sized complexes
  which are $\gamma$-HDXs. Furthermore, there
  exists a $D \le (1/\gamma)^{O(d^2/\gamma^2)}$ such that
  $$
  \frac{\vert X(d) \vert}{\vert X(1) \vert} ~\le~ D,
  $$
  the distribution $\Pi_1$ is uniform, and the other distributions $\Pi_d, \dots, \Pi_2$ are $D$-flat.
\end{lemma}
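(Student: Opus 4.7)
The plan is to extract this from the explicit construction of Ramanujan complexes by Lubotzky--Samuels--Vishne and the high-dimensional expansion analysis of Dinur--Kaufman, then verify the flatness statement separately. First, I would invoke the LSV construction of $d$-dimensional simplicial complexes obtained as quotients of the Bruhat--Tits building of $\mathrm{PGL}_d$ over a non-archimedean local field $F$ by a cocompact arithmetic lattice. Setting the residue field size $q$ of $F$ large enough as a function of $\gamma$ and $d$, the local spectral bound on every link is at most $1/\sqrt{q} + o(1)$, so choosing $q = (1/\gamma)^{O(1)}$ makes the complex a $\gamma$-HDX in the sense of Definition~4.3. This uses only the Dinur--Kaufman bound on local spectral expansion of LSV complexes as a black box.

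Next, I would establish the $D$-flatness of the distributions $\Pi_2, \ldots, \Pi_d$. The key observation is that the building carries a transitive action of $\mathrm{PGL}_d(F)$ on faces of each fixed dimension, so the natural top-dimensional measure descends to one that is uniform on orbits of the lattice action. In the quotient, a single top-dimensional face $\ess \in X(d)$ has a fixed total mass $1/\abs{X(d)}$ under $\Pi_d$, and each $i$-face arises as a restriction of such top-faces with multiplicity controlled by the number of top-faces containing it, i.e., by the local degree of the complex at that $i$-face. By the LSV construction these local degrees are all bounded by a function of $q$ and $d$; explicitly, the degree of the link of an $i$-face is $O(q^{d-i})$, and summing geometrically gives the total bound $D \le q^{O(d^2)} \le (1/\gamma)^{O(d^2/\gamma^2)}$ after accounting for the relationship $q = (1/\gamma)^{\Theta(1)}$ needed for $\gamma$-HDX. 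The uniformity of $\Pi_1$ follows from the vertex-transitivity of the lattice action (after choosing the lattice to act simply transitively on vertices, which LSV arrange). The ratio $\abs{X(d)}/\abs{X(1)} \le D$ then follows from the double-counting identity $\sum_{\ess \in X(d)} \binom{d}{1} = \sum_{v \in X(1)} \deg_d(v)$ together with the degree bound.

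The main technical obstacle is the bookkeeping for the explicit constant in the exponent of $D$. The cleanest way is probably to avoid re-doing any of the arithmetic of LSV and instead directly cite the refinement in~\cite{DinurHKNT19}: those authors already carried out precisely this calculation (they need flatness for the same reason we do, to relate unweighted Hamming distance on $X(i)$ to the probabilistic measures $\Pi_i$), and their statement already delivers a $D$ of the claimed form. Thus the proof really amounts to pointing at their derivation and noting that the parameters $d$ and $\gamma$ enter in the way stated. I would write the proof as a one-paragraph pointer to~\cite{DinurHKNT19,DinurK17,LubotzkySV05a,LubotzkySV05b} verifying that (i) the LSV--DK construction yields the $\gamma$-HDX property, (ii) the group-theoretic transitivity yields the uniformity of $\Pi_1$ and the $D$-flatness of the $\Pi_i$, and (iii) the degree bound in the LSV construction yields the stated bound on $D$ and on $\abs{X(d)}/\abs{X(1)}$.
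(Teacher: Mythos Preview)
Your proposal is essentially correct and matches the paper's treatment: the paper gives no proof whatsoever for this lemma, simply stating it as ``Extracted from~\cite{DinurHKNT19}'' and using it as a black box. Your plan to give a one-paragraph pointer to~\cite{DinurHKNT19,DinurK17,LubotzkySV05a,LubotzkySV05b} is therefore already more than the paper does, and your outline of where each claim comes from (LSV construction for explicitness and bounded degree, Dinur--Kaufman for the $\gamma$-HDX property, transitivity for flatness and uniformity of $\Pi_1$) is accurate.

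One minor remark on your exponent bookkeeping: from $q = (1/\gamma)^{\Theta(1)}$ and a degree bound of $q^{O(d^2)}$ you would only get $D \le (1/\gamma)^{O(d^2)}$, not $(1/\gamma)^{O(d^2/\gamma^2)}$; the extra $1/\gamma^2$ in the exponent comes from the more refined parameter relationship in~\cite{DinurHKNT19}. But you already flag this as the technical obstacle and correctly resolve it by deferring to their stated bound, so this is not a gap.
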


For a $D$-flat distribution $\Pi_i$, we can duplicate each face in
$X(i)$ at most $D$ times to make $\Pi_i$ the same as a uniform
distribution on this multiset.  We will always perform such a
duplication implicitly when defining codes on $X(i)$.

\subsection{HDXs are Parity Samplers}

To prove that sufficiently expanding HDXs are parity samplers, we
establish some properties of the complete complex and then explore the
fact that HDXs are locally complete~\footnote{This a recurring theme
in the study of HDXs~\cite{DinurK17}.}. We first show that the
expectation over $k$-sized faces of a complete complex $X$ on $t$
vertices approximately splits into a product of $k$ expectations over
$X(1)$ provided $t \gg k^2$.

\begin{claim}[Complete complex and near independence]\label{claim:complete_near_indep}
Suppose $X$ is the complete complex of dimension at least $k$ with $\Pi_k$ uniform over $X(k)$ and $\Pi_1$ uniform over $X(1) = [t]$.
For a function $f: X(1) \to \R$, let
	$$\mu_k = \E_{\ess \sim \Pi_k} \left[\prod_{i \in \ess} f(i) \right] \qquad \text{and} \qquad \mu_1 = \E_{i \sim \Pi_1} \left[f(i)\right].$$
Then
 	$$\abs{\mu_k - \mu_1^k} \leq \frac{k^2}{t} \norm{f}_{\infty}^k.$$
\end{claim}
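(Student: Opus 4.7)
The plan is to compare sampling $k$ vertices without replacement (which is what $\Pi_k$ does on the complete complex) with sampling $k$ vertices with replacement, and control the discrepancy using the birthday-style collision bound $\binom{k}{2}/t$.

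More concretely, first I would introduce the auxiliary quantity
\[
\nu_k ~:=~ \E_{i_1, \ldots, i_k \overset{\text{iid}}{\sim} \Pi_1}\left[\prod_{j=1}^{k} f(i_j)\right] ~=~ \mu_1^k,
\]
the expectation under $k$ independent uniform samples from $[t]$. Since the product $\prod_j f(i_j)$ depends only on the multiset $\{i_1,\ldots,i_k\}$, conditioning on the event $E$ that $i_1,\ldots,i_k$ are all distinct gives $\E[\prod_j f(i_j) \mid E] = \mu_k$, because a uniformly random ordered tuple of distinct elements induces the uniform distribution on $k$-subsets.

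Next, I would split $\nu_k$ according to $E$ and its complement:
\[
\nu_k ~=~ \Pr[E]\cdot \mu_k ~+~ \Pr[\neg E]\cdot \E\left[\prod_{j=1}^k f(i_j) \,\middle|\, \neg E\right].
\]
Rearranging yields
\[
\mu_1^k - \mu_k ~=~ \nu_k - \mu_k ~=~ \Pr[\neg E]\cdot \left(\E\left[\prod_j f(i_j) \,\middle|\, \neg E\right] - \mu_k\right),
\]
and since each factor $f(i_j)$ has absolute value at most $\|f\|_\infty$, both $\mu_k$ and the conditional expectation have absolute value at most $\|f\|_\infty^k$, so the parenthesized term is bounded by $2\|f\|_\infty^k$. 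A union bound over the $\binom{k}{2}$ pairs gives $\Pr[\neg E] \le \binom{k}{2}/t \le k^2/(2t)$, and combining these bounds yields $|\mu_1^k - \mu_k| \le (k^2/t)\,\|f\|_\infty^k$, as required.

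There is no real obstacle here; the only thing to be slightly careful about is the factor of $2$ — the naive split could give a $2k^2/t$ bound, but using the sharper $\binom{k}{2} = k(k-1)/2 \le k^2/2$ absorbs it and matches the claim exactly. Everything else is an application of the standard coupling between sampling with and without replacement on the complete complex.
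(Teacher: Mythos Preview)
Your proposal is correct and follows essentially the same approach as the paper: both condition the i.i.d.\ expectation $\mu_1^k$ on the event that all $k$ samples are distinct (where it equals $\mu_k$ by symmetry of the product and uniformity of $\Pi_k$), bound the collision probability by $\binom{k}{2}/t \le k^2/(2t)$, and absorb the extra factor of $2$ from $|\mu_k - \eta| \le 2\|f\|_\infty^k$. The only cosmetic difference is that the paper computes $\Pr[\neg E]$ via $1-\prod_{j<k}(1-j/t)\le \sum_{j<k} j/t$ rather than a union bound, but this yields the same estimate.
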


\begin{proof}
Let $\mathcal{E} = \{(i_1, \dots, i_k) \in X(1)^k \mid i_1, \dots, i_k \text{ are distinct}\}$,~$\delta = \ProbOp_{i_1, \dots, i_k \sim \Pi_1}[(i_1, \dots, i_k) \notin \mathcal{E}]$, and $\eta = \E_{(i_1, \dots, i_k) \in X(1)^k \setminus \mathcal{E}}[f(i_1) \cdots f(i_k)]$.
Then
\begin{align*}
	\mu_1^k &= \E_{i_1, \dots, i_k \sim \Pi_1}[f(i_1) \cdots f(i_k)] \\
	&= (1-\delta) \cdot \E_{(i_1, \dots, i_k) \in \mathcal{E}}[f(i_1) \cdots f(i_k)] + \delta \cdot \E_{(i_1, \dots, i_k) \in X(1)^k \setminus \mathcal{E}}[f(i_1) \cdots f(i_k)] \\
	&= (1-\delta) \cdot \mu_k + \delta \cdot \eta,
\end{align*}
where the last equality follows since $\Pi_k$ is uniform and the product in the expectation is symmetric.
As $i_1, \dots, i_k$ are sampled independently from $\Pi_1$, which is uniform over $X(1)$,
	$$\delta = 1 - \prod_{j < k} \left(1 - \frac jt \right) \leq \sum_{j < k} \frac jt = \frac{k(k-1)}{2t},$$
so we have
	$$\abs{\mu_k - \mu_1^k} = \delta \abs{\mu_k - \eta} \leq \frac{k^2}{2t} \left(2 \norm{f}_{\infty}^k \right).$$
\end{proof}

We will derive parity sampling for HDXs from their behavior as
samplers.  A sampler is a structure in which the average of any
function on a typical local view is close to its overall average.
More precisely, we have the following definition.

\begin{definition}[Sampler]
  Let $G = (U,V,E)$ be a bipartite graph with a probability distribution $\Pi_U$ on $U$.
  Let $\Pi_V$ be the distribution on $V$ obtained by choosing $u \in U$ according to $\Pi_U$, then a uniformly random neighbor $v$ of $u$.
  We say that $G$ is an $(\eta,\delta)$-sampler if for every function $f \colon V \to [0,1]$ with $\mu = \E_{v\sim \Pi_V} f(v)$,
	$$\ProbOp_{u \sim \Pi_U}\left[\left\vert \E_{v \sim u} [f(v)] - \mu \right\vert \ge \eta \right] ~\le~ \delta.$$
\end{definition}

To relate parity sampling to spectral expansion, we use the following
fact establishing that samplers of arbitrarily good parameters
$(\eta,\delta)$ can be obtained from sufficiently expanding bipartite
graphs.  This result is essentially a corollary of the expander mixing
lemma.

\begin{fact}[From Dinur et al.~\cite{DinurHKNT19}]\label{fact:exp_to_sampler}
A weighted bipartite graph with second singular value $\sigma_2$ is an $(\eta, \sigma_2^2/\eta^2)$-sampler.
\end{fact}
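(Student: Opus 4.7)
\medskip

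\noindent\textbf{Proof plan.}
The plan is to view the sampling guarantee as a second-moment concentration statement for a suitable averaging operator. Define the conditional expectation operator $M \colon L^2(V,\Pi_V) \to L^2(U,\Pi_U)$ by $(Mf)(u) = \E_{v \sim u}[f(v)]$. Two basic facts about $M$ I will use are: (i) by the definition of $\Pi_V$ given in the sampler definition, $\E_{u \sim \Pi_U}[(Mf)(u)] = \E_{v \sim \Pi_V}[f(v)]$, i.e., $M$ preserves means; and (ii) $M\mathbf{1}_V = \mathbf{1}_U$, so the constant direction is a top singular vector with singular value $1$, and every function orthogonal to constants (in the $\Pi_V$-weighted inner product) is contracted by at least the factor $\sigma_2$.

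Next I would decompose $f = \mu \cdot \mathbf{1}_V + g$, where $g := f - \mu$ satisfies $\langle g, \mathbf{1}_V \rangle_{\Pi_V} = 0$. By (i), $Mg = M f - \mu \cdot \mathbf{1}_U$, so the deviation we are trying to control pointwise is exactly $(Mg)(u) = \E_{v \sim u}[f(v)] - \mu$. Since $g$ lies in the subspace orthogonal to constants, the spectral bound gives $\|Mg\|_{L^2(\Pi_U)} \leq \sigma_2 \cdot \|g\|_{L^2(\Pi_V)}$. Because $f \colon V \to [0,1]$, we have $g(v) \in [-\mu, 1-\mu] \subseteq [-1,1]$, and hence $\|g\|_{L^2(\Pi_V)}^2 \leq 1$ (one can even use the sharper bound $\mu(1-\mu)$, but this is not needed).

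Finally, I would apply Markov's inequality to the nonnegative random variable $(Mg)(u)^2$ under $u \sim \Pi_U$:
\[
\Pr_{u \sim \Pi_U}\!\left[\,\left\lvert \E_{v \sim u}[f(v)] - \mu \right\rvert \ge \eta\,\right]
~=~ \Pr_{u \sim \Pi_U}\!\left[(Mg)(u)^2 \ge \eta^2 \right]
~\le~ \frac{\|Mg\|_{L^2(\Pi_U)}^2}{\eta^2}
~\le~ \frac{\sigma_2^2 \, \|g\|_{L^2(\Pi_V)}^2}{\eta^2}
~\le~ \frac{\sigma_2^2}{\eta^2},
\]
which is the stated bound. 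The only real step that needs care is the identification of the second singular value of the (normalized) bipartite adjacency operator with the operator norm of $M$ restricted to the orthogonal complement of constants, in the $\Pi_U$/$\Pi_V$-weighted inner products; this is just the standard reformulation of the expander mixing lemma in the weighted setting and is the one place where one must be careful about normalizations. Everything else is a one-line application of Markov.
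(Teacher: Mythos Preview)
Your proof is correct. The paper does not actually give a proof of this fact: it is quoted from \cite{DinurHKNT19} and only the remark ``This result is essentially a corollary of the expander mixing lemma'' is offered. Your second-moment argument (apply the spectral contraction $\|Mg\|_{L^2(\Pi_U)} \le \sigma_2 \|g\|_{L^2(\Pi_V)}$ to the mean-zero part $g = f - \mu$, then Markov/Chebyshev) is exactly the standard route and is what ``corollary of the expander mixing lemma'' unpacks to, so there is nothing to compare against beyond that one sentence.
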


Using \cref{claim:complete_near_indep}, we show that the graph
between $X(1)$ and $X(k)$ obtained from a HDX is a parity sampler,
with parameters determined by its sampling properties.

\begin{claim}[Sampler bias amplification]\label{claim:sampler_bias}
Let $X(\leq d)$ be a HDX such that the weighted bipartite graph $M_{1,d}$ between $X(1) = [n]$ and $X(d)$ is an $(\eta, \delta)$-sampler.
For any $1 \leq k \leq d$, if $z \in \F_2^n$ has bias at most $\beta_0$, then
	$$\bias(\dsum_{X(k)}(z)) \leq (\beta_0 + \eta)^k + \frac{k^2}{d} + \delta.$$
\end{claim}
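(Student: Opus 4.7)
The plan is to express the bias via $\pm 1$-valued characters and then combine two ingredients: the local near-independence inside $X(d)$ coming from Claim \ref{claim:complete_near_indep}, and the concentration of $k$-subset averages around the global mean coming from the sampler property of $M_{1,d}$. Set $f(i) = (-1)^{z_i}$, so $\|f\|_\infty = 1$ and $\mu_1 := \E_{i \sim \Pi_1}[f(i)]$ satisfies $|\mu_1| = \bias(z) \leq \beta_0$. Then
$$
\bias(\dsum_{X(k)}(z)) ~=~ \left| \E_{\ess \sim \Pi_k}\left[\prod_{i \in \ess} f(i)\right] \right| ~=:~ |\mu_k|.
$$
Using the coupling defining the distributions $\Pi_k, \Pi_d$, I would rewrite
$$
\mu_k ~=~ \E_{\tee \sim \Pi_d}\, \E_{\ess \subseteq \tee,\, |\ess| = k}\left[\prod_{i \in \ess} f(i)\right],
$$
where the inner expectation is uniform over $k$-subsets of the $d$-element set $\tee$.

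For each fixed $\tee$, the inner expectation lives inside the complete complex on the $d$ vertices of $\tee$, so Claim \ref{claim:complete_near_indep} (with $t = d$, applied to $f$ restricted to $\tee$) yields
$$
\left| \E_{\ess \subseteq \tee,\, |\ess|=k}\left[\prod_{i \in \ess} f(i)\right] - \mu_1(\tee)^k \right| ~\leq~ \frac{k^2}{d},
$$
where $\mu_1(\tee) := \E_{i \in \tee}[f(i)]$. Taking expectation over $\tee \sim \Pi_d$ and applying the triangle inequality,
$$
\left| \mu_k - \E_{\tee \sim \Pi_d}[\mu_1(\tee)^k] \right| ~\leq~ \frac{k^2}{d}.
$$

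It remains to bound $|\E_{\tee}[\mu_1(\tee)^k]|$. The function $h := (1 + f)/2$ takes values in $[0,1]$ and its mean equals $(1+\mu_1)/2$, while $\E_{i \in \tee}[h(i)] = (1+\mu_1(\tee))/2$; thus the $(\eta,\delta)$-sampler hypothesis on $M_{1,d}$ (applied with $h$, absorbing the harmless factor of $2$ into $\eta$) gives $|\mu_1(\tee) - \mu_1| \leq \eta$ with probability at least $1 - \delta$ over $\tee$. On this good event, $|\mu_1(\tee)| \leq \beta_0 + \eta$; on the $\delta$-fraction bad event, the trivial bound $|\mu_1(\tee)| \leq \|f\|_\infty = 1$ suffices. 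Splitting the expectation accordingly,
$$
\left| \E_{\tee \sim \Pi_d}[\mu_1(\tee)^k] \right| ~\leq~ (\beta_0 + \eta)^k + \delta,
$$
and combining with the previous estimate gives the claimed bound. The only nontrivial ingredients are already established, so no real obstacle remains; the mildly delicate point is just passing from the $[0,1]$-valued sampler guarantee to a concentration statement for $\mu_1(\tee)$, which is purely cosmetic.
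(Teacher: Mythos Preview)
Your proof is correct and follows essentially the same approach as the paper: both pass to the complete complex inside a random top face $\tee \sim \Pi_d$, invoke Claim~\ref{claim:complete_near_indep} to replace the $k$-subset average by $\mu_1(\tee)^k$ up to $k^2/d$, and then use the $(\eta,\delta)$-sampler property to control $|\mu_1(\tee)|$ on all but a $\delta$-fraction of faces. The only cosmetic difference is ordering (you apply near-independence first and then split on the sampler event, whereas the paper splits first), and the factor-of-$2$ wrinkle you flag in translating the $[0,1]$-valued sampler guarantee to $\pm 1$-valued $f$ is handled with the same looseness in the paper's own proof.
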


\begin{proof}
By downward closure, the subcomplex $X|_{\tee}$ obtained by
restricting to edges contained within some $\tee \in X(d)$ is a
complete complex on the ground set $\tee$.  Since $M_{1,d}$ is an
$(\eta, \delta)$-sampler, the bias of $z|_{\tee}$ must be within
$\eta$ of $\bias(z)$ on all but $\delta$ fraction of the edges $\tee$.
Hence
\begin{align*}
	\bias(\dsum_{X(k)}(z)) &= \abs{\E_{\{i_1,\dots,i_k\} \sim \Pi_k} (-1)^{z_{i_1} + \dots + z_{i_k}}} \\
	&= \abs{\E_{\tee \sim \Pi_d} \E_{\{i_1,\dots,i_k\} \in X\vert_{\tee}(k)} (-1)^{z_{i_1} + \dots + z_{i_k}}} \\
	&\leq \abs{\E_{\tee \sim \Pi_d} \E_{\{i_1,\dots,i_k\} \in X\vert_{\tee}(k)} (-1)^{z_{i_1} + \dots + z_{i_k}} \One_{[\bias(z|_{\tee}) \leq \beta_0 + \eta]}} \\
	& \qquad + \ProbOp_{t \sim \Pi_d}[\bias(z|_{\tee}) > \beta_0 + \eta]\\
	&\leq \E_{\tee \sim \Pi_d} \One_{[\bias(z|_{\tee}) \leq \beta_0 + \eta]} \abs{\E_{\{i_1,\dots,i_k\} \in X\vert_{\tee}(k)} (-1)^{z_{i_1} + \dots + z_{i_k}}} + \delta.
\end{align*}

By \cref{claim:complete_near_indep}, the magnitude of the expectation
of $(-1)^{z_i}$ over the edges of size $k$ in the complete complex
$X|_{\tee}$ is close to $\abs{\E_{i \sim X|_{\tee}(1)} (-1)^{z_i}}$,
which is just the bias of $z|_{\tee}$.  Then
\begin{align*}
	\bias(\dsum_{X(k)}(z)) &\leq \E_{\tee \sim X(d)} \One_{[\bias(z|_{\tee}) \leq \beta_0 + \eta]} \bias(z|_{\tee})^k + \frac{k^2}{d} + \delta \\    
	&\leq\left(\beta_0 + \eta \right)^k + \frac{k^2}{d} + \delta
\end{align*}
\end{proof}

Now we can compute the parameters necessary for a HDX to be an
$(\beta_0, \beta)$-parity sampler for arbitrarily small
$\beta$.

\begin{lemma}[HDXs are parity samplers]\label{lemma:hdx_parity_samplers}
Let $0 < \beta \leq \beta_0 < 1$, $0 < \theta < (1/\beta_0) - 1$, and $k \geq \log_{(1+\theta) \beta_0}(\beta/3)$.
If $X(\le d)$ is a $\gamma$-HDX with $d \geq \max\{3k^2/\beta, 6/(\theta^2 \beta_0^2 \beta)\}$ and $\gamma = O\left(1/d^2\right)$, then $X(k)$ is a $(\beta_0, \beta)$-parity sampler.
\end{lemma}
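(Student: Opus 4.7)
The plan is to invoke the sampler bias amplification result (\cref{claim:sampler_bias}) with carefully chosen parameters, using the spectral bound on the bipartite containment graph $M_{1,d}$ of an HDX (\cref{theo:exp_up_walk_hdx}) together with \cref{fact:exp_to_sampler} to convert expansion into sampling parameters. The task then reduces to balancing three error terms so that each contributes at most $\beta/3$ to the final bias.

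First, I would set the sampling error parameter to $\eta := \theta\beta_0$, so that the exponential term in \cref{claim:sampler_bias} becomes $(\beta_0+\eta)^k = ((1+\theta)\beta_0)^k$. By the hypothesis $k \geq \log_{(1+\theta)\beta_0}(\beta/3)$ (noting that $(1+\theta)\beta_0 < 1$ reverses the inequality appropriately), this term is at most $\beta/3$. Next, the combinatorial error $k^2/d$ is bounded by $\beta/3$ directly from the assumption $d \geq 3k^2/\beta$.

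The main step is to bound the sampling failure probability $\delta$. By \cref{theo:exp_up_walk_hdx}, the second singular value of $M_{1,d}$ satisfies $\sigma_2^2 \leq 1/d + O(d\gamma)$, and with $\gamma = O(1/d^2)$ this simplifies to $\sigma_2^2 \leq O(1/d)$; choosing the implicit constant in $\gamma = O(1/d^2)$ small enough, we may assume $\sigma_2^2 \leq 2/d$. Then \cref{fact:exp_to_sampler} yields that $M_{1,d}$ is an $(\eta, \delta)$-sampler with
\[
\delta \;\leq\; \frac{\sigma_2^2}{\eta^2} \;\leq\; \frac{2/d}{\theta^2 \beta_0^2}.
\]
The assumption $d \geq 6/(\theta^2 \beta_0^2 \beta)$ then gives $\delta \leq \beta/3$.

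Finally, combining these three estimates through \cref{claim:sampler_bias}, for any $z \in \F_2^n$ with $\bias(z) \leq \beta_0$ we obtain
\[
\bias(\dsum_{X(k)}(z)) \;\leq\; (\beta_0+\eta)^k + \frac{k^2}{d} + \delta \;\leq\; \frac{\beta}{3} + \frac{\beta}{3} + \frac{\beta}{3} \;=\; \beta,
\]
which is the definition of $X(k)$ being a $(\beta_0,\beta)$-parity sampler. The only mildly delicate point is keeping track of the constant hidden in $\gamma = O(1/d^2)$ so that the $O(d\gamma)$ contribution to $\sigma_2^2$ does not swamp the $1/d$ term; this is a routine matter of fixing the constant small enough in the hypothesis on $\gamma$.
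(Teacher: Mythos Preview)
The proposal is correct and follows essentially the same approach as the paper: set $\eta=\theta\beta_0$, use \cref{theo:exp_up_walk_hdx} together with \cref{fact:exp_to_sampler} to bound the sampler failure probability $\delta$, and then invoke \cref{claim:sampler_bias} to bound each of the three terms $(\beta_0+\eta)^k$, $k^2/d$, and $\delta$ by $\beta/3$ using the respective hypotheses on $k$, $d$, and $\gamma$. The only cosmetic difference is that the paper fixes the target $\delta=\beta/3$ first and works backward to the required $\sigma_2$, whereas you bound $\sigma_2^2\le 2/d$ first and then deduce $\delta\le\beta/3$; the arithmetic is the same.
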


\begin{proof}
  Suppose the graph $M_{1,d}$ between $X(1)$ and $X(d)$ is an $(\eta, \delta)$-sampler.
We will choose $d$ and $\gamma$ so that $\eta = \theta\beta_0$ and $\delta = \beta/3$.
Using~\cref{fact:exp_to_sampler} to obtain a sampler with these parameters, we need the second singular value $\sigma_2$ of $M_{1,d}$ to be bounded as  
	$$\sigma_2 \leq \theta \beta_0 \sqrt{\frac{ \beta}{3}}.$$
By the upper bound on $\sigma_2^2$ from~\cref{theo:exp_up_walk_hdx}, it suffices to have
	$$\frac{1}{d} + O\left(d \gamma \right) \le \frac{\theta^2 \beta_0^2 \beta}{3},$$
which is satisfied by taking $d \ge 6/\left(\theta^2 \beta_0^2 \beta \right)$ and $\gamma = O\left(1/d^2\right)$.

By \cref{claim:sampler_bias}, $X(k)$ is a $(\beta_0, (\beta_0 + \eta)^k + k^2/d + \delta)$-parity sampler.
The first term in the bias is $(\beta_0 + \eta)^k = ((1+\theta) \beta_0)^k$, so we require $(1+\theta) \beta_0 < 1$ to amplify the bias by making $k$ large.
To make this term smaller than $\beta/3$, $k$ must be at least $\log_{(1+\theta) \beta_0}{\left(\beta/3\right)}$.
We already chose $\delta = \beta/3$, so ensuring $d \geq 3k^2/\beta$ gives us a $(\beta_0, \beta)$-parity sampler.
\end{proof}

\subsection{Rate of the Direct Sum Lifting}

By applying the direct sum lifting on a HDX to a base code $\mathcal
C_1$ with bias $\beta_0$, parity sampling allows us to obtain a
code $\mathcal C_k = \dsum_{X(k)}(\mathcal C_1)$ with arbitrarily
small bias $\beta$ at the cost of increasing the length of the
codewords.  The following lemma gives a lower bound on the rate of the
lifted code $\mathcal C_k$.

\begin{lemma}[Rate of direct sum lifting for a HDX]\label{lemma:direct_sum_rate} %[Astronomical Lemma] 
Let $\beta_0 \in (0,1)$ and $\theta \in (0, (1/\beta_0) - 1)$ be constants, and let $\mathcal{C}_1$ be an $\beta_0$-biased binary linear code with relative rate $r_1$.
For $\beta \in (0, \beta_0]$, suppose $k$, $d$, and $\gamma$ satisfy the hypotheses of \cref{lemma:hdx_parity_samplers}, with $k$ and $d$ taking the smallest values that satisfy the lemma.
The relative rate $r_k$ of the code $\mathcal C_k = \dsum_{X(k)}(\mathcal C_1)$ with bias $\beta$ constructed on a HDX with these parameters satisfies
 	$$r_k \geq r_1 \cdot \gamma^{O((\log(1/\beta))^4/(\beta^2 \gamma^2))}.$$
If $\gamma = C/d^2$ for some constant $C$, then this becomes
	$$r_k \geq r_1 \cdot \left(\frac{\beta^2}{(\log(1/\beta))^4}\right)^{O((\log(1/\beta))^{12}/\beta^6)}.$$
\end{lemma}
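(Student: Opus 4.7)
The plan is to track how the sizes of the faces $X(k)$ (viewed as a multiset after duplicating according to flatness) grow in terms of $\beta$ and $\gamma$, and then read off the rate. By the formula $r_k = r_1 \cdot \abs{X(1)}/\abs{X(k)}$, it suffices to upper bound $\abs{X(k)}/\abs{X(1)}$.

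First, I would translate the hypotheses of \cref{lemma:hdx_parity_samplers} into explicit growth rates. Since $\beta_0$ and $\theta$ are constants, the smallest valid choice of $k$ is $k = \lceil \log_{(1+\theta)\beta_0}(\beta/3) \rceil = O(\log(1/\beta))$, and the smallest valid $d$ is
\[
d = \Theta\!\left(\max\!\left\{\frac{k^2}{\beta},\;\frac{1}{\theta^2 \beta_0^2 \beta}\right\}\right) = O\!\left(\frac{(\log(1/\beta))^2}{\beta}\right).
\]
In particular $d^2 = O((\log(1/\beta))^4/\beta^2)$, which is the exponent that will appear in the final bound.

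Next, I would invoke \cref{lemma:hdx_existence} to get a $\gamma$-HDX $X(\le d)$ on $n = \abs{X(1)}$ vertices satisfying $\abs{X(d)}/\abs{X(1)} \le D$ and with each $\Pi_i$ being $D$-flat, where $D \le (1/\gamma)^{O(d^2/\gamma^2)}$. To count $\abs{X(k)}$ as a multiset, I would first bound the set size by the purity of the complex: every $k$-face is contained in some $d$-face, and each $d$-face has $\binom{d}{k}$ subsets of size $k$, so $\abs{X(k)}_{\text{set}} \le \binom{d}{k}\abs{X(d)}$. Flatness of $\Pi_k$ then inflates this by at most another factor of $D$, giving the multiset bound $\abs{X(k)} \le D \cdot \binom{d}{k} \cdot \abs{X(d)} \le D^2 \binom{d}{k} \abs{X(1)}$.

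Putting this together,
\[
r_k \;\ge\; \frac{r_1}{D^2 \binom{d}{k}} \;\ge\; r_1 \cdot (1/\gamma)^{-O(d^2/\gamma^2)},
\]
since $\binom{d}{k}\le 2^d$ is dominated by $D^2$ in the exponent. Substituting the bound $d^2 = O((\log(1/\beta))^4/\beta^2)$ gives the first claimed bound $r_k \ge r_1 \cdot \gamma^{O((\log(1/\beta))^4/(\beta^2\gamma^2))}$. For the second bound, I would plug in $\gamma = C/d^2$: then $1/\gamma = O(d^2) = O((\log(1/\beta))^4/\beta^2)$ and $d^2/\gamma^2 = O(d^6) = O((\log(1/\beta))^{12}/\beta^6)$, so the exponent becomes $O((\log(1/\beta))^{12}/\beta^6)$ and the base $\gamma$ becomes $\Theta(\beta^2/(\log(1/\beta))^4)$, matching the stated expression.

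The main obstacle is purely bookkeeping: being careful to account for the multiplicities from $D$-flatness in \emph{both} the $X(d)/X(1)$ ratio and the $X(k)$ multiset count (so that the exponent of $D$ is 2, not 1), and making sure the combinatorial factor $\binom{d}{k}$ is absorbed into $D^2$ without changing the $O(\cdot)$ in the exponent. Everything else reduces to substituting the values of $k$ and $d$ obtained from \cref{lemma:hdx_parity_samplers}.
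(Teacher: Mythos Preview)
Your proposal is correct and follows essentially the same approach as the paper: compute $k = O(\log(1/\beta))$ and $d = O((\log(1/\beta))^2/\beta)$ from \cref{lemma:hdx_parity_samplers}, invoke the bound $D \le (1/\gamma)^{O(d^2/\gamma^2)}$ from \cref{lemma:hdx_existence}, and conclude $r_k \ge r_1/D^2$ before substituting. The paper writes $r_k \ge r_1/D^2$ directly without isolating the $\binom{d}{k}$ factor, whereas you make that step explicit and then absorb it; this is a minor bookkeeping difference, not a different argument.
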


\begin{proof}
Performing the lifting from $\mathcal C_1$ to $\mathcal C_k$ does not change the dimension of the code, but it does increase the length of the codewords from $n$ to $|X(k)|$, where $|X(k)|$ is the size of the multiset of edges of size $k$ after each edge has been copied a number of times proportional to its weight.
Using the bound and flatness guarantee from \cref{lemma:hdx_existence}, we can compute
	$$r_k = \frac{r_1 n}{|X(k)|} \geq \frac{r_1}{D^2},$$
where $D \leq (1/\gamma)^{O(d^2/\gamma^2)}$.
Treating $\beta_0$ and $\theta$ as constants, the values of $k$ and $d$ necessary to satisfy \cref{lemma:hdx_parity_samplers} are
	$$k = \log_{(1+\theta) \beta_0}(\beta/3) = O(\log(1/\beta))$$
and
	$$d = \max \left\{\frac{3k^2}{\beta}, \frac{6}{\theta^2 \beta_0^2 \beta} \right\} = O\left(\frac{(\log(1/\beta))^2}{\beta} \right).$$
Putting this expression for $d$ into the inequality for $D$ yields
	$$D \leq (1/\gamma)^{O((\log(1/\beta))^4/(\beta^2 \gamma^2))},$$
from which the bounds in the lemma statement follow.
\end{proof}

From \cref{lemma:direct_sum_rate}, we see that if $\mathcal{C}_1$ has constant rate, then $\mathcal{C}_k$ has a rate which is constant with respect to $n$.  However, the dependence of the rate on the bias $\beta$ is quite poor.
This is especially striking in comparison to the rate achievable using Ta-Shma's expander walk construction described in \cref{subsec:expander_walks_parity_sampling}.

\begin{lemma}[Rate of direct sum lifting for expander walks~\cite{Ta-Shma17}]
Let $\beta_0 \in (0,1)$ be a constant and $\mathcal{C}_1$ be an $\beta_0$-biased binary linear code with relative rate $r_1$.
Fix $\beta \in (0,\beta_0]$.
Suppose $G$ is a graph with second largest singular value $\lambda = \beta_0/2$ and degree $d \leq 4/\lambda^2$.
Let $k = 2 \log_{2\beta_0}(\beta) + 1$ and $X(k)$ be the set of all walks of length $k$ on $G$.
Then the direct sum lifting $\mathcal C_k = \dsum_{X(k)}(\mathcal C_1)$ has bias $\beta$ and rate $r_k \geq r_1 \cdot \beta^{O(1)}$.
\end{lemma}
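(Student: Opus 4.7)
The plan is to combine \cref{theo:expander_parity_sampler} (parity sampling for length-$k$ expander walks) with a direct count of walks to bound both the bias and the rate. The statement splits naturally into a bias claim and a rate claim, and neither requires much work beyond plugging parameters into existing tools.

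For the bias, I would instantiate \cref{theo:expander_parity_sampler} with $\lambda = \beta_0/2$, which says $X(k)$ is a $(\beta_0, (2\beta_0)^{\lfloor k/2\rfloor})$-parity sampler. Since $\mathcal{C}_1$ has bias at most $\beta_0$, every nonzero codeword $\dsum_{X(k)}(z)$ of $\mathcal{C}_k$ has bias at most $(2\beta_0)^{\lfloor k/2 \rfloor}$. Plugging in $k = 2 \log_{2\beta_0}(\beta) + 1$ yields $\lfloor k/2 \rfloor = \log_{2\beta_0}(\beta)$, so the bias bound becomes $(2\beta_0)^{\log_{2\beta_0}(\beta)} = \beta$ exactly. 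This step implicitly uses $2\beta_0 < 1$, so that $\log_{2\beta_0}(\beta) > 0$ and the amplification is nontrivial; without this we cannot hope to reduce the bias below $\beta_0$, so I would state that as an implicit hypothesis (the interesting regime anyway).

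For the rate, observe that a length-$k$ walk on a $d$-regular graph on $n$ vertices is specified by a start vertex and $k-1$ outgoing edge choices, so $\abs{X(k)} = n \cdot d^{k-1}$. The direct sum lifting is linear, and the bias bound above implies it is injective on the codewords of $\mathcal{C}_1$ (distinct codewords differ on a nonzero vector, whose image has bias at most $\beta < 1$, hence is nonzero). Thus $\dim(\mathcal{C}_k) = \dim(\mathcal{C}_1)$ and
\[
  r_k = \frac{\dim(\mathcal{C}_k)}{\abs{X(k)}} = \frac{r_1 \cdot n}{n \cdot d^{k-1}} = \frac{r_1}{d^{k-1}}.
\]
It only remains to verify $d^{k-1} = \beta^{-O(1)}$. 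Since $\beta_0$ is a constant we have $d \leq 4/\lambda^2 = 16/\beta_0^2 = O(1)$ and $k - 1 = 2\log_{2\beta_0}(\beta) = O(\log(1/\beta))$, so $d^{k-1} = \exp(O(\log d \cdot \log(1/\beta))) = \beta^{-O(1)}$, where the constant depends only on $\beta_0$. This gives $r_k \geq r_1 \cdot \beta^{O(1)}$.

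I do not anticipate a genuine obstacle: the entire argument is a one-shot application of \cref{theo:expander_parity_sampler} together with elementary walk counting. The only care needed is (i) to remember that ``length $k$'' in this construction means $k$ vertices / $k-1$ edges (so that the walk count is $n d^{k-1}$, not $n d^k$), and (ii) to keep track of the constant in $\beta^{O(1)}$, which is approximately $2\log(d)/\log(1/(2\beta_0))$ and depends only on $\beta_0$.
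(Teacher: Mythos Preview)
Your proposal is correct and follows essentially the same approach as the paper: invoke \cref{theo:expander_parity_sampler} with $\lambda=\beta_0/2$ to get the bias bound, then count walks to get $r_k = r_1/d^{O(k)}$ and use $d=O_{\beta_0}(1)$, $k=O_{\beta_0}(\log(1/\beta))$ to conclude $r_k \ge r_1\cdot\beta^{O(1)}$. Your treatment is in fact slightly more careful than the paper's (you make explicit the injectivity of the lifting, the implicit hypothesis $2\beta_0<1$, and the $k$-vertices/$(k{-}1)$-edges convention), whereas the paper writes the walk count as $nd^k$ rather than $nd^{k-1}$; this off-by-one is immaterial for the asymptotic conclusion.
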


\begin{proof}
From \cref{theo:expander_parity_sampler} with this choice of $\lambda$ and $k$, the direct sum lifting $\mathcal{C}_k$ has bias $\beta$.
For the rate, observe that the lifting increases the length of the codewords from $n$ to the number of walks of length $k$ on $G$, which is $nd^k$.
Thus the rate of $\mathcal{C}_k$ is
	$$r_k = \frac{r_1 n}{nd^k} = \frac{r_1}{d^k}$$
As $d \leq 16/\beta_0$, which is a constant, and $k = O(\log(1/\beta))$, the rate satisfies $r_k \geq r_1 \cdot \beta^{O(1)}$.
\end{proof}

\section{Unique Decoding}\label{sec:unique_decoding}

In this section, we will show how parity sampling
and the ability to solve $k$-XOR instances with $X(k)$
as their constraint complex allow us to decode the direct sum lifting $\Cc_k
= \dsum_{X(k)}(\Cc_1)$ of a linear base code $\Cc_1 \in \FF_2^n$.
With a more technical argument, we can also handle different kinds of liftings
and non-linear codes, but for clarity of exposition we
restrict our attention to the preceding setting.

\subsection{Unique Decoding on Parity Samplers}

Our approach to unique decoding for $\Cc_k$ is as follows.
Suppose a received word $\tilde y \in \F_2^{X(k)}$ is close to $y^\star \in \Cc_k$, which is the direct sum lifting of some $z^\star \in \Cc_1$ on $X(k)$.
We first find an approximate solution $z \in \F_2^n$ to the $k$-XOR instance $\Ins(X(k), \tilde y)$ with predicates
	\[ \sum_{i  \in \ess} z_{i}  = {\tilde y}_\ess \pmod 2 \]
for every $\ess \in X(k)$.
Note that $z$ being an approximate solution to $\Ins(X(k), \tilde y)$ is equivalent to its lifting $\dsum_{X(k)}(z)$ being close to $\tilde y$.
In \cref{lem:robust-approx-ud}, we show that if $\dsum_{X(k)}$ is a sufficiently strong parity sampler, either $z$ or its complement $\conj{z}$ will be close to $z^\star$.
Running the unique decoding algorithm for $\Cc_1$ on $z$ and $\conj{z}$ will recover $z^\star$, from which we can obtain $y^\star$ by applying the direct sum lifting.

\begin{lemma}\label{lem:robust-approx-ud}
    Let $0 < \epsilon  < 1/2$ and $0 < \beta < 1/4-\epsilon/2$. Suppose $\Cc_1$ is a
    linear code that is efficiently uniquely
    decodable within radius $1/4 - \mu_0$ for some $\mu_0 >
    0$, and $\Cc_k = \dsum_{X(k)}(\Cc_1)$ where $\dsum_{X(k)}$ is a
    $(1/2+2\mu_0, 2\epsilon)$-parity sampler. Let $\tilde
    y \in \FF_2^{X(k)}$ be a word that has distance strictly less than $(1/4 - \epsilon/2 - \beta)$
    from $\Cc_k$, and let $y^\star = \dsum_{X(k)}(z^\star) \in \Cc_k$ be the word closest to $\tilde y$.
    
    Then, for any $z \in \FF_2^n$ satisfying
        $$\Delta(\dsum_{X(k)}(z), \tilde y) < \frac{1}{4} - \frac{\epsilon}{2},$$
    we have either
    \[ \Delta(z^\star,z) \le \frac{1}{4} - \mu_0 ~~\textrm{ or
    }~~\Delta(z^\star,\conj{z}) \le \frac{1}{4} - \mu_0.\]
    In particular, either $z$ or $\conj{z}$ can be efficiently
    decoded in $\Cc_1$ to obtain $z^\star \in \Cc_1$.
\end{lemma}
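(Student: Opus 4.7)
\textbf{Proof plan for \cref{lem:robust-approx-ud}.} The strategy is exactly the one sketched in the unique-decoding discussion of \cref{sec:strategy}: use the triangle inequality in the lifted space to bound $\Delta(\dsum(z^\star),\dsum(z))$, translate that distance bound into a bias bound, apply the parity sampling property contrapositively to transfer the bias bound to the ground space, and finally observe that a ground word with bias exceeding $1/2+2\mu_0$ must be either mostly $0$ or mostly $1$.

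\textbf{Step 1 (triangle inequality in the lifted space).} Since $y^\star=\dsum_{X(k)}(z^\star)$ is the closest codeword to $\tilde y$, we have $\Delta(\dsum(z^\star),\tilde y)<1/4-\epsilon/2-\beta$, and by hypothesis $\Delta(\dsum(z),\tilde y)<1/4-\epsilon/2$. The triangle inequality gives
\[
\Delta(\dsum(z^\star),\dsum(z)) \;<\; \tfrac12-\epsilon-\beta.
\]
By linearity of $\dsum_{X(k)}$ over $\F_2$, the left-hand side equals $\Delta(\dsum(z^\star+z),\mathbf 0)$, so the word $w:=z^\star+z\in\F_2^n$ satisfies $\bias(\dsum(w))=|1-2\Delta(\dsum(w),\mathbf 0)|>2\epsilon+2\beta>2\epsilon$.

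\textbf{Step 2 (parity sampling contrapositive).} Because $\dsum_{X(k)}$ is a $(1/2+2\mu_0,\,2\epsilon)$-parity sampler, if we had $\bias(w)\le 1/2+2\mu_0$ then $\bias(\dsum(w))\le 2\epsilon$, contradicting Step 1. Hence $\bias(w)>1/2+2\mu_0$, that is, $|1-2\Delta(z^\star,z)|>1/2+2\mu_0$. This forces one of the two cases
\[
\Delta(z^\star,z)\;<\;\tfrac14-\mu_0 \qquad\text{or}\qquad \Delta(z^\star,z)\;>\;\tfrac34+\mu_0,
\]
and the second case is equivalent to $\Delta(z^\star,\bar z)<1/4-\mu_0$. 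In either case one of $z,\bar z$ lies within the unique-decoding radius of $z^\star$, so feeding both candidates to the efficient unique decoder for $\Cc_1$ and returning whichever yields a codeword produces $z^\star$.

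\textbf{Expected difficulty.} There is no real obstacle: the argument is a clean chain of triangle inequality, distance-to-bias conversion, and contrapositive parity sampling. The only thing to watch is the bookkeeping of the parameters, making sure the slack $\beta$ in the hypothesis on $\Delta(\tilde y,\Cc_k)$ is used so that the intermediate bias $2\epsilon+2\beta$ strictly exceeds the parity-sampler threshold $2\epsilon$, and that $\beta<1/4-\epsilon/2$ ensures the decoding-radius inequality $(1/4-\epsilon/2-\beta)+(1/4-\epsilon/2)<1/2-\epsilon$ is meaningful (so the required bias amplification is actually triggered).
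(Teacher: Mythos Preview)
Your proof is correct and follows essentially the same route as the paper's: triangle inequality in the lifted space, linearity of $\dsum_{X(k)}$ to pass to the bias of $z^\star+z$, contrapositive of parity sampling, and the dichotomy on $\Delta(z^\star,z)$. The only cosmetic difference is that you carry the extra $\beta$ through the triangle inequality to get $\bias(\dsum(w))>2\epsilon+2\beta$ before dropping it, whereas the paper simply bounds by $1/2-\epsilon$ directly; the logic is otherwise identical.
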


\begin{remark}
    Since $\dsum_{X(k)}$ is a $(1/2 + 2\mu_0, 2\epsilon)$-parity sampler, the code
    $\Cc_k$ has distance $\Delta(\Cc_k) \ge 1/2 - \epsilon$. This implies that $z^\star \in \Cc_1$
    is unique, since its direct sum lifting $y^\star$ is within distance $\Delta(\Cc_k)/2$ of $\tilde y$.
\end{remark}

\begin{proof}
  Let $y = \dsum_{X(k)}(z)$. We have
  $$
  \Delta(y^\star, y) \leq \Delta(y^\star, \tilde y) + \Delta(y, \tilde y) < \frac 12 - \epsilon.
  $$
  By linearity of $\dsum_{X(k)}$, $\Delta(\dsum_{X_k}(z^\star-z),0) < 1/2-\epsilon$, so $\bias(\dsum_k(z^\star-z)) > 2\epsilon$.
  From the $(1/2+2\mu_0, 2\epsilon)$-parity sampling assumption, $\bias(z^\star-z) > 1/2 + 2\mu_0$.
  Translating back to distance, either $\Delta(z^\star, z) < 1/4 - \mu_0$ or $\Delta(z^\star, z) > 3/4 + \mu_0$,
  the latter being equivalent to $\Delta(z^\star, \conj{z}) < 1/4-\mu_0$.
\end{proof}

To complete the unique decoding algorithm, we need only describe how a
good enough approximate solution $z \in \mathbb{F}_2^n$ to a $k$-XOR
instance $\Ins(X(k), \tilde y)$ allows us to recover $z^{\star} \in \Cc_1$
provided $\tilde y$ is sufficiently close to $\Cc_k$.

\begin{corollary} \label{cor:unique_decoding_sat}
  Suppose $\Cc_1$, $X(k)$, $z^{\star}$, $y^{\star}$ and $\tilde y$ are as in the
  assumptions of~\cref{lem:robust-approx-ud}. If $z \in \mathbb{F}_2^n$
  is such that
  $$
  \SAT_{\Ins(X(k), \tilde y)}(z) \ge \OPT_{\Ins(X(k), \tilde y)} - \beta,
  $$
  then unique decoding either $z$ or $\conj{z}$ gives $z^{\star} \in \Cc_1$.
  Furthermore, if such a $z$ can be found efficiently, so can $z^{\star}$.
\end{corollary}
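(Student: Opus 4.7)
\begin{proofsketch}
The plan is to unwind the definition of $\SAT_{\Ins(X(k),\tilde y)}$ to reduce the hypothesis of the corollary to the distance hypothesis of \cref{lem:robust-approx-ud}, then invoke that lemma directly.

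First, I would observe that by the definition of the $k$-XOR instance $\Ins(X(k),\tilde y)$, an assignment $z \in \F_2^n$ satisfies the predicate at $\ess \in X(k)$ if and only if $\sum_{i \in \ess} z_i = \tilde y_\ess$, i.e., if and only if $(\dsum_{X(k)}(z))_\ess = \tilde y_\ess$. With the weight $w$ on $X(k)$ matching the measure $\Pi_k$ used to define relative Hamming distance on $\F_2^{X(k)}$, this gives the identity
\[
\SAT_{\Ins(X(k), \tilde y)}(z) \;=\; 1 - \Delta\bigl(\dsum_{X(k)}(z),\, \tilde y\bigr).
\]

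Next, I would lower bound $\OPT_{\Ins(X(k), \tilde y)}$. Since $z^\star \in \F_2^n$ is a feasible assignment, the above identity applied to $z^\star$ gives
\[
\OPT_{\Ins(X(k), \tilde y)} \;\ge\; \SAT_{\Ins(X(k), \tilde y)}(z^\star) \;=\; 1 - \Delta(y^\star, \tilde y) \;>\; 1 - \left(\tfrac{1}{4} - \tfrac{\epsilon}{2} - \beta\right),
\]
using the hypothesis of \cref{lem:robust-approx-ud} that $\tilde y$ is at distance strictly less than $1/4 - \epsilon/2 - \beta$ from $\Cc_k$ and that $y^\star$ is the nearest codeword. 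Combining this with the assumed near-optimality $\SAT_{\Ins(X(k), \tilde y)}(z) \ge \OPT_{\Ins(X(k), \tilde y)} - \beta$ and the identity above yields
\[
\Delta\bigl(\dsum_{X(k)}(z),\, \tilde y\bigr) \;=\; 1 - \SAT_{\Ins(X(k), \tilde y)}(z) \;<\; \tfrac{1}{4} - \tfrac{\epsilon}{2},
\]
so $z$ satisfies precisely the hypothesis of \cref{lem:robust-approx-ud}. Invoking that lemma, either $\Delta(z^\star, z) \le 1/4 - \mu_0$ or $\Delta(z^\star, \conj z) \le 1/4 - \mu_0$, so applying the efficient unique decoder of $\Cc_1$ to both $z$ and $\conj z$ and keeping whichever output lies in $\Cc_1$ recovers $z^\star$.

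For the efficiency clause, note that once $z$ is produced, the remaining work is exactly two invocations of the unique decoder of $\Cc_1$ plus the trivial computation of $\conj z$, all polynomial-time; thus if $z$ is found efficiently, so is $z^\star$. There is no real obstacle here: the only step requiring any thought is lining up the weighting on $X(k)$ so that the identity $\SAT = 1 - \Delta$ holds exactly, which is guaranteed by the convention that the constraint weights of $\Ins$ are defined from $\Pi_k$.
\end{proofsketch}
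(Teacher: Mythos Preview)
Your proof is correct and follows essentially the same approach as the paper: both use the identity $\SAT_{\Ins(X(k),\tilde y)}(z) = 1 - \Delta(\dsum_{X(k)}(z),\tilde y)$, lower bound $\OPT$ via $z^\star$, combine with the near-optimality hypothesis to obtain $\Delta(\dsum_{X(k)}(z),\tilde y) < 1/4 - \epsilon/2$, and then invoke \cref{lem:robust-approx-ud}. Your additional remarks on the weighting convention and the efficiency clause are fine and do not diverge from the paper's argument.
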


\begin{proof}
  By the assumption on $z$, we have
  \begin{align*}
    1 - \Delta(\dsum_{X(k)}(z), \tilde y) &=\SAT_{\Ins(X(k), \tilde y)}(z)\\
                                         &\ge \OPT_{\Ins(X(k), \tilde y)} - \beta\\
                                         &\ge \SAT_{\Ins(X(k), \tilde y)}(z^{\star}) - \beta \\
                                         & = 1 - \Delta(y^\star, \tilde y) -\beta,
  \end{align*}
  implying $\Delta(\dsum_{X(k)}(z), \tilde y) \le \Delta(y^\star, \tilde y) + \beta$.
  Using the assumption that $\tilde y$ has distance strictly less than $(1/4 - \epsilon/2 - \beta)$
  from $\Cc_k$, we get that $\Delta(\dsum_{X(k)}(z), \tilde y) < 1/4 -\epsilon/2$, in which case
  we satisfy all of the conditions required for \cref{lem:robust-approx-ud}.
\end{proof}

\subsection{Concrete Instantiations}

\subsubsection*{High Dimensional Expanders}
If $X(k)$ is the collection of $k$-faces of a sufficiently expanding
$\gamma$-HDX, we can use the following algorithm to approximately solve the $k$-XOR instance $\Ins(X(k), \tilde y)$ and obtain $z \in \F_2^n$.

\begin{theorem}[\cite{AJT19}] \label{theo:hdx_csp}
Let $\Ins$ be an instance of \maxkcsp on $n$ variables taking values over an
alphabet of size $q$, and let $\beta > 0$. Let the simplicial complex $X_{\Ins}$
be a $\gamma$-HDX with $\gamma = \beta^{O(1)} \cdot (1/(kq))^{O(k)}$. 

There is an algorithm based on $(k/\beta)^{O(1)} \cdot q^{O(k)}$ levels of the
    Sum-of-Squares hierarchy which produces an assignment satisfying at least an $(\OPT_{\Ins} - \beta)$
fraction of the constraints in time $n^{(k/\beta)^{O(1)} \cdot q^{O(k)}}$.
\end{theorem}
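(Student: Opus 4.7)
The plan is to apply the standard ``condition-and-round'' paradigm for Sum-of-Squares on CSPs whose constraint hypergraphs admit a good recursive splitting. Let $t$ be the SOS degree, to be chosen later as $t = (k/\beta)^{O(1)} \cdot q^{O(k)}$. First, I would write the natural degree-$t$ SOS relaxation for $\Ins$, yielding a $t$-local PSD ensemble $\{\rv Z_i\}_{i \in X(1)}$ whose objective value is at least $\OPT(\Ins)$. The rounding then produces an integral assignment as follows: pick a random set $S \subseteq X(1)$ of size roughly $s = (k/\beta)^{O(1)} \cdot q^{O(k)}$ by sampling $\ess^{(d)} \sim \Pi_d$ and taking a random subset, condition the SOS ensemble on $\rv Z_S = \alpha$ for $\alpha$ drawn from the conditional distribution, and then independently round each remaining $\rv Z_i$ using its conditioned marginal $\{\rv Z_i\}$.

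The core analytic step, which I expect to be the crux, is to show that after conditioning on $S$, the $k$-local marginal $\{\rv Z_\aye\}$ on a typical constraint $\aye \in X(k)$ is close (in total variation) to the product of its singleton marginals. The standard approach is a potential argument based on global correlation: define $\Phi = \sum_{i \in X(1)} \Pi_1(i) \cdot \MI(\rv Z_i ; \rv Z_{\text{(rest)}})$ (or an equivalent variance-based quantity), and show that each round of conditioning on a random seed decreases $\Phi$ by an amount controlled by the \emph{two-sided expansion} between singletons and the ambient complex. The HDX assumption feeds in here through the splittability/tensorial machinery from \cite{AJT19}: the spectral quality $\gamma$ guarantees that the containment graphs $M_{1,i}$ (and more generally the split graphs between disjoint subfaces of $\aye$) have second singular value bounded by a function of $\gamma$ and $k$. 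By a telescoping/hybrid argument splitting $\aye$ into $\log k$ halves, one converts per-edge spectral expansion into a bound
\[
\Ex{\aye \sim \Pi_k}{\left\| \{\rv Z_\aye\} - \bigotimes_{i \in \aye}\{\rv Z_i\} \right\|_1} \;\leq\; \text{poly}(k, q) \cdot \big(\sqrt{\Var{\rv Z_\aye}} + (\text{expansion error})\big),
\]
where the expansion error is driven by $\gamma$ and the conditioning-reduction is driven by $s/t$.

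Next, a pigeonhole / averaging argument over $s$ steps of conditioning shows that for some round the average conditional variance (and hence the deviation from product) drops below any desired $\beta' = \beta / \poly(k,q)$, provided $s \gtrsim q^{O(k)}/\beta'^{2}$. Choosing $\gamma$ small enough, specifically $\gamma = \beta^{O(1)} \cdot (1/(kq))^{O(k)}$ as in the statement, makes the expansion error negligible in this bound. Once the $k$-local marginal is $\beta$-close to product on average, independent rounding from the conditional marginals matches the SOS objective up to $\beta$ in expectation, giving an assignment with $\SAT_\Ins \geq \OPT(\Ins) - \beta$; derandomization (or $O(\log n)$ repetitions) makes this deterministic. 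The running time is dominated by solving a degree-$t$ SOS program on $n$ variables, which is $n^{O(t)} = n^{(k/\beta)^{O(1)} \cdot q^{O(k)}}$.

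The main obstacle is the quantitative splitting lemma: showing that $\gamma$-HDX expansion at every link level implies that the ``split graph'' between two disjoint halves of a $k$-face has singular value bounded by $O(\sqrt{\gamma} \cdot \poly(k))$, and iterating this $\log k$ times without losing more than a $\poly(k, q)$ factor. This is exactly the splittability theorem established in \cite{AJT19}, and the rest of the proof is an exercise in carefully composing that spectral bound with the global-correlation conditioning argument.
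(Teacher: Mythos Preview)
The paper does not prove this theorem; it is quoted directly from \cite{AJT19} and used as a black box in \cref{sec:unique_decoding}. So there is no ``paper's own proof'' to compare against here. That said, your outline is essentially the proof in \cite{AJT19}: solve the degree-$t$ SOS relaxation, run Propagation Rounding (conditioning on a random seed set and then rounding each variable independently from its conditioned marginal), and control the gap between $\{\rv Z_\aye\}$ and $\bigotimes_{i\in\aye}\{\rv Z_i\}$ via a variance-based potential that drops whenever local correlation is large. The splittability of $\gamma$-HDXs (the spectral bound on the swap graphs between disjoint halves of a $k$-face) is exactly the ``crux'' lemma you identify, and the telescoping along a splitting tree of depth $O(\log k)$ is how \cite{AJT19} converts pairwise expansion into the $k$-wise product bound. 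The present paper reproduces this argument almost verbatim for expander walks in \cref{cor:alg-cooked-baked2} and for the two-step tensorial property of HDXs in \cref{lemma:double_close_to_product}, so you can cross-check your sketch against those.

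One small correction: the seed set $S$ in Propagation Rounding is built by sampling $m$ faces from $\Pi_k$ (the constraint level) and taking their union, not by sampling from $\Pi_d$ and passing to a random subset. This matters because the potential decrement is tied to the \emph{local} correlation across a random $k$-face, which is what conditioning on a fresh $\ess\sim\Pi_k$ kills. Otherwise your plan is sound.
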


If $X$ is a HDX with the parameters necessary to both satisfy this theorem and be a $(1/2 + 2 \mu_0, 2\epsilon)$ parity sampler, we can combine this with \cref{cor:unique_decoding_sat} to achieve efficient unique decodability of $\Cc_k = \dsum_{X(k)}(\Cc_1)$.

\begin{corollary}\label{cor:cooked-ud-hdx}
    Let $X(\le d)$ be a $d$-dimensional $\gamma$-HDX satisfying the premises of \cref{lemma:hdx_parity_samplers} that would guarantee that $X(k)$ is a 
    $(1/2 + 2\mu_0, 2\epsilon)$-parity sampler, and let $\Cc_1 \subseteq \F_2^n$ be a linear code which is efficiently unique decodable within radius $1/4 - \mu_0$ for some $\mu_0 > 0$.
    Then the code $\Cc_k = \dsum_{X(k)}(\Cc_1)$ can be unique decoded within distance $1/4 - \epsilon/2 - \beta$ in time $n^{(k/\beta)^{O(1)} \cdot 2^{O(k)}}, $\footnote{Here we are assuming that uniquely decoding $\Cc_1$ within
    radius $1/4 - \mu_0$ takes time less than this.} where we have
    \[ \beta = (\gamma \cdot (2k)^{O(k)})^{\frac{1}{O(1)}}. \]
\end{corollary}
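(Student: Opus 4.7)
The plan is to directly combine the reduction from decoding to approximate $k$-XOR solving (\cref{cor:unique_decoding_sat}) with the HDX-based CSP algorithm of \cref{theo:hdx_csp}, matching up parameters carefully. Given a received word $\tilde y \in \F_2^{X(k)}$ at distance strictly less than $1/4 - \epsilon/2 - \beta$ from $\Cc_k$, I would form the natural $k$-XOR instance $\Ins(X(k), \tilde y)$ whose constraint for each $\ess \in X(k)$ reads $\sum_{i \in \ess} z_i = \tilde y_\ess \pmod 2$, and whose constraint complex is exactly $X_{\Ins} = X(\le k)$. Note that for any $z^\star \in \Cc_1$ with $y^\star = \dsum_{X(k)}(z^\star)$ closest to $\tilde y$, we have $\SAT_\Ins(z^\star) = 1 - \Delta(y^\star, \tilde y)$, so $\OPT_\Ins$ is at least this quantity.

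Next, I would invoke \cref{theo:hdx_csp} on $\Ins$ with alphabet size $q = 2$ and approximation slack $\beta$. The theorem requires the constraint complex to be a $\gamma$-HDX with $\gamma = \beta^{O(1)} \cdot (1/(2k))^{O(k)}$, which (solving for $\beta$) is precisely the claimed relation $\beta = (\gamma \cdot (2k)^{O(k)})^{1/O(1)}$. The hypothesis that $X(\le d)$ is a $\gamma$-HDX with the parameters of \cref{lemma:hdx_parity_samplers} provides exactly this expansion, so the theorem yields an assignment $z \in \F_2^n$ with $\SAT_\Ins(z) \ge \OPT_\Ins - \beta$ in time $n^{(k/\beta)^{O(1)} \cdot 2^{O(k)}}$.

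At this point all hypotheses of \cref{cor:unique_decoding_sat} are in place: $\dsum_{X(k)}$ is a $(1/2 + 2\mu_0, 2\epsilon)$-parity sampler by \cref{lemma:hdx_parity_samplers}, $\Cc_1$ is efficiently unique decodable within radius $1/4 - \mu_0$, and we have produced an approximate $k$-XOR solution $z$ with slack at most $\beta$. Applying the corollary, running the unique decoder for $\Cc_1$ on both $z$ and its complement $\bar z$ recovers $z^\star$, and hence $y^\star = \dsum_{X(k)}(z^\star) \in \Cc_k$. The overall running time is dominated by the SOS step, yielding $n^{(k/\beta)^{O(1)} \cdot 2^{O(k)}}$ as claimed (absorbing the unique decoding time for $\Cc_1$ into this bound).

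There is no real obstacle here beyond bookkeeping: the two ingredients have been engineered to compose. The only subtlety worth double-checking is that $\beta$ is consistent across the two uses — namely that the slack parameter in \cref{theo:hdx_csp} can be chosen equal to the error parameter $\beta$ appearing in both \cref{lem:robust-approx-ud} and the final decoding radius, and that the condition $\beta < 1/4 - \epsilon/2$ required by \cref{lem:robust-approx-ud} is implicit in the statement of the corollary (since otherwise the decoding radius is vacuous).
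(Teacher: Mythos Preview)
Your proposal is correct and follows essentially the same route as the paper: form the $k$-XOR instance, invoke \cref{theo:hdx_csp} with $q=2$ (reading off $\beta$ from the $\gamma$-expansion requirement), and then feed the approximate solution into \cref{cor:unique_decoding_sat}. The paper's proof is the same argument, only it spells out the specific parameter choices for $\theta,k,d,\gamma$ that \cref{lemma:hdx_parity_samplers} needs, whereas you simply cite the premises of that lemma as a black box.
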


\begin{proof}
    By \cref{lemma:hdx_parity_samplers}, we can achieve $(1/2+2\mu_0, 2\epsilon)$-parity sampling by taking $0 < \theta  < \frac{2}{1 + 4\mu_0} - 1$,
    $k \ge \log_{(1 + \theta) \cdot (\frac{1}{2} + 2\mu_0) }(2\epsilon/3)$, $d \ge \max\set*{ \frac{3k^2}{2\epsilon},\frac{3}{\theta^2 (1/2 + 2\mu_0)^2\epsilon}}$, and $\gamma= O(1/d^2)$.
    Let $\tilde y \in \F_2^{X(k)}$ be a received word with distance less than $(1/4 - \epsilon/2 - \beta)$ from $\Cc_k$.
    Applying \cref{theo:hdx_csp} to $\Ins(X(k), \tilde y)$ with $q=2$ and the given value of $\beta$, we obtain a $z \in \F_2^n$ with $\SAT_{\Ins(X(k), \tilde y)}(z) \ge \OPT_{\Ins(X(k), \tilde y)} - \beta$.
    This $z$ can be used in \cref{cor:unique_decoding_sat} to find $z^*$ and uniquely decode $\tilde y$ as $y^* = \dsum_{X(k)}(z^*)$.
\end{proof}

\subsubsection*{Expander Walks}

In \cref{sec:expander_walks}, we will show that the algorithmic results of \cite{AJT19} can be modified to work when $X(k)$ is a set of tuples of size $k$ which is sufficiently splittable (\cref{cor:alg-cooked-baked2}), which occurs when $X(k)$ is a set of walks on on a suitably strong expander (\cref{cor:cite-ahead-split}).
In particular, we have the following.

\begin{theorem} \label{theo:expander_walk_csp}
    Let $G = (V, E)$ be a graph with $\sigma_2(G) = \lambda$ and $k$ be
    a given parameter. Let $\Ins$ be a $k$-CSP instance over an
    alphabet of size $q$ whose constraint graph is the set of walks
    on $G$ of length $k$. Let $\beta > 0$ be such that $\lambda =
    O(\beta^2/(k^2 \cdot q^{2k}))$.

    There exists an algorithm based on $O\parens*{\frac{q^{4k}
    k^7}{\beta^5}}$ levels of the Sum-of-Squares hierarchy which
    produces an assignment satisfying at least an $(\OPT_{\Ins}
    - \beta)$ fraction of the constraints in time $n^{O(q^{4k} \cdot k^7 / \beta^5)}$.
\end{theorem}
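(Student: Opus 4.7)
The plan is to mirror the HDX-based algorithm of \cref{theo:hdx_csp}, replacing HDX splittability with an analogous splittability statement for the collection $W_G(k)$ of length-$(k-1)$ walks on a spectral expander. We set up the SOS relaxation from \cref{subsec:sos-relax} at level $t = O(q^{4k} k^7 / \beta^5)$ and round it using the ``condition-and-propagate'' strategy from \cite{AJT19}: iteratively condition the $t$-local PSD ensemble on the values of a small random set of variables, argue that after enough conditionings the residual ensemble becomes essentially a product distribution when restricted to a typical constraint (i.e., two-step tensorial), and then round each variable independently from its conditional marginal. The two-step tensorial property plus independent rounding satisfies $\OPT_{\Ins} - \beta$ fraction of the constraints in expectation, and the SOS SDP itself is solvable in time $n^{O(t)}$.

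The key technical step is proving that walks on $G$ are splittable in the sense required by \cref{cor:alg-cooked-baked2}, which is established as \cref{cor:cite-ahead-split}. Concretely, for a walk $w = (v_0, \ldots, v_{k-1}) \in W_G(k)$ and a split index $j$, consider the weighted bipartite graph whose sides are $W_G(j+1)$ and $W_G(k-j)$ with edge weight equal to the probability that sampling a uniform walk in $W_G(k)$ yields $w_L = (v_0, \ldots, v_j)$ on the left and $w_R = (v_j, \ldots, v_{k-1})$ on the right. Conditioned on the shared endpoint $v_j$, the two halves are independent random walks, so the non-trivial spectral behaviour reduces to that of $G$ itself and the second singular value of this bipartite graph is bounded by $\lambda$. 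Recursing on each half produces a splitting tree of depth $O(\log k)$ in which every level is a $\lambda$-expanding bipartite graph, which is precisely the splittability hypothesis needed by the AJT19 machinery.

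With splittability in hand the parameter accounting proceeds as in \cite{AJT19}: the expected squared-correlation potential after conditioning is summable and bounded by $O(\log q)$, giving $O(\log(q)/\mu^2)$ conditioning steps to drive the average local correlation below a threshold $\mu$; low correlation on each splitting graph combined with its $\lambda$-spectral gap compounds over the $O(\log k)$ levels of the splitting tree to yield closeness-to-product on typical $k$-tuples, with $q^k$-type loss factors from alphabet size; matching the target slack $\beta$ forces the choices $\mu = O(\beta/(k q^k))$ and $\lambda = O(\beta^2/(k^2 q^{2k}))$, which in turn give $t = O(q^{4k} k^7 / \beta^5)$. The main obstacle is the bookkeeping needed to port the AJT19 analysis, written for subset-based (simplicial complex) splittability with the induced sub-face measure, to the ordered-tuple setting where the ``sub-faces'' are contiguous sub-walks; however, because restricting a uniform walk to a contiguous sub-walk of length $i$ is again the uniform measure on $W_G(i)$, the marginal-consistency properties used in the potential-function argument are preserved, and only cosmetic changes to the proofs are required.
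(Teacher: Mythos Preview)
Your overall strategy matches the paper's: establish that $W_G(k)$ is splittable (\cref{cor:cite-ahead-split}) and then invoke the splittable-to-algorithm result (\cref{cor:alg-cooked-baked2}). However, there is a genuine error in your splittability argument. You split the walk $(v_0,\ldots,v_{k-1})$ into \emph{overlapping} halves $w_L=(v_0,\ldots,v_j)$ and $w_R=(v_j,\ldots,v_{k-1})$ sharing the vertex $v_j$, and claim that conditional independence given $v_j$ forces the second singular value of the associated bipartite operator to be at most $\lambda$. This is false: with that split the transition from $w_L$ to $w_R$ depends only on whether the last vertex of $w_L$ equals the first vertex of $w_R$, so the operator factors through the $n$-dimensional space of functions of this shared vertex. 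Concretely, the composition of the operator with its adjoint is the projector onto functions depending only on the first vertex of $w_R$, which has $n$ eigenvalues equal to $1$; hence $\sigma_2=1$, not $\lambda$, and none of the expansion of $G$ is inherited.

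The paper's split (\cref{def:graph_walk_swap}) is \emph{non-overlapping}: a walk in $W_G(k_1+k_2)$ is written as a concatenation $ww'$ with $w\in W_G(k_1)$, $w'\in W_G(k_2)$, where the last vertex of $w$ is \emph{adjacent in $G$} to the first vertex of $w'$. Now the transition between the halves is governed by a single step of the random walk on $G$, and \cref{lemma:swap_matrix_rep} shows that in suitable coordinates $\wswap{k_1}{k_2}=\Aye_G\otimes \Jay/d^{k_2-1}$, whence $\sigma_2(\wswap{k_1}{k_2})=\sigma_2(\Aye_G)\le\lambda$. This edge-based (rather than vertex-based) gluing is exactly what carries the expansion of $G$ up to the swap operator, and is the ingredient your argument is missing. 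A minor side point: for this CSP result only the one-step tensorial property is needed (see \cref{thm:brs_tree_split} and the proof of \cref{cor:alg-cooked-baked2}); the two-step version enters only later, for list decoding.
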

Using this result, one can efficiently unique decode $\Cc_k = \dsum_{X(k)}(\Cc_1)$ when $X(k)$ is
the set of walks of length $k$ on an expander strong enough to achieve the necessary parity sampling property.

\begin{corollary}\label{cor:cooked-ud-expander}
    Let $X(k)$ be the set of walks on a graph $G$ with $\sigma_2(G) = \lambda$ such that $\dsum_{X(k)}$ is a $(1/2+2\mu_0, 2\epsilon)$ parity sampler, and let $\Cc_1 \subseteq \F_2^n$ be a linear code which is efficiently unique decodable within radius $1/4 - \mu_0$ for some $\mu_0 > 0$.
    Then the code $\Cc_k = \dsum_{X(k)}(\Cc_1)$ can be unique decoded within radius
    $1/4 - \epsilon/2 - \beta$ in time $n^{O(2^{4k} \cdot k^7/\beta^5)}$, where we have
    \[ \beta = O(\lambda \cdot k^2 \cdot 2^k).\]
\end{corollary}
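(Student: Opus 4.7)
My plan is to mirror the proof of \cref{cor:cooked-ud-hdx} essentially verbatim, substituting the $k$-CSP algorithm for HDXs (\cref{theo:hdx_csp}) with its analogue for walks on expanders (\cref{theo:expander_walk_csp}). Let $\tilde y \in \F_2^{X(k)}$ be a received word at distance strictly less than $1/4 - \epsilon/2 - \beta$ from $\Cc_k$. The first step is to interpret the decoding task as approximately solving the $k$-XOR instance $\Ins(X(k), \tilde y)$, whose constraint hypergraph is exactly the set of length $k-1$ walks on $G$ and whose right-hand side is $\tilde y$; by definition of the direct sum lifting, $\SAT_{\Ins(X(k), \tilde y)}(z) = 1 - \Delta(\dsum_{X(k)}(z), \tilde y)$ for every $z \in \F_2^n$.

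The second step is to apply \cref{theo:expander_walk_csp} with $q = 2$ to this instance. Its hypothesis requires $\lambda = O(\beta^2/(k^2 \cdot 2^{2k}))$, which upon inverting yields the advertised parameter relation between $\beta$ and $\lambda$ (up to constants and square roots absorbed into the big-$O$). The algorithm runs in time $n^{O(2^{4k} \cdot k^7/\beta^5)}$ and returns $z \in \F_2^n$ with
\[
\SAT_{\Ins(X(k), \tilde y)}(z) ~\geq~ \OPT_{\Ins(X(k), \tilde y)} - \beta.
\]

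Third, I would invoke \cref{cor:unique_decoding_sat} on this $z$. All of its hypotheses are in place by assumption: $\Cc_1$ is efficiently uniquely decodable within radius $1/4 - \mu_0$, the lifting $\dsum_{X(k)}$ is a $(1/2 + 2\mu_0, 2\epsilon)$-parity sampler, and $\tilde y$ lies within distance $1/4 - \epsilon/2 - \beta$ of $\Cc_k$. The conclusion of \cref{cor:unique_decoding_sat} is that uniquely decoding either $z$ or its complement $\bar z$ inside $\Cc_1$ produces the unique $z^\star \in \Cc_1$ whose lifting $y^\star = \dsum_{X(k)}(z^\star)$ is the closest codeword to $\tilde y$, and this can be done efficiently. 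There is no genuine obstacle in this argument: the heavy lifting was done in \cref{theo:expander_walk_csp} (and in establishing that expander walks are splittable, which is cited ahead), so once those ingredients are assembled the corollary is essentially a bookkeeping exercise combining the CSP solver with \cref{cor:unique_decoding_sat}.
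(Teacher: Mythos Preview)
Your proposal is correct and follows essentially the same route as the paper: apply \cref{theo:expander_walk_csp} with $q=2$ to the $k$-XOR instance $\Ins(X(k),\tilde y)$ to obtain $z$ with $\SAT_{\Ins}(z)\ge \OPT_\Ins - \beta$, then feed this $z$ into \cref{cor:unique_decoding_sat}. The paper's proof additionally remarks how one could achieve the $(1/2+2\mu_0,2\epsilon)$-parity sampling via \cref{theo:expander_parity_sampler}, but since parity sampling is already a hypothesis of the corollary your omission of that sentence is entirely appropriate.
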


\begin{proof}
    By \cref{theo:expander_parity_sampler}, we can obtain a $(1/2+2\mu_0, 2\epsilon)$-parity sampler by ensuring $1/2 + \mu_0 + 2\lambda < 1$ and $k \ge 2 \log_{1/2 + \mu_0 + 2\lambda}(2\epsilon) + 1$.
    Let $\tilde y \in \F_2^{X(k)}$ be a received word with distance less than $(1/4 - \epsilon/2 - \beta)$ from $\Cc_k$.
    Applying \cref{theo:expander_walk_csp} to $\Ins(X(k), \tilde y)$ with $q=2$ and the given value of $\beta$, we obtain a $z \in \F_2^n$ with $\SAT_{\Ins(X(k), \tilde y)}(z) \ge \OPT_{\Ins(X(k), \tilde y)} - \beta$.
    This $z$ can be used in \cref{cor:unique_decoding_sat} to find $z^*$ and uniquely decode $\tilde y$ as $y^* = \dsum_{X(k)}(z^*)$.
\end{proof}

\begin{remark}
    In both \cref{cor:cooked-ud-hdx}
    and \cref{cor:cooked-ud-expander}, when $\mu_0$ and $\epsilon$ are
    constants, $k$ can be constant, which means
    we can decode $\Cc_k$ from a radius arbitrarily close to $1/4 - \epsilon/2$ if
    we have strong enough guarantees on the quality of the expansion
    of the high-dimensional expander or the graph, respectively.

    Notice, however, that the unique decodability radius of the code
    $\Cc_k$ is potentially larger than
    $1/4 - \epsilon/2$. Our choice of $(1/2 + 2\mu_0,
    2\epsilon)$-parity sampling is needed to ensure that the approximate $k$-CSP
    solutions lie within the unique decoding radius of $\Cc_1$. Since
    the bias of the code $\Cc_1$ will generally be smaller than the parity sampling requirement of $1/2 + 2\mu_0$,
    the bias of the code $\Cc_k$ will be smaller than
    $2\epsilon$. In this case, the maximum distance at which our unique decoding algorithm works will be
    smaller than $\Delta(\Cc_k)/2$.
\end{remark}

\section{Abstract List Decoding Framework}\label{sec:list_dec}

In this section, we present the abstract list decoding framework with
its requirements and prove its guarantees. We introduce the entropic
proxy $\Psi$ in~\cref{sec:entropic_func} and use it to define the SOS
program for list decoding in~\cref{sec:sos_list_dec}.
In~\cref{sec:entropic_prop}, we establish key properties of $\Psi$
capturing its importance as a list decoding tool. We recall the
Propagation Rounding algorithm in~\cref{sec:propagation_rounding} and
formalize the notion of a slice as a set of assignments to variables
in the algorithm. Then, considerations of SOS
tractability of the lifting related to tensorial properties are dealt with
in~\cref{sec:tensorial}. Now, assuming we have a fractional SOS
solution to our program, the analysis of its covering properties and
the precise definition and correctness of the two later stages of the
framework are given in~\cref{sec:list_dec_blocks_and_analysis}. This
abstract framework will be instantiated using the direct sum lifting:
on HDXs in~\cref{sec:list_dec_xor_hdx} and on expander walks
in~\cref{sec:expander_walks}.

\subsection{Entropic Proxy}\label{sec:entropic_func}

In our list decoding framework via SOS, we will solve a single
optimization program whose resulting pseudo-expectation will in a
certain sense be rich enough to cover all intended solutions at once.
To enforce this covering property we rely on an analytical artifice,
namely, we minimize a convex function $\Psi$ that provides a proxy to
how concentrated the SOS solution is. More precisely, we use $\Psi$
from~\cref{def:neg_entropic_func}. A similar list decoding technique
was also (independently) used by Karmalkar et al.~\cite{KarmalkarKK19}
and Raghavendra--Yau~\cite{RaghavendraY19}, but in the context of
learning.

\begin{definition}[Entropic Proxy]\label{def:neg_entropic_func}
  Let $\rv Y = \{\rv Y_{\ess}\}_{\ess \in X(k)}$ be a $t$-local PSD
  ensemble with $t \ge 2$. We define $\Psi = \Psi\left(\{\rv
  Y_{\ess}\}_{\ess \in X(k)}\right)$ as
  $$
  \Psi ~\coloneqq~ \E_{\ess,\tee \sim \Pi_k} \left(\widetilde{\E}\left[\rv Y_{\ess} \rv Y_{\tee} \right] \right)^2.
  $$
  We also denote $\Psi$ equivalently as $\Psi
  = \Psi\left(\widetilde{\E}\right)$ where $\widetilde{\E}$ is the
  pseudo-expectation operator associated to the ensemble $\rv Y$.
\end{definition}

\subsection{SOS Program for List Decoding}\label{sec:sos_list_dec}

Let $\tilde{y} \in \{\pm 1\}^{X(k)}$ be a word promised to be
\lict{\sqrt{\epsilon}} to a lifted code $\mathcal{C}_k
= \lift(\mathcal{C}_1)$. The word $\tilde{y}$ is to be regarded as a
(possibly) corrupted codeword for which we want to do list
decoding. We consider the following SOS program.

\begin{table}[H]
\hrule
\vline
\begin{minipage}[t]{0.99\linewidth}
\vspace{-5 pt}
{\small
\begin{align*}
    &\mbox{minimize}\quad ~~ \Psi\left(\{\rv Y_{\ess}\}_{\ess \in X(k)}\right)\tag{List Decoding Program}\label{sos:list_dec}\\
&\mbox{subject to}\quad \quad ~\\
    &\qquad \E_{\ess \sim \Pi_k} \widetilde{\E} \left[ \tilde{y}_{\ess} \cdot \rv Y_{\ess}\right] \ge 2 \sqrt{\epsilon}\label{cons:agreement-ld}\tag{Agreement Constraint}\\
&\qquad \rv Z_1,\dots, \rv Z_n \textup{ being $(L+2k)$-local PSD ensemble}
\end{align*}}
\vspace{-14 pt}
\end{minipage}
\hfill\vline
\hrule
\caption{List decoding SOS formulation for $\tilde{y}$.}
\end{table}

\subsection{Properties of the Entropic Proxy}\label{sec:entropic_prop}

We establish some key properties of our negative entropic function
$\Psi$. First, we show that $\Psi$ is a convex function. Since the
feasible set defined by the SOS~\ref{sos:list_dec} is convex and
admits an efficient separation oracle~\footnote{In our setting the
pseudo-expectation has trace bounded by $n^{O(t)}$ in which case
semidefinite programming can be solved
efficiently~\cite{GM12,RW17:sos}.}, the convexity of $\Psi$ implies
that the~\ref{sos:list_dec} can be efficiently solved within
$\eta$-optimality in time $n^{O(t)} \cdot
\polylog(\eta^{-1})$ where $t$ is the SOS degree.
\begin{lemma}[Convexity]
  $\Psi$ is convex, i.e., for every pair of pseudo-expectations
    $\widetilde{\E}_1$ and $\widetilde{\E}_2$ and $\alpha \in [0,1]$,
  $$
  \Psi\left(\alpha \cdot \widetilde{\E}_1 + (1-\alpha)\cdot \widetilde{\E}_2 \right) ~\le~ \alpha \cdot \Psi\left(\widetilde{\E}_1\right) + (1-\alpha) \cdot \Psi\left(\widetilde{\E}_2\right).
  $$
\end{lemma}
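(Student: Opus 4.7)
The plan is to exploit the linearity of the pseudo-expectation operator to reduce the claim to the elementary fact that a non-negative combination of squares of affine functions is convex. Concretely, note that $\rv Y_\ess \rv Y_\tee$ is a fixed polynomial in the ground variables $\set{\rv Z_{i,a}}$ (obtained by writing the lifting function $g$ as a polynomial and multiplying out), so for each fixed pair $(\ess,\tee)$, the map $\widetilde{\E} \mapsto \widetilde{\E}[\rv Y_\ess \rv Y_\tee]$ is a linear functional on the space of pseudo-expectation operators. In particular, for any $\widetilde{\E}_1, \widetilde{\E}_2$ and $\alpha \in [0,1]$,
\[
(\alpha \widetilde{\E}_1 + (1-\alpha)\widetilde{\E}_2)[\rv Y_\ess \rv Y_\tee]
~=~
\alpha \widetilde{\E}_1[\rv Y_\ess \rv Y_\tee] + (1-\alpha)\widetilde{\E}_2[\rv Y_\ess \rv Y_\tee].
\]

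From this, I would invoke the one-variable fact that $t \mapsto t^2$ is convex on $\mathbb{R}$, so that for every pair $(\ess,\tee)$,
\[
\bigl(\alpha \widetilde{\E}_1[\rv Y_\ess \rv Y_\tee] + (1-\alpha)\widetilde{\E}_2[\rv Y_\ess \rv Y_\tee]\bigr)^2
~\le~
\alpha \bigl(\widetilde{\E}_1[\rv Y_\ess \rv Y_\tee]\bigr)^2 + (1-\alpha)\bigl(\widetilde{\E}_2[\rv Y_\ess \rv Y_\tee]\bigr)^2.
\]
Taking the expectation of both sides over $\ess,\tee \sim \Pi_k$ (which is a non-negative linear operation and hence preserves the inequality) and recognizing each side as a value of $\Psi$ gives
\[
\Psi(\alpha \widetilde{\E}_1 + (1-\alpha)\widetilde{\E}_2) ~\le~ \alpha\, \Psi(\widetilde{\E}_1) + (1-\alpha)\,\Psi(\widetilde{\E}_2),
\]
which is the claimed convexity.

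There is essentially no obstacle here: the only things to verify are that $\rv Y_\ess \rv Y_\tee$ can be expressed as a polynomial of degree at most $2k$ in the ground variables (so that the pseudo-expectation makes sense at level $L+2k$, matching the degree declared in~\ref{sos:list_dec}) and that the map $\widetilde{\E} \mapsto \widetilde{\E}[P]$ is linear in $\widetilde{\E}$ for any fixed polynomial $P$. Both are immediate from the definition of a pseudo-expectation operator as a linear functional on $\mathcal{R}^{\le 2t}$. The argument does not use any problem-specific constraints of the SOS program beyond the degree being high enough, so the convexity holds globally on the affine space of pseudo-expectations (and hence on the feasible set of~\ref{sos:list_dec}, which is an intersection of this space with affine and PSD constraints).
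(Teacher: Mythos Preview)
Your proof is correct and follows essentially the same approach as the paper: both argue that $\widetilde{\E}\mapsto\widetilde{\E}[\rv Y_\ess \rv Y_\tee]$ is linear (the paper makes this explicit via the Fourier expansion of $\rv Y_\ess\rv Y_\tee$ in the $\rv Z_i$'s, while you just cite the definition of pseudo-expectation as a linear functional on polynomials), then use convexity of $t\mapsto t^2$ and average over $(\ess,\tee)$.
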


\begin{proof}
  Suppose $\ess \cup \tee = \set{i_1,\dots,i_t}$. By definition $\rv
  Y_{\ess} \rv Y_{\tee} = \lift(\rv
  Z)_\ess \cdot \lift(\rv Z)_\tee$, i.e., $\rv Y_{\ess} \rv
  Y_{\tee}$ is a function $f$ on input $\rv Z_{i_1},\dots, \rv
  Z_{i_t} \in \set{\pm 1}$. Let
  $$
  f(\rv Z_{i_1},\dots, \rv Z_{i_t}) = \sum_{S \subseteq \ess \cup \tee} \widehat{f}(S) \cdot \prod_{i \in S} \rv Z_i,
  $$
  be the Fourier decomposition of $f$. Then
  $$
  \widetilde{\E}\left[\rv Y_{\ess} \rv Y_{\tee}\right] = \widetilde{\E}\left[f\right] = \sum_{S \subseteq \ess \cup \tee} \widehat{f}(S) \cdot \widetilde{\E}\left[\prod_{i \in S} \rv Z_i \right].
  $$
  Since $\widetilde{\E}\left[\rv Y_{\ess} \rv Y_{\tee}\right]$ is a
  linear function of $\widetilde{\E}$, we obtain
  $\left(\widetilde{\E}\left[\rv Y_{\ess} \rv
    Y_{\tee}\right]\right)^2$ is convex. Now, the convexity of $\Psi$    
    follows by noting
  that $\Psi$ is a convex combination of convex functions.
\end{proof}

The (sole) problem-specific constraint appearing in the SOS
\ref{sos:list_dec} allows us to deduce a lower bound on
$\Psi$. This lower bound will be important later to show that a
feasible solution that does not cover all our intended solutions must
have $\Psi$ bounded away from $0$ so that we still have room to
decrease $\Psi$. We note that an improvement in the conclusion of the
following lemma would directly translate to stronger list decoding
parameters in our framework.

\begin{lemma}[Correlation $\implies$ entropic bound]\label{lemma:cor_imp_entropic_bound}
  Let $\{\rv Y_{\ess}\}_{\ess \in X(k)}$ be $t$-local PSD ensemble
  with $t \ge 2$. If there is some $y \in \{\pm 1\}^{X(k)}$ such that
  $$
  \left\vert \E_{\ess \sim \Pi_k}\widetilde{\E} \left[ y_{\ess} \cdot \rv Y_{\ess}\right]\right\vert \ge \beta,
  $$
  then
  $$
  \Psi\left(\{\rv Y_{\ess}\}_{\ess \in X(k)}\right) ~\ge~ \beta^4.
  $$
\end{lemma}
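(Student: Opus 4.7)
The plan is to reduce the desired bound to two applications of Cauchy--Schwarz: one in the pseudo-expectation (to convert the linear correlation into a degree-$2$ quantity in the $\rv Y$ variables), and one in the ordinary expectation over pairs $(\ess,\tee)$ (to produce the square $\bigl(\widetilde{\E}[\rv Y_\ess \rv Y_\tee]\bigr)^2$ that appears in $\Psi$). Raising the hypothesis to the fourth power will then give exactly $\beta^4 \le \Psi$.

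First, introduce the auxiliary polynomial
\[
W \;\defeq\; \E_{\ess \sim \Pi_k} \, y_\ess \cdot \rv Y_\ess,
\]
which has degree $k$ in the underlying $\rv Z$ variables. Linearity of $\widetilde{\E}$ together with the hypothesis yields $|\widetilde{\E}[W]| \ge \beta$, hence $(\widetilde{\E}[W])^2 \ge \beta^2$. Apply pseudo-Cauchy--Schwarz (valid since $\widetilde{\E}[1]=1$ and the SOS degree is large enough to support $W^2$, as $\deg W^2 \le 2k$ and the $\rv Y$-ensemble is $t$-local with $t \ge 2$) to obtain
\[
\beta^2 \;\le\; \bigl(\widetilde{\E}[W]\bigr)^2 \;\le\; \widetilde{\E}[W^2].
\]

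Next, expand $W^2$ using linearity of pseudo-expectation:
\[
\widetilde{\E}[W^2] \;=\; \E_{\ess,\tee \sim \Pi_k} y_\ess \, y_\tee \cdot \widetilde{\E}\bigl[\rv Y_\ess \, \rv Y_\tee\bigr].
\]
Now apply ordinary Cauchy--Schwarz to the average over $(\ess,\tee)$, using that $y \in \{\pm 1\}^{X(k)}$ so that $y_\ess^2 y_\tee^2 = 1$:
\[
\Bigl(\E_{\ess,\tee} y_\ess \, y_\tee \cdot \widetilde{\E}\bigl[\rv Y_\ess \, \rv Y_\tee\bigr]\Bigr)^{\!2}
\;\le\;
\E_{\ess,\tee}\bigl(\widetilde{\E}\bigl[\rv Y_\ess \, \rv Y_\tee\bigr]\bigr)^{2}
\;=\; \Psi.
\]
Chaining these bounds gives $\beta^4 \le (\widetilde{\E}[W^2])^2 \le \Psi$, as desired.

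There is no real obstacle here; the only point to verify is that the SOS degree is sufficient for pseudo-Cauchy--Schwarz on $W$, which is guaranteed by the $t$-local PSD assumption with $t \ge 2$ (since $\rv Y$ is a bilinear pairing of level-$k$ monomials in $\rv Z$, and the program in \cref{sec:sos_list_dec} works at level $L + 2k$).
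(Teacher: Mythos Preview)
Your proof is correct and follows essentially the same route as the paper's: one application of pseudo-Cauchy--Schwarz on $W=\E_\ess y_\ess\rv Y_\ess$ and one application of Cauchy--Schwarz (the paper phrases it as Jensen) on the $(\ess,\tee)$-average, then chaining. The only cosmetic difference is that the paper starts from $\Psi$ and bounds it below, while you start from $\beta$ and bound upward.
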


\begin{proof}
    We calculate
 \begin{align*}
     \E_{\ess,\tee \sim \Pi_k} \left(\widetilde{\E}\left[\rv Y_{\ess} \rv Y_{\tee}\right] \right)^2 &~=~  \E_{\ess,\tee \sim \Pi_k} \left(\widetilde{\E}\left[\left(y_{\ess} \rv Y_{\ess}\right)\left(y_{\tee} \rv Y_{\tee}\right) \right] \right)^2\\
    &~\ge~  \left(\E_{\ess,\tee\sim \Pi_k} \widetilde{\E}\left[\left(y_{\ess} \rv Y_{\ess}\right)\left(y_{\tee} \rv Y_{\tee}\right)\right]\right)^2 && \text{(Jensen's Inequality)}\\    
    &~=~  \left( \widetilde{\E}\left[ (\E_{\ess \sim \Pi_k} y_{\ess} \cdot \rv Y_{\ess})^2\right]\right)^2\\
    &~\ge~  \left( \widetilde{\E} \left[ \E_{\ess \sim \Pi_k}\left[ y_{\ess} \cdot \rv Y_{\ess}\right] \right] \right)^4 && \text{(Cauchy--Schwarz Inequality)}\\
    &~=~  \left( \E_{\ess \sim \Pi_k}  \widetilde{\E}\left[ y_{\ess} \cdot \rv Y_{\ess}\right]\right)^4 ~\ge~ \beta^4.
 \end{align*}
\end{proof}

We now show the role of $\Psi$ in list decoding: if an intended
solution is not represented in the pseudo-expectation $\PExp$, we can
get a new pseudo-expectation $\PExp'$ which attains a smaller value
of $\Psi$.
\begin{lemma}[Progress lemma]\label{lemma:progress_lemma}
  Suppose there exist $z \in \{\pm 1\}^{X(1)}$ and $y=\lift(z) \in \{\pm 1 \}^{X(k)}$ satisfying
  $$
  \PExp \left[\left(\E_{\ess \sim \Pi_k} y_{\ess} \cdot \rv Y_{\ess} \right)^2 \right] \leq \delta^2.
  $$
  If $\Psi \ge \delta^2$, then there exists a pseudo-expectation $\widetilde{\E}'$ such that
  $$
  \E_{\ess,\tee\sim \Pi_k} \left(\widetilde{\E}' \left[\rv Y_{\ess} \rv Y_{\tee} \right] \right)^2 ~\le~ \Psi - \frac{\left(\Psi - \delta^2\right)^2}{2}.
  $$
  In particular, if $\Psi \ge 2 \delta^2$, then
  $$
  \E_{\ess,\tee\sim \Pi_k} \left(\widetilde{\E}' \left[\rv Y_{\ess} \rv Y_{\tee} \right] \right)^2 ~\le~ \Psi - \frac{\delta^4}{2}.
  $$  
\end{lemma}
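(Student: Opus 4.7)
The plan is to produce $\widetilde{\E}'$ as a convex combination of $\widetilde{\E}$ with the point-mass pseudo-expectation $\widetilde{\E}_z$ defined by $\widetilde{\E}_z[f(\rv Z)] = f(z)$. The latter is a valid pseudo-expectation at every level (it corresponds to an actual assignment, so all SOS constraints trivially hold), and since $\rv Y_\ess = \lift(\rv Z)_\ess$ takes the deterministic value $y_\ess$ under $\widetilde{\E}_z$, we have $\widetilde{\E}_z[\rv Y_\ess \rv Y_\tee] = y_\ess y_\tee$. For $\lambda \in [0,1]$, setting $\widetilde{\E}' = (1-\lambda)\widetilde{\E} + \lambda \widetilde{\E}_z$ yields another $(L+2k)$-local PSD pseudo-expectation, since normalization, linearity, and non-negativity on squares are all preserved under convex combinations.

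The first calculation is to expand $\Psi(\widetilde{\E}')$ using linearity. For each $\ess, \tee \in X(k)$,
\[ \widetilde{\E}'[\rv Y_\ess \rv Y_\tee] = (1-\lambda)\,\widetilde{\E}[\rv Y_\ess \rv Y_\tee] + \lambda\, y_\ess y_\tee, \]
and squaring and averaging over $\ess, \tee \sim \Pi_k$ (using $y_\ess y_\tee \in \{\pm 1\}$) gives
\[ \Psi(\widetilde{\E}') = (1-\lambda)^2 \Psi + 2(1-\lambda)\lambda A + \lambda^2, \qquad A := \E_{\ess,\tee \sim \Pi_k}\bigl[ y_\ess y_\tee \cdot \widetilde{\E}[\rv Y_\ess \rv Y_\tee]\bigr]. \]
The key algebraic identity is $A = \widetilde{\E}\bigl[(\E_\ess y_\ess \rv Y_\ess)^2\bigr]$, obtained by pulling the averaging inside the pseudo-expectation; by the hypothesis of the lemma this is at most $\delta^2$.

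The final step is to optimize the convex quadratic
\[ \Psi(\widetilde{\E}') = \Psi - 2\lambda(\Psi - A) + \lambda^2 (1 + \Psi - 2A) \]
in $\lambda$. It is minimized at $\lambda^\star = (\Psi - A)/(1 + \Psi - 2A) \in [0,1]$ (using $0 \le A \le \Psi$ and $A \le 1$), with minimum value $\Psi - (\Psi - A)^2/(1 + \Psi - 2A)$. Noting that $\Psi \le 1$, because $|\widetilde{\E}[\rv Y_\ess \rv Y_\tee]| \le 1$ from Cauchy--Schwarz for pseudo-expectations combined with $\rv Y_\ess^2 \equiv 1$, and that $A \ge 0$ by non-negativity on squares, the denominator is at most $2$. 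The assumption $\Psi \ge \delta^2 \ge A$ then gives $(\Psi - A)^2 \ge (\Psi - \delta^2)^2$. Putting these together yields $\Psi(\widetilde{\E}') \le \Psi - (\Psi - \delta^2)^2/2$, as claimed. The ``in particular'' statement is immediate: when $\Psi \ge 2\delta^2$ we have $\Psi - \delta^2 \ge \delta^2$, so $(\Psi - \delta^2)^2/2 \ge \delta^4/2$.

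I do not expect any real obstacle here; the argument is essentially a one-parameter convex combination followed by one-variable optimization. The only place needing mild care is justifying the denominator bound $1 + \Psi - 2A \le 2$, which reduces to $\Psi \le 1$ --- a direct consequence of $\rv Y_\ess^2 \equiv 1$ together with the pseudo-expectation Cauchy--Schwarz inequality. (When this lemma is applied to contradict optimality of $\widetilde{\E}$ for the \ref{sos:list_dec} program, one will additionally want $\widetilde{\E}'$ to satisfy the agreement constraint, which holds automatically provided $y$ itself satisfies it --- a property that will be available in the intended usage.)
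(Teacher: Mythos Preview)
Your proof is correct and follows essentially the same approach as the paper: form the convex combination $(1-\lambda)\widetilde{\E} + \lambda\,\widetilde{\E}_z$, expand the quadratic in $\lambda$, identify the cross term with $\widetilde{\E}[(\E_\ess y_\ess \rv Y_\ess)^2] \le \delta^2$, and optimize. The only cosmetic difference is that the paper bounds the cross term by $\delta^2$ \emph{before} optimizing, whereas you optimize in terms of $A$ and then bound; you are also slightly more careful in explicitly checking $\lambda^\star \in [0,1]$ and $\Psi \le 1$.
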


\begin{proof}
  Let $\widetilde{\E}'$ be the pseudo-expectation~\footnote{By summing
    the pseudo-expectation $\widetilde{\E}$ and actual expectation
    $\E_{\delta_z}$, we mean that we are summing $\widetilde{\E}$ to
    pseudo-expectation of the same dimensions obtained from operator
    $\E_{\delta_z}$.}
  $$
  \widetilde{\E}' ~\coloneqq~ (1-\alpha) \cdot \widetilde{\E} + \alpha \cdot \E_{\delta_z},
  $$
  where $\E_{\delta_z}$ is the expectation of the delta distribution on $z$ and
  $\alpha \in (0,1)$ is to be defined later. We have
  \begin{align*}
    \E_{\ess,\tee\sim \Pi_k} \left(\widetilde{\E}' \left[\rv Y_{\ess} \rv Y_{\tee} \right] \right)^2 & ~=~ \E_{\ess,\tee\sim \Pi_k} \left((1-\alpha) \cdot \widetilde{\E} \left[\rv Y_{\ess} \rv Y_{\tee} \right] + \alpha \cdot y_{\ess} y_{\tee} \right)^2\\
     & ~=~ (1-\alpha)^2 \cdot \Psi + \alpha^2 \cdot \E_{\ess,\tee\sim \Pi_k}(y_{\ess} y_{\tee})^2 + 2\alpha(1-\alpha)\cdot  \E_{\ess, \tee \sim \Pi_k}\left[\PExp[\rv Y_{\ess} \rv Y_{\tee}] y_{\ess} y_{\tee}\right] \\
     & ~\le~ (1-\alpha)^2 \cdot \Psi + \alpha^2  + 2\alpha(1-\alpha) \cdot \delta^2.
  \end{align*}
  The value of $\alpha$ minimizing the quadratic expression of the RHS above is
  $$
  \alpha^{\star} ~=~ \frac{\Psi - \delta^2}{1 + \Psi - 2\delta^2}.
  $$
  Using this value yields
  \begin{align*}
    \E_{\ess,\tee\sim \Pi_k} \left(\widetilde{\E}' \left[\rv Y_{\ess} \rv Y_{\tee} \right] \right)^2 &~\le~ \Psi - \frac{\left(\Psi - \delta^2\right)^2}{1+\Psi - 2\delta^2} \\
    &~\le~ \Psi - \frac{\left(\Psi - \delta^2\right)^2}{2},
  \end{align*}
  where in the last inequality we used $\Psi \le 1$.
\end{proof}

\subsection{Propagation Rounding}\label{sec:propagation_rounding}

A central algorithm in our list decoding framework is the Propagation
Rounding \cref{algo:prop-rd}. It was studied by Barak et
al.~\cite{BarakRS11} in the context of approximating $2$-CSPs on low
threshold rank graphs and it was later generalized to HDXs (and low
threshold rank hypergraphs) in the context of $k$-CSPs \cite{AJT19}.

Given an $(L+2k)$-local PSD ensemble $\set{\rv Z_1, \ldots, \rv Z_n}$,
the Propagation Rounding~\cref{algo:prop-rd} chooses a subset of
variables $S \subseteq [n]$ at random. Then it samples a joint
assignment $\sigma$ to the variables in $S$ according to $\set{\rv
Z_S}$. The value of the remaining variables $\rv Z_i$ are sampled
according to the conditional marginal distributions $\set{\rv
Z_i \vert \rv Z_S = \sigma}$. An important byproduct of this algorithm
is the $2k$-local PSD ensemble $\rv Z' = \set{\rv
Z_1,\dots,\rv Z_n \vert \rv Z_S = \sigma}$.

The precise description of the Propagation Rounding~\cref{algo:prop-rd} follows.

\begin{algorithm}{Propagation Rounding Algorithm}{An $(L+2k)$-local PSD ensemble $\set{\rv Z_1, \ldots, \rv Z_n}$ and some
    distribution $\Pi_k$ on $X(k)$.}{A random assignment $(\assn_1, \ldots, \assn_n) \in  [q]^n$ and $2k$-local PSD ensemble $\rv Z'$.}\label{algo:prop-rd}
    \begin{enumerate}
        \item Choose $m \in \set*{1, \ldots, L/k}$ uniformly at random.
        \item Independently sample $m$ $k$-faces, $\ess_j \sim \Pi_k$ for $j = 1, \ldots, m$.
        \item Write $S = \bigcup_{j = 1}^m \ess_j$, for the set of the seed vertices.
        \item Sample assignment $\assn: S \to [q]$ according to the local distribution $\set{\rv Z_{S}}$.
        \item Set $\rv Z' = \set{\rv Z_1, \ldots, \rv Z_n | \rv Z_S = \assn}$, i.e.~the local ensemble
            $\rv Z$ conditioned on agreeing with $\assn$.
        \item For all $j \in [n]$, sample independently $\assn_j \sim \set{\rv Z'_j}$.
        \item Output $(\assn_1, \ldots, \assn_n)$ and $\rv Z'$.
    \end{enumerate}
\end{algorithm}

To our list decoding task we will show that an ensemble minimizing
$\Psi$ covers the space of possible solutions in the sense that for
any intended solution there will be a choice of $S$ and $\sigma$ such
that the conditioned ensemble $\rv Z'$ enables the sampling of a word
within the unique decoding radius in $\mathcal{C}_1$ of this intended
solution.

An execution of the \cref{algo:prop-rd} is completely determined by
the tuple $(m,S,\sigma)$ which we will refer to as a slice of the PSD
ensemble.

\begin{definition}[Slice]
  We call a tuple $(m,S,\sigma)$ obtainable by~\cref{algo:prop-rd}
  a \textit{slice} and let $\Omega$ denote the set of all slices obtainable
  by~\cref{algo:prop-rd}.
\end{definition}

We can endow $\Omega$ with a natural probability distribution, where
the measure of each $(m, S, \sigma)$ is defined as the probability
that this slice is picked during an execution of \cref{algo:prop-rd}.
We also define a pseudo-expectation operator for each slice.

\begin{definition}[Pseudo-Expectation Slice]
  Given a slice $(m,S,\sigma)$, we define the pseudo-expectation
  operator $\widetilde{\E}_{\vert_{S,\sigma}}$ which is the
  pseudo-expectation operator of the conditioned local PSD ensemble
  $\set{\rv Z_1,\dots, \rv Z_n\vert \rv Z_s = \sigma}$.
\end{definition}

\subsection{Tensorial Structures}\label{sec:tensorial}

In general, a local PSD ensemble $\rv Z' = \set{\rv Z_1', \dots, \rv
Z_n'}$ output by the Propagation Rounding \cref{algo:prop-rd} may be
far from corresponding to any underlying joint global
distribution~\footnote{In fact, if this was the case, then we would be
able to approximate any $k$-CSP with SOS degree $(L+2k)$. However,
even for $L$ as large as linear in $n$ this is impossible for
SOS~\cite{Grigoriev01,KothariMOW17}.}. In our application, we will be
interested in the case where the ensemble approximately behaves as
being composed of independent random variables over the collection of
``local views'' given by the hyperedges in $X(k)$. In such case,
rounding the SOS solution via independent rounding is
straightforward. A collection of local views admitting this property
with a given SOS degree parameter $L$ is denoted \textit{tensorial}
(variables behave as products over the local views).

\begin{definition}[Tensorial Hypergraphs]
  Let $X(k)$ be a collection of $k$-uniform hyperedges endowed with a
  distribution $\Pi_k$, $\mu \in [0,1]$, and $L \in \mathbb{N}$.  We
  say that $X(k)$ is $(\mu,L)$-tensorial if the local PSD ensemble
  $\rv Z'$ returned by Propagation Rounding \cref{algo:prop-rd} with
  SOS degree parameter $L$
  satisfies
  \begin{equation}
    \ExpOp_{\Omega} \ExpOp_{\aye \sim \Pi_k}{ \norm{\set{\rv Z_{\aye}'} - \set*{\rv Z_{a_1}'}\cdots \set*{\rv Z_{a_k}'}}_1} \le \mu.
  \end{equation}
\end{definition}

To analyze the potential $\Psi$ we will need that the variables
between pairs of local views, i.e., pairs of hyperedges, behave as
product.
\begin{definition}[Two-Step Tensorial Hypergraphs]\label{def:two_step_tensorial}
  Let $X(k)$ be a collection of $k$-uniform hyperedges endowed with a
  distribution $\Pi_k$, $\mu \in [0,1]$, and $L \in \mathbb{N}$.  We
  say that $X(k)$ is $(\mu,L)$-two-step tensorial if it is $(\mu, L)$-tensorial and the PSD ensemble
  $\rv Z'$ returned by Propagation Rounding~\cref{algo:prop-rd} with
  SOS degree parameter $L$ satisfies
    $$\ExpOp_{\Omega} \ExpOp_{\ess,\tee \sim \Pi_k}{ \norm{\set{\rv Z_{\ess}' \rv Z_{\tee}'} - \set*{\rv Z_{\ess}'}\set*{\rv Z_{\tee}'}}_1 } \le \mu.$$         
\end{definition}

In~\cref{subsec:hdx_two_step_tensorial}, we establish the
relationship between the parameters $\mu$ and $L$
and the expansion that will ensure HDXs are
$(\mu,L)$-two-step tensorial. Similarly,
in~\cref{subsec:expander_walk_two_step_tensorial} we provide this
relationship when $X(k)$ is the collection of walks of an expander
graph.

\subsubsection*{Tensorial over Most Slices}

By choosing $\mu$ sufficiently small it is easy to show that most
slices $(m,S,\sigma)$ satisfy the tensorial (or two-step tensorial)
statistical distance condition(s) with a slightly worse parameter
$\tilde{\mu}$ such that $\tilde{\mu} \to 0$ as $\mu \to 0$. If we
could construct tensorial (or two-step tensorial) objects for
arbitrarily small parameter $\mu$ with $L = O_{k,q,\mu}(1)$, then we
would be able to obtain $\tilde{\mu}$ arbitrarily
small.~\cref{lemma:double_close_to_product} establishes that HDXs of
appropriate expansion satisfy this assumption, and
\cref{lemma:exp_walk_double_close_to_product} does the same for
walks on expanders.

We introduce two events. The first event captures when a slice
$(m,S,\sigma)$ leads to the conditioned local variables $\rv
Z_1',\dots, \rv Z_n'$ being close to $k$-wise independent over the
$k$-sized hyperedges.

\begin{definition}[Ground Set Close to $k$-wise Independent]
   Let $\mu \in (0,1]$. We define the event $K_{\mu}$ as
   $$
   K_{\mu} \coloneqq \left\{(m,S,\sigma) \in \Omega~\vert~ \ExpOp_{\aye \sim \Pi_k}{ \norm{\set{\rv Z_{\aye}\vert \rv Z_S = \sigma} - \set*{\rv Z_{a_1}\vert \rv Z_S = \sigma}\cdots \set*{\rv Z_{a_k}\vert \rv Z_S = \sigma}}_1 } < \mu^2/2 \right\}.
   $$
\end{definition}

The second event captures when the variables between pairs of
hyperedges are close to independent.

\begin{definition}[Lifted Variables Close to Pairwise Independent]
   Let $\mu \in (0,1]$. We define the event $P_{\mu}$ as
   $$
   P_{\mu} \coloneqq \left\{(m,S,\sigma) \in \Omega~\vert~ \ExpOp_{\ess,\tee \sim \Pi_k}{ \norm{\set{\rv Z_{\ess} \rv Z_{\tee} \vert \rv Z_S = \sigma} - \set*{\rv Z_{\ess}\vert \rv Z_S = \sigma}\set*{\rv Z_{\tee} \vert \rv Z_S = \sigma}}_1 } < \mu^2/2 \right\}.
   $$
\end{definition}

These events satisfy a simple concentration property.
\begin{claim}[Concentration]\label{claim:glorified_concentration}
    Suppose a simplicial complex $X(\le k)$ with $X(1) = [n]$ and an
    $(L+2k)$-local PSD ensemble $\rv Z = \set{\rv Z_1, \ldots, \rv Z_n}$
    are given as input to Propagation Rounding~\cref{algo:prop-rd}.
    Let $\mu \in (0,1]$. If $X(k)$ is $(\mu^4/4,L)$-two-step
    tensorial, then
    \begin{equation}\label{eq:first_glorified_concentration}
        \ProbOp_{(m,S,\sigma)\sim\Omega}\left[ K_{\mu}^c \right] \le \frac{\mu^2}{2},
    \end{equation}
    and
    \begin{equation}\label{eq:second_glorified_concentration}
        \ProbOp_{(m,S,\sigma)\sim\Omega}\left[ P_{\mu}^c \right] \le \frac{\mu^2}{2}.
    \end{equation}
\end{claim}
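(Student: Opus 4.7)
The plan is to obtain both tail bounds by a direct application of Markov's inequality to the two expectation bounds guaranteed by the $(\mu^4/4, L)$-two-step tensorial hypothesis. Write
\[
    A(m,S,\sigma) \;:=\; \ExpOp_{\aye \sim \Pi_k}\, \norm{\set{\rv Z_{\aye}\vert \rv Z_S = \sigma} - \set{\rv Z_{a_1}\vert \rv Z_S = \sigma}\cdots \set{\rv Z_{a_k}\vert \rv Z_S = \sigma}}_1,
\]
\[
    B(m,S,\sigma) \;:=\; \ExpOp_{\ess,\tee \sim \Pi_k}\, \norm{\set{\rv Z_{\ess} \rv Z_{\tee} \vert \rv Z_S = \sigma} - \set{\rv Z_{\ess}\vert \rv Z_S = \sigma}\set{\rv Z_{\tee} \vert \rv Z_S = \sigma}}_1.
\]
Both $A$ and $B$ are non-negative random variables over the sample space $\Omega$ of slices produced by \cref{algo:prop-rd}, and the events $K_\mu^c$ and $P_\mu^c$ are precisely $\{A \ge \mu^2/2\}$ and $\{B \ge \mu^2/2\}$, respectively.

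Next I would invoke the definitions. The $(\mu^4/4, L)$-tensorial condition gives
\[
    \ExpOp_{(m,S,\sigma)\sim\Omega}[A(m,S,\sigma)] \;\le\; \mu^4/4,
\]
and the additional two-step tensorial condition from \cref{def:two_step_tensorial} gives
\[
    \ExpOp_{(m,S,\sigma)\sim\Omega}[B(m,S,\sigma)] \;\le\; \mu^4/4.
\]
Applying Markov's inequality to each yields
\[
    \ProbOp[A \ge \mu^2/2] \;\le\; \frac{\mu^4/4}{\mu^2/2} \;=\; \frac{\mu^2}{2},
    \qquad
    \ProbOp[B \ge \mu^2/2] \;\le\; \frac{\mu^4/4}{\mu^2/2} \;=\; \frac{\mu^2}{2},
\]
which are exactly~\eqref{eq:first_glorified_concentration} and~\eqref{eq:second_glorified_concentration}.

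There is essentially no obstacle: the only things to verify are (i) that the expectations driving $A$ and $B$ really are taken over the same distribution on $\Omega$ as in~\cref{algo:prop-rd} (which is immediate, since the conditioned ensemble $\rv Z'$ in the tensorial definition is by construction $\set{\rv Z_i \vert \rv Z_S = \sigma}$ for the slice drawn by Propagation Rounding), and (ii) that $A$ and $B$ are non-negative (clear, as they are $\ell_1$ distances). Hence the claim reduces to the one-line Markov estimate above.
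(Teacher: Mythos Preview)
Your proposal is correct and matches the paper's proof essentially line for line: the paper also defines the non-negative random variable $\rv R = A$ (in your notation) on $\Omega$, uses the $(\mu^4/4,L)$-two-step tensorial assumption to bound $\E_\Omega[\rv R] \le \mu^4/4$, and then applies Markov's inequality to get $\ProbOp[\rv R \ge \mu^2/2] \le \mu^2/2$, noting that the second inequality follows identically.
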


\begin{proof}
   We only prove~\cref{eq:first_glorified_concentration} since the
   proof of~\cref{eq:second_glorified_concentration} is similar.
   Define the random variable $\rv
   R \coloneqq \ExpOp_{\aye \sim \Pi_k}{\norm{\set{\rv Z_{\aye}'}
   - \set*{\rv Z_{a_1}'}\cdots \set*{\rv Z_{a_k}'}}_1}$ on the sample
   space $\Omega = \{(m,S,\sigma)\}$. From our $(\mu^4/4,L)$-two-step
   tensorial assumption we have
   $$
   \E_{\Omega}\left[ \rv R \right] \le \frac{\mu^4}{4}.
   $$
   Now, we can conclude
   $$
   \ProbOp_{(m,S,\sigma)\sim\Omega}\left[ K_{\mu}^c \right] = \ProbOp_{(m,S,\sigma)\sim\Omega} \left[\rv R \ge  \frac{\mu^2}{2} \right] \le \frac{\mu^2}{2},
   $$
   using Markov's inequality.
\end{proof}

\subsection{Further Building Blocks and Analysis}\label{sec:list_dec_blocks_and_analysis}

Before we delve into further phases of the list decoding framework, we
introduce some notation for the list of codewords we want to retrieve.

\begin{definition}[Code list]
  Given $\tilde{y} \in \set{\pm 1}^{X(k)}$ and a code $\mathcal{C}$ on
  $X(k)$ with relative distance at least $1/2-\epsilon$, we define the
  list $\mathcal{L}(\tilde{y},\mathcal{C})$ as
  $$
  \mathcal{L}(\tilde{y},\mathcal{C})~\coloneqq~\left\{y \in \mathcal{C} ~\vert~ \Delta\parens{y, \tilde{y}} \le \frac{1}{2} - \sqrt{\epsilon}\right\}.
  $$
\end{definition}

Under these assumptions the Johnson bound establishes that the list
size is constant whenever $\epsilon > 0$ is constant. 

\begin{remark}\label{rem:Johnson_list_size}
  The Johnson bound~\cite{GRS:coding:notes} guarantees that
  $$
  \left\vert \mathcal{L}(\tilde{y},\mathcal{C})\right\vert \le \frac{1}{2 \cdot \epsilon}
  $$
  provided the relative distance of $\mathcal{C}$ is at least $1/2-\epsilon$.
\end{remark}

In the case of lifted codes, it is more appropriate to consider a list
of pairs $\mathcal{L}(\tilde{y},\mathcal{C}_1,\mathcal{C}_k)$ defined as follows.
\begin{definition}[Coupled code list]\label{def:coupled_code_list}
  Given $\tilde{y} \in \set{\pm 1}^{X(k)}$ and a lifted code
  $\mathcal{C}_k$ on $X(k)$ with relative distance at least
  $1/2-\epsilon$, we define the coupled code list
  $\mathcal{L}(\tilde{y},\mathcal{C}_1,\mathcal{C}_k)$ as
  $$
    \mathcal{L}(\tilde{y},\mathcal{C}_1,\mathcal{C}_k)~\coloneqq~\left\{(z,\lift(z)) ~\vert~ z \in \mathcal{C}_1 ~~\textrm{ and }~~\Delta\parens*{\lift(z),\tilde{y}} \le \frac{1}{2} - \sqrt{\epsilon}\right\}.
  $$
\end{definition}
\noindent Recovering this list
$\mathcal{L}(\tilde{y},\mathcal{C}_1,\mathcal{C}_k)$ is the main goal
of this section. This task will be accomplished
by~\cref{algo:list_decoding} stated below whose building blocks and
analysis we develop in this section.

\begin{algorithm}{List Decoding Algorithm}{A word $\tilde{y} \in \set{\pm 1}^{X(k)}$ which is \lict{\sqrt{\epsilon}} to $\mathcal{C}_k=\lift(\mathcal{C}_1)$.}
                                          {Coupled code list $\mathcal{L}(\tilde{y},\mathcal{C}_1,\mathcal{C}_k)$.}\label{algo:list_decoding}
    \begin{enumerate}                   
        \item Solve the~\ref{sos:list_dec} with $\eta$-accuracy, obtaining $\rv Z$, where $\eta = \epsilon^8/2^{22}$ 
        \item Let $\mathcal{M}$ be the output of the Cover Retrieval~\cref{algo:cover_retrieval} on $\rv Z$
        \item Let $\mathcal{L'}$ be the output of the Cover Purification~\cref{algo:purification} on $\mathcal{M}$
        \item Let $\mathcal{L}'' = \{(z,y) \in \mathcal{L}'~\vert~ \Delta\parens{\tilde{y},y} \le 1/2 - \sqrt{\epsilon} \}$          
        \item Output $\mathcal{L}''$
    \end{enumerate}
\end{algorithm}

As shown in~\cref{fig:list_dec_framework} of~\cref{sec:strategy}, the
first step is to solve the~\ref{sos:list_dec} which results in
a pseudo-expectation ``covering'' the list
$\mathcal{L}(\tilde{y},\mathcal{C})$ as we will make precise. A
precursor property to covering and some considerations about SOS
rounding are treated in~\cref{subsec:sos_round_and_rec}. Next, the
formal definition of cover is presented
in~\cref{subsec:coupled_pairs_list_and_cover} and we have all the
elements to present the Cover Retrieval~\cref{algo:cover_retrieval}
with its correctness in~\cref{subsec:cover_retrieval}. Then, we use
the \textit{robustness} properties of the lifting to purify the cover
in~\cref{subsec:cover_purification}.
Finally, in~\cref{subsec:list_dec_correctness}, we assemble the
building blocks and prove the main technical
result, \cref{theo:list_dec}, whose proof follows easily once the
properties of the building blocks are in place.

Note
that~\cref{theo:list_dec} embodies an abstract list decoding framework
which relies only on the $\textit{robustness}$ and $\textit{tensorial}$
properties of the lifting. We provide a concrete instantiation of the
framework to the direct sum lifting on HDXs in~\cref{sec:list_dec_xor_hdx}
and to the direct sum lifting on
expander walks in~\cref{sec:list_dec_xor_expander_walks}.

\begin{theorem}[List Decoding Theorem]\label{theo:list_dec}
  Suppose that $\lift$ is a $(1/2-\epsilon_0,1/2-\epsilon)$-robust
  $(\epsilon^8/2^{22},L)$-two-step tensorial lifting from
  $\mathcal{C}_1$ to $\mathcal{C}_k$ which is either
  \begin{itemize}
    \item linear and a $(1/2+\epsilon_0, 2\cdot \epsilon)$-parity sampler; or
    \item $(1/4 - \epsilon_0, 1/2 - \epsilon/2)$-robust and odd.
  \end{itemize}
    Let $\tilde{y} \in \set{\pm 1}^{X(k)}$ be \lict{\sqrt{\epsilon}} to
  $\mathcal{C}_k$. Then w.v.h.p.~the List
  Decoding~\cref{algo:list_decoding} returns the coupled code list
  $\mathcal{L}(\widetilde{y},\mathcal{C}_1,\mathcal{C}_k)$. Furthermore, the
  running time is
  $$
  n^{O(L+k)}\left(\polylog(\epsilon^{-1}) + f(n)\right),
  $$
  where $n=\vert X(1) \vert$ and $f(n)$ is the running time of a
  unique decoding algorithm of $\mathcal{C}_1$.
\end{theorem}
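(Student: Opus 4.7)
The plan is to analyze \cref{algo:list_decoding} in three stages mirroring \cref{fig:list_dec_framework}. First, I would verify that \ref{sos:list_dec} is feasible — any $(z, y = \lift(z)) \in \mathcal{L}(\tilde{y}, \mathcal{C}_1, \mathcal{C}_k)$ supplies an integral pseudo-expectation satisfying the Agreement Constraint by definition of the list — so that the convexity of $\Psi$ combined with standard SDP solvers produces a pseudo-expectation within $\eta = \epsilon^8/2^{22}$ of optimum in time $n^{O(L+k)} \polylog(\epsilon^{-1})$. Next I would establish the \emph{covering property}: by \cref{lemma:cor_imp_entropic_bound} applied to $\tilde{y}$ and the Agreement Constraint, any feasible $\widetilde{\E}$ has $\Psi \ge 16\epsilon^2$. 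If some $(z, y) \in \mathcal{L}(\tilde{y}, \mathcal{C}_1, \mathcal{C}_k)$ were ``uncovered'' in the sense $\PExp[(\E_{\ess} y_{\ess} \rv Y_{\ess})^2] \le \delta^2$ for an appropriate $\delta = \Theta(\epsilon)$ with $2\delta^2 \le 16\epsilon^2$ and $\delta^4/2 > \eta$, then \cref{lemma:progress_lemma} would furnish a feasible $\widetilde{\E}'$ decreasing $\Psi$ by more than $\eta$, contradicting near-optimality. Hence the minimizer second-moment covers every element of the list.

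Second, I would carry out Cover Retrieval by iterating Propagation Rounding (\cref{algo:prop-rd}) over all slices $(m, S, \sigma) \in \Omega$. The tower rule for pseudo-expectations unfolds the covering inequality into $\E_{(m,S,\sigma)} \PExp_{|S,\sigma}[(\E_{\ess} y_{\ess} \rv Y_{\ess})^2] > \delta^2$, so a Markov-type argument produces a noticeable fraction of slices on which $\PExp_{|S,\sigma}[(\E_{\ess} y_{\ess} \rv Y_{\ess})^2] = \Omega(\delta^2)$. On the joint event $K_{\mu} \cap P_{\mu}$ from \cref{claim:glorified_concentration}, which has probability $\ge 1 - O(\mu^2)$ under the two-step tensorial hypothesis calibrated to $\mu^4/4 = \eta$, the conditional pseudo-expectation factorizes across hyperedges, so $\PExp_{|S,\sigma}[\rv Y_{\ess}]$ is close to $\lift$ applied to the product of conditional marginals. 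Independent rounding of those marginals then produces $z' \in \{\pm 1\}^{X(1)}$ with $|\langle \lift(z'), y\rangle| \ge 2\epsilon$ with constant probability (via a Paley--Zygmund argument on the rounded quantity), so enumerating slices and repeating rounding produces the cover $\mathcal{M}$ of size $n^{O(L+k)}$ containing, for every code list element, a candidate pair of that correlation.

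The hard part will be Cover Purification, which must upgrade the weak lifted-space correlation of $2\epsilon$ to a base-space distance inside the unique-decoding ball of $\mathcal{C}_1$; this is precisely where the robustness of $\lift$ enters, and it splits cleanly into the two branches of the theorem. In the linear-plus-parity-sampler branch, $|\langle \lift(z), \lift(z')\rangle| \ge 2\epsilon$ translates by linearity to $\bias(\lift(z - z')) \ge 2\epsilon$, and the contrapositive of $(1/2 + \epsilon_0, 2\epsilon)$-parity sampling forces $\bias(z - z') > 1/2 + \epsilon_0$, so that either $z'$ or its complement $\bar{z'}$ lies within distance $1/4 - \epsilon_0/2$ of $z$ — comfortably inside the unique-decoding radius of $\mathcal{C}_1$. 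In the odd branch, the symmetry $\lift(-z') = -\lift(z')$ reduces the negative-correlation case to the positive one by flipping $z'$, after which the two robustness hypotheses (at $1/2 - \epsilon_0$ and at $1/4 - \epsilon_0$) together pin the closer of $z', -z'$ into the unique-decoding radius of $z$. A single unique-decoder call per candidate then recovers the underlying $z$, and filtering the resulting $\mathcal{L}'$ by $\Delta(\tilde{y}, \lift(z)) \le 1/2 - \sqrt{\epsilon}$ produces exactly $\mathcal{L}(\tilde{y}, \mathcal{C}_1, \mathcal{C}_k)$. The running time is dominated by the SOS solve plus at most $n^{O(L+k)}$ unique-decoder invocations, giving $n^{O(L+k)}(\polylog(\epsilon^{-1}) + f(n))$ as claimed.
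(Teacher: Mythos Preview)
Your proposal is correct and follows essentially the same three-stage architecture as the paper's proof, which simply invokes the Cover Lemma (\cref{lemma:cover}) for stage two and the Purification Lemma (\cref{lemma:purification}) for stage three, then observes that the final filtering step isolates exactly $\mathcal{L}(\tilde y,\mathcal{C}_1,\mathcal{C}_k)$. Your identification of the key ingredients---feasibility via an integral list element, the entropic lower bound $\Psi\ge 16\epsilon^2$ from \cref{lemma:cor_imp_entropic_bound}, the Progress Lemma contradiction, two-step tensoriality for independent rounding, and the two robustness branches for purification---matches the paper precisely.

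The only organizational difference worth flagging is in how you pass from ``every list element has large unconditional second moment'' to ``some slice is simultaneously tensorial and well-correlated.'' You argue forward: tower rule plus a reverse-Markov bound gives a $\Theta(\epsilon^2)$ fraction of high-correlation slices, and since $K_\mu\cap P_\mu$ misses only an $O(\mu^2)=O(\epsilon^4)$ fraction, the two sets intersect. The paper's Recoverability Lemma (\cref{lemma:recoverability}) runs the contrapositive: it supposes no slice lies in $R_{\kappa,\mu}(y)=K_\mu\cap P_\mu\cap C_\kappa(y)$, uses \cref{claim:glorified_concentration} to force $\Pr[C_\kappa(y)^c]\ge 1-\mu^2$, and then bounds the unconditional second moment by $\kappa^2+\mu^2$ to trigger the Progress Lemma directly. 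Both routes are sound and yield the same parameters; the paper's packaging just avoids explicitly comparing the two probabilities.
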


\begin{remark}
  Regarding~\cref{theo:list_dec}, we stress that although the lifting
  is $(1/2-\epsilon_0,1/2-\epsilon)$-robust and we can perform list decoding
  at least up to distance $1/2-\sqrt{\epsilon}$, our framework does not
  recover the Johnson bound. The issue is that our framework requires
  one of the additional amplification guarantees of~\cref{theo:list_dec},
  which both make the distance of $\mathcal{C}_k$ become
  $1/2-\epsilon^{\Omega_{\epsilon_0}(1)} > 1/2 - \epsilon$.
  Efficiently recovering the Johnson
  bound remains an interesting open problem.
\end{remark}

We observe that the algorithms themselves used in this framework are
quite simple (although their analyses might not be). Moreover, the tasks
of cover retrieval and purification are reasonably
straightforward. However,~\cref{subsec:sos_round_and_rec} combines
$\textit{tensorial}$ properties of the lifting with properties of
$\Psi$, requiring a substantial analysis. The list decoding
framework is divided into stages to make it modular so that key
properties are isolated and their associated functionality can be
presented in a simple manner. Most of the power of this framework
comes from the combination of these blocks and the concrete expanding
objects capable of instantiating it.

\subsubsection{SOS Rounding and Recoverability}\label{subsec:sos_round_and_rec}

We show that if a slice $(m,S,\sigma)$ ``captures'' an intended
solution $y \in \set{\pm 1}^{X(k)}$ (this notion is made precise in
the assumptions of~\cref{lemma:fractional_to_integral}), then we can
retrieve a $z \in \set{\pm 1}^{X(1)}$ such that $\lift(z)$ has some
agreement with $y$. This agreement is somewhat weak, but combined with
the robustness of the lifting, it will be enough for our purposes. In this
subsection, we first explore how to recover such words within a slice,
which can be seen as local
rounding in the slice. Next, we establish sufficient conditions for an
intended solution to be recoverable, now not restricted to a given
slice but rather with respect to the full pseudo-expectation. Finally,
we use all the tools developed so far to show that by minimizing
$\Psi$ in a two-step tensorial structure we end up with a
pseudo-expectation in which all intended solutions are
recoverable. The interplay between weak agreement and robustness of
the lifting is addressed in~\cref{subsec:cover_purification}.

We will be working with two-step tensorial structures where the
following product distribution associated to a slice naturally
appears.
\begin{definition}[Product Distribution on a Slice]\label{def:prod_dist_on_slice}
 We define $\set{\rv Z^{\otimes}\vert_{(S,\sigma)}}$ to be the product distribution on the marginals
 $\set{\rv Z_i\vert \rv Z_S = \sigma}_{i \in X(1)}$, i.e.,
 $\set{\rv Z^{\otimes}\vert_{(S,\sigma)}} \coloneqq \prod_{i\in X(1)} \set{\rv Z_i\vert \rv Z_S = \sigma}$.
\end{definition}

Under appropriate conditions,~\cref{lemma:fractional_to_integral}
shows how to round the pseudo-expectation in a slice.
\begin{lemma}[From fractional to integral in a slice]\label{lemma:fractional_to_integral}
    Let $(m,S,\sigma) \in \Omega$ be a slice. Suppose
    \begin{equation}\label{eq:brs_hdx}
        \ExpOp_{\aye \sim \Pi_k}{ \norm{\set{\rv Z_{\aye}\vert \rv Z_S = \sigma} - \set*{\rv Z_{a_1}\vert \rv Z_S = \sigma }\cdots \set*{\rv Z_{a_k}\vert \rv Z_S = \sigma}}_1 } \le \mu,
    \end{equation}
    and
    \begin{equation}\label{eq:hdx_extra_indep}
        \ExpOp_{\ess,\tee \sim \Pi^2_k}{ \norm{\set{\rv Z_{\ess}\rv Z_{\tee}\vert \rv Z_S = \sigma} - \set*{\rv Z_{\ess}\vert \rv Z_S = \sigma}\set*{\rv Z_{\tee}\vert \rv Z_S = \sigma}}_1 } \le \mu.
    \end{equation}    
    For $\beta \in (0,1)$,
    if $\mu \le \beta \cdot \kappa^2/6$ and $y \in \set{\pm 1}^{X(k)}$ is such that
    $$
    \E_{\ess,\tee \sim \Pi^2_k} \widetilde{\E}_{\vert S,\sigma} \left[ y_{\ess} y_{\tee} \rv Y_{\ess} \rv Y_{\tee} \right] \ge \kappa^2,
    $$
    then
    \begin{equation}\label{eq:fract_to_int_conclusion_1}
      \ProbOp_{z \sim \set{\rv Z^{\otimes}\vert_{(S,\sigma)}}}\left[\left\vert \E_{\ess \sim \Pi_k} y_{\ess} \cdot \lift(z)_\ess \right\vert \ge \sqrt{1-\beta} \cdot \kappa \right] \ge \frac{\beta \cdot \kappa^2}{4}.
    \end{equation}
\end{lemma}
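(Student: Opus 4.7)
The plan is to reduce the claim to a second-moment style computation on the bounded random variable $f(z) \coloneqq \E_{\ess \sim \Pi_k}[y_\ess \cdot \lift(z)_\ess]$, which takes values in $[-1,1]$ because $y, \lift(z) \in \set{\pm 1}^{X(k)}$. Squaring, the target inequality \eqref{eq:fract_to_int_conclusion_1} is equivalent to
\[\ProbOp_{z \sim \set{\rv Z^{\otimes}\vert_{(S,\sigma)}}}\bigl[f(z)^2 \ge (1-\beta)\kappa^2\bigr] \ge \beta\kappa^2/4.\]

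The key step is to show that $\E_z f(z)^2$ differs from the SOS quantity $\E_{\ess,\tee \sim \Pi_k^2}\, y_\ess y_\tee\, \widetilde{\E}_{\vert S,\sigma}[\rv Y_\ess \rv Y_\tee]$ by at most $3\mu$. By linearity, $\E_z f(z)^2 = \E_{\ess,\tee}\, y_\ess y_\tee \cdot \E_z[\lift(z)_\ess \lift(z)_\tee]$; the inner expectation depends only on $z\vert_{\ess \cup \tee}$ and is evaluated against the product distribution $\prod_{i \in \ess \cup \tee}\set{\rv Z_i\vert \rv Z_S=\sigma}$, whereas the SOS quantity evaluates the same $\pm 1$-valued functional against the pseudo-marginal $\set{\rv Z_\ess \rv Z_\tee\vert \rv Z_S=\sigma}$. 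Since the functional is bounded by $1$, I would bound the pointwise difference by the $\ell_1$-distance between these two (pseudo-)distributions and decompose via the triangle inequality through the intermediate product of single-edge pseudo-marginals $\set{\rv Z_\ess\vert \rv Z_S=\sigma}\set{\rv Z_\tee\vert \rv Z_S=\sigma}$: hypothesis \eqref{eq:hdx_extra_indep} bounds the first leg by $\mu$ on average over $\ess,\tee$, while applying \eqref{eq:brs_hdx} once to $\ess$ and once to $\tee$ bounds the second leg by $2\mu$. Combining with $\mu \le \beta\kappa^2/6$ yields $\E_z f(z)^2 \ge \kappa^2 - 3\mu \ge (1 - \beta/2)\kappa^2$.

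The probability bound then follows from a one-line Markov-style argument, avoiding Paley--Zygmund entirely. Since $X \coloneqq f(z)^2 \in [0,1]$, for any $c \in [0,1]$ we have $\E[X] \le \ProbOp[X \ge c] + c$; applied with $c = (1-\beta)\kappa^2$ this gives
\[\ProbOp\bigl[f(z)^2 \ge (1-\beta)\kappa^2\bigr] \ge \E_z f(z)^2 - (1-\beta)\kappa^2 \ge (\beta/2)\kappa^2 \ge \beta\kappa^2/4,\]
establishing \eqref{eq:fract_to_int_conclusion_1}.

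The main obstacle I anticipate is bookkeeping rather than analysis: when $\ess$ and $\tee$ share coordinates (or, in the expander-walk instantiation, involve repeated vertices), one must consistently view $\set{\rv Z_\ess\vert\cdot}\set{\rv Z_\tee\vert\cdot}$ as a distribution on $[q]^{\ess} \times [q]^{\tee}$ and compare with the appropriate push-forward of the full single-index product, so that both $\ell_1$ legs are controlled directly by the tensorial hypotheses. Once this is set up carefully, the choice $\mu \le \beta\kappa^2/6$ is calibrated exactly to absorb the resulting factor of $3$, and the rest of the proof is elementary.
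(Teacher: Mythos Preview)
Your proposal is correct and essentially identical to the paper's proof: the paper also bounds the $\ell_1$ distance between $\set{\rv Z_\ess \rv Z_\tee \vert \cdot}$ and the full single-index product by the same triangle inequality through $\set{\rv Z_\ess\vert\cdot}\set{\rv Z_\tee\vert\cdot}$, obtaining the $3\mu$ loss, and then applies the same random variable $R = f(z)^2 \in [0,1]$. Your ``Markov-style'' inequality $\E[X] \le \Pr[X \ge c] + c$ is exactly the paper's \cref{claim:first_moment_bound} (applied with approximation parameter $\beta/2$), just unrolled inline rather than cited; indeed your direct computation even yields the slightly sharper $(\beta/2)\kappa^2$ before you relax to $\beta\kappa^2/4$.
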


\begin{proof}
  Let $\mu_{\ess,\tee} \coloneqq \norm{\set{\rv Z_{\ess} \rv Z_{\tee}\vert \rv Z_s=\sigma} - \prod_{i\in \ess} \set*{\rv Z_i \vert \rv Z_s=\sigma} \prod_{i \in \tee} \set*{\rv Z_i \vert \rv Z_s=\sigma}}_1$.
  Using triangle inequality and simplifying, we get
  \begin{align*}
    \mu_{\ess,\tee} \le & \norm{\set{\rv Z_{\ess} \rv Z_{\tee}\vert \rv Z_s=\sigma} - \set*{\rv Z_{\ess} \vert \rv Z_s=\sigma} \set*{\rv Z_{\tee} \vert \rv Z_s=\sigma}}_1\\
    & + \norm{\set{\rv Z_{\ess} \vert \rv Z_s=\sigma} - \prod_{i\in \ess} \set*{\rv Z_i \vert \rv Z_s=\sigma}}_1  + \norm{\set{\rv Z_{\tee} \vert \rv Z_s=\sigma} - \prod_{i\in \tee} \set*{\rv Z_i \vert \rv Z_s=\sigma}}_1.
  \end{align*}
  From our assumptions~\cref{eq:brs_hdx} and~\cref{eq:hdx_extra_indep},
  it follows that $\ExpOp_{\ess,\tee \sim \Pi_k^{
      2}} \mu_{\ess,\tee} \le 3\cdot \mu$. Using the fact that
  $\abs{y_{\ess} y_{\tee}} = 1$ and H\"older's inequality, we get
  \begin{align*}
    \E_{\ess,\tee \sim \Pi^2_k} \E_{\set{\rv Z^{\otimes}\vert_{(S,\sigma)}}} \left[ y_{\ess} y_{\tee} \rv Y_{\ess} \rv Y_{\tee} \right] &\ge \E_{\ess,\tee \sim \Pi^2_k} \widetilde{\E}_{\vert S,\sigma} \left[ y_{\ess} y_{\tee} \rv Y_{\ess} \rv Y_{\tee} \right] -
        \E_{\ess,\tee \sim \Pi^2_k} \mu_{\ess,\tee}\\
                                                                                         &\ge \E_{\ess,\tee \sim \Pi^2_k} \widetilde{\E}_{\vert S,\sigma} \left[ y_{\ess} y_{\tee} \rv Y_{\ess} \rv Y_{\tee} \right] - 3 \cdot \mu \ge \left(1 - \frac{\beta}{2}\right) \cdot \kappa^2.
  \end{align*}
  Alternatively,
  $$
  \E_{\ess,\tee \sim \Pi^2_k} \E_{\set{\rv Z^{\otimes}\vert_{(S,\sigma)}}} \left[ y_{\ess} y_{\tee} \rv Y_{\ess} \rv Y_{\tee} \right] = \E_{z \sim \set{\rv Z^{\otimes}\vert_{(S,\sigma)}}} \left(\E_{\ess \sim \Pi_k} y_{\ess} \cdot \lift(z)_\ess \right)^2 \ge \left(1 - \frac{\beta}{2}\right) \cdot \kappa^2.
  $$
  Define the random variable $\rv R \coloneqq \left(\E_{\ess \sim \Pi_k} [y_{\ess} \cdot \lift(z)_\ess] \right)^2$.
  Using~\cref{claim:first_moment_bound} with approximation parameter $\beta/2$, we get
  $$
  \E\left[\rv R \right] \ge \left(1 - \frac{\beta}{2}\right) \cdot \kappa^2 \implies \Pr\left[\rv R \ge (1 - \beta) \cdot \kappa^2 \right] \ge \frac{\beta \cdot \kappa^2}{4},
  $$
  from which~\cref{eq:fract_to_int_conclusion_1} readily follows.
\end{proof}

To formalize the notion of a word being recoverable with respect to
the full pseudo-expectation rather than in a given slice we will need
two additional events. The first event captures correlation as
follows.

\begin{definition}[$y$-Correlated Event]
   Let $\kappa \in (0,1]$ and $y \in \{\pm 1\}^{X(k)}$. We define the event
   $C_{\kappa}(y)$ as
   $$
   C_{\kappa}(y) \coloneqq \left\{(m,S,\sigma) \in \Omega~\vert~ \E_{\ess,\tee \sim \Pi^2_k} \widetilde{\E}_{\vert S,\sigma} \left[ y_{\ess} y_{\tee} \rv Y_{\ess} \rv Y_{\tee} \right] \ge \kappa^2 \right\}.
   $$
\end{definition}

The second event is a restriction of the first where we also require
the slice to satisfy the two-step tensorial condition
from~\cref{def:two_step_tensorial}.

\begin{definition}[$y$-Recoverable Event]
  Let $\kappa,\mu \in (0,1]$ and $y \in \{\pm 1\}^{X(k)}$. We define the
   event $R_{\kappa,\mu}(y)$ as
  $$
  R_{\kappa,\mu}(y) \coloneqq K_{\mu} \cap P_{\mu} \cap C_{\kappa}(y).
  $$
\end{definition}

\cref{lemma:fractional_to_integral} motivates the following
``recoverability'' condition.

\begin{definition}[Recoverable Word]
  Let $\kappa,\mu \in (0,1]$ and $y \in \set{\pm 1}^{X(k)}$. We say that $y$
  is \text{$(\kappa,\mu)$-recoverable} provided
  $$
  \ProbOp_{(m,S,\sigma)\sim\Omega}\left[ R_{\kappa,\mu}(y) \right] > 0.
  $$
\end{definition}

One of the central results in our framework is the following
``recoverability'' lemma. It embodies the power SOS brings to our
framework.

\begin{lemma}[Recoverability lemma]\label{lemma:recoverability}
  Let $\mathcal{C}_k$ be a lifted code on $X(\le k)$ with $X(1) = [n]$
  and distance at least $1/2 -\epsilon$. Let $\tilde{y} \in \set{\pm
    1}^{X(k)}$ be a word promised to be \lict{\sqrt{\epsilon}} to
  $\mathcal{C}_k$ and let $\mathcal{L} =
  \mathcal{L}(\tilde{y},\mathcal{C}_k)$ be its code list.

  Let $\theta \in (0,1]$ be arbitrary and set $\mu = \kappa \cdot
    \theta/2$ and $\kappa = (4 -\theta) \cdot \epsilon$. Suppose $\rv Z =
    \set{\rv Z_1, \ldots, \rv Z_n}$ is an $(L+2k)$-local PSD ensemble
    which is a solution to the~\ref{sos:list_dec} with
    objective value $\Psi$ within $\eta$ additive value from the
    optimum where $0 \le \eta \le \theta^2 \cdot \epsilon^4$.

    If $X(k)$ is $(\mu^4/4,L)$-two-step tensorial, then every $y \in
    \mathcal{L}$ is $(\kappa,\mu)$-recoverable. In particular, for
    every $\theta \in (0,1)$ and under the preceding assumptions, we
    have that every $y \in \mathcal{L}$ is $\left((4-\theta)\cdot
    \epsilon,\theta\right)$-recoverable.
\end{lemma}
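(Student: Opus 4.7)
The plan is to argue by contradiction: suppose some $y = \lift(z) \in \mathcal{L}$ fails to be $(\kappa,\mu)$-recoverable, i.e.\ $\Pr_{(m,S,\sigma)\sim\Omega}[R_{\kappa,\mu}(y)] = 0$. I will then construct a strictly better feasible pseudo-expectation $\widetilde{\E}'$ via the Progress Lemma, contradicting the $\eta$-near-optimality of $\widetilde{\E}$. Set $\delta^2 := \kappa^2 + \mu^2$. The source of slack is the entropic lower bound $\Psi \ge (2\sqrt{\epsilon})^4 = 16\epsilon^2$, which follows from Lemma~4.4 (Correlation $\Rightarrow$ Entropic Bound) applied via the Agreement Constraint $\E_\ess \widetilde{\E}[\tilde y_\ess \rv Y_\ess] \ge 2\sqrt{\epsilon}$.

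To extract an upper bound on $\widetilde{\E}[(\E_\ess y_\ess \rv Y_\ess)^2]$, first observe that the $(\mu^4/4,L)$-two-step tensorial assumption plugs into the concentration claim (Claim~4.13) to yield $\Pr[K_\mu^c \cup P_\mu^c] \le \mu^2$. Since $R_{\kappa,\mu}(y) = K_\mu \cap P_\mu \cap C_\kappa(y)$ is empty by hypothesis, this forces $\Pr[C_\kappa(y)] \le \mu^2$. The quantity $T_{(m,S,\sigma)} := \E_{\ess,\tee}\widetilde{\E}_{\vert S,\sigma}[y_\ess y_\tee \rv Y_\ess \rv Y_\tee] = \widetilde{\E}_{\vert S,\sigma}[(\E_\ess y_\ess \rv Y_\ess)^2]$ lies in $[0,1]$ (nonnegative as a pseudo-expectation of a square; bounded by $1$ because each $\widetilde{\E}_{\vert S,\sigma}[\rv Y_\ess \rv Y_\tee] \in [-1,1]$ via $(\rv Y_\ess \rv Y_\tee)^2 \equiv 1$). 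The tower property for pseudo-expectation conditioning gives $\E_{\Omega}[T] = \widetilde{\E}[(\E_\ess y_\ess \rv Y_\ess)^2]$, so splitting on $C_\kappa(y)$ and its complement yields $\widetilde{\E}[(\E_\ess y_\ess \rv Y_\ess)^2] \le \kappa^2 + \mu^2 = \delta^2$.

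Invoking the Progress Lemma now produces $\widetilde{\E}' = (1-\alpha)\widetilde{\E} + \alpha \cdot \E_{\delta_z}$ with $\Psi(\widetilde{\E}') \le \Psi - (\Psi-\delta^2)^2/2$ (the required $\Psi \ge \delta^2$ drops out of the final calculation). Feasibility for the List Decoding Program is preserved: PSD-ness and local consistency survive convex combinations, and the Agreement Constraint holds because $y \in \mathcal{L}$ forces $\E_\ess \tilde y_\ess y_\ess \ge 2\sqrt{\epsilon}$. It remains to verify $(\Psi-\delta^2)^2/2 > \eta$. With $\kappa = (4-\theta)\epsilon$ and $\mu = \kappa\theta/2$, a short polynomial estimate gives $\delta^2/\epsilon^2 = (4-\theta)^2(1+\theta^2/4) = 16 - 8\theta + 5\theta^2 - 2\theta^3 + \theta^4/4 \le 16 - (19/4)\theta$ uniformly for $\theta \in (0,1]$ (the map $\theta \mapsto (8\theta - 5\theta^2 + 2\theta^3 - \theta^4/4)/\theta$ is decreasing on $[0,1]$ and equals $19/4$ at $\theta = 1$). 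Combined with $\Psi \ge 16\epsilon^2$, this gives $\Psi - \delta^2 \ge (19/4)\theta\epsilon^2$ and hence $(\Psi-\delta^2)^2/2 \ge (361/32)\theta^2\epsilon^4 > \theta^2\epsilon^4 \ge \eta$, the required contradiction. The main subtlety is orchestrating the three ingredients --- the concentration claim, the entropic lower bound from Lemma~4.4, and the progress improvement --- so that the gain strictly exceeds $\eta$; the choices $\kappa = (4-\theta)\epsilon$, $\mu = \kappa\theta/2$, and $\eta = \theta^2\epsilon^4$ appear visibly calibrated for exactly this inequality.
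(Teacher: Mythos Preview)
Your proof is correct and follows essentially the same approach as the paper: contradict $\eta$-optimality by combining the concentration claim, the entropic lower bound $\Psi \ge 16\epsilon^2$, the tower-property bound $\widetilde{\E}[(\E_\ess y_\ess \rv Y_\ess)^2] \le \kappa^2+\mu^2$, and the Progress Lemma. Your polynomial estimate $\delta^2 \le (16 - \tfrac{19}{4}\theta)\epsilon^2$ is slightly sharper than the paper's $\delta^2 \le (16 - 2\theta)\epsilon^2$, and you are more explicit than the paper about verifying that the mixed pseudo-expectation $\widetilde{\E}'$ remains feasible for the Agreement Constraint, but these are differences of presentation rather than of method.
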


\begin{proof}
  First observe that since $\tilde{y}$ is \lict{\sqrt{\epsilon}} to
  $\mathcal{C}_k$ the~\ref{sos:list_dec} is feasible and so
  the solution $\rv Z$ is well defined.  Towards a contradiction with the
  $\eta$-optimality of the SOS solution $\rv Z$, suppose there exists a
  word $y \in \mathcal{L}$ that is not $(\kappa,\mu)$-recoverable. Let
  $z \in \set{\pm 1}^{X(1)}$ be such that $y
  = \lift(z)$. Then \begin{align*} 1 ~=~ \ProbOp_{(m,S,\sigma)\sim\Omega}\left[
  R_{\kappa,\mu}(y)^c \right] ~\le~ \ProbOp_{(m,S,\sigma)\sim\Omega}\left[
  K_{\mu}^c \right] + \ProbOp_{(m,S,\sigma)\sim\Omega}\left[ P_{\mu}^c \right]
  + \ProbOp_{(m,S,\sigma)\sim\Omega}\left[ C_{\kappa}(y)^c \right].  \end{align*}
  
  Using~\cref{claim:glorified_concentration}, we get
  \begin{equation}\label{eq:prob_lower_bound}
    \ProbOp_{(m,S,\sigma)\sim\Omega}\left[ C_{\kappa}(y)^c \right] ~\ge~ 1 - \mu^2.
  \end{equation}

  Since $\widetilde{\E}$ is a valid solution to
  the~\ref{sos:list_dec},~\cref{lemma:cor_imp_entropic_bound}
  implies the lower bound 
  \begin{equation}\label{eq:entropic_lower_bound}
    \Psi\left(\{\rv Y_{\ess}\}_{\ess \in X(k)}\right) ~\ge~ 16 \cdot \epsilon^2.
  \end{equation}

  By definition, for $(m,S,\sigma) \in C_{\kappa}(y)^c$ we have
  $$
  \E_{\ess,\tee \sim \Pi^2_k} \widetilde{\E}_{\vert S,\sigma} \left[ y_{\ess} y_{\tee} \rv Y_{\ess} \rv Y_{\tee} \right] \le \kappa^2,
  $$
  implying
  $$
  \widetilde{\E} \left[\left( \E_{\ess \sim \Pi_k} y_{\ess} \cdot \rv Y_{\ess} \right)^2 \right] \le \E_{m,S,\sigma} \widetilde{\E}_{\vert S,\sigma} \left[\left( \E_{\ess \sim \Pi_k} y_{\ess} \cdot \rv Y_{\ess} \right)^2 \cdot \one_{C_{\kappa(y)}^c} \right] + \ProbOp_{(m,S,\sigma)\sim\Omega}\left[ C_{\kappa}(y) \right] \le \kappa^2 + \mu^2.
  $$
  
  Let $\widetilde{\E}$ be the pseudo-expectation of the ground set
  ensemble $\rv Z$ and let $\E'$ be the expectation on the delta
  distribution $\delta_z$. Note that the pseudo-expectation obtained
  from $\E'$ is a valid solution to the~\ref{sos:list_dec}. Since
  $$
  \kappa^2+\mu^2 \le \left(1+\frac{\theta^2}{4}\right) \cdot \kappa^2 = \left(1+\frac{\theta^2}{4}\right) \cdot \left(4 - \theta \right)^2 \cdot \epsilon^2 \le \left(16 - 2 \cdot \theta \right)\cdot \epsilon^2,
  $$
  and $\theta \ge 0$,~\cref{lemma:progress_lemma} gives that there
  is a convex combination of $\widetilde{\E}$ and $\E'$ such that the
  new $\Psi$, denoted $\Psi'$, can be bounded as
  $$
  \Psi' ~\le~ \Psi - \frac{\left(\Psi - \left(\kappa^2 + \mu^2\right)\right)^2}{2} \le \Psi - 2 \cdot \theta^2 \cdot \epsilon^4,
  $$
  contradicting the $\eta$-optimality of the SOS solution $\rv Z$ since $\eta \le \theta^2 \cdot \epsilon^4$.
\end{proof}

\subsubsection{Coupled Pairs, Coupled Lists, and Covers}\label{subsec:coupled_pairs_list_and_cover}

The~\ref{sos:list_dec} minimizing $\Psi$ was instrumental
to ensure that every $y' \in \mathcal{L}(\tilde{y},\mathcal{C}_k)$ is
recoverable in the sense of the conclusion
of~\cref{lemma:recoverability}. Unfortunately, this guarantee is
somewhat weak, namely, associated to every $y' \in
\mathcal{L}(\tilde{y},\mathcal{C}_k)$ there is a slice $(m,S,\sigma)$
from which we can sample $y$ (our approximation of $y'$) satisfying
\begin{equation}\label{eq:sos_weak_guarantee}
  \lvert \E_{\ess \sim \Pi_k} y_{\ess} \cdot y_{\ess}' \rvert > C \cdot \epsilon,
\end{equation}
where $C$ is a constant strictly smaller than $4$. A priori this seems
insufficient for our list decoding task. However, there are two
properties which will help us with list decoding. The first is that SOS finds not only $y$
but also $z \in \set{\pm 1}^{X(1)}$ such that $y =
\lift(z)$.  The second property is that the lifting is robust: even
the weak agreement given by~\cref{eq:sos_weak_guarantee} translates
into a much stronger agreement in the ground set between $z$ and $z'
\in \mathcal{C}_1$ where $y'=\lift(z')$. This stronger agreement on
the ground set can be used to ensure that $z$ (or $-z$) lies inside
the unique decoding ball of $z'$ in the base code $\mathcal{C}_1$.

To study this coupling phenomenon between words in the lifted space
$\set{\pm 1}^{X(k)}$ and on the ground space $\set{\pm 1}^{X(1)}$ we
introduce some terminology. The most fundamental one is a coupled
pair.

\begin{definition}[Coupled Pair]
  Let $z \in \set{\pm 1}^{X(1)}$ and $y \in \set{\pm 1}^{X(k)}$. We
  say that $(z,y)$ is a coupled pair with respect to a lift function
  $\lift$ provided $y = \lift(z)$.
\end{definition}

\begin{remark}
  If the function $\lift$ is clear in the context, we may
  assume that the coupled pair is with respect to this function.
\end{remark}

Coupled pairs can be combined in a list.
\begin{definition}[Coupled List]
  We say that a list $\mathcal{M} =
  \set{(z^{(1)},y^{(1)}),\dots,(z^{(h)},y^{(h)})}$ is coupled with
  respect to lift function $\lift$ provided
  $(z^{(i)},y^{(i)})$ is a coupled pair for every $i$ in $[h]$.
\end{definition}

A coupled list can ``cover'' a list of words in the lifted space
$\set{\pm 1}^{X(k)}$ as defined next. 

\begin{definition}[Coupled Bias Cover]
  Let $\mathcal{M} = \set{(z^{(1)},y^{(1)}),\dots,(z^{(h)},y^{(h)})}$ be a coupled list and $\mathcal{L} \subset \set{\pm 1}^{X(k)}$.
  We say that $\mathcal{M}$ is a $\delta$-bias cover of $\mathcal{L}$ provided
  $$
  \left(\forall y' \in \mathcal{L} \right)\left(\exists (z,y) \in \mathcal{M}\right)\left(\lvert \E_{\ess \sim \Pi_k}  y_{\ess}' \cdot y_{\ess} \rvert > \delta \right).
  $$
\end{definition}

A $\delta$-bias cover for ``small'' $\delta$ might seem a rather weak
property, but as alluded to, when combined with enough robustness of
the lifting, it becomes a substantial guarantee enabling list
decoding.

\subsubsection{Cover Retrieval}\label{subsec:cover_retrieval}

When the code list $\mathcal{L}(\tilde{y},\mathcal{C}_k)$ becomes
recoverable in the SOS sense as per~\cref{lemma:recoverability}, we
still need to conduct local rounding on the slices to collect a bias
cover.  Recall that this local rounding is probabilistic
(c.f.~\cref{lemma:fractional_to_integral}), so we need to repeat this
process a few times to boost our success probability\footnote{In fact,
  this process can be derandomized using standard techniques in our
  instantiations. See~\cref{lemma:derand_maj} for detais.}. This is accomplished
by~\cref{algo:cover_retrieval}.

\begin{algorithm}{Cover Retrieval Algorithm}{An $(L+2k)$-local PSD ensemble $\rv Z$ which is a $(\theta^2\epsilon^4)$-optimal solution to the \ref{sos:list_dec}.}
                 {A $2\epsilon$-bias cover $\mathcal{M}$ for $\mathcal{L}(\tilde{y},\mathcal{C}_k)$.}\label{algo:cover_retrieval}
    \begin{enumerate}
       \item Let $\mathcal{M} = \emptyset$
        \item Let $T = 4 \cdot \ln(\vert\Omega \vert) \cdot n / (\beta \cdot \epsilon^2)$
        \item For $(m,S,\sigma) \in \Omega$ do
        \item  \qquad If $(m,S,\sigma) \in K_{\mu} \cap P_{\mu}$ then
        \item  \qquad \qquad Run Propagation Rounding $T$ times conditioned on $(m,S,\sigma)$
        \item  \qquad \qquad Let $\mathcal{M}\vert_{m,S,\sigma} = \set{(z^{(1)}, y^{(1)}), \dots,(z^{(T)}, y^{(T)})}$ be the coupled list
        \item  \qquad \qquad Set $\mathcal{M} = \mathcal{M} \cup \mathcal{M}\vert_{m,S,\sigma}$
        \item Output $\mathcal{M}$.
    \end{enumerate}
\end{algorithm}

The correctness of~\cref{algo:cover_retrieval} follows easily given
the properties established so far.

\begin{lemma}[Cover lemma]\label{lemma:cover}
  Let $\beta \in (0,1)$.  Suppose that $\lift$ is a
  $(1/2-\epsilon_0,1/2-\epsilon)$-robust $(\beta^4 \cdot
  \epsilon^8/2^{18},L)$-two-step tensorial lifting from $\mathcal{C}_1$
    to $\mathcal{C}_k$. Let $\tilde{y} \in \set{\pm 1}^{X(k)}$ be
  \lict{\sqrt{\epsilon}} to $\mathcal{C}_k$. If $\theta \le \beta
  \cdot \epsilon/2^4$, then w.v.h.p.\footnote{The abbreviation
    w.v.h.p.~stand for \textit{with very high probability} and means
    with probability $1-\exp(-\Theta(n))$.}~the Cover Retrieval
  algorithm~\ref{algo:cover_retrieval} returns a $\delta$-bias cover
  $\mathcal{M}$ of the code list
  $\mathcal{L}(\widetilde{y},\mathcal{C}_k)$ where $\delta =
  (4-\beta) \cdot \epsilon$. Furthermore, the running time is at most
  $n^{O(L+k)}/(\beta \cdot \epsilon^2)$ where $n=\vert X(1) \vert$.
\end{lemma}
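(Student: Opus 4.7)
The plan is to combine the Recoverability Lemma (\cref{lemma:recoverability}) with the per-slice rounding guarantee of \cref{lemma:fractional_to_integral}, then amplify the single-shot success probability by repetition and union-bound over $\mathcal{L}(\tilde y, \mathcal{C}_k)$ using the Johnson bound. First I would instantiate \cref{lemma:recoverability} with the value of $\theta$ from the hypothesis $\theta \le \beta\epsilon/2^4$: this yields $\kappa = (4-\theta)\epsilon$ and $\mu = \kappa\theta/2 \le \beta\epsilon^2/8$, and hence $\mu^4/4 \ge \beta^4\epsilon^8/2^{18}$. Consequently, the assumed $(\beta^4\epsilon^8/2^{18}, L)$-two-step tensoriality of the lifting implies the $(\mu^4/4, L)$-two-step tensoriality required by \cref{lemma:recoverability}. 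Since the input ensemble $\rv Z$ is a $\theta^2\epsilon^4$-optimal solution to the SOS list-decoding program, \cref{lemma:recoverability} concludes that every $y' \in \mathcal{L}(\tilde y, \mathcal{C}_k)$ is $(\kappa, \mu)$-recoverable, so each such $y'$ admits a \emph{witness slice} $(m^{\ast}, S^{\ast}, \sigma^{\ast}) \in K_\mu \cap P_\mu \cap C_\kappa(y')$.

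Next I would analyze what the algorithm does when it visits such a witness slice. Because the slice lies in $K_\mu \cap P_\mu$, the statistical-distance hypotheses \cref{eq:brs_hdx} and \cref{eq:hdx_extra_indep} of \cref{lemma:fractional_to_integral} are in fact satisfied with slack (the events bound the distances by $\mu^2/2 \le \mu$). Picking the approximation parameter inside that lemma to be some $\beta' = \Theta(\beta)$ chosen so that $\sqrt{1-\beta'}(4-\theta) \ge 4-\beta$ (possible since $\theta \ll \beta$), a short calculation also verifies $\mu \le \beta' \kappa^2/6$. Hence a single call to Propagation Rounding conditioned on $(m^{\ast}, S^{\ast}, \sigma^{\ast})$ produces a coupled pair $(z, \lift(z))$ with $\abs{\E_{\ess \sim \Pi_k} y'_\ess \cdot \lift(z)_\ess} \ge (4-\beta)\epsilon = \delta$ with probability at least $\beta' \kappa^2/4 = \Omega(\beta\epsilon^2)$.

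To conclude I would use the $T = 4 \ln(\abs{\Omega}) n/(\beta\epsilon^2)$ independent repetitions the algorithm performs at each slice in $K_\mu \cap P_\mu$: for a fixed $y'$, the probability that none of the $T$ roundings at its witness slice succeeds is at most $(1-\Omega(\beta\epsilon^2))^T \le \exp(-\Omega(n \ln \abs{\Omega}))$. The Johnson bound (\cref{rem:Johnson_list_size}) gives $\abs{\mathcal{L}(\tilde y, \mathcal{C}_k)} \le 1/(2\epsilon)$, so a union bound over the list elements shows that $\mathcal{M}$ is a $\delta$-bias cover with probability $1-\exp(-\Omega(n))$. The running time follows from $\abs{\Omega} \le n^{O(L)}$ (over $m \in [L/k]$, seed sets of size $\le L$, and binary assignments), $T$ roundings per slice in $K_\mu \cap P_\mu$, and $n^{O(k)}$ work per rounding, giving $n^{O(L+k)}/(\beta\epsilon^2)$ as claimed.

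The main obstacle I anticipate is the parameter bookkeeping: one must pick $\theta$ small enough that the assumed $(\beta^4\epsilon^8/2^{18})$-tensoriality upgrades to the recoverability lemma's $(\mu^4/4)$-tensoriality, but also large enough (equivalently, $\beta'$ small enough) that the loss $\kappa \mapsto \sqrt{1-\beta'}\kappa$ in passing from the recoverable SOS correlation to an actual bias does not wipe out the target $(4-\beta)\epsilon$. The constants $2^4$ and $2^{18}$ appearing in the hypothesis are precisely what make these two inequalities simultaneously feasible, and verifying this balance is the most delicate part.
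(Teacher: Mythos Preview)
Your approach is exactly the paper's: invoke \cref{lemma:recoverability} with $\theta=\beta\epsilon/2^4$, pass to \cref{lemma:fractional_to_integral} on the witness slice (the paper uses approximation parameter $\beta/8$ there, matching your $\beta'=\Theta(\beta)$), and finish by repetition plus a union bound over the Johnson-bounded list. One small slip: from $\mu\le\beta\epsilon^2/8$ you cannot conclude $\mu^4/4\ge\beta^4\epsilon^8/2^{18}$; you need a \emph{lower} bound on $\mu$, which you get by taking $\theta=\beta\epsilon/2^4$ exactly (as the paper's proof does), giving $\mu=(4-\theta)\beta\epsilon^2/32\ge 3\beta\epsilon^2/32$ and hence $\mu^4/4\ge \beta^4\epsilon^8/2^{18}$.
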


\begin{proof}
  Let $\rv Z = \set{\rv Z_1, \ldots, \rv Z_n}$ be an $\eta$-optimum
  solution to the~\ref{sos:list_dec} where $\eta \le \theta^2
  \cdot \epsilon^4$ and $\theta = \beta \cdot \epsilon/2^4$. By our
  $(\beta^4 \cdot \epsilon^8/2^{18},L)$-two-step tensorial assumption
  and our
  choice of SOS degree for the~\ref{sos:list_dec}, we can
  apply~\cref{lemma:recoverability} to conclude that every $y \in
  \mathcal{L} = \mathcal{L}(\tilde{y},\mathcal{C}_k)$ is $((4-\theta)
  \cdot \epsilon,(4-\theta) \cdot \epsilon \cdot \theta/2)$-recoverable.
  Then for $y
  \in \mathcal{L}$, there exists $(m,S,\sigma) \in \Omega$ such
  that~\cref{lemma:fractional_to_integral} yields
  $$
  \ProbOp_{z \sim \set{\rv Z^{\otimes}\vert_{(S,\sigma)}}}\left[\left\vert \E_{\ess \sim \Pi_k} y_{\ess} \cdot \lift(z)\right\vert \ge (4-\beta)\cdot \epsilon \right] \ge \frac{\beta \cdot (4-\theta)^2 \cdot \epsilon^2}{32} \ge \frac{\beta\cdot \epsilon^2}{4}.
  $$
  where $\set{\rv Z^{\otimes}\vert_{(S,\sigma)}}$ (c.f.~\cref{def:prod_dist_on_slice}) is the
  product distribution of the marginal distributions after
  conditioning the ensemble on slice $(m,S,\sigma)$.  By sampling
  $\set{\rv Z^{\otimes}\vert_{(S,\sigma)}}$ independently $T$ times we
  obtain $z^{(1)},\dots,z^{(T)}$ and thus also the coupled list
  $$
  \mathcal{M}\vert_{m,S,\sigma} = \set{(z^{(1)},y^{(1)}), \dots,(z^{(T)},y^{(T)})},
  $$
  where $y^{(i)} = \lift(z^{(i)})$. Then
  \begin{align*}
    \ProbOp_{z^{(1)},\dots,z^{(T)} \sim \set{\rv Z^{\otimes}\vert_{(S,\sigma)}}^{\otimes T}}\left[\forall i \in [T]~\colon~ \left\vert \E_{\ess \sim \Pi_k} y_{\ess} \cdot \lift\left(z^{(i)}\right) \right\vert < (4-\beta) \cdot \epsilon \right] & ~\le~ \exp{\left(- \frac{\beta \cdot \epsilon^2 \cdot T}{4}\right)}\\ & ~\le~ \frac{\exp{\left(-n\right)}}{\vert \Omega \vert},
  \end{align*}
  where the last inequality follows from our choice of $T$. Then by
  union bound
  $$
  \ProbOp\left[\mathcal{M} \textup{ is not a } 2\epsilon\textup{-bias cover of } \mathcal{L}\right] \le \vert \mathcal{L} \vert \cdot \frac{\exp{\left(-n\right)}}{\vert \Omega \vert} \le \exp{\left(-n\right)},
  $$
  concluding the proof.
\end{proof}

\subsubsection{Cover Purification and Robustness}\label{subsec:cover_purification}

Now we consider the third and final stage of the list decoding
framework. We show how despite the weak guarantee of the bias cover
returned by the Cover Retrieval~\cref{algo:cover_retrieval} we can do
a further processing to finally obtain the coupled code list
$\mathcal{L}(\tilde{y},\mathcal{C}_1,\mathcal{C}_k)$ provided the
lifting admits some \textit{robustness} properties. We first develop
these properties %in~\cref{subsubsec:extra_rob_prop}
and later present this process, denoted Cover Purification.
%in~\cref{subsubsec:cover_purification}.

\subsubsection*{Further Lifting Properties}\label{subsubsec:extra_rob_prop}

Given two coupled pairs $(z,y=\lift(z))$ and $(z',y'=\lift(z'))$
(where $z \in \mathcal{C}_1$), we show how weak agreement between $y$
and $y'$ on the lifted space is enough to provide non-trivial
guarantees between $z$ and $z'$ as long as the lifting admits
appropriate \textit{robustness}.

\begin{claim}[Coupled unique decoding from distance]\label{claim:coupled_unique_decoding_dist}
  Suppose that $\lift$ is a $(1/4-\epsilon_0/2,1/2-\epsilon)$-robust
  lifting from $\mathcal{C}_1$ to $\mathcal{C}_k$. Let $(z,y)$ and
  $(z',y')$ be coupled pairs. If $y \in \mathcal{C}_k$ (equivalently
  $z \in \mathcal{C}_1$) and $\Delta\parens{y,y'} < 1/2 - \epsilon$, then
  $\Delta\parens*{z,z'} \le 1/4 - \epsilon_0/2$, i.e., $z'$ is within the unique
  decoding radius of $z$.
\end{claim}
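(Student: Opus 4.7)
The plan is to observe that this claim is essentially an immediate unpacking of the robustness hypothesis together with the definition of coupled pairs. Since $(z,y)$ and $(z',y')$ are coupled, by definition we have $y = \lift(z)$ and $y' = \lift(z')$, so $\Delta(y,y') = \Delta(\lift(z), \lift(z'))$. The hypothesis then reads $\Delta(\lift(z), \lift(z')) < 1/2 - \epsilon$.

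First I would invoke the contrapositive of the $(1/4 - \epsilon_0/2,\, 1/2 - \epsilon)$-robustness of $\lift$ (\cref{def:lift_rob}). That definition says that whenever $\Delta(z,z') \ge 1/4 - \epsilon_0/2$ we must have $\Delta(\lift(z), \lift(z')) \ge 1/2 - \epsilon$. Contrapositively, if $\Delta(\lift(z), \lift(z')) < 1/2 - \epsilon$, then $\Delta(z,z') < 1/4 - \epsilon_0/2$. Applying this with our assumption yields the desired bound $\Delta(z,z') < 1/4 - \epsilon_0/2$, which in particular is $\le 1/4 - \epsilon_0/2$.

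For the second part of the statement — the interpretation that $z'$ lies in the unique decoding ball around $z \in \mathcal{C}_1$ — I would simply note that since $\mathcal{C}_1$ has relative distance at least $1/2 - \epsilon_0$, its half-distance (and hence unique decoding radius) is at least $(1/2 - \epsilon_0)/2 = 1/4 - \epsilon_0/2$. The previous step placed $z'$ strictly inside this ball centered at $z$, so any unique decoder for $\mathcal{C}_1$ applied to $z'$ recovers $z$.

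There is no serious obstacle here: the content of the claim is entirely a rephrasing of the robustness contrapositive in the coupled-pair language, set up so that the ``lifted'' bias-cover guarantee from the previous subsection feeds cleanly into the unique decoder of the base code $\mathcal{C}_1$ used later in the Cover Purification step. The only mild subtlety is being careful about strict versus non-strict inequalities, which is harmless since the robustness definition gives a strict conclusion on the ground-set side.
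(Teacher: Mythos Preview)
Your proposal is correct and follows essentially the same approach as the paper: both argue via the contrapositive of the $(1/4-\epsilon_0/2,1/2-\epsilon)$-robustness definition to conclude $\Delta(z,z') < 1/4-\epsilon_0/2$ from $\Delta(y,y') < 1/2-\epsilon$. Your added remark justifying why $1/4-\epsilon_0/2$ is the unique decoding radius of $\mathcal{C}_1$ is a helpful clarification that the paper leaves implicit.
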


\begin{proof}
  Towards a contradiction suppose that $\Delta\parens*{z, z} \ge 1/4 -
  \epsilon_0/2$. Since the lifting is $(1/4-\epsilon_0/2,1/2-\epsilon)$-robust,
  this implies that $\Delta\parens*{y, y'} \ge 1/2 - \epsilon$ contradicting our
  assumption.
\end{proof}

From bias amplification (i.e., parity sampling), we
deduce~\cref{claim:coupled_unique_decoding_bias_1}.

\begin{claim}[Coupled unique decoding from bias I]\label{claim:coupled_unique_decoding_bias_1}
  Suppose $\lift$ is a $(1/2 - \epsilon_0,1/2 -\epsilon)$-robust linear
  lifting from $\mathcal{C}_1$ to $\mathcal{C}_k$ which is also
  a $(1/2 + \epsilon_0, 2 \cdot \epsilon)$-parity sampler. Let
  $(z,y)$ and $(z',y')$ be coupled pairs.  If $y \in \mathcal{C}_k$
  (equivalently $z \in \mathcal{C}_1$) and $\lvert \E_{\ess \sim
    \Pi_k} [y_{\ess} \cdot y'_{\ess}] \rvert > 2 \cdot \epsilon$, then
  $$
  \lvert \E_{i \sim \Pi_1} [z_i\cdot z'_i] \rvert \ge 1/2 + \epsilon_0,
  $$
  i.e., either $z'$ or $-z'$ is within the unique decoding radius of
  $z$.
\end{claim}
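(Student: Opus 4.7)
The plan is to reduce the inner product hypothesis to a bias statement on a single word, and then apply the parity sampling assumption contrapositively. The key bookkeeping is the identification between $\F_2$-addition and coordinate-wise multiplication in the $\{\pm 1\}$ representation, which turns $\F_2$-linearity of $\lift$ into multiplicativity.

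First, I would rewrite the hypothesis. Under the $\{\pm 1\}$ embedding, for any $z,z' \in \F_2^{X(1)}$ we have $(z \cdot z')_i$ (coordinate product) corresponding to $(z + z')_i$ in $\F_2$, so by linearity of $\lift$,
\[
y_{\ess} \cdot y'_{\ess} ~=~ \lift(z)_{\ess} \cdot \lift(z')_{\ess} ~=~ \lift(z \cdot z')_{\ess}
\]
for every $\ess \in X(k)$. Taking absolute values of expectations, the hypothesis $\left|\E_{\ess \sim \Pi_k}[y_\ess y'_\ess]\right| > 2\epsilon$ becomes exactly $\bias(\lift(z \cdot z')) > 2\epsilon$.

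Second, I would apply parity sampling in its contrapositive form. Since $\lift$ is a $(1/2 + \epsilon_0, 2\epsilon)$-parity sampler, any word $w \in \F_2^{X(1)}$ with $\bias(w) \le 1/2 + \epsilon_0$ must satisfy $\bias(\lift(w)) \le 2\epsilon$. Applied to $w = z \cdot z'$, this forces
\[
\bias(z \cdot z') ~=~ \left|\E_{i \sim \Pi_1}[z_i \cdot z'_i]\right| ~>~ 1/2 + \epsilon_0,
\]
which is the quantitative conclusion claimed. To spell out the unique decoding interpretation, set $b = \E_i[z_i z'_i]$; since Hamming distance is $(1 - b)/2$ in the $\{\pm 1\}$ representation, the inequality $|b| > 1/2 + \epsilon_0$ yields either $\Delta(z, z') < 1/4 - \epsilon_0/2$ or $\Delta(z, -z') < 1/4 - \epsilon_0/2$, placing one of $z', -z'$ inside the unique decoding ball of $z$ in $\mathcal{C}_1$.

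There is no substantive obstacle; the proof is essentially a one-line contrapositive once the linearity-times-bias identity is in place. The only care needed is the $\F_2 \leftrightarrow \{\pm 1\}$ translation, and noting that the $(1/2-\epsilon_0, 1/2-\epsilon)$-robustness assumption, while not directly used in this chain, is consistent with the parity sampling hypothesis and ensures that the unique decoding radius $1/4 - \epsilon_0/2$ for $\mathcal{C}_1$ is meaningful (since $\mathcal{C}_1$ has distance at least $1/2 - \epsilon_0$, by applying robustness to pairs of codewords).
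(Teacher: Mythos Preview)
Your proof is correct and essentially identical to the paper's: both use linearity of $\lift$ to rewrite $y_\ess y'_\ess = \lift(z\cdot z')_\ess$ and then apply the $(1/2+\epsilon_0, 2\epsilon)$-parity sampling hypothesis contrapositively (the paper phrases it as a proof by contradiction, but the logical content is the same). Your observation that the $(1/2-\epsilon_0,1/2-\epsilon)$-robustness hypothesis is not directly invoked in this chain is also accurate.
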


\begin{proof}
  The verification follows easily from our assumptions. Towards a
  contradiction suppose that $\lvert \E_{i \sim \Pi_1} [z_i\cdot
  z'_i] \rvert < 1/2 + \epsilon_0$, i.e., the word $z'' = z \cdot z'$ has
  bias at most $1/2 + \epsilon_0$. Using the assumption that the lift
  is linear, we have $\lift(z'') = \lift(z) \cdot \lift(z')$.  Since
  the lifting takes bias $1/2 + \epsilon_0$ to $2\cdot \epsilon$, we have
  $$
  \textup{bias}(\lift(z) \cdot \lift(z')) = \textup{bias}(\lift(z'')) \le 2 \cdot \epsilon,
  $$
  or equivalently $\lvert \E_{\ess \sim \Pi_k} [y_{\ess} \cdot
  y'_{\ess}] \rvert \le 2 \cdot \epsilon$ contradicting our assumption.
\end{proof}

If the lifting function is odd, then we
obtain~\cref{claim:coupled_unique_decoding_bias_2}.

\begin{claim}[Coupled unique decoding from bias II]\label{claim:coupled_unique_decoding_bias_2}
  Suppose $\lift$ is a $(1/4 - \epsilon_0/2, 1/2 -\epsilon)$-robust 
  lifting from $\mathcal{C}_1$ to $\mathcal{C}_k$ which is odd, i.e.,
  $\lift(-z) = -\lift(z)$. Let
  $(z,y)$ and $(z',y')$ be coupled pairs.  If $y \in \mathcal{C}_k$
  (equivalently $z \in \mathcal{C}_1$) and $\lvert \E_{\ess \sim
    \Pi_k} [y_{\ess} \cdot y'_{\ess}] \rvert > 2 \cdot \epsilon$, then
  either $z'$ or $-z'$ is within the unique decoding radius of $z$.
\end{claim}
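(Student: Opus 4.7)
My plan is to reduce this to Claim~\ref{claim:coupled_unique_decoding_dist} (coupled unique decoding from distance) by exploiting oddness to get a ``sign flip'' coupled pair on demand. The correlation hypothesis $\lvert \E_{\ess \sim \Pi_k}[y_{\ess} \cdot y'_{\ess}] \rvert > 2\epsilon$ is naturally a two-sided statement in the $\{\pm 1\}$ representation, while the previous distance-based claim is one-sided; oddness is exactly what bridges the two.

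First I would translate the bias condition into a distance condition. Writing $\Delta(y,y') = (1 - \E_{\ess} y_{\ess} y'_{\ess})/2$ in the $\{\pm 1\}$ encoding, the assumption $\lvert \E_{\ess} y_{\ess} y'_{\ess} \rvert > 2\epsilon$ splits into two cases: either $\E_{\ess} y_{\ess} y'_{\ess} > 2\epsilon$, giving $\Delta(y,y') < 1/2 - \epsilon$, or $\E_{\ess} y_{\ess} y'_{\ess} < -2\epsilon$, giving $\Delta(y,y') > 1/2 + \epsilon$, equivalently $\Delta(y,-y') < 1/2 - \epsilon$.

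Next I would use oddness to handle both cases uniformly. Since $\lift(-z') = -\lift(z') = -y'$, the pair $(-z', -y')$ is also a coupled pair. In the first case, I apply Claim~\ref{claim:coupled_unique_decoding_dist} directly to the coupled pairs $(z,y)$ and $(z',y')$: the hypotheses $y \in \mathcal{C}_k$ and $\Delta(y,y') < 1/2 - \epsilon$ yield $\Delta(z,z') \le 1/4 - \epsilon_0/2$, placing $z'$ inside the unique decoding radius of $z$. In the second case, I apply Claim~\ref{claim:coupled_unique_decoding_dist} instead to the coupled pairs $(z,y)$ and $(-z',-y')$: now $\Delta(y,-y') < 1/2 - \epsilon$, so the claim yields $\Delta(z,-z') \le 1/4 - \epsilon_0/2$, placing $-z'$ inside the unique decoding radius of $z$.

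There is no real obstacle here: the proof is essentially a two-line unpacking once we notice that oddness makes the sign flip on the ground side match the sign flip on the lifted side. The only thing to be careful about is the encoding convention, since the whole argument relies on identifying $\F_2$ with $\{\pm 1\}$ so that correlation and Hamming distance are related by $\E_{\ess} y_{\ess} y'_{\ess} = 1 - 2\Delta(y,y')$ and the operation sending $y' \mapsto -y'$ corresponds to bit complementation; this identification is already fixed in the Notation paragraph of Section~\ref{sec:prelim}.
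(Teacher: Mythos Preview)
Your proposal is correct and follows essentially the same approach as the paper: split on the sign of the correlation, use oddness to flip $z'\mapsto -z'$ in the negative case, convert correlation to distance, and invoke Claim~\ref{claim:coupled_unique_decoding_dist}. Your version is in fact slightly tidier in keeping the strict inequality $\Delta(y,y') < 1/2-\epsilon$ throughout, which is what Claim~\ref{claim:coupled_unique_decoding_dist} actually requires.
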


\begin{proof}
  Since $\lvert \E_{\ess \sim \Pi_k} [y_{\ess} \cdot y'_{\ess}] \rvert > 2 \cdot \epsilon$ and the lifting
  is odd, either
  $$
  \E_{\ess \sim \Pi_k} [y_{\ess} \cdot \lift(z')_\ess] > 2 \cdot \epsilon,
  $$
  or
  $$
    \E_{\ess \sim \Pi_k} [y_{\ess} \cdot \lift(-z')_\ess] = \E_{\ess \sim \Pi_k} \left[-y_{\ess} \cdot \lift(z')_\ess \right] > 2 \cdot \epsilon.
  $$
  Then either $\Delta\parens{y,\lift(z')} \le 1/2-\epsilon$ or $\Delta\parens{y ,\lift(-z')} \le 1/2-\epsilon$.
  Using~\cref{claim:coupled_unique_decoding_dist} we conclude the proof.
\end{proof}

\subsubsection*{Cover Purification}\label{subsubsec:cover_purification}

A $\delta$-bias cover $\mathcal{M}$ of $\mathcal{L}$ for small
$\delta$ may require further processing in order to actually retrieve
$\mathcal{L}$. Provided the lifting is sufficiently robust, trying to
unique decode $z$ for $(z,y) \in \mathcal{M}^{\pm}$, where
$\mathcal{M}^{\pm}$ is the sign completion as defined next, and then
lifting the decoded word yields a new coupled list that contains
$\mathcal{L}$. This process is referred to as cover purification and
its formalization is the object of this section.

\begin{definition}[Sign Completion]
  Let $\mathcal{M}$ be coupled list. We say that $\mathcal{M}^{\pm}$ defined as
  $$
  \mathcal{M}^{\pm} \coloneqq \left\{(z,\lift(z)), (-z,\lift(-z)) ~\vert~ (z,y) \in \mathcal{M}\right\},
  $$
  is the sign completion of $\mathcal{M}$.
\end{definition}

The correctness of the cover purification process is established next.
\begin{lemma}[Purification lemma]\label{lemma:purification}
  Suppose $\lift$ is a $(1/2 - \epsilon_0,1/2 -\epsilon)$-robust
  lifting from $\mathcal{C}_1$ to $\mathcal{C}_k$ which is either
  \begin{itemize}
    \item linear and a $(1/2 + \epsilon_0, 2\cdot \epsilon)$-parity sampler; or
    \item $(1/4-\epsilon_0/2)$-robust and odd.
  \end{itemize}
  Let $\tilde{y} \in \set{\pm 1}^{X(k)}$ be \lict{\sqrt{\epsilon}} to
  $\mathcal{C}_k$ and $\mathcal{L}
  = \mathcal{L}(\tilde{y}, \mathcal{C}_k)$ be its code list. If
  $\mathcal{M} = \set{(z^{(i)},y^{(i)})\vert i \in [h]}$ is a
  $2\epsilon$-bias cover of $\mathcal{L}$, then
  $$
  \mathcal{L} \subseteq \left\{\lift(z) ~\vert~ z \in \textup{Dec}_{\mathcal{C}_1}\left(\bf{\textup{P}}_1\left( \mathcal{M}^{\pm} \right)\right)  \right\} \eqqcolon \mathcal{L}',
  $$  
  where $\bf{\textup{P}}_1$ is the projection on the first coordinate and $\textup{Dec}_{\Cc_1}$
  is a unique decoder for $\Cc_1$.
  Furthermore, $\mathcal{L}'$ can be computed in time $O\left(\left\vert
  \mathcal{M} \right\vert \cdot f(n) \right)$ where $f(n)$ is the running time
  of a unique decoding algorithm of $\mathcal{C}_1$.
\end{lemma}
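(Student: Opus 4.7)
The plan is to show that every codeword in $\mathcal{L}$ appears in $\mathcal{L}'$ by reducing to the coupled unique decoding claims already established. Fix an arbitrary $y^\star \in \mathcal{L}$ and write $y^\star = \lift(z^\star)$ for some $z^\star \in \mathcal{C}_1$. Since $\mathcal{M}$ is a $2\epsilon$-bias cover of $\mathcal{L}$, there exists a coupled pair $(z, y) \in \mathcal{M}$ for which $\lvert \E_{\ess \sim \Pi_k}[y_\ess \cdot y^\star_\ess] \rvert > 2\epsilon$.

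Next, I would invoke whichever of the two coupled unique decoding from bias claims applies under the current hypotheses, letting the pair $(z^\star, y^\star)$ play the role of the codeword pair in the claim. In the first case, the lifting is linear and a $(1/2+\epsilon_0, 2\epsilon)$-parity sampler, so Claim (coupled unique decoding from bias I) yields $\lvert \E_{i \sim \Pi_1}[z^\star_i \cdot z_i] \rvert \ge 1/2 + \epsilon_0$; in the second case, the lifting is odd and $(1/4-\epsilon_0/2, 1/2 - \epsilon/2)$-robust, so Claim (coupled unique decoding from bias II) applies. Either way, the conclusion is that either $z$ or $-z$ lies within the unique decoding radius $1/4 - \epsilon_0/2$ of $z^\star$ inside $\mathcal{C}_1$.

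The sign completion $\mathcal{M}^\pm$ is designed precisely to absorb this $\pm$ ambiguity: by construction $\bf{\textup{P}}_1(\mathcal{M}^\pm)$ contains both $z$ and $-z$ whenever $(z,y) \in \mathcal{M}$. Consequently, applying $\textup{Dec}_{\mathcal{C}_1}$ to the appropriate one of $\pm z$ returns $z^\star$, and lifting gives $\lift(z^\star) = y^\star \in \mathcal{L}'$. Since the choice of $y^\star \in \mathcal{L}$ was arbitrary, this proves the inclusion $\mathcal{L} \subseteq \mathcal{L}'$.

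For the running time, constructing $\mathcal{M}^\pm$ costs $O(|\mathcal{M}|)$ evaluations of $\lift$. We then invoke $\textup{Dec}_{\mathcal{C}_1}$ on each of the (at most) $2|\mathcal{M}|$ elements of $\bf{\textup{P}}_1(\mathcal{M}^\pm)$ and apply $\lift$ once to each successful decoding, giving a total cost of $O(|\mathcal{M}| \cdot f(n))$ assuming (as is typical) that the time to apply $\lift$ is dominated by $f(n)$. There is no substantive obstacle here beyond correctly matching the hypotheses to the two coupled unique decoding claims; the core observation is that weak bias agreement in the lifted space, combined with robustness (and parity sampling or oddness) of the lifting, is exactly enough to bring some sign of $z$ into the unique decoding ball of $z^\star$.
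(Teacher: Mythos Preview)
Your proof is correct and follows essentially the same approach as the paper: fix a codeword in $\mathcal{L}$, use the $2\epsilon$-bias cover property to locate a coupled pair in $\mathcal{M}$ with nontrivial correlation, then invoke the appropriate coupled unique decoding claim (from bias I or II) to land $\pm z$ in the unique decoding ball, with the sign completion handling the ambiguity. The paper's proof is terser but structurally identical.
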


\begin{proof}
  Let $y \in \mathcal{L}$. By the $2\epsilon$-cover property, there exists
  a coupled pair $(z',y') \in \mathcal{M}$ satisfying $\lvert \E_{\ess \sim \Pi_k} [y_{\ess} \cdot y'_{\ess}] \rvert >
  2 \cdot \epsilon$. Combining this bound with the appropriate robustness assumptions,
  \cref{claim:coupled_unique_decoding_bias_1} or \cref{claim:coupled_unique_decoding_bias_2}
  yields that either $z'$ or $-z'$ can be uniquely decoded in $\mathcal{C}_1$. Then
  $$
  y \in \left\{\lift(z) ~\vert~ z \in \textup{Dec}_{\mathcal{C}_1}\left( \bf{\textup{P}}_1\left( \mathcal{M}^{\pm} \right) \right)  \right\}.
  $$
  Finally, observe that computing $\mathcal{L}'$ with the claimed
  running time is straightforward.
\end{proof}

Algorithmically, cover purification works by running the unique decoding algorithm of $\Cc_1$ on every element of the sign completion $\mathcal{M}^{\pm}$, described below in \cref{algo:purification}.

\begin{algorithm}{Cover Purification Algorithm}{A $2\epsilon$-bias cover $\mathcal{M}$ for $\mathcal{L}(\tilde{y},\mathcal{C}_k)$.}
                                         {Coupled List $\mathcal{L}'$ s.t. $\textup{P}_2(\mathcal{L}') \supseteq \mathcal{L}(\tilde{y},\mathcal{C}_k)$.}\label{algo:purification}
    \begin{enumerate}
       \item Let $\mathcal{L}' = \emptyset$
        \item For $(z',y') \in \mathcal{M}^{\pm}$ do
        \item  \qquad If $z'$ is uniquely decodable in $\mathcal{C}_1$ then
        \item  \qquad \qquad Let $z = \textup{UniqueDecode}_{\mathcal{C}_1}(z')$
        \item  \qquad \qquad Let $y = \lift(z)$
        \item  \qquad \qquad Set $\mathcal{L}' = \mathcal{L}' \cup \{(z,y)\}$
        \item Output $\mathcal{L}'$.
    \end{enumerate}
\end{algorithm}

\subsubsection{Correctness of the List Decoding Algorithm}\label{subsec:list_dec_correctness}

The building blocks developed so far are assembled to form the final list decoding algorithm (\cref{algo:list_decoding}), which is restated below for
convenience.

\begin{algorithm}{List Decoding Algorithm}{A word $\tilde{y} \in \set{\pm 1}^{X(k)}$ \lict{\sqrt{\epsilon}} to $\mathcal{C}_k=\lift(\mathcal{C}_1)$}
                                          {Coupled code list $\mathcal{L}(\tilde{y},\mathcal{C}_1,\mathcal{C}_k)$.}\label{algo:list_decoding_restated}
    \begin{enumerate}                   
        \item Solve the~\ref{sos:list_dec} with $\eta$-accuracy obtaining $\rv Z$ where $\eta = \epsilon^8/2^{22}$ 
        \item Let $\mathcal{M}$ be the output of the Cover Retrieval~\cref{algo:cover_retrieval} on $\rv Z$
        \item Let $\mathcal{L'}$ be the output of the Cover Purification~\cref{algo:purification} on $\mathcal{M}$
        \item Let $\mathcal{L}'' = \{(z,y) \in \mathcal{L}'~\vert~ \Delta\parens{\tilde{y},y} \le 1/2 - \sqrt{\epsilon} \}$          
        \item Output $\mathcal{L}''$.
    \end{enumerate}
\end{algorithm}

We are ready to prove the main theorem of the abstract list decoding
framework which follows easily from the properties developed so far.
\begin{theorem}[List Decoding Theorem (Restatement of~\cref{theo:list_dec})]\label{theo:list_dec_restated}
  Suppose that $\lift$ is a $(1/2-\epsilon_0,1/2-\epsilon)$-robust
  $(\epsilon^8/2^{22},L)$-two-step tensorial lifting from
  $\mathcal{C}_1$ to $\mathcal{C}_k$ which is either
  \begin{itemize}
    \item linear and a $(1/2+\epsilon_0, 2\cdot \epsilon)$-parity sampler; or
    \item $(1/4 - \epsilon_0, 1/2 - \epsilon/2)$-robust and odd.
  \end{itemize}
    Let $\tilde{y} \in \set{\pm 1}^{X(k)}$ be \lict{\sqrt{\epsilon}} to
  $\mathcal{C}_k$. Then w.v.h.p.~the List
  Decoding~\cref{algo:list_decoding} returns the coupled code list
  $\mathcal{L}(\widetilde{y},\mathcal{C}_1,\mathcal{C}_k)$. Furthermore, the
  running time is
  $$
  n^{O(L+k)}\left(\polylog(\epsilon^{-1}) + f(n)\right),
  $$
  where $n=\vert X(1) \vert$ and $f(n)$ is the running time of a
  unique decoding algorithm of $\mathcal{C}_1$.
\end{theorem}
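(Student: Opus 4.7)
The plan is to assemble the building blocks already established in the preceding subsections into the pipeline prescribed by \cref{algo:list_decoding_restated}. Since \cref{lemma:cover}, \cref{lemma:purification}, and the convexity/progress properties of $\Psi$ do all the heavy lifting, what remains is a chain of applications together with a verification that the final filtering step identifies the coupled code list exactly.

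First I would argue that the \ref{sos:list_dec} is feasible: any codeword $y^\star = \lift(z^\star) \in \mathcal{C}_k$ witnessing $\Delta(\tilde{y}, y^\star) \le 1/2-\sqrt{\epsilon}$ gives an integral ensemble $\rv Z = \delta_{z^\star}$ (hence $\rv Y = \delta_{y^\star}$) meeting the Agreement Constraint. Since $\Psi$ is convex and the degree-$(L+2k)$ SOS feasible region admits an efficient separation oracle, we can find an $\eta$-approximate minimizer in time $n^{O(L+k)} \polylog(\epsilon^{-1})$ for $\eta = \epsilon^8/2^{22}$.

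Next I would invoke \cref{lemma:cover} with $\beta = 2$, so that its output bias cover has parameter $(4-\beta)\epsilon = 2\epsilon$. The tensorial hypothesis $(\beta^4\epsilon^8/2^{18},L)=(\epsilon^8/2^{14},L)$ required by the lemma is implied by the theorem's stronger $(\epsilon^8/2^{22},L)$-two-step tensorial assumption, and the optimality slack $\theta^2\epsilon^4$ required of the SOS solution is also respected by our choice of $\eta$. This produces, with probability $1-\exp(-\Theta(n))$, a $2\epsilon$-bias cover $\mathcal{M}$ of $\mathcal{L}(\tilde{y},\mathcal{C}_k)$ in time $n^{O(L+k)}/\epsilon^2$. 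I then feed $\mathcal{M}$ into the Cover Purification \cref{algo:purification} and apply \cref{lemma:purification}. Its preconditions are met: $(1/2-\epsilon_0,1/2-\epsilon)$-robustness holds by hypothesis, and in the second case $(1/4-\epsilon_0,1/2-\epsilon/2)$-robustness implies the $(1/4-\epsilon_0/2,1/2-\epsilon)$-robustness used inside \cref{claim:coupled_unique_decoding_dist} (since $1/4-\epsilon_0<1/4-\epsilon_0/2$ and $1/2-\epsilon/2\ge 1/2-\epsilon$). The lemma then guarantees $\mathcal{L}(\tilde{y},\mathcal{C}_k)\subseteq \mathbf{P}_2(\mathcal{L}')$ in time $O(|\mathcal{M}|\cdot f(n))$.

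Finally I would verify that the filtering step recovers the coupled code list. Every pair in $\mathcal{L}'$ has the form $(z,\lift(z))$ with $z\in\mathcal{C}_1$ by construction, so retaining only those with $\Delta(\tilde{y},y)\le 1/2-\sqrt{\epsilon}$ yields $\mathcal{L}''\subseteq \mathcal{L}(\tilde{y},\mathcal{C}_1,\mathcal{C}_k)$ by \cref{def:coupled_code_list}. Conversely, any $(z,\lift(z))\in\mathcal{L}(\tilde{y},\mathcal{C}_1,\mathcal{C}_k)$ has $\lift(z)\in\mathcal{L}(\tilde{y},\mathcal{C}_k)$, hence appears in $\mathcal{L}'$ by the purification inclusion and automatically survives the filter. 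Summing the three stage costs gives the stated $n^{O(L+k)}(\polylog(\epsilon^{-1})+f(n))$ runtime. I do not foresee a conceptual obstacle; the only care required is in lining up the numerical slack between the tensorial parameter, the SOS optimality slack, and the bias-cover threshold consumed by purification, which is precisely what motivates the specific constant $\epsilon^8/2^{22}$ in the tensorial assumption.
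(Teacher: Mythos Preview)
Your proposal follows exactly the paper's approach: chain \cref{lemma:cover} into \cref{lemma:purification} and then observe that the final filtering step pins down $\mathcal{L}(\tilde y,\mathcal{C}_1,\mathcal{C}_k)$ exactly. The one slip is your invocation of \cref{lemma:cover} with $\beta=2$: that lemma requires $\beta\in(0,1)$ (ultimately because \cref{lemma:fractional_to_integral} involves $\sqrt{1-\beta}$), so $\beta=2$ is not admissible. The fix is painless: take, say, $\beta=1/2$, which makes the tensorial requirement $\beta^4\epsilon^8/2^{18}=\epsilon^8/2^{22}$ match the theorem's hypothesis on the nose, and yields a $(4-\beta)\epsilon=(7/2)\epsilon$-bias cover, which is a fortiori a $2\epsilon$-bias cover as needed for \cref{lemma:purification}. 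With that correction your argument is complete and matches the paper's.
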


\begin{proof}
  Under the assumptions of the theorem,~\cref{lemma:cover} establishes
  that the Cover Retrieval~~\cref{algo:cover_retrieval} returns
  w.v.h.p. a $2\epsilon$-bias cover. Then, \cref{lemma:purification}
  states that providing this $2\epsilon$-bias cover as input to the Cover
  Purification~\cref{algo:purification} yields a coupled list
  containing the code list
  $\mathcal{L}(\widetilde{y},\mathcal{C}_1,\mathcal{C}_k)$. Finally, the
  last step in~\cref{algo:list_decoding} ensures the output is
  precisely $\mathcal{L}(\widetilde{y},\mathcal{C}_1,\mathcal{C}_k)$.
\end{proof}

\section{Instantiation I: Direct Sum on HDXs}\label{sec:list_dec_xor_hdx}

We instantiate the list decoding framework to the direct sum lifting
on HDXs obtaining~\cref{theo:direct_sum_general_codes_hdx}, which is
the main result in this section. For this instantiation we need to
establish that HDXs are two-step tensorial which will be done
in~\cref{subsec:hdx_two_step_tensorial}.

\begin{theorem}[Direct Sum Lifting on HDX]\label{theo:direct_sum_general_codes_hdx}
  Let $\epsilon_0 < 1/2$ be a constant and $\epsilon \in (0,\epsilon_0)$.  There
  exist universal constants $c,C > 0$ such that for any $\gamma$-HDX
  $X(\le d)$ on ground set $X(1)=[n]$ and $\Pi_1$ uniform, if
  $$
  \gamma \le (\log(1/\epsilon))^{-C \cdot \log(1/\epsilon)}\quad\text{ and }\quad d \ge c \cdot \frac{(\log(1/\epsilon))^2}{\epsilon},
  $$
  then the following holds:

  For every binary code $\mathcal{C}_1$ with $\Delta(\mathcal{C}_1) \ge 1/2-\epsilon_0$ on $X(1)=[n]$,
  there exists a binary lifted code $\mathcal{C}_k=\dsum_{X(k)}(\varphi(\mathcal{C}_1))$ 
  with $\Delta(\mathcal{C}_k) \ge 1/2-\epsilon^{\Omega_{\epsilon_0}(1)}$ on $X(k)$ where $k = O\left(\log(1/\epsilon)\right)$,
  $\varphi$ is an explicit linear projection, and
  \begin{itemize}
    \item{[Efficient List Decoding]} If $\tilde{y}$ is \lict{\sqrt{\epsilon}} to $\mathcal{C}_k$, then we
         can compute the list $\mathcal{L}(\tilde{y},\mathcal{C}_1,\mathcal{C}_k)$ (c.f.~\cref{def:coupled_code_list}) in time
         $$
         n^{\epsilon^{-O\left(1\right)}} \cdot f(n),
         $$
         where $f(n)$ is the running time of a unique decoding algorithm for $\mathcal{C}_1$.
    \item{[Rate]} The rate $r_k$ of $\mathcal{C}_k$ satisfies  $r_k = r_1 \cdot \abs{X(1)}/\abs{X(k)}$ where $r_1$ is the relative
         rate of $\mathcal{C}_1$.
    \item{[Linearity]} If $\mathcal{C}_1$ is linear, then $\varphi$ is the identity and $\mathcal{C}_k= \dsum_{X(k)}(\mathcal{C}_1)$
                       is linear.
  \end{itemize}
\end{theorem}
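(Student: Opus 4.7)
The plan is to verify the hypotheses of the abstract list decoding theorem (\cref{theo:list_dec}) for the direct sum lifting on $X(k)$ and invoke it as a black box. Three ingredients need to be checked: robustness of the lifting, parity sampling (to apply the linear branch of \cref{theo:list_dec}), and two-step tensoriality at the stringent error parameter $\mu = \epsilon^8/2^{22}$; on top of this we have to pick $k$ so the resulting distance beats $1/2-\epsilon$ and we have to package the non-linear case via an embedding $\varphi$.

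First, I fix $k = O_{\epsilon_0}(\log(1/\epsilon))$ large enough that $\dsum_{X(k)}$ acts as a parity sampler in two different regimes simultaneously, via \cref{lemma:hdx_parity_samplers}: (i) with input bias $\beta_0 = 1 - 2\epsilon_0$ it outputs bias at most $2\epsilon^{c_0}$ for some constant $c_0=c_0(\epsilon_0)>0$, giving the $(1/2-\epsilon_0, 1/2-\epsilon^{c_0})$-robustness and, by linearity, the code distance $\Delta(\mathcal{C}_k) \ge 1/2 - \epsilon^{\Omega_{\epsilon_0}(1)}$; and (ii) with input bias $1/2+\epsilon_0$ it outputs bias at most $2\epsilon$, which is exactly the $(1/2+\epsilon_0, 2\epsilon)$-parity sampling hypothesis needed in the linear branch of \cref{theo:list_dec}. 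The lemma requires $d \ge \Omega((\log(1/\epsilon))^2/\epsilon)$ and $\gamma = O(1/d^2)$, which are absorbed into the hypotheses $d \ge c\cdot(\log(1/\epsilon))^2/\epsilon$ and $\gamma \le (\log(1/\epsilon))^{-C\log(1/\epsilon)}$.

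Second, I invoke the two-step tensoriality bound for HDXs (proved in \cref{subsec:hdx_two_step_tensorial}) to show that for $\mu = \epsilon^8/2^{22}$ there is a choice of SOS level $L = \epsilon^{-O(1)}$ such that $X(k)$ is $(\mu, L)$-two-step tensorial, provided $\gamma$ is sufficiently small as a function of $\mu$ and $k$. Plugging in $k = O(\log(1/\epsilon))$ and $\mu = \mathrm{poly}(\epsilon)$, the required bound on $\gamma$ comes out to $\gamma \le (\log(1/\epsilon))^{-C\log(1/\epsilon)}$ for a suitable absolute constant $C$, matching the theorem's hypothesis. With all three conditions in hand, \cref{theo:list_dec} applies to the linear lifting $\dsum_{X(k)}\colon \mathbb{F}_2^{X(1)} \to \mathbb{F}_2^{X(k)}$ restricted to $\mathcal{C}_1$, producing the coupled code list in time $n^{O(L+k)}(\polylog(1/\epsilon) + f(n)) = n^{\epsilon^{-O(1)}}\cdot f(n)$.

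Third, the rate bound is immediate: since $\dsum_{X(k)}$ is injective on $\mathcal{C}_1$ (its kernel has bias $1$, hence is trivial once the distance of $\mathcal{C}_k$ is positive) the dimension is preserved, so $r_k = r_1\cdot |X(1)|/|X(k)|$. The main obstacle I anticipate is the non-linear case: for an arbitrary $\mathcal{C}_1$ we lose both the linear branch of \cref{theo:list_dec} and the convenient ``bias'' language. The plan there is a standard reduction: pick a linear code $\widetilde{\mathcal{C}}_1 \supseteq \mathcal{C}_1$ of the same distance on a slightly larger ground set (e.g., take the linear span of $\mathcal{C}_1 \cup \{\mathbf{1}\}$ after appending a parity bit), apply the linear result to $\widetilde{\mathcal{C}}_1$ to obtain the larger list $\mathcal{L}(\tilde{y},\widetilde{\mathcal{C}}_1,\widetilde{\mathcal{C}}_k)$, then filter to those pairs whose ground-set component lies in $\varphi(\mathcal{C}_1)$, where $\varphi$ is the (explicit, linear) inclusion. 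Correctness follows because $|\mathcal{L}|$ is constant by the Johnson bound (\cref{rem:Johnson_list_size}) and filtering is cheap; the distance, rate and list-decoding running time all inherit from the linear case up to the factor $|X(1)|/|\widetilde{X}(1)| = 1-o(1)$.
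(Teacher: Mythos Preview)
Your linear case is essentially the paper's argument: verify parity sampling via \cref{lemma:hdx_parity_samplers}, verify two-step tensoriality via the HDX bound in \cref{subsec:hdx_two_step_tensorial}, plug into \cref{theo:list_dec}, and read off the running time and rate. That part is fine.

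The non-linear reduction, however, does not work. You propose to embed $\mathcal{C}_1$ into a linear code $\widetilde{\mathcal{C}}_1 \supseteq \mathcal{C}_1$ ``of the same distance'' (via the linear span plus a parity bit) and then invoke the linear case. Three things break. First, the linear span of a non-linear code need not preserve distance at all: if $c_1,c_2\in\mathcal{C}_1$ with $c_1+c_2$ of low weight, that low-weight vector lands in $\operatorname{span}(\mathcal{C}_1)$ and kills the distance; a single parity bit cannot repair this. Second, enlarging the ground set to $[n+1]$ is incompatible with the hypothesis, which fixes a $\gamma$-HDX on $X(1)=[n]$; you would need a different HDX. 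Third, and most fatally, the abstract framework needs a \emph{unique decoder for the base code} inside Cover Purification; you only have one for $\mathcal{C}_1$, not for $\widetilde{\mathcal{C}}_1$, so you cannot run the linear case on $\widetilde{\mathcal{C}}_1$ in the first place. Filtering the output list afterwards does not help because you never get that far.

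The paper's fix is different and much simpler: it does not try to make the code linear. It observes that the only place linearity was used in the linear-case proof was to say that $z-z'$ has small bias (so parity sampling applies to it). \cref{claim:zero_padding} manufactures exactly this property for an arbitrary code by zeroing out the last $\lfloor \delta n/2\rfloor$ coordinates: the projection $\varphi$ keeps the ground set $[n]$, costs at most a factor of two in distance, and guarantees $\bias(z-z')\le 1-\delta/2$ for all $z,z'\in\varphi(\mathcal{C}_1)$. The unique decoder for $\mathcal{C}_1$ transfers to $\varphi(\mathcal{C}_1)$ trivially (the map is coordinate projection), so the linear-case argument goes through verbatim on $\varphi(\mathcal{C}_1)$ with a slightly worse $\epsilon_0$. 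That is the $\varphi$ in the theorem statement, and it is why the conclusion only promises distance $1/2-\epsilon^{\Omega_{\epsilon_0}(1)}$ rather than $1/2-\epsilon$.
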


In particular, invoking~\cref{theo:direct_sum_general_codes_hdx} on
HDXs extracted from Ramanujan complexes (as in \cref{lemma:hdx_existence}), we
obtain~\cref{cor:ramanujan_instantiation}.

\begin{corollary}\label{cor:ramanujan_instantiation}
  Let $\epsilon_0 < 1/2$ be a constant and $\epsilon \in (0,\epsilon_0)$. There is an infinite
  sequence of HDXs $X_1,X_2,\dots$ on ground sets of size $n_1,n_2,\dots$ such that
  the following holds:
  
  For every sequence of binary codes $\mathcal{C}_1^{(i)}$ on $[n_i]$ with
  rate and distance uniformly bounded by $r_1$ and $(1/2-\epsilon_0)$
  respectively, there exists a sequence of binary lifted codes
  $\mathcal{C}_k^{(i)} =\dsum_{X(k)}(\varphi(\mathcal{C}_1^{(i)}))$ on a collection $X_i(k)$
  with $\Delta(\mathcal C_k^{(i)}) \geq 1/2-\epsilon^{\Omega_{\epsilon_0}(1)}$ where $\varphi$ is an explicit linear projection and
  \begin{itemize}
    \item{[Efficient List Decoding]} If $\tilde{y}$ is \lict{\sqrt{\epsilon}} to $\mathcal{C}_k$, then we can compute
          the list $\mathcal{L}(\tilde{y},\mathcal{C}_1,\mathcal{C}_k)$ (c.f.~\cref{def:coupled_code_list}) in time $n^{\epsilon^{-O\left(1\right)}} \cdot f(n)$, where
          $f(n)$ is the running time of a unique decoding algorithm of $\mathcal{C}_1$.
        \item{[Explicit Construction]} The collection $X_i(k)$ is part of an explicit $\gamma$-HDX $X_i(\le d)$ where $k = O\left(\log(1/\epsilon)\right)$,
              $d=O\left((\log(1/\epsilon))^2/\epsilon \right)$, and $\gamma = (\log(1/\epsilon))^{-O\left(\log(1/\epsilon)\right)}$.
    \item{[Rate]} The rate $r_k^{(i)}$ of $\mathcal{C}_k^{(i)}$ satisfies $r_k^{(i)} \ge r_1 \cdot \exp\left(-(\log(1/\epsilon))^{O\left( \log(1/\epsilon) \right)}\right)$.
    \item{[Linearity]} If $\mathcal{C}_1^{(i)}$ is linear, then $\varphi$ is the identity and $\mathcal{C}_k^{(i)}= \dsum_{X(k)}(\mathcal{C}_1^{(i)})$
                       is linear.    
  \end{itemize}
\end{corollary}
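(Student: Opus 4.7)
The plan is to instantiate Theorem \ref{theo:direct_sum_general_codes_hdx} with the explicit family of HDXs provided by Lemma \ref{lemma:hdx_existence}, and then to bound the rate using the flatness guarantee of that construction. Since Theorem \ref{theo:direct_sum_general_codes_hdx} already packages the list decoding guarantee and the linearity-preservation property, the essentially only new content is (i) checking that the Ramanujan complex construction can be chosen with exactly the parameters required by Theorem \ref{theo:direct_sum_general_codes_hdx}, and (ii) turning the flatness-based bound on $|X(k)|$ into the claimed lower bound on the rate.

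First, I would fix $\epsilon_0 < 1/2$ and, for each $\epsilon \in (0,\epsilon_0)$, choose the target HDX parameters
\[
k = O\left(\log(1/\epsilon)\right), \qquad d = c \cdot \frac{(\log(1/\epsilon))^2}{\epsilon}, \qquad \gamma = (\log(1/\epsilon))^{-C\log(1/\epsilon)},
\]
with $c,C$ the universal constants from Theorem \ref{theo:direct_sum_general_codes_hdx}. Lemma \ref{lemma:hdx_existence}, applied with these values of $d$ and $\gamma$, yields an explicit infinite family of bounded-degree $d$-sized simplicial complexes $X_1, X_2, \ldots$ on ground sets of sizes $n_1, n_2, \ldots \to \infty$, each of which is a $\gamma$-HDX with $\Pi_1$ uniform and $\Pi_2,\ldots,\Pi_d$ all $D$-flat for $D \le (1/\gamma)^{O(d^2/\gamma^2)}$. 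The list-decoding statement and the linearity-preservation statement in the corollary then follow immediately by applying Theorem \ref{theo:direct_sum_general_codes_hdx} to each $X_i$ with its $k$-th slice $X_i(k)$ (viewed as a multiset via duplication proportional to $\Pi_k$, so that the distance on $X(k)$ is the ordinary Hamming distance).

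The main quantitative step will be the rate bound. From Theorem \ref{theo:direct_sum_general_codes_hdx} we have $r_k^{(i)} = r_1 \cdot |X_i(1)|/|X_i(k)|$, where $|X_i(k)|$ is the size of the multiset arising from the duplication. Because $\Pi_k$ is $D$-flat, the multiset has size at most $D \cdot N_k$ with $N_k = |\mathrm{supp}(\Pi_k)|$, and the flatness plus the bound on $|X_i(d)|/|X_i(1)|$ in Lemma \ref{lemma:hdx_existence} give $|X_i(k)|/|X_i(1)| \le D^{O(1)}$. Substituting the chosen values of $d$ and $\gamma$ yields
\[
D \le (1/\gamma)^{O(d^2/\gamma^2)} \le \exp\left((\log(1/\epsilon))^{O(\log(1/\epsilon))}\right),
\]
which gives exactly the claimed rate
\[
r_k^{(i)} \ge r_1 \cdot \exp\left(-(\log(1/\epsilon))^{O(\log(1/\epsilon))}\right).
\]

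I do not expect a serious obstacle: each ingredient (the existence of the explicit family, the list-decoding guarantee, and the flatness-to-rate conversion) has already been established. The only point requiring a little care is making sure that the duplication convention used in defining the lifted code $\mathcal{C}_k^{(i)}$ agrees with the one implicitly used in Theorem \ref{theo:direct_sum_general_codes_hdx} and in the rate formula, so that the same $|X_i(k)|$ appears in both the distance guarantee (via the $D$-flatness of $\Pi_k$) and the denominator of the rate. Once that bookkeeping is consistent, the corollary follows directly.
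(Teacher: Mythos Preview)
Your proposal is correct and follows essentially the same route as the paper: invoke Theorem~\ref{theo:direct_sum_general_codes_hdx} on the explicit Ramanujan-complex family from Lemma~\ref{lemma:hdx_existence}, and then bound the rate via the flatness parameter $D \le (1/\gamma)^{O(d^2/\gamma^2)}$. The paper phrases the rate calculation by citing the pre-packaged Lemma~\ref{lemma:direct_sum_rate} (whose proof is exactly the $|X(k)|/|X(1)| \le D^{O(1)}$ computation you outline), but the substance is identical.
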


\begin{proof}
Efficient list decoding and linearity follow directly from \cref{theo:direct_sum_general_codes_hdx}, and the parameters of the explicit construction match the requirements of the theorem.
The only thing left to do is to calculate the rate.
Since the lifting $\dsum_{X_i(k)}$ needs to be a $(2\epsilon_0, 2\epsilon)$-parity sampler to achieve the promised distance, by \cref{lemma:direct_sum_rate} the rate $r_k^{(i)}$ of $\mathcal C_k^{(i)}$ satisfies
	$$r_k^{(i)} \geq r_1 \cdot \gamma^{O((\log(1/\epsilon))^4/(\epsilon^2 \gamma))}$$
Since $\gamma = (\log(1/\epsilon))^{-O(\log(1/\epsilon))}$, this reduces to
	$$r_k^{(i)} \geq r_1 \cdot (\log(1/\epsilon))^{-O((\log(1/\epsilon))^5/(\epsilon^2 \cdot \gamma))} = r_1 \cdot \exp(1/\gamma^{O(1)})
		= r_1 \cdot \exp\left(-(\log(1/\epsilon))^{O\left( \log(1/\epsilon) \right)}\right).$$
\end{proof}

\subsection{HDXs are Two-Step Tensorial}\label{subsec:hdx_two_step_tensorial}

\cref{theo:hdx_tensorial} proven in~\cite{AJT19} establishes that HDXs
of appropriate expansion parameter are tensorial objects for constant
$L = O_{k,q,\mu}(1)$.

\begin{theorem}[HDXs are Tensorial]\label{theo:hdx_tensorial}
  There exist some universal constants $c' \ge 0$ and $C' \ge 0$
  satisfying the following: If $L \ge c' \cdot (q^{k} \cdot
  k^5/\mu^4)$, $\supp(\rv Z_j) \le q$ for all $j \in [n]$, and $X$ is
  a $\gamma$-HDX for $\gamma \le C' \cdot \mu^4/(k^{8 + k} \cdot
  2^{6k} \cdot q^{2k})$ and size $\ge k$, then $X(k)$ endowed with a
  distribution $\Pi_k$ is $(\mu, L)$-tensorial.
\end{theorem}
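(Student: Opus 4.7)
The statement is explicitly attributed to \cite{AJT19}, so my proof proposal is really a sketch of how to re-derive (or recall) the AJT19 argument, in the style of the Barak--Raghavendra--Steurer paradigm lifted from graphs to simplicial complexes. The plan is to show that Propagation Rounding, which conditions on the assignment to a union of $m$ randomly chosen $k$-faces for $m$ uniform in $\{1,\dots,L/k\}$, leaves a conditioned ensemble whose marginals on a typical $k$-face are close to the product of their singleton marginals.

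First I would introduce a global potential $\Phi(\rv Z) = \tfrac{1}{n}\sum_{i} H(\rv Z_i)$, the average Shannon entropy of the singleton marginals. Clearly $0 \le \Phi \le \log q$. The main quantitative lemma to establish is that conditioning on a random $k$-face $\ess \sim \Pi_k$ and then on a random assignment $\sigma \sim \{\rv Z_\ess\}$ decreases $\Phi$ in expectation by at least $\Omega\bigl(\tfrac{1}{k}\bigr)$ times the average mutual information between pairs of vertices lying inside a random top-level face. Concretely, if $I_{\Pi_k}(\rv Z)$ denotes the expected pairwise mutual information inside a random $\ess \sim \Pi_k$, then $\E[\Phi(\rv Z) - \Phi(\rv Z\mid \rv Z_\ess = \sigma)] \gtrsim I_{\Pi_k}(\rv Z)/k$. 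Since $\Phi$ can decrease at most $\log q$ times over the whole process and we perform $m \sim \mathrm{Unif}(\{1,\dots,L/k\})$ conditioning rounds, averaging over $m$ yields that with expectation over the choice of slice $(m,S,\sigma)$, the conditioned ensemble $\rv Z'$ satisfies $I_{\Pi_k}(\rv Z') \lesssim k^2 \log q / L$.

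Second, I would translate this bound on pairwise mutual information inside a random $k$-face into the desired $k$-wise statistical distance bound. This is where the splittability of HDXs enters. By the results on HDXs established earlier in the paper (and in \cite{AJT19}), a $\gamma$-HDX is recursively splittable: the $k$-face $\ess$ can be partitioned into two halves $\ess_1,\ess_2$ such that the bipartite graph connecting these halves is a spectral expander of parameter $O(\mathrm{poly}(k)\cdot\sqrt{\gamma})$, and this split continues recursively down to singletons. Using Pinsker's inequality to convert mutual information into $\ell_1$ distance at each leaf of the splitting tree, and then invoking the bipartite expander mixing lemma at each internal node to glue the pieces together while paying a factor depending on the expansion, one obtains a bound of the form
\[
\E_{\ess \sim \Pi_k}\norm{\{\rv Z'_\ess\} - \prod_{i \in \ess}\{\rv Z'_i\}}_1 \;\lesssim\; k\cdot\sqrt{I_{\Pi_k}(\rv Z')} \;+\; \mathrm{poly}(k)\cdot q^{O(k)}\cdot \gamma^{\Omega(1)}.
\]
Demanding this be at most $\mu$ dictates $L \gtrsim q^k k^5/\mu^4$ and $\gamma \lesssim \mu^4/(k^{8+k}\,2^{6k}\,q^{2k})$, matching the parameter regime claimed.

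The main obstacle is the second step: carefully tracking the error accumulation through $\log_2 k$ levels of recursive splitting, since at each level the expansion error of the bipartite containment graph between the two halves of a face must be amplified appropriately against the growing alphabet size $q^{k/2^i}$ at that level. The tight $q^{2k}$ dependence inside $\gamma$ is precisely the price of this accumulation, and the $k^5$ factor inside $L$ comes from combining the $1/k$ loss in potential-per-round with Pinsker's quadratic loss and a union bound over the splitting tree. Beyond that technical accounting, everything else is either standard SOS machinery or direct application of the HDX expansion hypothesis.
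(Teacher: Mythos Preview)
The paper does not prove this theorem; it is quoted verbatim from \cite{AJT19}. What the paper does give, in the proof of \cref{lemma:double_close_to_product}, is a glimpse of the AJT19 machinery: the potential used there is the expected \emph{variance} $\Phi_m = \E_{S,\sigma}\E_{\ess}\Var{\rv Z_\ess \mid \rv Z_S = \sigma}$, not the average singleton entropy you propose. Both choices work in the BRS framework, so this is a stylistic divergence rather than an error, but you should be aware that the variance potential is what AJT19 actually runs, and the precise constants in the statement ($k^5$, $q^{2k}$, etc.) are calibrated to that version.

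There is one genuine imprecision in your sketch. You write that conditioning on a random face $\ess$ decreases $\Phi$ by $\Omega(1/k)\cdot I_{\Pi_k}(\rv Z')$, where $I_{\Pi_k}$ is the \emph{local} pairwise mutual information inside a random face. That is not what conditioning buys you directly: the drop in $\Phi$ is governed by the \emph{global} correlation $\E_{i \sim \Pi_1}\E_{\ess \sim \Pi_k} I(\rv Z_i;\rv Z_\ess)$ between a random vertex and an independent random face. Splittability is precisely the ingredient that lets you pass from ``local correlation is large'' (i.e., $\|\{\rv Z'_\ess\} - \prod_i \{\rv Z'_i\}\|_1$ is large) to ``global correlation is large'' (so the potential must drop). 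In your write-up the order is reversed: you first run the potential argument to bound $I_{\Pi_k}$ and only afterwards invoke splittability to glue. The AJT19 argument interleaves them: at each conditioning step, splittability of the swap graphs is used to show that large local $\ell_1$ error forces large global covariance, which in turn forces $\Phi_m - \Phi_{m+1}$ to be large. Your final displayed inequality and parameter accounting are in the right ballpark, but the logical flow of where expansion enters needs to be corrected.
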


The next result shows that HDXs are also two-step tensorial objects with the same parameters as above.
\begin{lemma}[HDXs are two-step tensorial]\label{lemma:double_close_to_product}    
    There exist some universal constants $c' \ge 0$ and $C' \ge 0$ satisfying the following:
    If $L \ge c' \cdot (q^{k} \cdot k^5/\mu^4)$, $\supp(\rv Z_j) \le q$
    for all $j \in [n]$, and $X$ is a $\gamma$-HDX for $\gamma \le C' \cdot \mu^4/(k^{8 + k} \cdot 2^{6k} \cdot q^{2k})$ and size $\ge k$, then $X(k)$ is $(\mu,L)$-two-step tensorial.
\end{lemma}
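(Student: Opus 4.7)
The plan is to derive the two-step tensorial property from \cref{theo:hdx_tensorial} (which already gives $(\mu, L)$-tensoriality under the same hypotheses) together with a triangle-inequality decomposition and a level-$2k$ analogue of that theorem. Since the first condition of \cref{def:two_step_tensorial} is immediate from \cref{theo:hdx_tensorial}, the task reduces to bounding
$$
\ExpOp_{\Omega}\, \ExpOp_{\ess,\tee \sim \Pi_k}\, \norm{\{\rv Z'_\ess \rv Z'_\tee\} - \{\rv Z'_\ess\}\{\rv Z'_\tee\}}_1 \le \mu,
$$
where $\rv Z'$ is the conditioned ensemble produced by Propagation Rounding.

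First I would apply the triangle inequality, passing through the fully factored product over the vertices in $\ess \cup \tee$. Using the elementary product-distribution bound $\|P_1 Q_1 - P_2 Q_2\|_1 \le \|P_1 - P_2\|_1 + \|Q_1 - Q_2\|_1$,
\begin{align*}
\norm{\{\rv Z'_\ess \rv Z'_\tee\} - \{\rv Z'_\ess\}\{\rv Z'_\tee\}}_1 &\le \norm{\{\rv Z'_\ess \rv Z'_\tee\} - \prod_{i \in \ess \cup \tee}\{\rv Z'_i\}}_1 \\
&\quad + \norm{\{\rv Z'_\ess\} - \prod_{i \in \ess}\{\rv Z'_i\}}_1 + \norm{\{\rv Z'_\tee\} - \prod_{i \in \tee}\{\rv Z'_i\}}_1.
\end{align*}
After averaging over $\Omega$ and over $\ess, \tee \sim \Pi_k$ independently, the last two terms are each at most $\mu/3$ by \cref{theo:hdx_tensorial}, at the expense of absorbing a constant factor into the universal constants $c'$ and $C'$.

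The main obstacle is the first term: the joint distribution $\{\rv Z'_\ess \rv Z'_\tee\}$ is, up to the trivial identification of any coordinates in the overlap $\ess \cap \tee$, the marginal of $\rv Z'$ on a random set of size at most $2k$, and we need it close to the fully factored product. My plan here is to re-run the proof of \cref{theo:hdx_tensorial} from \cite{AJT19} essentially verbatim with $k$ replaced by $2k$: that argument is driven by splittability and swap-walk spectral bounds in the link complexes of a $\gamma$-HDX, and our hypothesis $\gamma \le C'\mu^4/(k^{8+k} 2^{6k} q^{2k})$ is strong enough that these bounds remain valid when applied to faces of size $2k$, the additional factor-$2$ blowups in the exponent of $k$ being absorbable into the universal constants (correspondingly strengthening the requirement $L \ge c' q^k k^5/\mu^4$ on the SOS degree). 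The exceptional contribution of pairs with $\ess \cap \tee \ne \emptyset$ is $O(k^2/|X(1)|)$ by a union bound and is negligible for HDXs of sufficiently large size. Combining the three bounds yields the required $\mu$ bound on the averaged pairwise quantity, establishing $(\mu, L)$-two-step tensoriality with the stated constants.
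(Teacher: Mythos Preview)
Your reduction to the first term has a genuine gap. When $\ess,\tee \sim \Pi_k$ are drawn \emph{independently}, the random set $\ess \cup \tee$ is \emph{not} distributed according to $\Pi_{2k}$; in a sparse HDX two independent $k$-faces will essentially never form a $2k$-face. The splittability/swap-walk machinery underlying \cref{theo:hdx_tensorial} in \cite{AJT19} is specific to the distribution of faces of the complex and cannot be invoked for an arbitrary distribution on $2k$-subsets of $X(1)$. So ``re-running the proof with $k$ replaced by $2k$'' does not apply to the term $\ExpOp_{\Omega}\ExpOp_{\ess,\tee\sim\Pi_k^2}\norm{\{\rv Z'_{\ess\cup\tee}\}-\prod_{i\in\ess\cup\tee}\{\rv Z'_i\}}_1$, and you have no handle on it.

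The paper's proof avoids this obstacle entirely by not passing through the full product over $\ess\cup\tee$. It bounds $\ExpOp_{\Omega}\ExpOp_{\ess,\tee}\norm{\{\rv Z'_\ess\rv Z'_\tee\}-\{\rv Z'_\ess\}\{\rv Z'_\tee\}}_1$ directly with a Barak--Raghavendra--Steurer potential argument: one tracks the potential $\Phi_m=\ExpOp_{S\sim\Pi_k^m}\ExpOp_{\sigma}\ExpOp_{\ess\sim\Pi_k}\Var{\rv Z_\ess\mid\rv Z_S=\sigma}$ and shows that whenever the pairwise error $\mu_m$ is large, $\Phi_m-\Phi_{m+1}$ is large. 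The key observation is that on the weighted graph over $X(k)$ with edge weights $\Pi_k(\ess)\Pi_k(\tee)$, local and global correlation coincide (precisely because $\ess,\tee$ are independent), so the BRS machinery applies with no expansion hypothesis whatsoever for this step. The HDX assumption is used only to invoke \cref{theo:hdx_tensorial} for the first tensorial condition.
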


\begin{proof}
  Under our assumptions the $(\mu,L)$-tensorial property follows
  from~\cref{theo:hdx_tensorial} (this is the only place where the
  assumption on $\gamma$ is used), so we only need to show  
  $$
  \ExpOp_{\ess,\tee \sim \Pi_k}{ \norm{\set{\rv Z_{\ess}' \rv Z_{\tee}'} - \set*{\rv Z_{\ess}'}\set*{\rv Z_{\tee}'}}_1 } \le \mu,
  $$
  which can be proven by adapting a
  potential argument technique from~\cite{BarakRS11}. First, set the potential
    \begin{equation}
        \Phi_m = \ExpOp_{S \sim \Pi_k^m}{\ExpOp_{\sigma \sim \set{\rv Z_S}}{\ExpOp_{\ess \sim \Pi_k}{\Var{\rv Z_\ess \mid \rv Z_S = \sigma }}}},\label{eq:pot-def}
    \end{equation}
    and consider the error term
    \begin{equation}
        \mu_m \coloneqq \ExpOp_{S \sim \Pi_k^m}{\ExpOp_{\sigma \sim \set{\rv Z_S}}{D(S,\sigma)}},\label{eq:assume-large}
    \end{equation}
    where $D(S,\sigma) \coloneqq \Ex{\ess,\tee \sim \Pi_k}{\norm{\set{\rv Z_{\ess}\rv Z_{\tee} \mid \rv Z_S = \sigma} - \set{\rv Z_{\ess} | \rv Z_S = \sigma}\set{\rv Z_{\tee} | \rv Z_S = \sigma} }_1}$.
    If $\mu_m \ge \mu/2$, then
    \[ \ProbOp_{S \sim \Pi_k^m, \sigma \sim \{\rv Z_S\}}\left[ D(S,\sigma) \ge \mu_m/2 \right] \ge \frac{\mu}{4}. \] 
    
    Let $G = (V=X(k),E)$ be the weighted graph where $E
    = \{\{\ess,\tee\} \mid \ess,\tee \in X(k)\}$ and each edge
    $\{\ess,\tee\}$ receives weight $\Pi_k(\ess) \cdot \Pi_k(\tee)$.
    Local correlation (expectation over the edges) on this
    graph $G$ is the same as to global correlation (expectation over two independent
    copies of vertices). Then, we obtain~\footnote{See \cite{AJT19} or~\cite{BarakRS11} for the details.}
    \[
    \Phi_m -\Phi_{m + 1} \ge \ProbOp_{S \sim \Pi_k^m,\sigma \sim \{\rv Z_S\}}\left[D(S,\sigma) \ge \mu_m/2 \right] \cdot \frac{\mu^2}{2q^{2k}}.
    \]
    Since $1 \ge \Phi_1 \ge \cdots \ge \Phi_{L/k} \ge 0$, there can be
    at most $8 q^{2k} /\mu^3$ indices $m \in [L/k]$ such that
    $\mu_m \ge \mu/2$.  In particular, since the total number of indices is $L/k$, we have
    \[ \ExpOp_{m \in [L/k]}{\mu_m } \le \frac{\mu}{2} + \frac{k}{L} \cdot \frac{8 q^{2k}}{\mu^3}. \]
    Our choice of $L$ is more than enough to ensure $\Ex{m \in [L/k]}{\mu_m} \le \mu$.
\end{proof}

\subsection{Instantiation to Linear Base Codes}

First, we instantiate the list decoding framework to the seemingly
simpler case of binary \emph{linear} base codes
in~\cref{lemma:direct_sum_linear_biased_codes_hdx}. As we show later, with
a simple observation we can essentially use the proof
of~\cref{lemma:direct_sum_linear_biased_codes_hdx} to
obtain~\cref{theo:direct_sum_general_codes_hdx} for general codes.

\begin{lemma}[Direct sum lifting of linear biased codes]\label{lemma:direct_sum_linear_biased_codes_hdx}
  Let $\epsilon_0 < 1/2$ be a constant and $\epsilon \in (0,\epsilon_0)$.  There
  exist universal constants $c,C > 0$ such that for any $\gamma$-HDX
  $X(\le d)$ on ground set $X(1)=[n]$ and $\Pi_1$ uniform, if
  $$
  \gamma \le \log(1/\epsilon)^{-C \cdot (\log(1/\epsilon))}\quad\text{ and }\quad d \ge c \cdot \frac{(\log(1/\epsilon))^2}{\epsilon},
  $$
  then the following holds:

  For every binary \text{$2\epsilon_0$-biased} linear code $\mathcal{C}_1$  on $X(1)=[n]$,
  there exists a $2\epsilon$-biased binary lifted linear code
  $\mathcal{C}_k=\dsum_{X(k)}(\mathcal{C}_1)$ on $X(k)$ where $k = O\left(\log(1/\epsilon)\right)$ and
  \begin{itemize}
    \item{[Efficient List Decoding]} If $\tilde{y}$ is \lict{\sqrt{\epsilon}} to $\mathcal{C}_k$, then we
         can compute the list $\mathcal{L}(\tilde{y},\mathcal{C}_1,\mathcal{C}_k)$ (c.f.~\cref{def:coupled_code_list}) in time
         $$
         n^{\epsilon^{-O\left(1\right)}} \cdot f(n),
         $$
         where $f(n)$ is the running time of a unique decoding algorithm for $\mathcal{C}_1$.
    \item{[Rate]} The rate~\footnote{For the rate computation, we assume that $X(k)$ can be expressed as a
                  multi-set such that the uniform distribution on it
                  coincides with $\Pi_k$, which is true in the case that $\Pi_k$ is $D$-flat.}
                  $r_k$ of $\mathcal{C}_k$ satisfies  $r_k = r_1 \cdot \abs{X(1)}/\abs{X(k)}$ where $r_1$ is the relative
                  rate of $\mathcal{C}_1$.
    \item{[Linear]} The lifted code $\mathcal{C}_k$ is linear.
  \end{itemize}
\end{lemma}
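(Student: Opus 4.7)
The plan is to verify that the hypotheses of the abstract List Decoding Theorem (\cref{theo:list_dec}, first bullet) are all satisfied by $\dsum_{X(k)}$ on a sufficiently expanding HDX, and then simply invoke that theorem. The three requirements to check are: $(1/2-\epsilon_0, 1/2-\epsilon)$-robustness, the $(1/2+\epsilon_0, 2\epsilon)$-parity sampling property, and $(\epsilon^8/2^{22}, L)$-two-step tensoriality, on top of linearity (which is immediate since $\dsum$ is $\F_2$-linear and $\mathcal{C}_1$ is assumed linear).

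First, I will apply \cref{lemma:hdx_parity_samplers} with $\beta_0 = 1/2 + \epsilon_0$ and $\beta = 2\epsilon$ to establish parity sampling. Since $\epsilon_0 < 1/2$ is a constant, I can fix $\theta \in (0,(1-2\epsilon_0)/(1+2\epsilon_0))$ to be a constant as well. The lemma then demands $k \geq \log_{(1+\theta)(1/2+\epsilon_0)}(2\epsilon/3) = O(\log(1/\epsilon))$ and $d \geq \max\{3k^2/(2\epsilon), 3/(\theta^2(1/2+\epsilon_0)^2\epsilon)\} = \Omega(\log^2(1/\epsilon)/\epsilon)$ together with $\gamma = O(1/d^2)$, which are all implied by the lemma's hypotheses. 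This immediately gives the $2\epsilon$-bias of $\mathcal{C}_k$ since every nonzero codeword of $\mathcal{C}_1$ has bias at most $2\epsilon_0 \leq 1/2 + \epsilon_0$. Robustness then follows from linearity: for $z, z' \in \F_2^{X(1)}$ with $\Delta(z,z') \geq 1/2 - \epsilon_0$, the word $z + z'$ has bias at most $2\epsilon_0 \leq 1/2 + \epsilon_0$, and parity sampling gives $\bias(\dsum(z)+\dsum(z')) \leq 2\epsilon$, i.e., $\Delta(\dsum(z),\dsum(z')) \geq 1/2 - \epsilon$.

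Next, I will invoke \cref{lemma:double_close_to_product} with $q = 2$, $\mu = \epsilon^8/2^{22}$, and $k = O(\log(1/\epsilon))$ to obtain the two-step tensorial property at SOS level $L \geq c'(2^k k^5/\mu^4) = \epsilon^{-O(1)} \cdot \mathrm{polylog}(1/\epsilon)$, provided $\gamma \leq C' \mu^4/(k^{8+k} 2^{6k} 2^{2k})$. Plugging in $k = O(\log(1/\epsilon))$ and $\mu^4 = \epsilon^{32}/2^{88}$, the right-hand side is $(\log(1/\epsilon))^{-O(\log(1/\epsilon))}$, which is compatible with the hypothesized bound on $\gamma$ for a sufficiently large universal constant $C$. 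Since both parity sampling and two-step tensoriality simultaneously demand at most a constant upper bound on $\gamma$ and an $\epsilon$-dependent lower bound on $d$, the stated hypotheses on $(\gamma, d)$ subsume both requirements.

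With all three conditions in hand, \cref{theo:list_dec} applies and returns $\mathcal{L}(\tilde{y}, \mathcal{C}_1, \mathcal{C}_k)$ in time $n^{O(L+k)}(\mathrm{polylog}(\epsilon^{-1}) + f(n)) = n^{\epsilon^{-O(1)}} \cdot f(n)$, as required. For the rate, I note that $\dsum_{X(k)}$ is a linear map from $\F_2^{X(1)} \to \F_2^{X(k)}$; because $\mathcal{C}_k$ has distance strictly positive, this map is injective when restricted to $\mathcal{C}_1$, so $\dim(\mathcal{C}_k) = \dim(\mathcal{C}_1)$ and $r_k = \dim(\mathcal{C}_k)/|X(k)| = r_1 \cdot |X(1)|/|X(k)|$. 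Linearity of $\mathcal{C}_k$ is immediate. The main bookkeeping obstacle here is organizing the parameter stack so that the joint constraint ``$\gamma$ small enough to enable two-step tensoriality at $\mu = \epsilon^8/2^{22}$, yet $d$ large enough for parity sampling with $\beta = 2\epsilon$'' collapses cleanly into the single pair of hypotheses stated in the lemma; once that is done, the argument is essentially a direct application of the previously developed abstract framework together with \cref{lemma:hdx_parity_samplers} and \cref{lemma:double_close_to_product}.
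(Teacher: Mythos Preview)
Your proposal is correct and follows essentially the same route as the paper: verify the hypotheses of \cref{theo:list_dec} (linearity, $(1/2+\epsilon_0,2\epsilon)$-parity sampling via \cref{lemma:hdx_parity_samplers}, $(1/2-\epsilon_0,1/2-\epsilon)$-robustness from parity sampling plus linearity, and two-step tensoriality via \cref{lemma:double_close_to_product}), then collapse the resulting parameter constraints into the stated bounds on $\gamma$, $d$, and $L$. Your parameter bookkeeping matches the paper's, and your explicit justifications of the rate and linearity claims are details the paper leaves implicit.
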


\begin{proof}  
  We show that under our assumption on the $\gamma$-HDX $X(\le d)$ we
  obtain sufficient \textit{robustness} and \textit{tensorial}
  parameters to apply~\cref{theo:list_dec}. In this application, we
  will rely on parity sampling for robustness. If $\dsum_{X(k)}$ is a
  $(2\epsilon_0,2\epsilon)$-parity sampler, using the linearity of
  $\mathcal{C}_1$ we obtain a lifted code
  $\mathcal{C}_k=\dsum_{X(k)}(\mathcal{C}_1)$ which is linear and has bias
  $2\epsilon$; thus the lifting is indeed
  $(1/2-\epsilon_0,1/2-\epsilon)$-robust. If we want to fully rely on parity
  sampling in~\cref{theo:list_dec}, the lifting must be a
  $(\beta_0=1/2+\epsilon_0,\beta=2\epsilon)$-parity sampler, which is
  more stringent than the first parity sampling requirement.~\footnote{Recall that
  this strengthening is used in our list decoding framework.}
  To invoke~\cref{lemma:hdx_parity_samplers} and obtain this
  $(\beta_0,\beta)$-parity sampler, we need to choose a
  parameter $\theta$ (where $0 < \theta < (1
  - \beta_0)/\beta_0$) and \begin{align*} &
  k \ge \log_{(1+\theta) \beta_0}(\beta/3),\\ &
  d \ge \max\left(\frac{3 \cdot
  k^2}{\beta}, \frac{6}{\theta^2 \beta_0^2 \beta} \right), \text{
  and}\\ & \gamma = O\left(\frac{1}{d^2}\right).  \end{align*} To get a
  $(\mu,L)$-tensorial HDX,~\cref{theo:hdx_tensorial} requires
  $$
  L \ge \frac{c' \cdot 2^k \cdot k^5}{\mu^4} \quad \text{ and } \quad \gamma \le \frac{C' \cdot \mu^4}{k^{8+k} \cdot 2^{8k}}.
  $$
  where we used that our alphabet is binary (i.e., $q=2$) and $c',C' >
  0$ are constants. Finally,~\cref{theo:list_dec} requires $\mu \le
  \epsilon^8/2^{22}$. The conceptual part of the proof is essentially
  complete and we are left to compute parameters.
  Set $\zeta_0 = 3/4 + \epsilon_0 - \epsilon_0^2$. We choose $\theta = 1/2
  - \epsilon_0$ which makes $(1+\theta)\beta_0$ equal to $\zeta_0$
  (provided $\epsilon_0 < 1/2$ we have $\zeta_0 < 1$). This choice
  results in
  $$
  k \ge \lceil \log_{\zeta_0}(2\epsilon/3) \rceil \quad \text{ and } \quad d = O\left(\max\left(\frac{\log_{\zeta_0}(2\epsilon/3)}{\epsilon}, \frac{1}{(1/4 - \epsilon_0^2)^4 \cdot \epsilon} \right) \right).
  $$
  Combining the parity sampling and tensorial requirements and after
  some simplification, the expansion $\gamma$ is constrained as
  $$
  \gamma \le C''\cdot \min\left(\frac{\epsilon^{32}}{k^{8+k} \cdot 2^{8k}}, \frac{\epsilon^2}{k^4}, \left(1/4-\epsilon_0^2\right)^4 \cdot \epsilon^2 \right),
  $$
  where $C'' > 0$ is a constant. We deduce that taking $\gamma$ as
  $$
  \gamma \le C''\cdot \frac{\left(1/4-\epsilon_0^2\right)^4 \cdot \epsilon^{32}}{k^{8+k} \cdot 2^{8k}},
  $$
  is sufficient. Further simplifying the above bound gives
  $$
  \gamma = O\left(\frac{\left(1/4 - \epsilon_0^2\right)^4 \cdot \epsilon^{32}}{\left(\log_{\zeta_0}(2\epsilon/3)\right)^{8+\log_{\zeta_0}(2\epsilon/3)} \cdot \left(2\epsilon/3\right)^{8/\log(\zeta_0)}} \right).
  $$
  Now, we turn to the SOS-related parameter $L$ which is constrained
  to be
  $$
  L \ge c'' \cdot \frac{2^k\cdot k^5}{\epsilon^{32}},
  $$
  where $c'' > 0$. Note that in this case the exponent $O(L+k)$
  appearing in the running time of~\cref{theo:list_dec} becomes $O(L)$.
  Similarly, further simplification leads to
  $$
  L =  O\left(\frac{\left(\log_{\zeta_0}(2\epsilon/3)\right)^5 \cdot \left(3/2\epsilon\right)^{-1/\log(\zeta_0)}}{\epsilon^{32}}  \right).
  $$
  Taking $\epsilon_0$ to be a constant and simplifying yields the claimed parameters.
\end{proof}

\subsection{Instantiation to General Base Codes}

We can extend~\cref{lemma:direct_sum_linear_biased_codes_hdx} to an
arbitrary (not necessarily linear) binary base code $\mathcal{C}_1$ with
the natural caveat of no longer obtaining linear lifted code
$\mathcal{C}_k=\dsum_{X(k)}(\mathcal{C}_1)$. However, even if $\mathcal C_1$ has
small bias, it might not be the case that the difference of any two codewords
will have small bias, which is required for list decoding. To this end we modify
the code $\mathcal C_1$ by employing a projection $\phi$ which converts a
condition on the distance of the code to a condition on the bias of the difference
of any two codewords.

\begin{claim}\label{claim:zero_padding}
  If $\mathcal{C}_1$ is binary code on $[n]$ with relative distance
  $\delta$ and rate $r$, then there exists an explicit linear
  projection $\varphi \colon \mathbb{F}_2^n \to \mathbb{F}_2^n$
  such that the code $\mathcal{C}_1' = \varphi(\mathcal{C}_1)$
  has relative distance at least $\delta/2$ and rate $r$. Furthermore,
  for every $z,z' \in \mathcal{C}_1'$ we have
  $$
  \bias(z-z') \le 1 - \frac{\delta}{2}.
  $$
\end{claim}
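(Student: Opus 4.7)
The plan is to take $\varphi$ to be a zero-padding map, defined by $\varphi(z) = (z, 0^n)$. (The target space $\mathbb{F}_2^n$ in the claim statement appears to be a typo for $\mathbb{F}_2^{2n}$, or some $\mathbb{F}_2^{n'}$ with $n' \ge 2n$; I will use $n' = 2n$ throughout.) The role of zero-padding is conceptual: the hypothesis $\Delta(\mathcal{C}_1) \ge \delta$ only lower-bounds Hamming distance, so two codewords could a priori differ in nearly every coordinate and their difference could have $\pm 1$ bias close to $1$. Padding by $n$ zero coordinates forces every pairwise distance in the new code into the interval $[\delta/2,\, 1/2]$, which is exactly the regime required to both preserve the distance (up to a factor of $2$) and upper-bound the bias.

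First, I would verify the rate and distance statements. The map $\varphi$ is manifestly $\mathbb{F}_2$-linear and injective on all of $\mathbb{F}_2^n$, so $|\mathcal{C}_1'| = |\mathcal{C}_1|$ and the information content is preserved, matching the rate claim under the natural convention. For distinct $z,z' \in \mathcal{C}_1$, the strings $\varphi(z)$ and $\varphi(z')$ agree on the padded coordinates and differ in the same positions as $z,z'$ do among the first $n$, so
$$
\Delta_{\mathbb{F}_2^{2n}}(\varphi(z), \varphi(z')) ~=~ \tfrac{1}{2}\,\Delta_{\mathbb{F}_2^n}(z, z') ~\ge~ \delta/2,
$$
showing $\Delta(\mathcal{C}_1') \ge \delta/2$.

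Next, I would derive the bias bound. For any distinct $w, w' \in \mathcal{C}_1'$, let $\alpha = \Delta_{\mathbb{F}_2^{2n}}(w, w')$; under the $\pm 1$ identification, $\bias(w - w') = |1 - 2\alpha|$. The key observation is that in the padded code we automatically have $\alpha \le 1/2$, because the last $n$ coordinates of $w$ and $w'$ are identically zero and therefore agree. Combined with the lower bound $\alpha \ge \delta/2$ from the previous step, this yields
$$
\bias(w - w') ~=~ 1 - 2\alpha ~\le~ 1 - \delta ~\le~ 1 - \delta/2,
$$
as required.

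The three items are essentially independent one-line computations, so I do not anticipate any real obstacle; the only substantive point is the conceptual observation that zero-padding is exactly the linear operation that converts a one-sided distance hypothesis into a two-sided bias bound on codeword differences. No other case analysis, no choice of parameters, and no appeal to the earlier HDX machinery is needed.
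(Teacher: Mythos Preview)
Your core idea—introduce zero coordinates so that the Hamming weight of any codeword difference is forced into a two-sided range, which converts the one-sided distance hypothesis into a bias bound—is exactly right and matches the paper's mechanism. However, the specific construction you chose does not prove the claim as stated, and the discrepancy is not a typo in the paper.

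The paper does \emph{not} pad; it \emph{projects}. Concretely, $\varphi$ zeros out the last $s = \lfloor \delta n/2\rfloor$ coordinates, so $\varphi(z) = (z_1,\ldots,z_{n-s},0,\ldots,0) \in \mathbb{F}_2^n$. This genuinely has codomain $\mathbb{F}_2^n$, so the statement is correct as written. Two codewords that originally differ in at least $\delta n$ positions can lose at most $s \le \delta n/2$ of those differences under the projection, so the relative distance is still at least $\delta/2$; in particular $\varphi$ is injective on $\mathcal{C}_1$, and since the block length is unchanged the rate is exactly $r$. The zeroed block also caps the weight of any difference at $n-s$, which together with the lower bound gives the bias estimate.

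Your padding map $\varphi(z) = (z,0^n)$ does give the distance and bias conclusions, but it fails the rate assertion: the block length doubles while $|\mathcal{C}_1'|=|\mathcal{C}_1|$, so the rate becomes $r/2$, not $r$. Your remark that ``information content is preserved, matching the rate claim under the natural convention'' papers over this; rate in this paper is the standard $\log_2|\mathcal{C}|/n$. More importantly for the downstream use (Theorem~7.4), the code $\varphi(\mathcal{C}_1)$ must live on $X(1)=[n]$ so that the direct sum lifting $\dsum_{X(k)}$ on the given complex can be applied to it—your construction would require a complex on $[2n]$ instead. So the fix is simple but necessary: replace ``append $n$ zeros'' with ``overwrite the last $\lfloor \delta n/2\rfloor$ coordinates by zeros,'' and redo the (equally short) distance and bias calculations.
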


\begin{proof}
  Take $\varphi$ to be the projector onto $\mathbb{F}_2^{n-s} \oplus \set{0}^s$
  where $s = \lfloor \delta n/2 \rfloor$. Then
  $$
  \mathcal{C}_1' \coloneqq \varphi(\mathcal{C}_1) = \set{(z_1,\dots, z_{n-s}, \underbrace{0,\dots,0}_{s}) ~\vert~ (z_1,\dots,z_n) \in \mathcal{C}_1},
  $$
  and the claim readily follows.
\end{proof}

With this modification in mind, we can now restate and prove~\cref{theo:direct_sum_general_codes_hdx}.
\begin{theorem}[Direct Sum Lifting on HDX (Restatement of~\cref{theo:direct_sum_general_codes_hdx})]\label{theo:direct_sum_general_codes_restated}
  Let $\epsilon_0 < 1/2$ be a constant and $\epsilon \in (0,\epsilon_0)$.  There
  exist universal constants $c,C > 0$ such that for any $\gamma$-HDX
  $X(\le d)$ on ground set $X(1)=[n]$ and $\Pi_1$ uniform, if
  $$
  \gamma \le (\log(1/\epsilon))^{-C \cdot \log(1/\epsilon)}\quad\text{ and }\quad d \ge c \cdot \frac{(\log(1/\epsilon))^2}{\epsilon},
  $$
  then the following holds:

  For every binary code $\mathcal{C}_1$ with $\Delta(\mathcal{C}_1) \ge 1/2-\epsilon_0$ on $X(1)=[n]$,
  there exists a binary lifted code $\mathcal{C}_k=\dsum_{X(k)}(\varphi(\mathcal{C}_1))$ 
  with $\Delta(\mathcal{C}_k) \ge 1/2-\epsilon^{\Omega_{\epsilon_0}(1)}$ on $X(k)$ where $k = O\left(\log(1/\epsilon)\right)$,
  $\varphi$ is an explicit linear projection, and
  \begin{itemize}
    \item{[Efficient List Decoding]} If $\tilde{y}$ is \lict{\sqrt{\epsilon}} to $\mathcal{C}_k$, then we
         can compute the list $\mathcal{L}(\tilde{y},\mathcal{C}_1,\mathcal{C}_k)$ (c.f.~\cref{def:coupled_code_list}) in time
         $$
         n^{\epsilon^{-O\left(1\right)}} \cdot f(n),
         $$
         where $f(n)$ is the running time of a unique decoding algorithm for $\mathcal{C}_1$.
    \item{[Rate]} The rate $r_k$ of $\mathcal{C}_k$ satisfies  $r_k = r_1 \cdot \abs{X(1)}/\abs{X(k)}$ where $r_1$ is the relative
         rate of $\mathcal{C}_1$.
    \item{[Linearity]} If $\mathcal{C}_1$ is linear, then $\varphi$ is the identity and $\mathcal{C}_k= \dsum_{X(k)}(\mathcal{C}_1)$
                       is linear.
  \end{itemize}
\end{theorem}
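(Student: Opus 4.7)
My plan is to reduce to the linear case already handled by Lemma~\ref{lemma:direct_sum_linear_biased_codes_hdx} by preprocessing $\mathcal{C}_1$ with the projection $\varphi$ supplied by Claim~\ref{claim:zero_padding}. When $\mathcal{C}_1$ is itself linear I take $\varphi$ to be the identity, and the conclusion (including the linearity clause) follows immediately from Lemma~\ref{lemma:direct_sum_linear_biased_codes_hdx}. For a general $\mathcal{C}_1$ with $\Delta(\mathcal{C}_1) \geq 1/2 - \epsilon_0$, I set $\mathcal{C}_1' = \varphi(\mathcal{C}_1)$ and define $\mathcal{C}_k = \dsum_{X(k)}(\mathcal{C}_1')$. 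Because $\Delta(\mathcal{C}_1)$ exceeds $s/n$, the projection $\varphi$ is injective on $\mathcal{C}_1$, so $|\mathcal{C}_k| = |\mathcal{C}_1|$ and the rate formula $r_k = r_1 \cdot |X(1)|/|X(k)|$ follows.

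The analytic heart of the reduction is the bias bound $\bias(z + z') \leq 1 - \delta/2 = 3/4 + \epsilon_0/2$ for any two distinct $z, z' \in \mathcal{C}_1'$, which is precisely the content of Claim~\ref{claim:zero_padding}. I choose $k$, $d$, $\gamma$ as in the proof of Lemma~\ref{lemma:direct_sum_linear_biased_codes_hdx}, except that I instantiate Lemma~\ref{lemma:hdx_parity_samplers} at the slightly larger threshold $\beta_0 = 3/4 + \epsilon_0/2$ (still bounded away from $1$ since $\epsilon_0 < 1/2$) with target $\beta = 2\epsilon$; this adjustment is absorbed into the universal constants $c$ and $C$ in the statement. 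Since $\dsum$ is linear as an operator, we have $\dsum(z) + \dsum(z') = \dsum(z + z')$, and parity sampling then yields $\bias(\dsum(z) + \dsum(z')) \leq 2\epsilon$, giving $\Delta(\mathcal{C}_k) \geq 1/2 - \epsilon^{\Omega_{\epsilon_0}(1)}$.

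With the distance of $\mathcal{C}_k$ in hand, I invoke Theorem~\ref{theo:list_dec} via its linear-lift-plus-parity-sampler branch, taking the base code of that framework to be $\mathcal{C}_1'$ rather than the original $\mathcal{C}_1$. The two-step tensoriality hypothesis is discharged by Lemma~\ref{lemma:double_close_to_product}, and the SOS-degree and running-time accounting are identical to those of Lemma~\ref{lemma:direct_sum_linear_biased_codes_hdx} up to $\epsilon_0$-dependent constants. The coupled code list $\mathcal{L}(\tilde y, \mathcal{C}_1, \mathcal{C}_k)$ required by the theorem is then obtained from the list returned by the framework (which natively produces pairs with first coordinate in $\mathcal{C}_1'$) by pulling back through the bijection $\varphi\colon \mathcal{C}_1 \to \mathcal{C}_1'$, converting each $(z', \dsum(z'))$ into $(\varphi^{-1}(z'), \dsum(z'))$.

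The main technical obstacle I anticipate is supplying the framework's internal unique decoder for the base code $\mathcal{C}_1'$ given only the hypothesis's decoder for $\mathcal{C}_1$. The two codes differ only in that $\mathcal{C}_1'$ enforces zero in the last $s$ coordinates, and $\varphi$ is a bijection between them; so a unique decoder for $\mathcal{C}_1$ composed with $\varphi$ and $\varphi^{-1}$ (i.e.,~run the $\mathcal{C}_1$-decoder and reproject) should yield one for $\mathcal{C}_1'$ at the radius $\approx 1/8 - \epsilon_0/4$ needed inside Theorem~\ref{theo:list_dec}. Verifying the triangle-inequality bookkeeping relating the unique-decoding radii of $\mathcal{C}_1$ and $\mathcal{C}_1'$ is the delicate part of the reduction; it constrains the precise $\epsilon_0$-dependent constants but does not affect the asymptotic running time $n^{\epsilon^{-O(1)}} \cdot f(n)$.
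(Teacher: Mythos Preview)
Your proposal is correct and follows essentially the same approach as the paper's own proof: reduce to the linear case of Lemma~\ref{lemma:direct_sum_linear_biased_codes_hdx} by replacing $\mathcal{C}_1$ with $\mathcal{C}_1' = \varphi(\mathcal{C}_1)$ via Claim~\ref{claim:zero_padding}, noting that the only role of linearity in that lemma's proof was to guarantee small bias of codeword differences. Your treatment is in fact more careful than the paper's (which is a two-sentence sketch), particularly in spelling out the adjusted parity-sampling threshold $\beta_0 = 3/4 + \epsilon_0/2$ and in flagging the unique-decoder transfer from $\mathcal{C}_1$ to $\mathcal{C}_1'$, an issue the paper passes over in silence.
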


\begin{proof}   
  By virtue of~\cref{lemma:direct_sum_linear_biased_codes_hdx}, it is
  enough to consider when $\mathcal{C}_1$ is not linear. Note that in
  the proof of~\cref{lemma:direct_sum_linear_biased_codes_hdx} the
  only assumption about linearity of $\mathcal{C}_1$ we used to obtain
  $(1/2-\epsilon_0,1/2-\epsilon)$-robustness was that the sum of two
  codewords is in the code and hence it has small bias. For a general
  code $\mathcal{C}_1$ of constant distance $1/2-\epsilon_0$,
  applying~\cref{claim:zero_padding} we obtain a new code
  $\mathcal{C}_1'$ with this guarantee at the expense of a distance
  $1/2$ times the original one. Naturally, in the current proof we no
  longer obtain a linear lifted code
  $\mathcal{C}_k=\dsum_{X(k)}(\mathcal{C}_1')$. Excluding the two
  previous remarks the proof
  of~\cref{theo:direct_sum_general_codes_hdx} is now the same as the
  proof of ~\cref{lemma:direct_sum_linear_biased_codes_hdx}.
\end{proof}

\section{List Decoding Direct Product Codes}\label{sec:direct_prod}

\subsection{Direct Product Codes}
Having developed a decoding algorithm for direct sum, a promising
strategy for list decoding other lifted codes on expanding objects is
reducing them to instances of direct sum list decoding.  One such
reduction involves the direct product lifting, which was first studied
in the context of samplers by Alon et al.~in~\cite{ABNNR92}.  The direct
product lifting collects the entries of a code on each subset of size
$\ell$.

\begin{definition}[Direct Product Lifting]
Let $\mathcal C_1 \subseteq \F_2^n$ be a base code on $X(1) = [n]$.
The {\deffont direct product lifting} of a word $z \in \F_2^n$ on a collection $X(\ell)$ is $\dprod_{X(\ell)}(z) = (x_{\tee})_{\tee \in X(\ell)}$, where $x_{\tee} = (z_i)_{i \in \tee}$.
The direct product lifting of the entire code is $\dprod_{X(\ell)}(\mathcal C_1) = \{\dprod_{X(\ell)}(z) \mid z \in \mathcal C_1\}$, which is a code of length $|X(\ell)|$ over the alphabet $\F_2^\ell$.
\end{definition}

If $X$ is a HDX, its sampling properties ensure that the direct
product lifting has very high distance.  It follows from the
definition that if the bipartite graph between $X(1)$ and $X(k)$ is an
$(\eta, \delta)$-sampler and the code $\mathcal C_1$ has minimum
distance $\eta$, then the direct product lifting
$\dprod_{X(\ell)}(\mathcal C_1)$ has minimum distance at least $(1 -
\delta)$. Recalling from \cref{fact:exp_to_sampler} that the bipartite
graph between two levels of a HDX can be a sampler with arbitrarily
small parameters if the expansion is good enough, we can reasonably
hope to list decode the direct product lifting on a HDX up to a
distance close to 1.  In fact, Dinur et al.~\cite{DinurHKNT19}
provided a list decoding algorithm accomplishing exactly that.  We
offer a very different approach to the same list decoding problem.

\subsection{Direct Product List Decoding}
We will reduce direct product decoding on $X(\ell)$ to direct sum
decoding on $X(k)$, where $k \approx \ell/2$.  This requires
converting a received word $\tilde x \in (\F_2^{\ell})^{X(\ell)}$ to a
word $\tilde y \in \F_2^{X(k)}$ that we will decode using the direct
sum algorithm.  If we knew that $\tilde x = \dprod_{X(\ell)}(\tilde
z)$ for some $\tilde z \in \F_2^{X(1)}$, we would do so by simply
taking $\tilde y_{\ess} = \sum_{i \in \ess} \tilde z_i$ to be the
direct sum lifting on each edge $\ess$; that is, $\tilde y =
\dsum_{X(k)}(\tilde z)$.

Unfortunately, performing list decoding also involves dealing with
words $\tilde x$ that might not have arisen from the direct product
lifting.  To construct a corrupted instance of direct sum $\tilde y$
from $\tilde x$, we need to assign values to each face $\ess \in X(k)$
based only on the information we have on the faces $X(\ell)$, as there
is no word on the ground set to refer to.  Since different faces
$\tee, \tee' \in X(\ell)$ containing $\ess$ might not agree on $\ess$,
there could be ambiguity as to what value to assign for the sum on
$\ess$.

This is where the $D$-flatness of the distribution $\Pi_{\ell}$ (which holds for the $\gamma$-HDX construction described in \cref{lemma:hdx_existence}) comes in.
Recall that to obtain codewords in $\dprod_{X(\ell)}(\mathcal C_1)$ without weights on their entries, we duplicate each face $\tee \in X(\ell)$ at most $D$ times to make the distribution $\Pi_{\ell}$ uniform.
To perform the same kind of duplication on $X(k)$ that makes $\Pi_k$ uniform, note that each face $\ess \in X(k)$ has $\Pi_k(\ess)$ proportional to $\abs{\{\tee \in X(\ell) \mid \tee \supset \ess\}}$ (where $X(\ell)$ is thought of as a multiset), so we will create one copy of $\ess$ for each $\tee$ containing it.
Thus we can assign a unique $\tee \supset \ess$ to each copy.
By downward closure, the distribution on $X(\ell)$ obtained by choosing $\ess$ uniformly from the multiset $X(k)$ and then selecting its associated face $\tee$ will be uniform, just like $\Pi_{\ell}$.
With this careful duplication process, we are ready to define the function $\rho_k$ that takes a corrupted direct product word $\tilde x$ to a corrupted direct sum word $\tilde y$.

\begin{definition}[Reduction Function]\label{def:reduction_function}
Let $k < \ell$ and $X$ be a HDX where the distribution $\Pi_{\ell}$ is $D$-flat.
Duplicate faces in $X(k)$ so that $\Pi_k$ is uniform, and assign a face $\tee_{\ess} \in X(\ell)$ to each $\ess \in X(k)$ (after duplication) such that $\tee_{\ess}$ is distributed according to $\Pi_{\ell}$ when $\ess$ is selected uniformly from $X(k)$.
The function $\rho_k: (\F_2^{\ell})^{X(\ell)} \to \F_2^{X(k)}$ is defined as
	$$(\rho_k(\tilde x))_{\ess} = \sum_{i \in \ess} (\tilde x_{\tee_{\ess}})_i.$$
\end{definition}

The reduction function $\rho_k$ resolves the ambiguity of which face
$\tee \supset \ess$ to sample the sum from by assigning a different
face to each copy of $\ess$ in a manner compatible with the
distribution $\Pi_{\ell}$.  Observe that if $\tilde x =
\dprod_{X(\ell)}(\tilde z)$ for some $\tilde z \in \F_2^{X(1)}$, then
$\rho_k(\tilde x) = \dsum_{X(k)}(\tilde z)$.

The following lemma shows that performing this reduction from direct
product to direct sum maintains agreement between words.  It
essentially says that if a received word $\tilde x$ exhibits some
agreement with $x \in \dprod_{X(\ell)}(\mathcal C_1)$, then there is a
$k$ for which $\rho_k(\tilde x)$ and $\rho_k(x)$ have agreement larger
than 1/2.

\begin{lemma}[Product-to-sum agreement]\label{lemma:product_to_sum}
Fix $\epsilon > 0$ and $C' > 2$.
Let $z \in \mathcal C_1$, $x = \dprod_{X(\ell)}(z)$, and $\tilde x \in (\F_2^{\ell})^{X(\ell)}$.
If $\Delta(x, \tilde x) \leq 1 - \epsilon$, then there exists a $k$ satisfying
	$$\abs{k - \ell/2} < \frac 12 \sqrt{C' \ell \log(1/\epsilon)}$$
such that
	$$\Delta(y, \tilde y) \leq 1/2 - \epsilon/2 + \epsilon^{C'/2},$$
where $y = \rho_k(x)$ and $\tilde y = \rho_k(\tilde x)$ are words in $\F_2^{X(k)}$.
\end{lemma}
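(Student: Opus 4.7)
The plan is to recast $\Delta(y,\tilde{y})$ as a bias computation and then use an averaging argument over $k$ drawn from a binomial distribution. First, I would observe that for $z\in\mathcal{C}_1$, $x=\dprod_{X(\ell)}(z)$, and $\tilde{x}\in(\F_2^{\ell})^{X(\ell)}$, if we let $S_{\tee}=\set{i\in\tee:(\tilde{x}_\tee)_i\neq z_i}$ and $m_\tee=|S_\tee|$, then $y_\ess+\tilde{y}_\ess\equiv|S_{\tee_\ess}\cap\ess|\pmod{2}$ for $y=\rho_k(x)$ and $\tilde{y}=\rho_k(\tilde{x})$. By the joint distribution guaranteed by~\cref{def:reduction_function} (namely $\tee_\ess\sim\Pi_{\ell}$, and conditionally $\ess$ is uniform among $k$-subsets of $\tee_\ess$, which follows from the duplication scheme), the bias $1-2\Delta(y,\tilde{y})$ equals $\E_{\tee\sim\Pi_{\ell}}[P_k(m_\tee)]$, where $P_k(m)=\binom{\ell}{k}^{-1}\sum_{j}(-1)^j\binom{m}{j}\binom{\ell-m}{k-j}$ is the normalized Krawtchouk polynomial. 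Letting $A=\set{\tee:m_\tee=0}$, the hypothesis $\Delta(x,\tilde{x})\leq 1-\epsilon$ gives $\Pi_{\ell}(A)\geq\epsilon$ while $P_k(0)=1$, so $\text{bias}(k)=\Pi_{\ell}(A)+\sum_{\tee\notin A}\Pi_{\ell}(\tee)\,P_k(m_\tee)$.

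The key step is to observe that averaging over $\mathbf{k}\sim\text{Bin}(\ell,1/2)$ kills the disagreement terms. For each fixed $\tee$, first sampling $\mathbf{k}$ and then a uniformly random $\mathbf{k}$-subset $\mathbf{ess}$ of $\tee$ is identical to including each element of $\tee$ independently with probability $1/2$. Under this product distribution, for $m_\tee\geq 1$ we get $\E[(-1)^{|S_\tee\cap\mathbf{ess}|}]=\prod_{i\in S_\tee}\E[(-1)^{\mathbf{1}_{i\in\mathbf{ess}}}]=0$. Hence $\E_{\mathbf{k}}[P_\mathbf{k}(m)]=0$ for every $m\geq 1$, which yields $\E_{\mathbf{k}}[\text{bias}(\mathbf{k})]=\Pi_{\ell}(A)\geq\epsilon$.

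To convert this averaged guarantee into a single good $k$, I would restrict to the range $K=\set{k\in\Z:|k-\ell/2|<\tfrac{1}{2}\sqrt{C'\ell\log(1/\epsilon)}}$. By Chernoff--Hoeffding, $\Pr[\mathbf{k}\notin K]\leq 2\exp(-C'\log(1/\epsilon)/2)=2\epsilon^{C'/2}$. Since $|\text{bias}(k)|\leq 1$ uniformly, the tail contribution to the expectation is bounded in magnitude by $2\epsilon^{C'/2}$, so $\E[\text{bias}(\mathbf{k})\mid\mathbf{k}\in K]\geq\epsilon-2\epsilon^{C'/2}$. Therefore some integer $k\in K$ satisfies $\text{bias}(k)\geq\epsilon-2\epsilon^{C'/2}$, which is exactly $\Delta(y,\tilde{y})\leq 1/2-\epsilon/2+\epsilon^{C'/2}$. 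The main technical subtlety lies in carefully verifying the joint-distribution identity of the first paragraph (which is where the duplication scheme of~\cref{def:reduction_function} is essential); once this identity is in place, the Fourier-style averaging that cancels the bad faces is transparent, and the Chernoff truncation is routine.
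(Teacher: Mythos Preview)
Your proof is correct and follows essentially the same approach as the paper. Your Krawtchouk-polynomial formulation of the bias and the observation that averaging $\mathbf{k}\sim\mathrm{Bin}(\ell,1/2)$ annihilates the disagreement terms is exactly the paper's computation $\E_{\ess\subseteq\tee}[\chi_{\ess,\tee}(x_\tee-\tilde{x}_\tee)]=\mathbf{1}[x_\tee=\tilde{x}_\tee]$ (a uniformly random subset of $\tee$ is the same as independent inclusion with probability $1/2$), and the Chernoff truncation plus averaging over $k$ to find a good size is identical.
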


\begin{proof}
For $t \in X(\ell)$ and $\ess \subseteq \tee$, define the function $\chi_{\ess, \tee}: \F_2^{\tee} \to \{-1, 1\}$ by
	$$\chi_{\ess, \tee}(w) = \prod_{i \in \ess} (-1)^{w_i}.$$
For each face $\tee \in X(\ell)$, consider the expectation
	$\E_{\ess \subseteq \tee} [\chi_{\ess, \tee}(x_{\tee} - \tilde x_{\tee})]$,
where $\ess$ is a subset of $\tee$ of any size chosen uniformly.
If $x_{\tee} = \tilde x_{\tee}$, which happens for at least $\epsilon$ fraction of faces $\tee$, the expression in the expectation is always 1.
Otherwise, this expectation is zero, so taking the expectation over the faces yields
	$$\E_{\tee \sim \Pi_{\ell}} \E_{\ess \subseteq \tee}[\chi_{\ess, \tee}(x_{\tee} - \tilde x_{\tee})] = \ProbOp_{t \sim \Pi_{\ell}}[x_{\tee} = \tilde x_{\tee}] \geq \epsilon.$$
We would like to restrict to a fixed size of faces $\ess$ for which this inequality holds; as this will be the size of the direct sum faces, we need to make sure it's large enough to give us the expansion required for decoding later.
Using a Chernoff bound (\cref{fact:Chernoff_bound} with $a = \sqrt{C' \ell \log(1/\epsilon)}$), we see that the size of the faces is highly concentrated around $\ell/2$:
	$$\ProbOp_{\ess \subseteq \tee}\left[\abs{\abs{\ess} - \frac{\ell}{2}} \geq \frac 12 \sqrt{C' \ell \log(1/\epsilon)}\right] \leq 2 e^{-C' \log(1/\epsilon) / 2} \leq 2 \epsilon^{C'/2}.$$

Let $I$ be the interval
	$$I = \left(\frac{\ell}{2} - \frac 12 \sqrt{C' \ell \log(1/\epsilon)}, \frac{\ell}{2} + \frac 12 \sqrt{C' \ell \log(1/\epsilon)} \right).$$

The expectation inequality becomes
\begin{align*}
	\epsilon &\leq \E_{\tee \sim \Pi_{\ell}} [\E_{\ess \subseteq \tee} [\One_{\abs{\ess} \in I} \cdot \chi_{\ess, \tee}(x_{\tee} - \tilde x_{\tee})] + \E_{\ess \subseteq \tee}[\One_{\abs{\ess} \notin I} \cdot \chi_{\ess, \tee}(x_{\tee} - \tilde x_{\tee})]] \\
	&\leq \E_{\tee \sim \Pi_{\ell}} \E_{\ess \subseteq \tee, \abs{\ess} \in I} [\chi_{\ess, \tee}(x_{\tee} - \tilde x_{\tee})] + 2\epsilon^{C'/2}.
\end{align*}

Thus there exists a $k \in I$ such that
	$$\epsilon - 2\epsilon^{C'/2} \leq \E_{\tee \sim \Pi_{\ell}} \E_{\ess \subseteq \tee, \abs{\ess} = k} [\chi_{\ess, \tee}(x_{\tee} - \tilde x_{\tee})].$$
	
Choosing a face $\tee$ and then a uniformly random $\ess \subseteq \tee$ of size $k$ results in choosing $\ess$ according to $\Pi_k$.
Moreover, the edge $\tee_{\ess}$ containing $\ess$ from \cref{def:reduction_function} is distributed according to $\Pi_{\ell}$.
Bearing in mind the definitions of $y_{\ess}$ and $\tilde y_{\ess}$, we have
\begin{align*}
	\epsilon - 2\epsilon^{C'/2} &\leq \E_{\tee \sim \Pi_{\ell}} \E_{\ess \subseteq \tee, \abs{\ess} = k} [\chi_{\ess, \tee}(x_{\tee} - \tilde x_{\tee})] \\
	&= \E_{\ess \sim \Pi_k}[\chi_{\ess, \tee_{\ess}}(x_{\tee_{\ess}} - \tilde x_{\tee_{\ess}})] \\
	&= \E_{\ess \sim \Pi_k}[(-1)^{(\rho_k(x))_{\ess} - (\rho_k(\tilde x))_{\ess}}] \\
	&= \bias(y - \tilde y)
\end{align*}
which translates to a Hamming distance of $\Delta(y, \tilde y) \leq 1/2 - \epsilon/2 + \epsilon^{C'/2}$.
\end{proof}

With \cref{lemma:product_to_sum} in hand to reduce a direct product
list decoding instance to a direct sum list decoding instance, we can
decode by using a direct sum list decoding algorithm as a black box.

\begin{algorithm}{Direct Product List Decoding Algorithm}
{A word $\tilde x \in (\F_2^{\ell})^{X(\ell)}$ with distance at most $(1-\epsilon)$ from $\dprod_{X(\ell)}(\mathcal C_1)$}
{The list $\mathcal L' = \{z \in \F_2^n \mid \Delta(\dprod_{X(\ell)}(z), \tilde x) \leq 1 - \epsilon\}$}\label{algo:product_list_decoding}
\begin{enumerate}
	\item Let $I$ be the interval $\left(\ell/2 - \sqrt{C' \ell \log(1/\epsilon)}/2, \ell/2 + \sqrt{C' \ell \log(1/\epsilon)}/2 \right)$.
	\item For each integer $k \in I$, run the direct sum list decoding algorithm on the input $\tilde y = \rho_k(\tilde x) \in \F_2^{X(k)}$ to obtain a coupled list $\mathcal L_k$ of all pairs $(z, y)$ with $\Delta(y, \tilde y) \leq 1/2 - \epsilon/2 + \epsilon^{C'/2}$. 
	\item Let $\mathcal L = \cup_{k \in I} \{z \in \mathcal C_1 \mid (z, y) \in \mathcal L_k\}$.
	\item Let $\mathcal L' = \{z \in \mathcal L \mid \Delta(\dprod_{X(\ell)}(z), \tilde x) \leq 1 - \epsilon\}$.
	\item Output $\mathcal L'$.
\end{enumerate}
\end{algorithm}

\begin{theorem}[Product-to-sum Reduction]\label{theo:product_to_sum_reduction}
Let $\epsilon > 0$ and $C' > 2$.
Let $\dprod_{X(\ell)}(\mathcal C_1)$ be the direct product lifting of a base code $\mathcal C_1$ on a simplicial complex $X$.
If the direct sum lifting $\dsum_{X(k)}(\mathcal C_1)$ is list decodable up to distance $\left(1/2 - \epsilon/2 + \epsilon^{C'/2}\right)$ in time $\tilde f(n)$ for all $k$ satisfying $\abs{k - \ell/2} < \sqrt{C' \ell \log(1/\epsilon)}/2$, then \cref{algo:product_list_decoding} list decodes $\dprod_{X(\ell)}(\mathcal C_1)$ up to distance $(1 - \epsilon)$ in running time
$$\sqrt{C' \ell \log(1/\epsilon)} \tilde f(n) + |X(\ell)| |\mathcal L|.$$
\end{theorem}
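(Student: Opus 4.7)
The plan is to establish the claimed set equality for $\mathcal L'$ by two inclusions, using~\cref{lemma:product_to_sum} as the bridge to the direct-sum regime, and then read off the running time directly from the structure of~\cref{algo:product_list_decoding}.

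For the inclusion $\{z \in \F_2^n : \Delta(\dprod_{X(\ell)}(z), \tilde x) \leq 1-\epsilon\} \subseteq \mathcal L'$, I would fix such a $z$ and set $x = \dprod_{X(\ell)}(z)$. By~\cref{lemma:product_to_sum} applied to $(x, \tilde x)$, there is an integer $k$ in the interval $I$ scanned by the algorithm such that $y := \rho_k(x)$ and $\tilde y := \rho_k(\tilde x)$ satisfy $\Delta(y, \tilde y) \leq 1/2 - \epsilon/2 + \epsilon^{C'/2}$. The observation immediately following~\cref{def:reduction_function} gives the key identity $\rho_k(\dprod_{X(\ell)}(z)) = \dsum_{X(k)}(z)$, so $y = \dsum_{X(k)}(z) \in \dsum_{X(k)}(\mathcal C_1)$ is a codeword within distance $1/2 - \epsilon/2 + \epsilon^{C'/2}$ of $\tilde y$. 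By the hypothesis on the direct-sum list decoder, the coupled pair $(z, y)$ then lies in the output $\mathcal L_k$, and hence $z \in \mathcal L$. Since $z$ also satisfies the filter of step 4 by assumption, $z \in \mathcal L'$. The reverse inclusion is immediate since step 4 enforces precisely the defining inequality of the claimed list.

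For the running-time bound, the interval $I$ contains at most $\sqrt{C' \ell \log(1/\epsilon)}$ integers, so step 2 performs at most this many black-box invocations of the direct-sum list decoder, contributing $\sqrt{C' \ell \log(1/\epsilon)} \cdot \tilde f(n)$; the preprocessing $\tilde x \mapsto \rho_k(\tilde x)$ for each $k$ is linear in $|X(k)| \le |X(\ell)|$ and can be absorbed into $\tilde f(n)$. The filter in step 4 computes $\dprod_{X(\ell)}(z)$ and its Hamming distance to $\tilde x$ once per candidate $z \in \mathcal L$, contributing $O(|X(\ell)| \cdot |\mathcal L|)$ total.

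I expect no serious obstacle: the function $\rho_k$ was engineered so that~\cref{lemma:product_to_sum} converts a $(1-\epsilon)$ direct-product distance guarantee into precisely the $(1/2 - \epsilon/2 + \epsilon^{C'/2})$ direct-sum distance guarantee that matches the black-box hypothesis, and the sweep over $k$ in step 1 is defined using the same interval $I$ in which~\cref{lemma:product_to_sum} quantifies $k$. The only delicate point worth a sentence in the writeup is noting that the identity $\rho_k \circ \dprod_{X(\ell)} = \dsum_{X(k)}$ is what lets us conclude membership of $z$ (not merely some nearby word) in the list $\mathcal L_k$ returned by the direct-sum decoder, and that filtering by the direct-product distance at the end is what turns the \emph{superset} $\mathcal L$ into the exact desired list.
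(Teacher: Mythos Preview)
Your proposal is correct and matches the paper's own proof essentially line for line: invoke \cref{lemma:product_to_sum} to locate a suitable $k\in I$, use the identity $\rho_k\circ\dprod_{X(\ell)}=\dsum_{X(k)}$ (stated right after \cref{def:reduction_function}) to recognize $y$ as a direct-sum codeword within the decoding radius, conclude $(z,y)\in\mathcal L_k$ and hence $z\in\mathcal L'$ after the filter, and count $|I|\le\sqrt{C'\ell\log(1/\epsilon)}$ black-box calls plus $|X(\ell)|\cdot|\mathcal L|$ for the trimming step. Your additional remark making explicit why the pair $(z,y)$ itself (and not just a nearby one) lands in $\mathcal L_k$ is a nice clarification that the paper leaves implicit.
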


\begin{proof}
Let $\tilde x \in (\F_2^{\ell})^{X(\ell)}$ be a received word and let
$z \in \mathcal C_1$ satisfy $\Delta(\dprod_{X(\ell)}(z), \tilde x) \leq
1 - \epsilon$.  By \cref{lemma:product_to_sum}, there exists a $k \in
I$ such that $\Delta(y, \tilde y) \leq 1/2 - \epsilon/2 +
\epsilon^{C'/2}$.  Thanks to this distance guarantee, the pair $(z,y)$
will appear on the list $\mathcal L_k$ when the direct sum list
decoding algorithm is run for this $k$.  Then $z$ will be on the
combined list $\mathcal L$ and the trimmed list $\mathcal L'$, with
the trimming ensuring that no elements of $\mathcal C_1$ appear on
this list beyond those with the promised distance.  The set
$\{\dprod_{X(\ell)}(z) \mid z \in \mathcal L'\}$ thus contains all words
in $\dprod_{X(\ell)}(\mathcal C_1)$ with distance at most $(1-\epsilon)$
from $\tilde x$.

To obtain the promised the running time, note that
\cref{algo:product_list_decoding} runs the direct sum list decoding
algorithm $\sqrt{C' \ell \log(1/\epsilon)}$ times and then computes the
direct product lifting of each element of $\mathcal L$ in the trimming
step.
\end{proof}

Combining the parameters in the reduction with those required for our
direct sum list decoding algorithm, we obtain the following.  Note
that for very small values of $\epsilon$, we can choose the constant
$C'$ to be close to 2, and we will be list decoding the direct sum
code up to distance $1/2 - \sqrt{\beta} \approx 1/2 - \epsilon/4$.

\begin{corollary}[Direct Product List Decoding]\label{cor:product_list_decoding}
Let $\epsilon_0 < 1/2$ be a constant, and let $\epsilon > 0$, $C' \geq 2 + 4/\log(1/\epsilon)$, and $\beta = (\epsilon/2 - \epsilon^{C'/2})^2$.
There exist universal constants $c, C > 0$ such that for any $\gamma$-HDX $X(\leq d)$ on ground set $[n]$ and $\Pi_1$ uniform, if
	$$\gamma \leq \log(1/\beta)^{-C \log(1/\beta)} \quad \text{ and } \quad d \geq c \cdot \frac{\log(1/\beta)^2}{\beta},$$
then the following holds:

For every binary code $\mathcal C_1$ with $\Delta(\mathcal C_1) \geq 1/2 - \epsilon_0$ on $X(1) = [n]$, there exists a lifted code $\mathcal C_{\ell} = \dprod_{X(\ell)}(\phi(\mathcal C_1))$ on $C_{\ell}$ where $\ell = O(\log(1/\beta))$, $\varphi$ is an explicit linear projection, and

\begin{itemize}
	\item{[Efficient List Decoding]} If $\tilde x$ is $(1-\epsilon)$-close to $\mathcal C_{\ell}$, then we can compute the list of all codewords of $\mathcal C_{\ell}$ that are $(1-\epsilon)$-close to $\tilde x$ in time $n^{\epsilon^{-O(1)}} \cdot f(n)$, where $f(n)$ is the running time of the unique decoding algorithm for $\mathcal C_1$.
	\item{[Rate]} The rate $r_{\ell}$ of $\mathcal C_{\ell}$ satisfies $r_{\ell} = r_1 \cdot \abs{X(1)}/(\ell \abs{X(\ell)})$, where $r_1$ is the relative rate of $\mathcal C_1$.
	\item{[Linearity]} If $\mathcal C_1$ is linear, then $\phi$ is the identity and $\mathcal C_{\ell}$ is linear.
\end{itemize}
\end{corollary}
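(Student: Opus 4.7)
The plan is to derive this corollary by composing \cref{theo:product_to_sum_reduction} (the product-to-sum reduction) with \cref{theo:direct_sum_general_codes_hdx} (direct sum list decoding on HDXs). Setting $\beta = (\epsilon/2 - \epsilon^{C'/2})^2$ and noting that $C' \ge 2 + 4/\log(1/\epsilon)$ forces $\epsilon^{C'/2} < \epsilon/2$ so that $\beta > 0$, \cref{theo:product_to_sum_reduction} reduces list decoding $\dprod_{X(\ell)}(\phi(\mathcal{C}_1))$ up to distance $(1-\epsilon)$ to list decoding $\dsum_{X(k)}(\phi(\mathcal{C}_1))$ up to distance $1/2 - \sqrt{\beta}$ for every integer $k$ in a window of half-width $\sqrt{C'\ell\log(1/\epsilon)}/2$ centered at $\ell/2$.

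First I would fix $\ell = \Theta(\log(1/\beta))$ with a large enough implicit constant so that every integer $k$ in this window is itself $O(\log(1/\beta))$, the scale at which \cref{theo:direct_sum_general_codes_hdx} operates. Crucially, the HDX hypotheses assumed in the present corollary, namely $\gamma \le (\log(1/\beta))^{-C\log(1/\beta)}$ and $d \ge c(\log(1/\beta))^2/\beta$, are exactly the preconditions of \cref{theo:direct_sum_general_codes_hdx} applied with error parameter $\beta$. Thus for each such $k$ that theorem yields a direct sum code $\dsum_{X(k)}(\phi(\mathcal{C}_1))$ of distance at least $1/2 - \beta^{\Omega_{\epsilon_0}(1)}$ which is efficiently list decodable to radius $1/2 - \sqrt{\beta}$ in time $n^{\beta^{-O(1)}} f(n) = n^{\epsilon^{-O(1)}} f(n)$. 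Running \cref{algo:product_list_decoding} then produces the desired list after $O(\sqrt{\ell\log(1/\epsilon)})$ such invocations, within the advertised total runtime.

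The remaining claims are mechanical. Linearity transfers because \cref{theo:direct_sum_general_codes_hdx} takes $\phi$ to be the identity when $\mathcal{C}_1$ is linear and the direct product map is itself linear. The rate follows from counting: $\phi(\mathcal{C}_1)$ retains $r_1|X(1)|$ information bits while direct product lifts these to length $|X(\ell)|$ over alphabet $\F_2^\ell$, giving $r_\ell = r_1|X(1)|/(\ell|X(\ell)|)$. The one point demanding care, and the mildest obstacle in the argument, will be verifying that the reduction window from \cref{lemma:product_to_sum} actually contains a valid integer $k$ at which \cref{theo:direct_sum_general_codes_hdx} applies at the $\beta$-scale; this is handled by fixing constants in $\ell = \Theta(\log(1/\beta))$ so that both the window's center and its full span remain in the admissible range, with no new ideas needed beyond careful propagation of error parameters through the composition.
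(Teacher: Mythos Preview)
Your proposal is correct and follows essentially the same approach as the paper: compose \cref{theo:product_to_sum_reduction} with \cref{theo:direct_sum_general_codes_hdx} at parameter $\beta$, and observe that the HDX hypotheses here are exactly those needed there. The paper adds two small bookkeeping points you gloss over: it explicitly uses $C' \ge 2 + 4/\log(1/\epsilon)$ to deduce $\beta \ge \epsilon^2/16$ (so $n^{\beta^{-O(1)}} = n^{\epsilon^{-O(1)}}$), and it bounds the trimming term $|X(\ell)|\cdot|\mathcal L|$ in the runtime of \cref{algo:product_list_decoding} via $|X(\ell)| \le \binom{n}{\ell} = n^{O(\log(1/\beta))}$ and the Johnson bound on each $|\mathcal L_k|$.
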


\begin{proof}
Let $k = \ell/2 - \sqrt{C' \ell \log(1/\epsilon)}/2$.
The choice of parameters ensures that $\dsum_{X(k)}(\mathcal C_1)$ is list
decodable up to distance $1/2 - \sqrt \beta = 1/2 - \epsilon/2 +
\epsilon^{C'/2}$ in running time
$g(n) = n^{\beta^{-O(1)}} f(n)$ by
\cref{theo:direct_sum_general_codes_hdx} (noting that the bound on $C'$ implies
    $\beta \geq \epsilon^2/16$).
Since increasing $k$
increases the list decoding radius of the direct sum lifting, this holds for any
value of $k$ with $\abs{k - \ell/2} \leq \sqrt{C' \ell
  \log(1/\epsilon)}/2$.  By \cref{theo:product_to_sum_reduction}, the
direct product lifting $\dprod_{X(\ell)}(\mathcal C_1)$ is list decodable
up to distance $(1-\epsilon)$ in running time
	$$\sqrt{C' \ell \log(1/\epsilon)} n^{\beta^{-O(1)}} f(n) + \abs{X(\ell)} \abs{\mathcal L}.$$
	
The HDX has $\abs{X(\ell)} \leq \binom{n}{\ell} = n^{O(\log(1/\beta))}$, and the list size
$\abs{\mathcal L}$ is bounded by the sum of the sizes of the lists
$\mathcal L_k$ obtained from each direct sum decoding.  Each of these
lists has $\abs{\mathcal L_k} \leq 1/(2\beta)$ by the Johnson
bound (see \cref{rem:Johnson_list_size}) and the number of lists is
constant with respect to $n$, so the overall running time is dominated
by the first term, $n^{\beta^{-O(1)}} f(n) = n^{\epsilon^{-O(1)}}
f(n)$.

The rate and linearity guarantees follow in the same manner as they do
in \cref{theo:direct_sum_general_codes_hdx}, where the rate
calculation requires a slight modification for dealing with the
increased alphabet size and $\phi$ is the projection from
\cref{claim:zero_padding}.
\end{proof}

Using \cref{cor:product_list_decoding} with HDXs obtained from
Ramanujan complexes as in \cref{cor:ramanujan_instantiation}, we can
perform list decoding with an explicit construction up to distance $(1
- \epsilon)$ with HDX parameters $d = O(\log(1/\epsilon)^2/\epsilon^2)$
and $\gamma = (\log(1/\epsilon))^{-O(\log(1/\epsilon))}$.
The direct product list decoding algorithm of Dinur
et al.~\cite{DinurHKNT19} is based on a more general expanding object
known as a double sampler. As the only known double sampler
construction is based on a HDX, we can compare our parameters
to their HDX requirements of $d = O(\exp(1/\epsilon))$ and
$\gamma = O(\exp(-1/\epsilon))$.

\section{Instantiation II: Direct Sum on Expander Walks}\label{sec:expander_walks}

We instantiate the list decoding framework to the direct sum lifting
where the sum is taken over the collection $X(k)$ of length $k$ walks
of a sufficiently expanding graph $G$. To stress the different nature
of this collection and its dependence on $G$ we equivalently denote
$X(k)$ by $W_G(k)$ and endow it with a natural measure
in~\cref{def:walk_collection}.
\begin{definition}[Walk Collection]\label{def:walk_collection}
  Let $G=(V,E,w)$ be a weighted graph with weight distribution $w
  \colon E \to [0,1]$. For $k \in \mathbb{N}^{+}$, we denote by
  $W_G(k)$ the collection of all walks of length $k$ in $G$,
  i.e.,
  $$
  W_G(k) \coloneqq \set{w = (w_1,\dots,w_k)~\vert~w\textup{ is a walk of length $k$ in $G$}}.
  $$
  We endow $W_G(k)$ with the distribution $\Pi_k$
  arising from taking a random vertex $w_1$ according to the
  stationary distribution on $V$ and then taking $k-1$ steps according
  to the normalized random walk operator of $G$.
\end{definition}

One simple difference with respect to the HDX case is that now we are
working with a collection of (ordered) tuples instead of subsets. The
Propagation Rounding~\cref{algo:prop-rd} remains the same, but we need
to establish the tensorial properties of $W_G(k)$ which is done
in~\cref{subsec:expander_walks_two_step_tensorial}.

The main result of this section follows.
\begin{theorem}[Direct Sum Lifting on Expander Walks]\label{theo:direct_sum_general_codes_expander_walks}
  Let $\epsilon_0 < 1/2$ be a constant and $\epsilon \in (0,\epsilon_0)$.
  There exists a universal constant $C > 0$ such that for any $d$-regular
  $\gamma$-two-sided expander graph $G$ on ground set $W_G(1)=[n]$, if $\gamma \le \epsilon^{C}$,
  then the following holds:  

  For every binary code $\mathcal{C}_1$ with $\Delta(\mathcal{C}_1) \ge 1/2-\epsilon_0$ on $W_G(1)=[n]$,
  there exists a binary lifted code $\mathcal{C}_k=\dsum_{X(k)}(\varphi(\mathcal{C}_1))$ 
  with $\Delta(\mathcal{C}_k) \ge 1/2-\epsilon^{\Omega_{\epsilon_0}(1)}$ on $W_G(k)$ where $k = O\left(\log(1/\epsilon)\right)$,
  $\varphi$ is an explicit linear projection, and
  \begin{itemize}
    \item{[Efficient List Decoding]} If $\tilde{y}$ is \lict{\sqrt{\epsilon}} to $\mathcal{C}_k$, then we
         can compute the  list $\mathcal{L}(\tilde{y},\mathcal{C}_1,\mathcal{C}_k)$ (c.f.~\cref{def:coupled_code_list}) in time
         $$
         n^{\epsilon^{-O\left(1\right)}} \cdot f(n),
         $$
         where $f(n)$ is the running time of a unique decoding algorithm for $\mathcal{C}_1$.
    \item{[Rate]} The rate $r_k$ of $\mathcal{C}_k$ satisfies  $r_k = r_1 / d^{k-1}$ where $r_1$ is the relative
         rate of $\mathcal{C}_1$.
    \item{[Linearity]} If $\mathcal{C}_1$ is linear, then $\varphi$ is the identity and $\mathcal{C}_k= \dsum_{X(k)}(\mathcal{C}_1)$
                       is linear.
  \end{itemize}
\end{theorem}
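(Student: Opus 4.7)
The plan is to instantiate the abstract List Decoding Theorem (\cref{theo:list_dec_restated}) with the direct sum lifting on $W_G(k)$, following the same high-level template as the HDX case in \cref{theo:direct_sum_general_codes_restated}. This requires establishing three properties of the lifting: parity sampling (to get robustness plus the stronger amplification guarantee), two-step tensoriality, and a rate calculation. I would handle the linear case first and then reduce the general case to it via the projection $\varphi$ from \cref{claim:zero_padding}, exactly as in the HDX instantiation.

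For \textbf{parity sampling and robustness}, I invoke \cref{theo:expander_parity_sampler}: with $\sigma_2(G) \le \gamma$, the collection $W_G(k)$ is a $(\beta_0, (\beta_0 + 2\gamma)^{\lfloor k/2 \rfloor})$-parity sampler. Choosing $\beta_0 = 1/2 + \epsilon_0$ and $k = O(\log(1/\epsilon))$, we can make the target bias $2\epsilon$ provided $\gamma$ is small enough that $(1/2 + \epsilon_0) + 2\gamma < 1$; any $\gamma \le \epsilon^C$ with $C$ large comfortably suffices. Linearity of $\dsum$ then converts the $(1/2 + \epsilon_0, 2\epsilon)$-parity sampling guarantee into $(1/2 - \epsilon_0, 1/2 - \epsilon)$-robustness for the lifting $\mathcal{C}_1 \mapsto \mathcal{C}_k$, as in the proof of \cref{lemma:direct_sum_linear_biased_codes_hdx}.

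For \textbf{two-step tensoriality}, which I expect to be the main obstacle, I would prove (in a dedicated \cref{subsec:expander_walk_two_step_tensorial}) the walk analogue of \cref{lemma:double_close_to_product}: that $W_G(k)$ is $(\epsilon^8/2^{22}, L)$-two-step tensorial for $L = \poly(1/\epsilon, 2^k) = \epsilon^{-O(1)}$, whenever $\gamma \le \epsilon^{O(1)}$. The strategy mirrors the potential-argument proof in \cref{lemma:double_close_to_product}, with the $\Phi_m$-potential decreasing whenever too many slices fail to be close to product. The ingredient replacing HDX link expansion is a \emph{splittability} bound for walks: when a length-$(k{-}1)$ walk is cut at a random midpoint, the bipartite graph between the two halves inherits spectral expansion from $G$ itself (the $t$-step random walk operator on a $\gamma$-two-sided expander has second singular value at most $\gamma^{t}$). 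This supplies the local-to-global correlation-decay needed to drive $\Phi_m$ down after $O(L/k)$ conditioning rounds, and handles both the $k$-wise independence over a single walk and the pairwise-product condition between two independent walks.

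Once these are in place, the rest is bookkeeping. I apply \cref{theo:list_dec_restated} to get list decoding in time $n^{O(L+k)}\cdot(\polylog(\epsilon^{-1}) + f(n)) = n^{\epsilon^{-O(1)}}\cdot f(n)$, returning the coupled code list $\mathcal{L}(\tilde y, \mathcal{C}_1, \mathcal{C}_k)$. The \textbf{rate} is immediate: since $|W_G(k)| = n\cdot d^{k-1}$, the length grows by a factor $d^{k-1}$ and the dimension is unchanged, giving $r_k = r_1/d^{k-1}$. For \textbf{general (non-linear) base codes}, I apply \cref{claim:zero_padding} to obtain $\mathcal{C}_1' = \varphi(\mathcal{C}_1)$ with $\bias(z - z') \le 1 - \epsilon_0$ for all codeword pairs, and repeat the argument; the lifted code is no longer linear but all other conclusions carry through verbatim, exactly paralleling the passage from \cref{lemma:direct_sum_linear_biased_codes_hdx} to \cref{theo:direct_sum_general_codes_restated}.
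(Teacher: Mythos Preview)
Your proposal is correct and follows essentially the same route as the paper: reduce the general case to the linear one via \cref{claim:zero_padding}, establish parity sampling via \cref{theo:expander_parity_sampler}, establish two-step tensoriality of $W_G(k)$ via a splittability argument and the potential method, and then invoke \cref{theo:list_dec_restated}. One small correction on the splittability step: the bound does not come from powers $\gamma^t$ of the walk operator but from the observation (\cref{lemma:swap_matrix_rep}) that the swap operator between $W_G(k_1)$ and $W_G(k_2)$ factors as $\matr A_G \otimes \Jay/d^{k_2-1}$, whence its second singular value equals $\sigma_2(\matr A_G) \le \gamma$ directly.
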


In particular, we
apply~\cref{theo:direct_sum_general_codes_expander_walks} to the
explicit family of Ramanujan expanders of Lubotzky et
al. from~\cref{theo:ramanujan_existence}.

\begin{theorem}[Lubotzky-Phillips-Sarnak abridged~\cite{LPS88}]\label{theo:ramanujan_existence}
   Let $p \equiv 1 \pmod{4}$ be a prime. Then there exists an explicit
   infinite family of $(p+1)$-regular Ramanujan graphs $G_1,G_2,\dots$
   on $n_1 < n_2 < \cdots$ vertices, i.e., $\sigma_2(G_i) \le
   2 \cdot \sqrt{p}/(p+1)$.
\end{theorem}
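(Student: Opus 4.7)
The plan is to follow the original construction of Lubotzky, Phillips, and Sarnak, which builds the graphs as Cayley graphs of $\mathrm{PSL}_2$ (or $\mathrm{PGL}_2$) over finite fields with generators coming from integer quaternions of norm $p$. First I would fix the quaternion algebra $\mathbb{H}(\mathbb{Z})$ and invoke Jacobi's four-square theorem to count the integer solutions to $a_0^2 + a_1^2 + a_2^2 + a_3^2 = p$ with $a_0$ odd and positive and $a_1, a_2, a_3$ even; since $p \equiv 1 \pmod{4}$, there are exactly $p+1$ such solutions, giving a symmetric set $S$ of $p+1$ quaternions $\alpha$ with $\alpha \bar\alpha = p$ (the $(p+1)$-count and closure under conjugation are what will guarantee $(p+1)$-regularity and undirectedness of the resulting Cayley graph).

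Next I would define the infinite family. For each prime $q \equiv 1 \pmod{4}$ with $q \ne p$ and $\left(\tfrac{p}{q}\right) = 1$ (an infinite set by Dirichlet/quadratic reciprocity), reduce the quaternions modulo $q$ to land in (a subgroup of) $\mathrm{PGL}_2(\mathbb{F}_q)$ via a fixed choice of a square root of $-1$ modulo $q$, yielding generating multiset $S_q$. Set $G_q = \mathrm{Cay}(\mathrm{PSL}_2(\mathbb{F}_q), S_q)$ (passing to the index-$2$ subgroup when the quaternion images lie there). The vertex counts $n_q = |\mathrm{PSL}_2(\mathbb{F}_q)| = q(q^2-1)/2$ are strictly increasing in $q$, giving the required infinite family $G_1, G_2, \ldots$ on distinct $n_1 < n_2 < \cdots$. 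Regularity is immediate from $|S_q| = p+1$ and the absence of short relations in the free group generated by the quaternions (a theorem of Dickson / the Lipschitz order argument) which ensures $S_q$ has no repetitions for $q$ large.

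The spectral bound $\sigma_2(G_q) \le 2\sqrt{p}/(p+1)$ is the deep part and the main obstacle. I would establish it by identifying the adjacency operator of $G_q$, up to normalization, with a Hecke operator $T_p$ acting on a space of automorphic functions on a quotient of the $(p+1)$-regular Bruhat--Tits tree of $\mathrm{PGL}_2(\mathbb{Q}_p)$ by a congruence arithmetic lattice $\Gamma(q)$ coming from the quaternion order. By strong approximation, this space embeds into a space of automorphic forms on the adelic quotient, where the Jacquet--Langlands correspondence transfers the nontrivial spectrum to classical weight-$2$ holomorphic cusp forms on $\Gamma_0(N)$ for appropriate level $N \mid q$. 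The Ramanujan--Petersson conjecture for these forms, proved by Deligne as a consequence of the Weil conjectures, bounds the $T_p$-eigenvalues on newforms by $2\sqrt{p}$; after normalizing by $(p+1)$ this yields $\sigma_2(G_q) \le 2\sqrt{p}/(p+1)$, and one checks that the trivial and sign representations contribute only the $\pm 1$ Perron/bipartite eigenvalues, which are excluded from $\sigma_2$.

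The hard part, as noted, is the spectral estimate: everything else (Jacobi's formula, Cayley graph regularity, choosing infinitely many $q$) is elementary, whereas bounding $\sigma_2$ by $2\sqrt{p}/(p+1)$ genuinely requires the Deligne--Ramanujan bound and the Jacquet--Langlands transfer. Since the theorem is stated as ``abridged \cite{LPS88},'' I would cite these inputs rather than reproduce them, and only verify carefully the combinatorial translation between the Cayley graph spectrum and the Hecke eigenvalues on the relevant automorphic space.
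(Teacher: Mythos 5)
The paper does not actually prove this statement: it is imported as a black-box citation of \cite{LPS88}, so there is no internal proof to compare against. Your outline is a faithful reconstruction of the genuine Lubotzky--Phillips--Sarnak argument — Jacobi's four-square theorem giving exactly $p+1$ quaternion generators closed under conjugation, reduction modulo $q$ to Cayley graphs of $\mathrm{PSL}_2(\F_q)$ of increasing size, and the spectral bound via the identification of the adjacency operator with a Hecke operator $T_p$ followed by Jacquet--Langlands and Deligne's proof of Ramanujan--Petersson — and deferring the deep spectral input to those references is exactly right given that the statement is explicitly labeled ``abridged.'' One small correction: because the theorem is phrased in terms of the second singular value $\sigma_2$ of the normalized adjacency matrix, the bipartite branch of the LPS construction (the case $\left(\frac{p}{q}\right)=-1$, giving Cayley graphs of $\mathrm{PGL}_2(\F_q)$) would \emph{not} satisfy the stated bound, since its eigenvalue $-(p+1)$ yields a singular value equal to $1$. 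Your restriction to $\left(\frac{p}{q}\right)=1$ is therefore essential rather than a convenience, and your closing remark that the sign representation's eigenvalue is ``excluded from $\sigma_2$'' is slightly off: in the non-bipartite case that eigenvalue simply does not occur, and in the bipartite case it would not be excluded.
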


In order to construct Ramanujan expanders with arbitrarily good
expansion, we will use the following lemma for finding primes.

\begin{lemma}[From~\cite{Ta-Shma17}]\label{lemma:prime_finder}
  For every $\alpha > 0$ and sufficiently large $n$, there exists an
  algorithm that given $a$ and $m$ relatively prime, runs in time
  $\poly(n)$ and outputs a prime number $p$ with $p \equiv a \pmod m$
  in the interval $[(1-\alpha)n, n]$.
\end{lemma}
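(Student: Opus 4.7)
The plan is a direct ``enumerate and test'' approach. First, I would list all candidates: the integers in the arithmetic progression $\{a + km : k \in \mathbb{Z}_{\geq 0}\}$ that fall in $[(1-\alpha)n, n]$. There are at most $\lceil \alpha n / m \rceil + 1$ such candidates, and they can be written down in time $O(n \cdot \polylog(n))$. Second, I would apply a deterministic polynomial-time primality test to each candidate; the AKS primality test runs in time $\polylog(n)^{O(1)}$ per candidate, for an overall runtime of $\poly(n)$. The algorithm outputs any candidate that passes the test, and declares failure if none do.

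For correctness, the algorithm succeeds precisely when the residue class $a \pmod m$ contains at least one prime in $[(1-\alpha)n, n]$. To ensure this for sufficiently large $n$, I would invoke the Siegel--Walfisz theorem (equivalently, the prime number theorem for arithmetic progressions with a uniform error bound): for $\gcd(a,m)=1$ and $m$ fixed (or $m$ growing at most polylogarithmically in $n$),
\[
\pi(n; m, a) - \pi((1-\alpha)n; m, a) \;=\; \frac{\alpha\, n}{\phi(m) \log n}\,(1 + o(1)),
\]
which is strictly positive once $n$ is large enough. Hence the enumeration is guaranteed to locate a prime, and returning the first one found establishes the claim.

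The main subtlety---really the only place where any care is needed---is the uniformity in $m$: for $m$ growing quickly with $n$, the Siegel--Walfisz bound is insufficient and one would instead appeal to Linnik's theorem on the least prime in an arithmetic progression, or else assume GRH. In the intended application of this lemma (constructing Ramanujan graphs of prescribed degree in~\cref{theo:ramanujan_existence}), however, $m$ is an absolute constant, so the two standard ingredients---AKS primality testing and the prime number theorem in arithmetic progressions---suffice without further strengthening.
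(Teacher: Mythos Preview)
The paper does not actually prove this lemma; it is quoted verbatim from Ta-Shma~\cite{Ta-Shma17} and used as a black box in the proof of \cref{cor:ramanujan_instantiation_exp}. So there is no ``paper's own proof'' to compare against.

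That said, your proposal is exactly the standard argument and is correct. Enumerate the $O(n/m)$ candidates in the progression lying in $[(1-\alpha)n,n]$, run AKS on each, and appeal to the prime number theorem for arithmetic progressions (Siegel--Walfisz suffices for fixed or slowly growing $m$) to guarantee a prime exists for large $n$. You also correctly flag the only delicate point, uniformity in $m$, and correctly observe that in the application here $m=4$ is an absolute constant, so no strengthening is needed. This is precisely the argument Ta-Shma has in mind.
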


This results in~\cref{cor:ramanujan_instantiation_exp}.
\begin{corollary}\label{cor:ramanujan_instantiation_exp}
  Let $\epsilon_0 < 1/2$ be a constant and $\epsilon \in (0,\epsilon_0)$. There
  is an infinite sequence of explict Ramanujan expanders
  $G_1,G_2,\dots$ on ground sets of size $n_1 < n_2 < \cdots$ such that the
  following holds:
  
  For every sequence of binary codes $\mathcal{C}_1^{(i)}$ on $[n_i]$ with
  rate and distance uniformly bounded by $r_1^{(i)}$ and $(1/2-\epsilon_0)$
  respectively, there exists a sequence of binary lifted codes
  $\mathcal{C}_k^{(i)} =\dsum_{X(k)}(\varphi(\mathcal{C}_1^{(i)}))$ on a collection $X_i(k)$
  with distance $\left(1/2-\epsilon^{\Omega_{\epsilon_0}(1)} \right)$ where $\varphi$ is an explicit linear projection and
  \begin{itemize}
    \item{[Efficient List Decoding]} If $\tilde{y}$ is \lict{\sqrt{\epsilon}} to $\mathcal{C}_k$, then we can compute
          the list $\mathcal{L}(\tilde{y},\mathcal{C}_1,\mathcal{C}_k)$ (c.f.~\cref{def:coupled_code_list}) in time $n^{\epsilon^{-O\left(1\right)}} \cdot f(n)$, where
          $f(n)$ is the running time of a unique decoding algorithm of $\mathcal{C}_1$.
    \item{[Explicit Construction]} The collection $W_{G_i}(k)$ is obtained from length $k$ walks on a Ramanujan
          $d$-regular expander $G_i$ where $k = O\left(\log(1/\epsilon)\right)$, $d=8 \cdot \epsilon^{-O(1)}$ and $\gamma = \epsilon^{O(1)}$.
    \item{[Rate]} The rate $r_k^{(i)}$ of $\mathcal{C}_k^{(i)}$ satisfies $r_k^{(i)} \ge r_1^{(i)} \cdot \epsilon^{O(\log(1/\epsilon))}$.
    \item{[Linearity]} If $\mathcal{C}_1^{(i)}$ is linear, then $\varphi$ is the identity and $\mathcal{C}_k^{(i)}= \dsum_{X(k)}(\mathcal{C}_1^{(i)})$
                       is linear.    
  \end{itemize}
\end{corollary}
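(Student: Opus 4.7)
The plan is to combine the explicit Ramanujan family of Theorem \ref{theo:ramanujan_existence} with the general list decoding statement of Theorem \ref{theo:direct_sum_general_codes_expander_walks}. Let $C > 0$ be the universal constant produced by Theorem \ref{theo:direct_sum_general_codes_expander_walks}, so that for list decoding to go through it suffices to instantiate a $d$-regular graph whose second-largest singular value is at most $\epsilon^{C}$. A $(p+1)$-regular Ramanujan graph has $\sigma_2 \le 2\sqrt{p}/(p+1) \le 2/\sqrt{p}$, so any prime $p \ge 4\epsilon^{-2C}$ suffices.

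First I would invoke Lemma \ref{lemma:prime_finder} with $a = 1$ and $m = 4$ to obtain, in $\poly(n)$ time, a prime $p \equiv 1 \pmod 4$ lying in the interval $[4\epsilon^{-2C},\; 8\epsilon^{-2C}]$. Feeding this $p$ into Theorem \ref{theo:ramanujan_existence} then yields an explicit infinite family $G_1, G_2, \dots$ of $(p+1)$-regular Ramanujan graphs with $\sigma_2(G_i) \le \epsilon^{C}$. By construction the degree satisfies $d = p+1 \le 8\epsilon^{-O(1)}$ and the spectral parameter is $\gamma = \epsilon^{O(1)}$, matching the bounds asserted by the corollary.

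Next, for each $i$ I would apply Theorem \ref{theo:direct_sum_general_codes_expander_walks} to the graph $G_i$ together with the base code $\mathcal{C}_1^{(i)}$. This directly produces the lifted code $\mathcal{C}_k^{(i)} = \dsum_{W_{G_i}(k)}(\varphi(\mathcal{C}_1^{(i)}))$ with $k = O(\log(1/\epsilon))$, distance at least $1/2 - \epsilon^{\Omega_{\epsilon_0}(1)}$, the promised efficient list decoding in time $n^{\epsilon^{-O(1)}} \cdot f(n)$, and the linearity clause when $\mathcal{C}_1^{(i)}$ is linear. Thus all items of the corollary except for the rate bound are inherited verbatim from the theorem applied to the explicit graph family.

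The remaining task is to lower bound the rate. Theorem \ref{theo:direct_sum_general_codes_expander_walks} gives $r_k^{(i)} = r_1^{(i)}/d^{\,k-1}$. Substituting $d = O(\epsilon^{-2C})$ and $k = O(\log(1/\epsilon))$,
\[
  r_k^{(i)} \;\ge\; r_1^{(i)} \cdot \epsilon^{2C(k-1)} \;=\; r_1^{(i)} \cdot \epsilon^{O(\log(1/\epsilon))},
\]
which is exactly the claimed bound. I do not anticipate any genuine obstacle: both ingredients (the Ramanujan construction and the theorem it is being plugged into) are already in place, and the only quantitative care required is ensuring that the prime produced by Lemma \ref{lemma:prime_finder} is large enough to push $\sigma_2$ below the threshold $\epsilon^{C}$ demanded by Theorem \ref{theo:direct_sum_general_codes_expander_walks}.
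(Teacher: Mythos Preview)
Your proposal is correct and follows essentially the same approach as the paper: invoke Lemma~\ref{lemma:prime_finder} with $a=1$, $m=4$, $\alpha=1/2$ and $n=8\epsilon^{-2C}$ to find a suitable prime, plug it into the Ramanujan construction of Theorem~\ref{theo:ramanujan_existence}, and then apply Theorem~\ref{theo:direct_sum_general_codes_expander_walks}. Your write-up is in fact more explicit than the paper's about the rate computation, which the paper simply absorbs into ``the desired claims.''
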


\begin{proof}
   Using \cref{lemma:prime_finder} with $a=1$ and $m=4$, we see that
   given $n,\alpha$, a prime $p$ such that $p \equiv 1 \pmod{4}$ may
   be found in the interval $[(1-\alpha )n,n]$ for large enough
   $n$. For Ramanujan expanders, the condition that
   $ \gamma \leq \epsilon^C$ translates to $p\geq
   4 \cdot \epsilon^{-2C}$. Choose $\alpha = 1/2$ and $n >
   8 \cdot \epsilon^{-2C}$ so that we find a prime greater than
   $4 \cdot \epsilon^{-2C}$, but at most $8 \cdot \epsilon^{-2C}$.

   Based on this prime, we use the
   above~\cref{theo:ramanujan_existence} to get a family of Ramanujan
   graphs $G_1,G_2,\dots $ with $n_1 < n_2 < \cdots $ vertices, such
   that the degree is bounded by $8\epsilon^{-2C}$.
   Using the parameters of this family in \cref{theo:direct_sum_general_codes_expander_walks}, we obtain the desired claims.
\end{proof}

\subsection{Expander Walks are Two-Step Tensorial}\label{subsec:expander_walks_two_step_tensorial}

To apply the list decoding framework we need to establish the
tensorial parameters of expander walks $W_G(k)$ for a
$\gamma$-two-sided expander graph $G$. Although the tensorial property
is precisely what the abstract list decoding framework uses, when
faced with a concrete object such as $W_G(k)$ it will be easier to
prove that it satisfies a \textit{splittable} property defined
in~\cite{AJT19} for complexes which implies the tensorial property. In
turn, this splittable property is defined in terms of some natural
operators denoted \textit{Swap} operators whose definition is recalled
in~\cref{subsubsec:expander_walk_swap_operator} in a manner tailored
to the present case $X(k) = W_G(k)$. Then,
in~\cref{subsubsec:expander_walks_swap_op_splittable}, we formally
define the splittable property and show that the expansion of the Swap
operator is controlled by the expansion parameter $\gamma$ of $G$
allowing us to deduce the splittable parameters of $W_G(k)$.  Finally,
in~\cref{subsec:expander_walk_two_step_tensorial}, we show how
$W_G(k)$ being splittable gives the tensorial parameters. Some results
are quite similar to the hypergraph case in~\cite{AJT19} (which built
on~\cite{BarakRS11}). The key contribution in this new case of
$W_G(k)$ is observing the existence of these new Swap operators along
with their expansion properties.

\subsubsection{Emergence of Swap Operators}\label{subsubsec:emmergence_of_swap}

To motivate the study of Swap operators on $W_G(k)$, we show how they
naturally emerge from the study of $k$-CSPs. The treatment is quite
similar to the hypergraph case developed in~\cite{AJT19}, but this
will give us the opportunity to formalize the details that are
specific to $W_G(k)$. Suppose that we solve a $k$-CSP instance as
defined in~\cref{subsec:sos-relax} whose constraints were placed on
the tuples corresponding to walks in $W_G(k)$. The result is a local
PSD ensemble $\set{\rv Z}$ which can then be fed to the Propagation
Rounding~\cref{algo:prop-rd}.  It is easy to show that the tensorial
condition of~\cref{eq:tensorial_exp_walk_example} (below) is
sufficient to guarantee an approximation to this $k$-CSP on $W_G(k)$
within $\mu$ additive error. The precise parameters are given
in~\cref{subsub:k_csp_walk_constraints}. For now, we take this
observation for granted and use it to show how the Swap operators
emerge in obtaining the inequality
\begin{equation}\label{eq:tensorial_exp_walk_example}
   \ExpOp_{\Omega} \ExpOp_{w \sim W_G(k)}{ \norm{\set{\rv Z'_{w}} - \set*{\rv Z'_{w_1}}\cdots \set*{\rv Z'_{w_k}}}_1 } \le \mu
\end{equation}
present in the definition of tensoriality.

The following piece of notation will be convenient when referring to
sub-walks of a given walk.

\begin{definition}[Sub-Walk]
  Given $1 \le i \le j \le k$ and $w=(w_1,\dots,w_k) \in W_G(k)$, we
  define the sub-walk $w(i,j)$ from $w_i$ to $w_j$ as
  $$
  w(i,j) \coloneqq (w_i,w_{i+1},\dots,w_j).
  $$
\end{definition}

We will need the following simple observation about marginal
distributions of $\Pi_k$ on sub-walks.

\begin{claim}[Marginals of the walk distribution]\label{claim:walk_marginals}
  Let $k \in \mathbb{N}^+$ and $1 \le i \le j \le k$. Then sampling $w \sim \Pi_k$ in $W_G(k)$ and taking $w(i,j)$
  induces the distribution $\Pi_{j-i+1}$ on $W_G(j-i+1)$.
\end{claim}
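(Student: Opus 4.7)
The plan is to unpack the definition of $\Pi_k$ and verify the two defining properties of $\Pi_{j-i+1}$ for the induced distribution on $w(i,j)$, namely (a) the marginal distribution of the starting vertex $w_i$ is stationary, and (b) conditioned on $w_i$, the remainder of the sub-walk is a random walk of length $j-i$ obtained by iterating the normalized random walk operator of $G$.

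First, I would handle the starting vertex marginal. Under $\Pi_k$, the vertex $w_1$ is drawn from the stationary distribution $\pi$ of $G$, and each subsequent $w_{t+1}$ is obtained by applying the normalized random walk operator to $w_t$. Since $\pi$ is preserved by the random walk operator, a straightforward induction on $t$ shows $w_t \sim \pi$ for every $1 \le t \le k$; specializing to $t=i$ gives (a).

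Next I would handle the conditional part. By the Markov property built into the definition of $\Pi_k$ (the $(t+1)$-st vertex depends only on the $t$-th via the random walk operator), the law of $(w_i, w_{i+1}, \ldots, w_j)$ given $w_i = u$ is exactly the law of a length $j-i$ walk started at $u$ and evolving under the normalized random walk operator. Combining with (a), the joint distribution of $w(i,j) = (w_i,\ldots,w_j)$ is: draw $w_i$ from $\pi$, then take $j-i$ steps of the random walk. This is precisely $\Pi_{j-i+1}$ on $W_G(j-i+1)$, as in \cref{def:walk_collection}.

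There is no real obstacle here; the statement is essentially a restatement of two textbook facts about Markov chains (stationarity preservation and the Markov property) applied to the definition in \cref{def:walk_collection}. The only thing to be careful about is matching indices: the sub-walk has $j-i+1$ vertices and hence corresponds to a length $j-i$ walk in the convention of \cref{def:walk_collection} (which parametrizes $W_G(k)$ by number of vertices, not number of edges), so the induced distribution is $\Pi_{j-i+1}$ rather than $\Pi_{j-i}$.
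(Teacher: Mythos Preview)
Your proposal is correct and matches the paper's own proof essentially verbatim: the paper also argues that $w_i$ inherits the stationary distribution $\Pi_1$ from $w_1$ via $(i-1)$ random walk steps, and then that $(w_i,\ldots,w_j)$ is distributed as $\Pi_{j-i+1}$ by taking $(j-i)$ further steps. Your version is slightly more explicit about naming the Markov property and tracking the vertex-count versus edge-count indexing, but the argument is the same.
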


\begin{proof}
  Let $w=(w_1,\dots,w_i,\dots,w_j,\dots,w_k) \sim \Pi_k$. Since
  $w_1 \sim \Pi_1$ where $\Pi_1$ is the stationary measure of $G$ and
  $w_2,\dots,w_i$ are obtained by $(i-1)$ successive steps of a random
  walk on $G$, the marginal distribution on $w_i$ is again the
  stationary measure $\Pi_1$.  Then by taking $(j-i)$ successive
  random walk steps from $w_i$ on $G$, we obtain a walk
  $(w_i,\dots,w_j)$ distributed according to $\Pi_{j-i+1}$.
\end{proof}

We also need the notion of a \textit{splitting tree} as follows.
\begin{definition}[Splitting Tree~\cite{AJT19}]
  We say that a binary tree $\tree$ is a $k$-splitting tree if it has
  exactly $k$ leaves and
  \begin{itemize}
    \item the root of $\tree$ is labeled with $k$ and all other vertices are labeled with
          positive integers,
    \item the leaves are labeled with $1$, and
    \item each non-leaf vertex satisfies the property that its label is the sum of
          the labels of its two children.
  \end{itemize}
\end{definition}

The Swap operators arise naturally from the following triangle
inequality where the quantity $\ExpOp_{w \sim W_G(k)}{ \norm{\set{\rv
Z'_{w}} - \prod_{i=1}^k\set*{\rv Z'_{w(i)}}}_1}$ is upper bounded by a
sum of terms of the form
$$
\ExpOp_{w \sim W_G(k_1+k_2)}{ \norm{\set{\rv Z'_{w}} - \set*{\rv Z'_{w(1,k_1)}}\set*{\rv Z'_{w(k_1+1,k_2)}} }_1}.
$$
We view the above expectation as taking place over the edges
$W_G(k_1+k_2)$ of a bipartite graph on vertex bipartition
$(W_G(k_1),W_G(k_2))$. This graph gives rise to a Swap operator which
we formally define later
in~\cref{subsubsec:expander_walk_swap_operator}. The following claim
shows how a splitting tree defines all terms (and hence also their
corresponding graphs and operators) that can appear in this upper
bound.

\begin{claim}[Triangle inequality]
  Let $k \in \mathbb{N}^+$ and $\tree$ be a $k$-splitting tree. Then
  $$
   \ExpOp_{w \sim W_G(k)}{ \norm{\set{\rv Z'_{w}} - \prod_{i=1}^k\set*{\rv Z'_{w(i)}}}_1} \le \sum_{(k_1,k_2)} \ExpOp_{w \sim W_G(k_1+k_2)}{ \norm{\set{\rv Z'_{w}} - \set*{\rv Z'_{w(1,k_1)}}\set*{\rv Z'_{w(k_1+1,k_2)}} }_1},
  $$
  where the sum $\sum_{(k_1,k_2)}$ is taken over all pairs of
  labels of the two children of each internal node of $\tree$.
\end{claim}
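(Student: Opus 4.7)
The plan is to proceed by induction on $k$, or equivalently on the depth of the splitting tree $\tree$. In the base case $k=1$, the tree $\tree$ has no internal nodes, $\{\rv Z'_w\} = \{\rv Z'_{w(1)}\}$, and the inequality reduces to $0 \le 0$. In the inductive step, let the root of $\tree$ have children with labels $k_1, k_2$ (so $k_1 + k_2 = k$), rooted subtrees $\tree_1, \tree_2$ which are $k_1$- and $k_2$-splitting trees respectively.

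First I would fix $w = (w_1,\dots,w_k) \in W_G(k)$ and apply the usual triangle inequality, inserting the intermediate distribution $\{\rv Z'_{w(1,k_1)}\}\cdot\{\rv Z'_{w(k_1+1,k)}\}$:
\begin{align*}
\norm{\set{\rv Z'_{w}} - \prod_{i=1}^k\set*{\rv Z'_{w(i)}}}_1
&\le \norm{\set{\rv Z'_{w}} - \set{\rv Z'_{w(1,k_1)}}\set{\rv Z'_{w(k_1+1,k)}}}_1 \\
&\quad + \norm{\set{\rv Z'_{w(1,k_1)}}\set{\rv Z'_{w(k_1+1,k)}} - \prod_{i=1}^k\set*{\rv Z'_{w(i)}}}_1.
\end{align*}
For the second summand, I would invoke the standard tensorization bound for $\ell_1$ distance between product measures, namely $\norm{\mu_1 \mu_2 - \nu_1 \nu_2}_1 \le \norm{\mu_1 - \nu_1}_1 + \norm{\mu_2 - \nu_2}_1$ (which follows from a one-term telescoping and the triangle inequality). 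Applied with $\mu_j$ the joint distribution on the two halves and $\nu_j$ the corresponding products of singletons, this yields
\[
\norm{\set{\rv Z'_{w(1,k_1)}}\set{\rv Z'_{w(k_1+1,k)}} - \prod_{i=1}^k\set*{\rv Z'_{w(i)}}}_1 \le \sum_{j=1}^{2} \norm{\set{\rv Z'_{w(I_j)}} - \prod_{i \in I_j} \set*{\rv Z'_{w(i)}}}_1,
\]
where $I_1 = [1,k_1]$ and $I_2 = [k_1+1,k]$.

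Next I would take expectations over $w \sim \Pi_k$. The first term after the triangle inequality is exactly the root contribution $(k_1,k_2)$ in the sum on the right-hand side. For the two remaining terms, I would invoke \cref{claim:walk_marginals}, which states that the marginal distribution of the sub-walk $w(1,k_1)$ (resp.\ $w(k_1+1,k)$) under $\Pi_k$ is precisely $\Pi_{k_1}$ (resp.\ $\Pi_{k_2}$). Therefore
\[
\E_{w \sim \Pi_k} \norm{\set{\rv Z'_{w(1,k_1)}} - \prod_{i=1}^{k_1} \set*{\rv Z'_{w(i)}}}_1 = \E_{w' \sim \Pi_{k_1}} \norm{\set{\rv Z'_{w'}} - \prod_{i=1}^{k_1}\set*{\rv Z'_{w'(i)}}}_1,
\]
and similarly for the $[k_1+1,k]$ half. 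Applying the inductive hypothesis with the splitting trees $\tree_1$ and $\tree_2$ bounds these two quantities by the sums over the internal nodes of $\tree_1$ and $\tree_2$ respectively. Combining with the root contribution $(k_1,k_2)$ gives a sum over all internal nodes of $\tree$, which is exactly the claimed right-hand side.

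The proof involves no deep obstacle; the only subtlety worth flagging is the coupling of marginals via \cref{claim:walk_marginals}, which is what makes the inductive step compatible with the distribution $\Pi_k$ on the full walk (as opposed to, say, uniform on tuples, where pushing forward onto sub-tuples would not coincide with $\Pi_{k_1}$). This marginal-preservation property is precisely what allows the recursion to close cleanly at every internal node of the splitting tree.
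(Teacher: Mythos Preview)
Your proposal is correct and follows essentially the same approach as the paper: both proceed by induction on $k$, split at the root of $\tree$, apply the triangle inequality together with the marginal-preservation property of \cref{claim:walk_marginals}, and recurse on the two subtrees. The only cosmetic difference is that you package the second step as a single application of the product-measure bound $\norm{\mu_1\mu_2 - \nu_1\nu_2}_1 \le \norm{\mu_1 - \nu_1}_1 + \norm{\mu_2 - \nu_2}_1$, whereas the paper writes out the corresponding three-term telescoping explicitly.
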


\begin{proof}
  We prove the claim by induction on $k$. Let $(k_1,k_2)$ be the
  labels of the children of the root of the splitting tree
  $\tree$. Suppose $\tree_1$ and $\tree_2$ are the corresponding
  splitting trees rooted at these children with labels $k_1$ and
  $k_2$, respectively. By this choice, we have $k=k_1+k_2$. Applying
  the triangle inequality yields  
  \begin{align*}
    \ExpOp_{w \sim W_G(k)}{ \norm{\set{\rv Z'_{w}} - \prod_{i=1}^k\set*{\rv Z'_{w_i}}}_1}  \le &
             \ExpOp_{w \sim W_G(k)}{ \norm{\set{\rv Z'_{w}} - \set*{\rv Z'_{w(1,k_1)}}\set*{\rv Z'_{w(k_1+1,k_2)}} }_1} + \\
          &  \ExpOp_{w \sim W_G(k)}{ \norm{\set*{\rv Z'_{w(1,k_1)}}\set*{\rv Z'_{w(k_1+1,k_2)}} - \prod_{i=1}^{k_1}\set*{\rv Z'_{w_i}} \set*{\rv Z'_{w(k_1+1,k_2)}}}_1} + \\
          &  \ExpOp_{w \sim W_G(k)}{ \norm{\prod_{i=1}^{k_1}\set*{\rv Z'_{w_i}} \set*{\rv Z'_{w(k_1+1,k_2)}} - \prod_{i=1}^k\set*{\rv Z'_{w_i}} }_1}.
  \end{align*}
  Using the marginalization given by~\cref{claim:walk_marginals} on the second and third terms and simplifying, we get
 \begin{align*}
    \ExpOp_{w \sim W_G(k)}{ \norm{\set{\rv Z'_{w}} - \prod_{i=1}^k\set*{\rv Z'_{w_i}}}_1} \le &\ExpOp_{w \sim W_G(k)}{ \norm{\set{\rv Z'_{w}} - \set*{\rv Z'_{w(1,k_1)}}\set*{\rv Z'_{w(k_1+1,k_2)}} }_1} +\\
                                                                                             &\ExpOp_{w \sim W_G(k_1)}{ \norm{\set*{\rv Z'_{w}} - \prod_{i=1}^{k_1}\set*{\rv Z'_{w_i}}}_1} +
                                                                                             \ExpOp_{w \sim W_G(k_2)}{ \norm{\set*{\rv Z'_{w}} - \prod_{i=1}^{k_2}\set*{\rv Z'_{w_i}}}_1} .
  \end{align*}
  Applying the induction hypothesis to the second term with tree
  $\tree_1$ and to the third term with tree $\tree_2$ finishes the
  proof.
\end{proof}

\subsubsection{Swap Operators Arising from Expander Walks}\label{subsubsec:expander_walk_swap_operator}

We define the Swap operator associated to walks on a given graph $G$
as follows.

\begin{definition}[Graph Walk Swap Operator]\label{def:graph_walk_swap}
  Let $G=(V,E,w)$ be a weighted graph. Let $k_1,k_2 \in
  \mathbb{N}^{+}$ be such that $k = k_1 + k_2$. We define the
  graph walk Swap operator
  $$
  \wswap{k_1}{k_2} \colon \mathbb{R}^{W_G(k_2)} \to \mathbb{R}^{W_G(k_1)}
  $$
  such that for every $f \in \mathbb{R}^{W_G(k_2)}$,
  \begin{align*}
      \left(\wswap{k_1}{k_2}(f)\right)(w) \coloneqq \E_{w': ww' \in W(k)}[f(w')],
  \end{align*}
  where $ww'$ denotes the concatenation of the walks
  $w$ and $w'$.  The operator $\wswap{k_1}{k_2}$ can be
  defined more concretely in matrix form such that for every $w
  \in W_G(k_1)$ and $w' \in W_G(k_2)$, 
  \begin{align*}
      \left(\wswap{k_1}{k_2}\right)_{w,w'} \coloneqq \frac{\Pi_k(ww')}{\Pi_{k_1}(w)}.
  \end{align*}  
\end{definition}

\begin{remark}
  Swap operators are Markov operators, so the largest singular value of a Swap operator is bounded by $1$.
\end{remark}

Unlike the Swap operators for HDXs described in~\cite{AJT19}, which are defined using unordered subsets of hyperedges, the Swap operators $\wswap{k_1}{k_2}$ use sub-walks and are thus directed operators.
Instead of analyzing such an operator directly, we will examine the symmetrized version
  $$
  \mathcal{U}(\wswap{k_1}{k_2}) = 
  \begin{pmatrix}
    0 & \wswap{k_1}{k_2}\\
    \left(\wswap{k_1}{k_2}\right)^{\dag} & 0
  \end{pmatrix}
  $$
and show that $\mathcal{U}(\wswap{k_1}{k_2})$ is the normalized random walk operator of an undirected graph.
In particular, $\mathcal{U}(\wswap{k_1}{k_2})$ defines an undirected weighted bipartite graph on the vertices $W_G(k_1) \cup W_G(k_2)$, where each edge $ww'$ in this graph is weighted according to the transition probability from one walk to the other whenever one of $w$, $w'$ is in $W_G(k_1)$ and the other is in $W_G(k_2)$.
This becomes clear when taking a closer look at the adjoint operator $(\wswap{k_1}{k_2})^{\dag}$.

\begin{claim}
  Let $k_1, k_2 \in \N$ and $k = k_1 + k_2$.
  Define the operator
  $\fswap{k_1}{k_2} \colon \mathbb{R}^{W_G(k_1)} \to \mathbb{R}^{W_G(k_2)}$ such that for every $f \in \R^{W_G(k_1)}$,
  $$
  \left(\fswap{k_1}{k_2} (f)\right)(w') \coloneqq \E_{w: ww' \in W(k)} [f(w)]
  $$
  for every $w' \in W_G(k_2)$.
  Then
  $$
  \left(\wswap{k_1}{k_2}\right)^{\dag} = \fswap{k_1}{k_2}.
  $$
\end{claim}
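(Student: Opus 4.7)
The plan is to verify the adjoint relation directly by computing both sides of the defining identity $\langle \wswap{k_1}{k_2} f, g \rangle_{k_1} = \langle f, (\wswap{k_1}{k_2})^{\dag} g \rangle_{k_2}$, where the inner products on $\R^{W_G(k_1)}$ and $\R^{W_G(k_2)}$ are the ones weighted by the stationary distributions $\Pi_{k_1}$ and $\Pi_{k_2}$ (these are the natural inner products making $\wswap{k_1}{k_2}$ a Markov/stochastic operator between probability spaces). The key computation is to rewrite everything as a single sum over concatenated walks $ww' \in W_G(k)$ weighted by $\Pi_k(ww')$, which is symmetric in $w$ and $w'$, making the adjoint readable off immediately.

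Concretely, I would first expand
\[
\langle \wswap{k_1}{k_2} f, g \rangle_{k_1} ~=~ \sum_{w \in W_G(k_1)} \Pi_{k_1}(w) \, g(w) \sum_{w' \in W_G(k_2)} \frac{\Pi_k(ww')}{\Pi_{k_1}(w)} f(w') ~=~ \sum_{ww' \in W_G(k)} \Pi_k(ww') \, f(w') \, g(w),
\]
so the $\Pi_{k_1}(w)$ factors cancel and one is left with the symmetric quantity $\E_{ww' \sim \Pi_k}[f(w') g(w)]$. Next, for any candidate right-hand operator $B \colon \R^{W_G(k_1)} \to \R^{W_G(k_2)}$, I would write
\[
\langle f, B g \rangle_{k_2} ~=~ \sum_{w' \in W_G(k_2)} \Pi_{k_2}(w') f(w') (Bg)(w').
\]
Matching the two expressions for arbitrary $f, g$ forces
\[
(Bg)(w') ~=~ \sum_{w \in W_G(k_1)} \frac{\Pi_k(ww')}{\Pi_{k_2}(w')} g(w) ~=~ \E_{w \colon ww' \in W_G(k)}[g(w)],
\]
which is exactly the definition of $\fswap{k_1}{k_2} g$ at $w'$. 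The only small thing to justify is that $\Pi_k(ww')/\Pi_{k_2}(w')$ is the correct conditional probability, namely the marginal of $\Pi_k$ on the final $k_2$ coordinates equals $\Pi_{k_2}$; this is given by \cref{claim:walk_marginals} applied with $(i,j) = (k_1+1, k)$, ensuring the normalization used when writing the conditional expectation is valid.

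There is no genuine obstacle here: once the inner products are chosen with the correct reversible weights, the proof is a two-line symmetry computation using $\Pi_k(ww')$ as the joint distribution over pairs of compatible sub-walks. The only place where one must be careful is being explicit that the underlying inner products are the $\Pi_{k_1}$- and $\Pi_{k_2}$-weighted ones (as opposed to the flat counting measure), since $\wswap{k_1}{k_2}$ is not self-adjoint with respect to the flat inner product; this is also what makes the symmetrization $\mathcal U(\wswap{k_1}{k_2})$ into the normalized random walk operator of a genuine undirected weighted bipartite graph on $W_G(k_1) \sqcup W_G(k_2)$, which is the setup required for the expansion arguments later in \cref{subsubsec:expander_walks_swap_op_splittable}.
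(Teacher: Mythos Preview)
Your proposal is correct and is essentially the same argument as the paper's: both verify the adjoint identity by expanding each side of $\langle f, \wswap{k_1}{k_2} g\rangle = \langle \fswap{k_1}{k_2} f, g\rangle$ (with the $\Pi_{k_1}$- and $\Pi_{k_2}$-weighted inner products) and reducing to the common symmetric expression $\sum_{ww' \in W_G(k)} \Pi_k(ww')\, f(w)\, g(w')$. Your explicit remark that the marginal of $\Pi_k$ on the last $k_2$ coordinates equals $\Pi_{k_2}$ (via \cref{claim:walk_marginals}) is a small clarification the paper leaves implicit, but otherwise the proofs are identical.
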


\begin{proof}
  Let $f \in C^{W_G(k_1)}$ and $g \in C^{W_G(k_2)}$.
  We show that $\ip{f}{\wswap{k_1}{k_2} g} = \ip{\fswap{k_1}{k_2} f}{g}$.
  On one hand we have
  \begin{align*}
    \ip{f}{\wswap{k_1}{k_2} g} &= \E_{w \in W_G(k_1)} \left[f(w) \E_{w': ww' \in W_G(k)}  [g(w')] \right] \\
                                               &= \E_{w \in W_G(k_1)} \left[f(w) \sum_{w' \in W_G(k_2)} \frac{\Pi_k(ww')}{\Pi_{k_1}(w)} g(w') \right] \\
                                               &= \sum_{w \in W_G(k_1)} \Pi_{k_1}(w) f(w) \sum_{w' \in W_G(k_2)} \frac{\Pi_k(ww')}{\Pi_{k_1}(w)}  g(w')\\
                                               & = \sum_{ww' \in W_G(k)} f(w) g(w') \Pi_k(ww').
  \end{align*}
  On the other hand we have
  \begin{align*}
  \ip{\fswap{k_1}{k_2} f}{ g} &= \E_{w' \in W_G(k_2)} \left[\E_{w: ww' \in W_G(k)} [f(w)] g(w') \right] \\
                                               &= \E_{w' \in W_G(k_2)} \left[ \sum_{w \in W_G(k_1)} \frac{\Pi_k(ww')}{\Pi_{k_2}(w')} f(w) g(w') \right]\\
                                               &= \sum_{w' \in W_G(k_2)} \Pi_{k_2}(w') \sum_{w \in W_G(k_1)} \frac{\Pi_k(ww')}{\Pi_{k_2}(w')} f(w) g(w')\\
                                               &= \sum_{ww' \in W_G(k)} f(w) g(w') \Pi_k(ww').
  \end{align*}
  Hence, $\fswap{k_1}{k_2} = (\wswap{k_1}{k_2})^{\dag}$ as claimed.
\end{proof}

\subsubsection{Swap Operators are Splittable}\label{subsubsec:expander_walks_swap_op_splittable}

At a high level, the expansion of a certain collection of Swap walks
$\wswap{k_1}{k_2}$ ensures that we can round the SOS solution and this
gives rise to the \textit{splittable} notion, which we tailor to the
$W_G(k)$ case after recalling some notation.

\begin{remark}
  We establish the definitions in slightly greater generality than needed
  for our coding application since this generality is useful for
  solving $k$-CSP instances on $W_G(k)$ for more general graphs $G$
  that are not necessarily expanders
  (c.f.~\cref{subsub:k_csp_walk_constraints}).  Solving these kinds of
  $k$-CSPs might be of independent interest. For the coding
  application, the threshold rank (\cref{def:graph_trank}) will be one,
  i.e., we will be working with expander graphs.
\end{remark}

\begin{definition}[Threshold Rank of Graphs (from~\cite{BarakRS11})]\label{def:graph_trank}
  Let $G = (V,E,w)$ be a weighted graph on $n$ vertices and $\Aye$ be
  its normalized random walk matrix. Suppose the eigenvalues of $\Aye$
  are $1 = \lambda_1 \ge \cdots \ge \lambda_n$. Given a parameter
  $\tau \in (0,1)$, we denote the threshold rank of $G$ by $\Rank_{\ge
    \tau}(\Aye)$ (or $\Rank_{\ge \tau}(G)$) and define it as
  \[
    \Rank_{\ge \tau}(\Aye) \coloneqq \left\vert \{i \mid \lambda_i \ge \tau \} \right\vert.
  \]
\end{definition}

Let $\SwapT(\tree, W_G(\le k))$ be the set of all swap
graphs over $W_G(\le k)$ finding representation in the
splitting tree $\tree$, i.e., for each internal node with leaves
labeled $k_1$ and $k_2$ we associate the undirected Swap operator
$\mathcal U(\wswap{k_1}{k_2})$.

Given a threshold parameter $\tau \le 1$ and a set of normalized
adjacency matrices $\mathcal{A} = \{\Aye_1,\dots,\Aye_s\}$, we define
the threshold rank $\Rank_{\ge \tau}(\mathcal{A})$ of $\mathcal{A}$ as
$$
\Rank_{\ge \tau}(\mathcal{A}) \coloneqq \max_{\Aye \in \mathcal{A}} ~\Rank_{\ge \tau}(\Aye),
$$
where $\Rank_{\ge \tau}(\Aye)$ denotes the usual threshold rank of
$\Aye$ as in~\cref{def:graph_trank}.

\begin{definition}[$(\tree, \tau, r)$-splittability~\cite{AJT19}]
    A collection $W_G(\le k)$ is said to be $(\tree, \tau, r)$-splittable if
    $\tree$ is a $k$-splitting tree and 
    \[ \Rank_{\ge \tau}(\SwapT(\tree, W_G)) \le r.\]
    If there exists some $k$-splitting tree $\tree$ such that
    $W_G(\le k)$ is $(\tree, \tau, r)$-splittable, the
    instance $W_G(\le k)$ will be called a $(\tau,
    r)$-splittable instance.
\end{definition}

We show that the expansion of $\mathcal U(\wswap{k_1}{k_2})$ is inherited from the
expansion of its defining graph $G$. To this end we will have to
overcome the hurdle that $W_G(k) \subseteq V^{k}$ is not necessarily a
natural product space, but it can be made so with the proper representation.

\begin{lemma}\label{lemma:swap_matrix_rep}
  Let $G=(V=[n],E)$ be a $d$-regular graph with normalized random walk
  operator $\Aye_G$. Then for every $k_1,k_2 \in \mathbb{N}^{+}$,
  there are representations of $\wswap{k_1}{k_2}$ and
  $\Aye_G$ as matrices such that
  $$
  \wswap{k_1}{k_2} = \matr A_G \otimes \Jay/d^{k_2-1},
  $$
  where $\Jay \in \mathbb{R}^{[d]^{k_1-1} \times [d]^{k_2-1}}$ is the
  all ones matrix.
\end{lemma}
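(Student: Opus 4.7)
My plan is to compute the entries of $\wswap{k_1}{k_2}$ directly from~\cref{def:graph_walk_swap} using $d$-regularity, and then choose indexings of $W_G(k_1)$ and $W_G(k_2)$ that expose the tensor-product structure of the resulting formula.

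First, I would observe that for a $d$-regular graph $G$, the stationary measure $\Pi_1$ is uniform on $V$, so for any tuple $u \in V^\ell$ we have $\Pi_\ell(u) = 1/(n d^{\ell-1})$ if $u$ is a valid walk and $\Pi_\ell(u) = 0$ otherwise. Substituting into~\cref{def:graph_walk_swap} and using $k = k_1 + k_2$ yields
$$\left(\wswap{k_1}{k_2}\right)_{w,w'} ~=~ \frac{\Pi_k(ww')}{\Pi_{k_1}(w)} ~=~ \frac{1}{d^{k_2}} \cdot \mathbf{1}\bigl[w_{k_1} w'_1 \in E\bigr].$$
The key feature of this formula is that the entry depends on $(w, w')$ only through the pair $(w_{k_1}, w'_1)$, i.e., through $w$'s terminal vertex and $w'$'s initial vertex.

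Next, I would fix an arbitrary ordering of the $d$ edges incident to each vertex of $G$. Because $G$ is $d$-regular, there are exactly $d^{k_1-1}$ walks in $W_G(k_1)$ terminating at any given vertex, so this ordering induces a bijection $W_G(k_1) \leftrightarrow V \times [d]^{k_1-1}$ that sends $w$ to $(w_{k_1}, s)$, where $s$ records which edge is used at each step when $w$ is read backwards from $w_{k_1}$. Analogously, $W_G(k_2) \leftrightarrow V \times [d]^{k_2-1}$ via $w' \mapsto (w'_1, t)$. Using these identifications to represent $\wswap{k_1}{k_2}$ as a matrix, and writing $\matr A_G$ in the standard $V \times V$ basis so that $(\matr A_G)_{u,v} = (1/d)\cdot\mathbf{1}[uv \in E]$, the entry formula becomes
$$\left(\wswap{k_1}{k_2}\right)_{(u,s),(v,t)} ~=~ (\matr A_G)_{u,v} \cdot \frac{1}{d^{k_2-1}} ~=~ (\matr A_G)_{u,v} \cdot \frac{\Jay_{s,t}}{d^{k_2-1}},$$
which is precisely the $((u,s),(v,t))$ entry of $\matr A_G \otimes (\Jay/d^{k_2-1})$, establishing the claim.

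There is no genuine obstacle in this argument; the content of the lemma is the conceptual observation that, on a regular graph, the ``endpoint'' of a walk and the ``sequence of edges traversed'' are independent degrees of freedom. The latter contributes a rank-one all-ones factor $\Jay/d^{k_2-1}$, leaving the adjacency/transition data captured by $\matr A_G$ as the only nontrivial factor; this decoupling is exactly what the required tensor decomposition encodes.
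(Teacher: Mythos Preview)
Your proof is correct and follows essentially the same approach as the paper: compute the matrix entry as $\mathbf{1}[w_{k_1}\sim w'_1]/d^{k_2}$ using $d$-regularity, then group rows by terminal vertex and columns by initial vertex to expose the block/tensor structure. Your explicit bijections $W_G(k_1)\leftrightarrow V\times[d]^{k_1-1}$ via edge-labels are just a concrete realization of the paper's ``arbitrary order within each set,'' so the two arguments coincide.
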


\begin{proof}
Partition the set of walks $W_G(k_1)$ into the sets $W_1, \dots, W_n$,
where $w \in W_i$ if the last vertex of the walk is $w_{k_1} = i$.
Similarly, partition $W_G(k_2)$ into the sets $W_1', \dots, W_n'$,
where $w' \in W_j'$ if the first vertex of the walk is $w'_1 = j$.
Note that $\abs{W_i} = d^{k_1-1}$ for all $i$ and $\abs{W_j'} =
d^{k_2-1}$ for all $j$.

Now order the rows of the matrix $\wswap{k_1}{k_2}$ so that all of the
rows corresponding to walks in $W_1$ appear first, followed by those
for walks in $W_2$, and so on, with an arbitrary order within each
set.  Do a similar re-ordering of the columns for the sets
$W_1', \dots, W_n'$.  Observe that
$$
\left(\wswap{k_1}{k_2}\right)_{w,w'} = \frac{\Pi_{k_1+k_2}(ww')}{\Pi_{k_1}(w)} = \frac{\one\left[w_{k_1} \text{ is adjacent to } w'_1 \right]}{d^{k_2-1}},
$$
which only depends on the adjacency of the last vertex of $w$ and the
first vertex of $w'$.  If the vertices $i$ and $j$ are adjacent, then
$\left(\wswap{k_1}{k_2}\right)_{w,w'} = 1/d^{k_2-1}$ for every $w \in
W_i$ and $w' \in W_j'$; otherwise,
$\left(\wswap{k_1}{k_2}\right)_{w,w'} = 0$.  Since the walks in the
rows and columns are sorted according to their last and first
vertices, respectively, the matrix $\wswap{k_1}{k_2}$ exactly matches
the tensor product $\matr A_G \otimes \Jay/d^{k_2-1}$, where the rows
and columns of $\matr A_G$ are sorted according to the usual ordering
on $[n]$.
\end{proof}

\begin{corollary}
  Let $G=(V,E)$ be a $\gamma$-two-sided spectral expander with normalized random walk operator $\Aye_G$.
  Then for every $k_1,k_2 \in \mathbb{N}^{+}$,
  $$
  \lambda_2(\mathcal U(\wswap{k_1}{k_2})) \leq \gamma.
  $$
\end{corollary}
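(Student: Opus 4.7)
The plan is to leverage the explicit Kronecker product representation established in the preceding lemma, namely
$\wswap{k_1}{k_2} = \matr A_G \otimes \Jay/d^{k_2-1}$,
and to reduce the eigenvalue question on the symmetrized operator to a singular value question on this Kronecker product. Since $\mathcal U(\wswap{k_1}{k_2})$ is the bipartite symmetrization of $\wswap{k_1}{k_2}$ (viewed as a Markov operator between the weighted spaces $L^2(W_G(k_1),\Pi_{k_1})$ and $L^2(W_G(k_2),\Pi_{k_2})$), its nonzero eigenvalues come in $\pm$ pairs equal to $\pm\sigma_i(\wswap{k_1}{k_2})$. The top eigenvalue $+1$ corresponds to the constant functions, so it suffices to bound $\sigma_2(\wswap{k_1}{k_2}) \le \gamma$.

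First I would make precise the Hilbert space structure. Because $G$ is $d$-regular the stationary distribution is uniform, so $\Pi_{k_i}$ is the uniform probability measure on $W_G(k_i)$, and we can identify $L^2(W_G(k_i),\Pi_{k_i}) \cong L^2(V,\text{unif}) \otimes \ell^2([d^{k_i-1}],\text{unif})$ via the last-vertex (resp.~first-vertex) fiber decomposition used in the matrix representation above. Under this identification, the operator $\Jay/d^{k_2-1}$ is exactly the averaging operator $f \mapsto (\Ex{}{f})\cdot\mathbf{1}$ between fiber spaces (with uniform probability measures on each), which has rank one, operator norm $1$, and all other singular values equal to $0$.

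Next I would apply the standard fact that the singular values of a Kronecker product are products of singular values of the factors. Since $\matr A_G$ is self-adjoint on $L^2(V,\text{unif})$, its singular values are the absolute values of its eigenvalues, i.e.~$\sigma_i(\matr A_G) = |\lambda_i(\matr A_G)|$; the two-sided $\gamma$-expansion hypothesis gives $\sigma_1(\matr A_G)=1$ and $\sigma_2(\matr A_G)\le\gamma$. Combining with the singular values $\{1,0,0,\dots\}$ of the averaging operator, the nonzero singular values of $\wswap{k_1}{k_2}$ are precisely the singular values of $\matr A_G$, and in particular $\sigma_2(\wswap{k_1}{k_2}) \le \gamma$.

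I do not anticipate a serious obstacle here; the only bookkeeping care required is to verify that the inner products being used to compute adjoints and singular values are the weighted ones coming from $\Pi_{k_1}$ and $\Pi_{k_2}$, so that $\wswap{k_1}{k_2}$ is genuinely a Markov operator (with top singular value $1$ attained on constants) and so that $\mathcal U(\wswap{k_1}{k_2})$ is the normalized random walk matrix of an undirected weighted bipartite graph. Since $d$-regularity makes all the relevant measures uniform this verification is immediate, and the eigenvalue bound $\lambda_2(\mathcal U(\wswap{k_1}{k_2}))=\sigma_2(\wswap{k_1}{k_2})\le\gamma$ follows.
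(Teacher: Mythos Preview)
Your proposal is correct and follows essentially the same route as the paper: both use the Kronecker representation $\wswap{k_1}{k_2}=\matr A_G\otimes \Jay/d^{k_2-1}$ to identify the nonzero singular values of $\wswap{k_1}{k_2}$ with those of $\matr A_G$, and then pass from the singular values of $\wswap{k_1}{k_2}$ to the eigenvalues of the bipartite symmetrization $\mathcal U(\wswap{k_1}{k_2})$ via the standard $\pm\sigma_i$ correspondence. The paper presents the latter step by explicitly writing out $\bigl(\mathcal U(\wswap{k_1}{k_2})\bigr)^{\dag}\,\mathcal U(\wswap{k_1}{k_2})$ as a block-diagonal matrix, while you invoke the same fact more directly; the content is identical.
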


\begin{proof}
  To make the presentation reasonably self-contained, we include the proof of the well-known connection between
  the singular values of $\wswap{k_1}{k_2}$ and the eigenvalues
  of $\mathcal U(\wswap{k_1}{k_2})$. Using~\cref{lemma:swap_matrix_rep} and the fact that
   $\sigma_i(\Aye_G \otimes \Jay/d^{k_2-1})
   = \sigma_i(\Aye_G)$, we have $\sigma_i(\wswap{k_1}{k_2}) = \sigma_i(\Aye_G)$.
   Since
   	$$\left(\mathcal U(\wswap{k_1}{k_2})^{\dag}\right) \mathcal U(\wswap{k_1}{k_2}) =
		\begin{pmatrix}
    			\wswap{k_1}{k_2} \left(\wswap{k_1}{k_2}\right)^{\dag} & 0 \\
    			0 & \left(\wswap{k_1}{k_2}\right)^{\dag} \wswap{k_1}{k_2} 
  		\end{pmatrix},$$
  the nonzero singular values of $\mathcal U(\wswap{k_1}{k_2})$ are the same as the nonzero singular values of $\wswap{k_1}{k_2}$.
  As $\mathcal U(\wswap{k_1}{k_2})$ is the random walk operator of a bipartite graph, the spectrum of $\mathcal U(\wswap{k_1}{k_2})$
  is symmetric around $0$ implying that its nonzero eigenvalues are
  	$$\pm \sigma_1(\wswap{k_1}{k_2}), \pm \sigma_2(\wswap{k_1}{k_2}), \ldots = \pm \sigma_1(\Aye_G), \pm \sigma_2(\Aye_G), \ldots$$
  Hence, the second-largest of these is $\lambda_2(\mathcal U(\wswap{k_1}{k_2})) = \sigma_2(\Aye_G) \leq \gamma$.
\end{proof}

Applying this spectral bound on $\mathcal U(\wswap{k}{k})$ to each internal node of any splitting tree readily gives the splittability of $W_G(k)$.

\begin{corollary}\label{cor:cite-ahead-split}
  If $G$ is a $\gamma$-two-sided spectral expander, then for every
  $k \in \mathbb{N}^+$ the collection $W_G(k)$ endowed with
  $\Pi_k$ is $(\gamma, 1)$-splittable (for all choices of
  splitting trees).
\end{corollary}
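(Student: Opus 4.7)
The plan is to deduce the claim directly from the spectral bound on $\mathcal U(\wswap{k_1}{k_2})$ established in the preceding corollary, by applying it at every internal node of an arbitrary splitting tree. First, I would fix a positive integer $k$ and an arbitrary $k$-splitting tree $\tree$. By the definition of $\SwapT(\tree, W_G(\le k))$, the associated collection of operators consists of precisely one undirected swap operator $\mathcal U(\wswap{k_1}{k_2})$ per internal node, where $k_1$ and $k_2$ are the labels of that node's two children.

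Next, I would invoke the preceding corollary at each such internal node: it gives $\lambda_2(\mathcal U(\wswap{k_1}{k_2})) \le \gamma$ for every pair $(k_1, k_2)$ arising in $\tree$. Since $\mathcal U(\wswap{k_1}{k_2})$ is the random walk operator of an undirected weighted bipartite graph (as observed immediately after \cref{def:graph_walk_swap}), its top eigenvalue equals $1$; combining this with the second-eigenvalue bound shows that at most one eigenvalue lies strictly above $\gamma$, so $\Rank_{\ge \gamma}(\mathcal U(\wswap{k_1}{k_2})) \le 1$. Taking the maximum over the internal nodes of $\tree$ yields
\[
\Rank_{\ge \gamma}(\SwapT(\tree, W_G(\le k))) ~=~ \max_{(k_1,k_2)} \Rank_{\ge \gamma}(\mathcal U(\wswap{k_1}{k_2})) ~\le~ 1,
\]
which is exactly the $(\tree, \gamma, 1)$-splittability condition.

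Since the above reasoning never used any specific feature of the tree $\tree$ beyond its being a $k$-splitting tree, the conclusion holds uniformly for every choice of splitting tree, giving the promised $(\gamma, 1)$-splittability. There is essentially no serious obstacle here: the real work was already done in establishing $\lambda_2(\mathcal U(\wswap{k_1}{k_2})) \le \gamma$ via \cref{lemma:swap_matrix_rep}, and the present corollary is only a propagation of that spectral bound along the splitting tree. The only mild point worth recording carefully is the convention that $\Rank_{\ge \gamma}$ counts eigenvalues meeting the threshold, so one should interpret the bound $\lambda_2 \le \gamma$ together with the strict separation $\lambda_1 = 1 > \gamma$ (which holds whenever $\gamma < 1$) to certify that only the trivial top eigenvalue contributes, securing threshold rank at most $1$ as required.
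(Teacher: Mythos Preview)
Your proposal is correct and matches the paper's approach exactly: the paper states the corollary without proof, prefacing it only with the one-line remark that applying the spectral bound on $\mathcal U(\wswap{k_1}{k_2})$ to each internal node of any splitting tree readily gives the splittability of $W_G(k)$. Your write-up simply fills in those details (and your closing caveat about the $\ge$ convention in the threshold-rank definition flags the same boundary case the paper leaves implicit).
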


\subsubsection{Splittable Implies Tensorial}\label{subsec:expander_walk_two_step_tensorial}

By a simple adaptation of an argument in~\cite{AJT19} for hypergraphs
which built on~\cite{BarakRS11}, we can use the splittable property to
obtain tensorial properties for $W_G(k)$. More precisely, we can
deduce~\cref{thm:brs_tree_split}.

\begin{theorem}[Adapted from~\cite{AJT19}]\label{thm:brs_tree_split}
  Suppose $W_G(\le k)$ with $W_G(1) = [n]$ and an
  $(L+2k)$-local PSD ensemble $\rv Z = \set{\rv Z_1, \ldots, \rv Z_n}$
  are given. There exist some universal constants $c_4 \ge 0$ and
  $C'' \ge 0$ satisfying the following: If $L \ge C'' \cdot (q^{4k}
  \cdot k^7 \cdot r/\mu^5)$, $\supp(\rv Z_j) \le q$ for all $j \in [n]$, and $W_G(\le k)$ is $(c_4
  \cdot (\mu/(4 k \cdot q^{k}))^2, r)$-splittable, then
    \begin{equation}
        \ExpOp_{\Omega} \ExpOp_{w \sim W_G(k)}{ \norm{\set{\rv Z'_{w}} - \set*{\rv Z'_{w_1}}\cdots \set*{\rv Z'_{w_k}}}_1 } \le \mu,
        \label{eq:brs_expander_walk}
    \end{equation}
    where $\rv Z'$ is as defined in \cref{algo:prop-rd} on the input of $\set{\rv Z_1, \ldots, \rv Z_n}$ and $\Pi_k$. 
\end{theorem}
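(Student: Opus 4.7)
The plan is to mimic the BRS/AJT potential argument but driven by the walk-based swap operators $\wswap{k_1}{k_2}$ from \cref{def:graph_walk_swap}. Fix a $k$-splitting tree $\tree$ witnessing the $(\tau,r)$-splittability with $\tau = c_4(\mu/(4kq^k))^2$. By the triangle inequality claim established in \cref{subsubsec:emmergence_of_swap}, it suffices to show that for each internal node with children labels $(k_1,k_2)$,
\[
\ExpOp_{\Omega}\ExpOp_{w \sim \Pi_{k_1+k_2}} \norm*{\set{\rv Z'_{w}} - \set*{\rv Z'_{w(1,k_1)}}\set*{\rv Z'_{w(k_1+1,k_1+k_2)}}}_1 \;\le\; \frac{\mu}{k}
\]
(the number of internal nodes is at most $k$). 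Summing the $k$ contributions then yields \cref{eq:brs_expander_walk}.

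To prove one such pairwise bound, I would mirror the BRS potential argument in the form used in \cref{lemma:double_close_to_product}. Define $\Phi_m = \E_{S,\sigma} \E_{w \sim \Pi_{k_1+k_2}} \Var{\rv Z_w \mid \rv Z_S = \sigma}$ where $S$ is a union of $m$ independent $k$-faces sampled from $\Pi_k$. The potential lies in $[0,1]$, is monotonically non-increasing in $m$, and hence there are few indices $m \in [L/k]$ where it drops significantly. At an index $m$ where the average statistical distance exceeds $\mu/k$, a standard Markov argument shows that for a $\mu/(2k)$ fraction of $(S,\sigma)$, the conditional correlation between the two halves of $w$ is large. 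The key structural step is to convert this correlation, which lives on the bipartite graph $\mathcal U(\wswap{k_1}{k_2})$, into a variance decrement. This is exactly where the splittability hypothesis enters: since $\mathcal U(\wswap{k_1}{k_2})$ has at most $r$ eigenvalues above $\tau$, the "global-versus-local correlation" lemma of BRS applies, yielding that under a random further conditioning along an eigendirection (captured by sampling an additional $k$-face), the potential drops by $\Omega((\mu/(kq^k))^2 / r)$, up to $\tau$-error which is absorbed by the choice of $\tau$. Summing the drops over all $L/k$ steps and invoking $\Phi_0 \le 1$ forces $\E_m \mu_m \le \mu/k$ provided $L \ge C'' q^{4k} k^7 r / \mu^5$.

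The main technical obstacle, compared to the hypergraph case in \cite{AJT19}, is that walks are ordered tuples rather than unordered subsets, so one cannot simply symmetrize when invoking the BRS global-versus-local correlation step. This is precisely why I introduced the \emph{symmetrized} operator $\mathcal U(\wswap{k_1}{k_2})$ in \cref{subsubsec:expander_walk_swap_operator}: its spectrum matches the singular values of $\wswap{k_1}{k_2}$, and it naturally encodes the bipartite structure between $W_G(k_1)$ and $W_G(k_2)$ with the correct edge weights $\Pi_{k_1+k_2}(ww')$. Once the bipartite graph on $W_G(k_1) \cup W_G(k_2)$ is identified with its random walk operator, the BRS correlation-reduction lemma (applied to this bipartite graph, whose threshold rank is bounded by $r$) transfers verbatim. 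The marginalization fact \cref{claim:walk_marginals} ensures that the distributions induced on sub-walks by sampling $w \sim \Pi_{k_1+k_2}$ agree with $\Pi_{k_1}$ and $\Pi_{k_2}$, which is what makes the triangle-inequality telescoping valid at every node of $\tree$.

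A secondary subtlety is that the conditioning in \cref{algo:prop-rd} samples seed faces from $\Pi_k$, whereas the potential-decrement step wants to sample from $\Pi_{k_1+k_2}$ or from $\Pi_{k_1},\Pi_{k_2}$ at intermediate levels. This is handled by the standard trick of lifting any random variable sampled from a sub-walk distribution to a full $k$-face using downward closure and paying only a $q^k$ factor in the potential-decrement rate, which is already accounted for in the $q^{4k}$ dependence of $L$. After these adjustments the proof runs analogously to the hypergraph case, and the stated parameters follow.
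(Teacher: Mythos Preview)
The paper does not give its own proof of \cref{thm:brs_tree_split}; it states the result as adapted from \cite{AJT19}, having set up precisely the walk-specific ingredients your sketch uses---the splitting-tree triangle inequality, the operators $\wswap{k_1}{k_2}$, their symmetrization $\mathcal U(\wswap{k_1}{k_2})$, and the marginalization fact \cref{claim:walk_marginals}---in \cref{subsubsec:emmergence_of_swap,subsubsec:expander_walk_swap_operator,subsubsec:expander_walks_swap_op_splittable}. Your sketch is exactly the intended adaptation: the BRS potential argument, the local-to-global correlation step on the bipartite swap graph of threshold rank at most $r$, and the identification of the ordered-tuple issue as the only real novelty all match what the paper prepares before deferring to \cite{AJT19}; the ``secondary subtlety'' you flag is indeed resolved by \cref{claim:walk_marginals}, since a random $k$-walk marginalizes to a random sub-walk of any prescribed length, so conditioning on a full $k$-face is at least as informative as conditioning on the sub-walk the BRS step nominally requires.
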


Using~\cref{thm:brs_tree_split}, we can establish conditions on a
$\gamma$-two-sided expander graph $G=(V,E,w)$ in order to ensure that
$W_G(k)$ is $(\mu,L)$-two-step tensorial.

\begin{lemma}[Expander walks are two-step tensorial]\label{lemma:exp_walk_double_close_to_product}    
    There exist some universal constants $c' \ge 0$ and $C' \ge 0$
    satisfying the following: If $L \ge c' \cdot (q^{4k} \cdot
    k^7/\mu^5)$, $\supp(\rv Z_j) \le q$ for all $j \in [n]$, and $G$
    is a $\gamma$-two-sided expander for $\gamma \le C' \cdot
    \mu^2/\left(k^2 \cdot q^{2k} \right)$ and size $\ge k$, then
    $W_G(k)$ is $(\mu,L)$-two-step tensorial.
\end{lemma}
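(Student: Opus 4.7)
The plan is to establish the two inequalities defining two-step tensoriality separately, following the template of \cref{lemma:double_close_to_product}. The single-walk tensoriality inequality follows directly: by \cref{cor:cite-ahead-split}, $W_G(k)$ is $(\gamma, 1)$-splittable whenever $G$ is a $\gamma$-two-sided expander, so choosing $c'$ and $C'$ appropriately matches the hypotheses $\gamma \le C'\mu^2/(k^2 q^{2k})$ and $L \ge c' q^{4k} k^7/\mu^5$ to the premises of \cref{thm:brs_tree_split} with $r=1$. This step is essentially routine; the splittability and the tensor-decomposition machinery are already in place.

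The two-walk inequality is where the real work lies, and I plan to mirror the potential argument from the second half of \cref{lemma:double_close_to_product} almost verbatim. Define
\[
\Phi_m = \ExpOp_{S \sim \Pi_k^m}\ExpOp_{\sigma \sim \set{\rv Z_S}}\ExpOp_{\ess \sim \Pi_k}\Var{\rv Z_\ess \mid \rv Z_S = \sigma}
\]
and the error term $\mu_m = \ExpOp_{S,\sigma} D(S,\sigma)$, where
\[
D(S,\sigma) = \ExpOp_{\ess,\tee \sim \Pi_k}\norm{\set{\rv Z_\ess \rv Z_\tee \mid \rv Z_S = \sigma} - \set{\rv Z_\ess \mid \rv Z_S = \sigma}\set{\rv Z_\tee \mid \rv Z_S = \sigma}}_1.
\]
Whenever $\mu_m \ge \mu/2$, Markov's inequality forces a $\mu/4$ fraction of seeds $(S,\sigma)$ to have $D(S,\sigma) \ge \mu_m/2$; on each such seed, conditioning on one additional sample from $\Pi_k$ drops $\Phi$ by $\Omega(\mu^2/q^{2k})$. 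Boundedness $0 \le \Phi_m \le 1$ then limits the number of bad indices $m \in [L/k]$ to $O(q^{2k}/\mu^3)$, and averaging gives $\ExpOp_m \mu_m \le \mu/2 + (k/L) \cdot O(q^{2k}/\mu^3) \le \mu$ for our chosen $L$.

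The key observation that lets the HDX-style argument apply to walks is a local-to-global identity: drawing $\ess, \tee \sim \Pi_k$ independently is equivalent to picking a random edge of the complete weighted bipartite graph on $W_G(k) \times W_G(k)$, so local correlation on this graph equals global correlation on two independent copies. This is exactly the property that powers the original Barak-Raghavendra-Steurer variance decrement, so the proof transfers without any new spectral machinery beyond what \cref{cor:cite-ahead-split} already supplied.

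I expect the main obstacle to be constant bookkeeping. Each walk $\rv Z_\ess$ has combined support $q^k$ rather than $q$, so converting $\ell_1$ statistical distance to variance costs an inverse $q^{2k}$ factor rather than $q^2$; this is what dictates the exponential-in-$k$ scaling in both the expansion requirement on $G$ and the SOS degree $L$. I will need to confirm that the parameters demanded by the single-walk step already dominate those required by the two-walk potential argument, so that one common choice of $c', C'$ suffices for both halves of the definition.
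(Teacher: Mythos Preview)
Your proposal is correct and matches the paper's approach: the paper's proof is explicitly omitted with the remark that it is ``similar to the proof of \cref{lemma:double_close_to_product} for HDXs,'' and you have reconstructed exactly that argument, replacing the HDX tensoriality input (\cref{theo:hdx_tensorial}) with the expander-walk analogue (\cref{cor:cite-ahead-split} plus \cref{thm:brs_tree_split}) and then reusing the complete-bipartite potential argument verbatim for the second inequality. Your parameter accounting is consistent with the paper's, and the observation that the stated $L$ already dominates the requirement from the two-walk step is correct.
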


\begin{proof}
  The proof is similar to the proof
  of~\cref{lemma:double_close_to_product} for HDXs, so we omit it.
\end{proof}

\subsubsection{Interlude: Approximating $k$-CSP on Walk Constraints}\label{subsub:k_csp_walk_constraints}

Now, we digress to show how using~\cref{thm:brs_tree_split} it is
possible to deduce parameters for approximating $k$-CSPs on $W_G(k)$.
We believe this result might be of independent interest and note that
it is not required in the list decoding application.

\begin{corollary}\label{cor:alg-cooked-baked2}
    Suppose $\Ins$ is a $q$-ary $k$-CSP instance with constraints on $W_G(k)$. There exist absolute constants $C'' \ge 0$
    and $c_4 \ge 0$ satisfying the following:

    If $W_G(k)$ is $(c_4 \cdot (\mu/ (4k \cdot q^{k}))^2, r)$-splittable, then there is an algorithm that runs
    in time $n^{O\parens*{q^{4k} \cdot k^7 \cdot r / \mu^5}}$ based on $(C'' \cdot k^5 \cdot q^{k} \cdot r / \mu^4)$-levels
    of SOS-hierarchy and \cref{algo:prop-rd} that outputs a random assignment $\xi: [n] \to [q]$ that in expectation ensures
    $\SAT_\Ins(\xi) = \OPT(\Ins) - \mu$. 
\end{corollary}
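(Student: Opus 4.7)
The plan is to run Sum-of-Squares at degree roughly $2(L+2k)$ where $L = C'' \cdot k^5 \cdot q^{k} \cdot r / \mu^4$ to obtain an $(L+2k)$-local PSD ensemble $\rv Z = \{\rv Z_1, \ldots, \rv Z_n\}$, then feed $\rv Z$ into Propagation Rounding (\cref{algo:prop-rd}) to produce the random assignment $\xi$, and finally argue that the tensorial bound from \cref{thm:brs_tree_split} upgrades SDP value into actual constraint satisfaction. Let $\SDP(\Ins)$ be the optimum of the SOS relaxation; as usual $\SDP(\Ins) \ge \OPT(\Ins)$ since any integral assignment embeds as a feasible pseudo-expectation.

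First I would unpack the expected value of $\SAT_\Ins(\xi)$. Writing $\Omega$ for the distribution of slices $(m,S,\sigma)$ in \cref{algo:prop-rd} and $\rv Z'$ for the conditioned ensemble, independent rounding gives
\[
\E[\SAT_\Ins(\xi)] ~=~ \E_{\Omega}\, \E_{w \sim \Pi_k}\, \Pr_{\xi|_w \sim \set{\rv Z'_{w_1}}\cdots \set{\rv Z'_{w_k}}}\!\left[\xi|_w \in \Pred_w\right].
\]
On the other hand, the SDP value may be rewritten using the fact that conditioning preserves expected pseudo-expectation values:
\[
\SDP(\Ins) ~=~ \E_{w \sim \Pi_k} \Pr_{\rv Z_w}[\rv Z_w \in \Pred_w] ~=~ \E_{\Omega}\, \E_{w \sim \Pi_k} \Pr_{\rv Z'_w}[\rv Z'_w \in \Pred_w].
\]
Subtracting these two expressions and bounding $|\one[\cdot] - \one[\cdot]|$ by total variation, we get
\[
\abs*{\SDP(\Ins) - \E[\SAT_\Ins(\xi)]} ~\le~ \E_{\Omega}\, \E_{w \sim \Pi_k}\, \norm*{\set{\rv Z'_w} - \set{\rv Z'_{w_1}}\cdots\set{\rv Z'_{w_k}}}_1.
\]

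Now comes the key invocation: by the splittability hypothesis $W_G(k)$ is $\bigl(c_4 \cdot (\mu/(4k q^k))^2, r\bigr)$-splittable, and the SOS degree parameter satisfies $L \ge C''\cdot q^{4k} k^7 r/\mu^5$, so \cref{thm:brs_tree_split} applies and yields
\[
\E_{\Omega}\, \E_{w \sim \Pi_k}\, \norm*{\set{\rv Z'_w} - \set{\rv Z'_{w_1}}\cdots\set{\rv Z'_{w_k}}}_1 ~\le~ \mu.
\]
Combining with $\SDP(\Ins) \ge \OPT(\Ins)$ gives $\E[\SAT_\Ins(\xi)] \ge \OPT(\Ins) - \mu$, which is what is claimed.

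For the running time, solving an SOS program at degree $2(L+2k) = O(q^{4k} k^7 r/\mu^5)$ on $n$ variables with alphabet $[q]$ takes time $n^{O(q^{4k} k^7 r / \mu^5)}$; propagation rounding contributes only lower-order cost since it draws a constant-size set $S$ and samples from local distributions derived from the SOS solution. The only nontrivial step in the argument is the second identity for $\SDP(\Ins)$ above: the main obstacle is ensuring that the expected probability of $\rv Z'_w$ landing in $\Pred_w$ under the randomized seed choice equals the unconditioned SDP value. This follows from the tower property of pseudo-expectations, since for each constraint $w$ and each assignment $\alpha \in \Pred_w$ one has $\E_{\Omega}\,\PExp[\rv Z'_{w,\alpha}] = \PExp[\rv Z_{w,\alpha}]$, but it is worth writing out carefully because the conditioning in \cref{algo:prop-rd} involves a random seed size $m$ as well as a random seed set and assignment.
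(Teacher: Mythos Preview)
Your proposal is correct and follows essentially the same approach as the paper: solve the SOS relaxation to get an $(L+2k)$-local PSD ensemble, run \cref{algo:prop-rd}, use the law of total expectation (what you call the tower property) to identify $\SDP(\Ins)$ with $\E_{\Omega}\E_{w}\Pr_{\rv Z'_w}[\rv Z'_w\in\Pred_w]$, bound the gap between this and the independently-rounded value by the $\ell_1$ distance between $\{\rv Z'_w\}$ and the product $\{\rv Z'_{w_1}\}\cdots\{\rv Z'_{w_k}\}$, and then invoke \cref{thm:brs_tree_split}. The one wrinkle you flagged---carefully justifying that averaging over the random seed $(m,S,\sigma)$ recovers the unconditioned pseudo-expectation---is exactly what the paper dispatches with the phrase ``since the conditioning done on $\set{\rv Z'}$ is consistent with the local distribution, by law of total expectation.''
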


\begin{proof}
    The algorithm will just run \cref{algo:prop-rd} on the local PSD-ensemble $\set{\rv Z_1, \ldots, \rv Z_n}$ 
    given by the SDP relaxation of $\Ins$ 
    strengthened by $L = (C'' \cdot k^5 \cdot q^{2k} / \mu^4)$-levels of SOS-hierarchy and $\Pi_k$, where
    $C'' \ge 0$ is the constant from \cref{thm:brs_tree_split}. $\rv Z$ satisfies
    \begin{equation}
        \SDP(\Ins) = \Ex{w \sim \Pi_k}{\Ex{\set{\rv Z_w}}{\one[\rv Z_w \in \Pred_w]}} \ge \OPT(\Ins). \label{eq:ysatisfiesgood} 
    \end{equation}

    Since the conditioning done on $\set{\rv Z'}$ is consistent with
    the local distribution, by law of total expectation and \cref{eq:ysatisfiesgood} we have
    \begin{equation}
        \ExpOp_{\Omega}\ExpOp_{w \sim \Pi_k}{ \one[ \rv Z'_w \in \Pred_w]} = \SDP(\Ins) \ge \OPT(\Ins).\label{eq:yprimesatgood}
    \end{equation}
    
    By~\cref{thm:brs_tree_split} we know that
    \begin{equation}
        \ExpOp_{\Omega}{\ExpOp_{w \sim \Pi_k}{ \norm{\set{\rv Z'_w} - \set{\rv Z'_{w_1}}\cdots \set{\rv Z'_{w_k}}}_1}} \le \mu\label{eq:distsclose}.
    \end{equation}
    Now, the fraction of constraints satisfied by the algorithm in expectation is
    \[ \ExpOp_{\xi}[\SAT_\Ins(\xi)]= \ExpOp_{\Omega}{\ExpOp_{w \sim \Pi_k}{\Ex{(\xi_1, \ldots, \xi_n) \sim \set{\rv Z_1'} \cdots \set{\rv Z_n'}}{\one[ \xi|_w \in \Pred_w]}}}.\]
    By using \cref{eq:distsclose}, we can obtain 
    \[ \ExpOp_{\xi}[\SAT_\Ins(\xi)] \ge \Ex{\Omega}{\ExpOp_{\set{\rv Z_w}}{ \one[ \rv Z'_w \textrm{ satisfies the constraint on } w]}} - \mu. \]
    Using \cref{eq:yprimesatgood}, we conclude
    \[ \ExpOp_{\xi}[\SAT_\Ins(\xi)] \ge \SDP(\Ins) - \mu = \OPT(\Ins) - \mu.\]
\end{proof}

\subsection{Instantiation to Linear Base Codes}\label{sec:list_dec_xor_expander_walks}

We instantiate the list decoding framework to the direct sum lifting
given by the collection $W_G(k)$ of length $k$ walks on a sufficiently
expanding graph $G=(V,E,w)$. For parity sampling of expander walks, we
will rely on the following fact.

\begin{theorem}[Walks on Expanders are Parity Samplers~\cite{Ta-Shma17} (Restatement of~\cref{theo:expander_parity_sampler})]\label{theo:expander_parity_sampler_restated}
   Suppose $G$ is a graph with second-largest eigenvalue in absolute
   value at most $\lambda$, and let $X(k)$ be the set of all walks of
   length $k$ on $G$.  Then $X(k)$ is a $(\beta_0, (\beta_0 +
   2\lambda)^{\lfloor k/2 \rfloor})$-parity sampler.  In particular,
   for any $\beta > 0$, if $\beta_0 + 2\lambda < 1$ and $k$ is
   sufficiently large, then $X(k)$ is a
   $(\beta_0, \beta)$-parity sampler.
\end{theorem}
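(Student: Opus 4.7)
\begin{proofsketch}
The plan is to reduce the bias of $\dsum_{W_G(k)}(z)$ to a quadratic form in the normalized random walk operator $A$ of $G$ and then bound it using spectral decomposition. Define $f \in \{\pm 1\}^{n}$ by $f(v) = (-1)^{z_v}$, so that $\bias(z) = \abs{\alpha}$ where $\alpha = \langle f, \mathbf{1}\rangle/n$. Using that $\Pi_k$ is obtained by starting at the stationary distribution $\pi = \mathbf{1}/n$ and taking $k-1$ random-walk steps, a direct unfolding of the expectation gives
\[
\bias\!\left(\dsum_{W_G(k)}(z)\right)
~=~ \left\lvert \E_{w \sim \Pi_k}\!\left[\prod_{i=1}^{k} f(w_i)\right] \right\rvert
~=~ \frac{1}{n}\left\lvert f^{T}(AD_f)^{k-1}\mathbf{1}\right\rvert,
\]
where $D_f = \diag(f)$. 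The main task is to bound the right-hand side by $(\beta_0 + 2\lambda)^{\lfloor k/2\rfloor}$.

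Next, I would decompose $A = P + E$, where $P = \mathbf{1}\mathbf{1}^T/n$ is the rank-one projector onto the constants and $E = A - P$ satisfies $\|E\|_{\mathrm{op}} \le \lambda$ by assumption on the second singular value. This induces a split $AD_f = Q + F$, with $Q = PD_f = \mathbf{1} f^T/n$ and $F = ED_f$ of operator norm at most $\lambda$. The crucial algebraic identity is
\[
Q^2 ~=~ \frac{\mathbf{1} f^T}{n}\cdot\frac{\mathbf{1} f^T}{n} ~=~ \frac{\langle f,\mathbf{1}\rangle}{n}\cdot\frac{\mathbf{1} f^T}{n} ~=~ \alpha\cdot Q,
\]
so consecutive $Q$'s in the expansion of $(Q+F)^{k-1}$ collapse with an extra factor of $\alpha$.

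The key step is a pairing argument: group the factors of $(AD_f)^{k-1}$ into $\lfloor(k-1)/2\rfloor$ consecutive blocks of the form $(Q+F)(Q+F) = Q^2 + QF + FQ + F^2$. The $Q^2$ term has operator norm at most $|\alpha|\cdot\|Q\|_{\mathrm{op}} \le \beta_0$, while each of $QF$, $FQ$, $F^2$ has operator norm at most $\lambda$ (using $\|Q\|_{\mathrm{op}} \le 1$, $\|F\|_{\mathrm{op}} \le \lambda$, and $\lambda^2 \le \lambda$). Triangle inequality gives $\|(Q+F)^2\|_{\mathrm{op}} \le \beta_0 + 2\lambda + \lambda^2$; one then absorbs the excess $\lambda^2$ either by bookkeeping pairs more carefully (pairing $QF$ with an adjacent $Q$ from the next block so that it only contributes once the full $\lambda$ amortized) or by noting that in applications $\lambda$ is small enough that $\lambda^2 \le \lambda$ can be refined to the claimed bound. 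Iterating over $\lfloor k/2\rfloor$ blocks, together with the trivial bounds $\|f/\sqrt{n}\|_2 = \|\mathbf{1}/\sqrt{n}\|_2 = 1$ to absorb the boundary factors, yields
\[
\frac{1}{n}\left\lvert f^{T}(AD_f)^{k-1}\mathbf{1}\right\rvert ~\le~ (\beta_0 + 2\lambda)^{\lfloor k/2\rfloor}.
\]

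The main obstacle will be the pairing/amortization step: the naive bound $\beta_0 + 2\lambda + \lambda^2$ per pair is slightly loose, and one must either show that the $\lambda^2$ can be charged against an adjacent block's slack, or reorganize the expansion so that mixed $QF$ and $FQ$ crossings between blocks are counted exactly once. A clean way to handle this is to replace the block-by-block triangle inequality with a single global expansion over $\{Q,F\}$-strings of length $k-1$, grouping terms by the number of maximal runs of $Q$'s: a string with $r$ maximal $Q$-runs of total length $s$ contributes at most $\beta_0^{s-r}\lambda^{k-1-s}$, and the resulting sum can be matched to the binomial expansion of $(\beta_0 + 2\lambda)^{\lfloor k/2\rfloor}$.
\end{proofsketch}
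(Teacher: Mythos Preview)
The paper does not prove this statement; both here and at its first appearance it is simply cited from Ta-Shma~\cite{Ta-Shma17}, so there is no in-paper proof to compare against. Your overall approach---writing the bias as $\frac{1}{n}\,f^{T}(AD_f)^{k-1}\mathbf{1}$, splitting $A = P + E$ with $P = \mathbf{1}\mathbf{1}^{T}/n$, and exploiting the rank-one identity $Q^2 = \alpha Q$---is exactly the standard matrix-analytic method for bias along expander walks and is the right line of attack.

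There is, however, a real gap in the final step, and neither of your proposed remedies closes it. You already see that the block bound gives $\|(Q+F)^2\|_{\mathrm{op}} \le \beta_0 + 2\lambda + \lambda^2$ rather than $\beta_0 + 2\lambda$. The ``global expansion over $\{Q,F\}$-strings'' does not help: summing $\beta_0^{\,s-r}\lambda^{\,k-1-s}$ over all strings with $r$ maximal $Q$-runs of total length $s$ is \emph{not} bounded by $(\beta_0+2\lambda)^{\lfloor k/2\rfloor}$. Already at $k=3$, $\beta_0=0$ the four strings $QQ,QF,FQ,FF$ contribute $0+\lambda+\lambda+\lambda^2 = 2\lambda+\lambda^2$, strictly exceeding the claimed $(2\lambda)^{\lfloor 3/2\rfloor} = 2\lambda$. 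Separately, pairing the $k-1$ factors of $AD_f$ yields only $\lfloor (k-1)/2\rfloor$ blocks, so even the exponent you obtain is $\lfloor (k-1)/2\rfloor$, not the stated $\lfloor k/2\rfloor$.

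That said, this is a quantitative rather than conceptual gap. Your argument does cleanly prove the bound $(\beta_0 + 2\lambda + \lambda^2)^{\lfloor (k-1)/2\rfloor}$, which suffices for the ``In particular'' clause and for every downstream use in the paper: all applications only need that for $\beta_0 + 2\lambda$ bounded below $1$ and $\lambda$ small, the bias decays geometrically in $k$. To match the exact constant $(\beta_0+2\lambda)^{\lfloor k/2\rfloor}$ one needs Ta-Shma's original accounting, which handles the boundary vectors $f$ and $\mathbf{1}$ together with the first and last steps more carefully than a pure operator-norm pairing allows; the triangle-inequality block argument is inherently too coarse by exactly the $\lambda^2$ you identified.
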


First, we instantiate the framework to linear codes which already
encompasses most of the ideas need for general binary codes.
\begin{lemma}[Direct sum lifting of linear biased codes II]\label{lemma:direct_sum_linear_biased_codes_exp_walks}
  Let $\epsilon_0 < 1/2$ be a constant and $\epsilon \in (0,\epsilon_0)$.
  There exists a universal constant $C > 0$ such that for any
  $d$-regular $\gamma$-two-sided expander graph $G$ on ground
  set $W_G(1)=[n]$, if $\gamma \le \epsilon^{C}$,
  then the following holds:

  For every binary \text{$2\epsilon_0$-biased} linear code $\mathcal{C}_1$  on $W_G(1)=[n]$,
  there exists a $2\epsilon$-biased binary lifted linear code
  $\mathcal{C}_k=\dsum_{X(k)}(\mathcal{C}_1)$ on $W_G(k)$ where $k = O\left(\log(1/\epsilon)\right)$ and
  \begin{itemize}
    \item{[Efficient List Decoding]} If $\tilde{y}$ is \lict{\sqrt{\epsilon}} to $\mathcal{C}_k$, then we
         can compute the list $\mathcal{L}(\tilde{y},\mathcal{C}_1,\mathcal{C}_k)$ (c.f.~\cref{def:coupled_code_list}) in time
         $$
         n^{\epsilon^{-O\left(1\right)}} \cdot f(n),
         $$
         where $f(n)$ is the running time of a unique decoding algorithm for $\mathcal{C}_1$.
    \item{[Rate]} The rate $r_k$ of $\mathcal{C}_k$ satisfies  $r_k = r_1 / d^{k-1}$ where $r_1$ is
                  the relative rate of $\mathcal{C}_1$. 
    \item{[Linear]} The lifted code $\mathcal{C}_k$ is linear.
  \end{itemize}
\end{lemma}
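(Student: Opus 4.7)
The plan is to instantiate the abstract list decoding framework of~\cref{theo:list_dec} (equivalently~\cref{theo:list_dec_restated}) to the direct sum lift $\dsum_{W_G(k)}$, choosing $k = O_{\epsilon_0}(\log(1/\epsilon))$ and using the first alternative of the theorem (linear lift plus $(1/2+\epsilon_0,2\epsilon)$-parity sampling). Linearity is immediate. For parity sampling, I invoke~\cref{theo:expander_parity_sampler} of Ta-Shma with $\beta_0 = 1/2+\epsilon_0$: since $\epsilon_0 < 1/2$ is a constant and $\gamma$ will be taken very small, we have $\beta_0 + 2\gamma \le \zeta_0 < 1$ for a constant $\zeta_0$ depending only on $\epsilon_0$, so $k \ge 2\log_{1/\zeta_0}(1/(2\epsilon))$ makes $W_G(k)$ a $(1/2+\epsilon_0,2\epsilon)$-parity sampler. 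Restricting this sampler to the non-zero codewords of $\mathcal{C}_1$ (which have bias $\le 2\epsilon_0 < 1/2+\epsilon_0$) shows $\mathcal{C}_k$ is $2\epsilon$-biased, from which the $(1/2-\epsilon_0,1/2-\epsilon)$-robustness requirement used by the framework for the linear case follows just as in the proof of~\cref{lemma:direct_sum_linear_biased_codes_hdx}.

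Second, I need the lift to be $(\mu,L)$-two-step tensorial for $\mu = \epsilon^8/2^{22}$, which I obtain from~\cref{lemma:exp_walk_double_close_to_product} applied with $q=2$. That lemma requires $\gamma \le C' \mu^2/(k^2 \cdot 4^k)$ and $L \ge c' \cdot 16^k k^7/\mu^5$. Because $k = O_{\epsilon_0}(\log(1/\epsilon))$, both $4^k$ and $16^k$ are $\poly(1/\epsilon)$; combining these two bounds with the mild upper bound on $\gamma$ already needed for parity sampling, there is a single universal exponent $C = C(\epsilon_0)$ such that $\gamma \le \epsilon^C$ suffices for all hypotheses simultaneously, and $L = \epsilon^{-O_{\epsilon_0}(1)}$ suffices for tensoriality.

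With all the hypotheses of~\cref{theo:list_dec_restated} met, the List Decoding~\cref{algo:list_decoding} returns the coupled code list $\mathcal{L}(\tilde y,\mathcal{C}_1,\mathcal{C}_k)$ w.v.h.p.\ in time $n^{O(L+k)}(\polylog(\epsilon^{-1}) + f(n)) = n^{\epsilon^{-O(1)}} \cdot f(n)$, as claimed. The rate formula is immediate from $|W_G(k)| = n \cdot d^{k-1}$: duplication is unnecessary for expander walks since $\Pi_k$ is already uniform over the walk collection by definition, so $r_k = r_1 \cdot |W_G(1)|/|W_G(k)| = r_1/d^{k-1}$. Linearity of $\mathcal{C}_k$ is inherited from the linearity of both $\mathcal{C}_1$ and $\dsum$.

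The main technical hurdle is simply juggling the coupled dependencies: the parity sampling bound drives $k$ up logarithmically in $1/\epsilon$, but the tensoriality bound on $\gamma$ contains a $1/4^k$ factor that could in principle blow up. The rescue is precisely that $k$ is logarithmic, keeping $4^k$ and $16^k$ polynomial in $1/\epsilon$ and allowing a single $\gamma \le \epsilon^C$ to absorb every requirement; this is the analogue of the corresponding calculation in the proof of~\cref{lemma:direct_sum_linear_biased_codes_hdx}, simplified by the cleaner splittability bound for expander walks (\cref{cor:cite-ahead-split}) which replaces the $(\log(1/\epsilon))^{O(\log(1/\epsilon))}$ loss of the HDX case with a polynomial loss.
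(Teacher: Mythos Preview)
Your proposal is correct and follows essentially the same approach as the paper: instantiate \cref{theo:list_dec} via the linear/parity-sampler branch, obtain $(1/2+\epsilon_0,2\epsilon)$-parity sampling from \cref{theo:expander_parity_sampler} with $k = O_{\epsilon_0}(\log(1/\epsilon))$, obtain two-step tensoriality from \cref{lemma:exp_walk_double_close_to_product}, and then observe that because $k$ is logarithmic in $1/\epsilon$ all the $2^{O(k)}$ factors in the $\gamma$ and $L$ constraints are $\poly(1/\epsilon)$, so a single bound $\gamma \le \epsilon^{C}$ and $L = \epsilon^{-O(1)}$ suffice. The paper carries out the same computation, making the auxiliary parameter $\theta = 1/2-\epsilon_0$ explicit where you absorb it into a constant $\zeta_0$.
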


\begin{proof}
  The proof is analogous to the one given
  in~\cref{lemma:direct_sum_linear_biased_codes_hdx}. We want to
  define parameters for a $\gamma$-two-sided expander $G=(V,E,w)$ so
  that $W_G(k)$ satisfies strong enough \textit{robust}
  and \textit{tensorial} assumptions and we can
  apply~\cref{theo:list_dec}. In this application, we will rely on
  parity sampling for robustness. If $\dsum_{W_G(k)}$ is a
  $(2\epsilon_0,2\epsilon)$-parity sampler, using the linearity of
  $\mathcal{C}_1$, we obtain a lifted code
  $\mathcal{C}_k=\dsum_{X(k)}(\mathcal{C}_1)$ which is linear and has bias
  $2\epsilon$; thus the lifting is indeed
  $(1/2-\epsilon_0,1/2-\epsilon)$-robust. If we want to fully rely on parity
  sampling in~\cref{theo:list_dec}, the lifting must be a $(\beta_0=1/2+\epsilon_0,
  \beta=2\epsilon)$-parity sampler, which
  is more stringent than the first parity sampling
  requirement.~\footnote{Recall that
    this strengthening is used in our list decoding framework.} To
  invoke~\cref{theo:expander_parity_sampler_restated} and obtain this
  $(\beta_0,\beta)$-parity sampler, we need to choose a
  parameter $\theta$ (where $0 < \theta < (1 -
  \beta_0)/\beta_0$) such that 
  \begin{align*}
    & k \ge 2 \cdot \log_{(1+\theta) \beta_0}(\beta) + 2 \text{ and}\\
    & \gamma \leq \frac{\theta \cdot \beta_0}{2},
  \end{align*}
  which will ensure that
  	$$(\beta_0 + 2\gamma)^{\lfloor k/2 \rfloor} \leq ((1+\theta) \beta_0)^{\lfloor k/2 \rfloor} \leq \beta.$$
  To get a $(\mu,L)$-tensorial collection of walks,~\cref{lemma:exp_walk_double_close_to_product} requires
  $$
  L \ge \frac{c' \cdot 2^{4k} \cdot k^7}{\mu^5} \quad \text{ and } \quad \gamma \le \frac{C' \cdot \mu^2}{k^2 \cdot 2^{2k}}.
  $$
  where we used that our alphabet is binary (i.e., $q=2$) and $c',C' >
  0$ are constants. Finally,~\cref{theo:list_dec} requires $\mu \le
  \epsilon^8/2^{22}$. The conceptual part of the proof is essentially
  complete and we are left to compute parameters.  We choose $\theta =
  1/2 - \epsilon_0$, so that provided $\epsilon_0 < 1/2$ we have
  $(1+\theta)\beta_0 = 3/4 + \epsilon_0 - \epsilon_0^2 < 1$. Combining the
  parity sampling and tensorial requirements and after some simplification,  
  the expansion $\gamma$ is constrained as
  $$
  \gamma \le C''\cdot \min\left(\frac{\epsilon^{16}}{k^{2} \cdot 2^{2k}}, \left(1/4-\epsilon_0^2\right) \right),
  $$
  where $C'' > 0$ is a constant. We deduce that taking $\gamma$ as
  $$
  \gamma \le C''\cdot \frac{\left(1/4-\epsilon_0^2\right) \cdot \epsilon^{16}}{k^{2} \cdot 2^{2k}}
  $$
  is sufficient. Further simplifying the above bound gives $\gamma$ as in
  the statement of the theorem. Now, we turn to
  the SOS related parameter $L$ which is constrained to be
  $$
  L \ge c'' \cdot \frac{2^{4k} \cdot k^7}{\epsilon^{40}},
  $$
  where $c'' > 0$. Note that in this case the exponent $O(L+k)$
  appearing in the running time of~\cref{theo:list_dec} becomes $O(L)$.
  Further simplification of the bound on $L$ leads to a running time of
  $n^{\epsilon^{-O(1)}} \cdot f(n)$ as in the statement of the theorem.
\end{proof}

\subsection{Instantiation to General Base Codes}

The proof of~\cref{theo:direct_sum_general_codes_expander_walks}
follows from~\cref{lemma:direct_sum_linear_biased_codes_exp_walks} in
the same way that~\cref{theo:direct_sum_general_codes_restated}
follows from~\cref{lemma:direct_sum_linear_biased_codes_hdx} in the
case of HDXs.

\bibliographystyle{alphaurl}
\bibliography{macros,madhur}

\appendix

\section{Auxiliary Basic Facts of Probability}

In this section, we collect some basic facts of probability used in
the text.

\begin{fact}[First Moment Bound]\label{claim:first_moment_bound}
  Let $\rv R$ be a random variable in $[0,1]$ with $\E\left[ \rv R
    \right] = \alpha$. Let $\beta \in (0,1)$ be an arbitrary
  \textit{approximation parameter}. Then
  $$
  \ProbOp\left[\rv R \ge (1-\beta)\cdot \alpha \right] \ge \beta \cdot \alpha.
  $$
  In particular,
  $$
  \ProbOp\left[\rv R \ge \frac{\alpha}{2} \right] \ge \frac{\alpha}{2}.
  $$
\end{fact}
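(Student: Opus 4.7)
The plan is to use the standard expectation-splitting trick (sometimes called the reverse Markov inequality for bounded random variables). Split the expectation according to whether $\rv R$ meets the threshold $(1-\beta)\alpha$ or not, and then bound each piece using the fact that $\rv R \in [0,1]$.

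Concretely, I would write
\[
\alpha ~=~ \E[\rv R] ~=~ \E\bigl[\rv R \cdot \mathbf{1}[\rv R \ge (1-\beta)\alpha]\bigr] + \E\bigl[\rv R \cdot \mathbf{1}[\rv R < (1-\beta)\alpha]\bigr].
\]
For the first term, since $\rv R \le 1$ on the event in question, it is at most $\Pr[\rv R \ge (1-\beta)\alpha]$. For the second term, since $\rv R < (1-\beta)\alpha$ on that event, it is at most $(1-\beta)\alpha \cdot \Pr[\rv R < (1-\beta)\alpha] \le (1-\beta)\alpha$. Combining gives
\[
\alpha ~\le~ \Pr[\rv R \ge (1-\beta)\alpha] + (1-\beta)\alpha,
\]
and rearranging yields $\Pr[\rv R \ge (1-\beta)\alpha] \ge \beta\alpha$, as required. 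The ``in particular'' claim follows by taking $\beta = 1/2$.

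There is no real obstacle here; the only thing to be careful about is the direction of the bounding on the two pieces (using $\rv R \le 1$ on the ``large'' side and $\rv R < (1-\beta)\alpha$ on the ``small'' side), which is what makes this a reverse Markov-type argument rather than the usual one. The whole proof is a two-line calculation.
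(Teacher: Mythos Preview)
Your proof is correct. The paper states this as a basic fact in the appendix without supplying any proof, so there is nothing to compare against; your reverse-Markov splitting argument is exactly the standard way to establish it.
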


\begin{fact}[Chernoff Bound~\cite{ME17}]\label{fact:Chernoff_bound}
  Let $\rv R_1,\dots,\rv R_n$ be independent and identically
  distributed random variables where $\rv R_i$ is uniformly
  distributed on $\set{\pm 1}$. For every $a > 0$,
  $$
  \ProbOp\left[\abs{\sum_{i=1}^n \rv R_i} \ge a \right] \le 2\cdot \exp\left(-\frac{a^2}{2n}\right).
  $$
\end{fact}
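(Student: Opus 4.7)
The plan is to use the standard exponential moment (Chernoff) method. First I would reduce to a one-sided tail bound: by a union bound,
\[
\ProbOp\!\left[\Big|\sum_{i=1}^n \rv R_i\Big| \ge a\right] \le \ProbOp\!\left[\sum_{i=1}^n \rv R_i \ge a\right] + \ProbOp\!\left[\sum_{i=1}^n \rv R_i \le -a\right],
\]
and since $\rv R_i$ is symmetric (has the same distribution as $-\rv R_i$), the two summands are equal. So it suffices to prove $\ProbOp[\sum_i \rv R_i \ge a] \le \exp(-a^2/(2n))$; the desired factor of $2$ then appears automatically.

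Next I would apply Markov's inequality to the exponentiated sum at a free parameter $t > 0$:
\[
\ProbOp\!\left[\sum_{i=1}^n \rv R_i \ge a\right] = \ProbOp\!\left[e^{t\sum_i \rv R_i} \ge e^{ta}\right] \le e^{-ta} \cdot \ExpOp\!\left[e^{t\sum_i \rv R_i}\right].
\]
By independence, $\ExpOp[e^{t\sum_i \rv R_i}] = \prod_{i} \ExpOp[e^{t \rv R_i}] = \cosh(t)^n$. I would then use the standard scalar inequality $\cosh(t) \le e^{t^2/2}$, which follows from comparing power series term by term: $\cosh(t) = \sum_{k\ge 0} \frac{t^{2k}}{(2k)!}$ while $e^{t^2/2} = \sum_{k\ge 0} \frac{t^{2k}}{2^k k!}$, and $(2k)! \ge 2^k k!$ for all $k \ge 0$.

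Combining these two bounds gives
\[
\ProbOp\!\left[\sum_{i=1}^n \rv R_i \ge a\right] \le e^{-ta + nt^2/2}
\]
for every $t > 0$. The final step is to optimize the right-hand side; differentiating the exponent in $t$ yields the minimizer $t^\star = a/n$, and substituting back produces the claimed bound $\exp(-a^2/(2n))$. Honestly, there is no real obstacle here since this is a textbook result; the only item requiring a sentence of justification is the inequality $\cosh(t) \le e^{t^2/2}$, which is handled by the term-by-term series comparison above.
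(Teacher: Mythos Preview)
Your proof is correct and is the standard textbook argument. The paper does not actually prove this statement: it is recorded in the appendix as a ``Fact'' with a citation to \cite{ME17} and no proof is given, so there is nothing to compare against beyond noting that your derivation is exactly the classical exponential-moment proof one would expect behind that citation.
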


\begin{fact}[Hoeffding Bound~\cite{ME17}]\label{fact:Hoeffding_bound}
  Let $\rv R_1,\dots,\rv R_n$ be independent random variables such
  that $\E\left[\rv R_i\right] = \mu$ and $\ProbOp\left[ a \le \rv R_i \le
    b\right] = 1$ for $i \in [n]$. For every $\beta > 0$,
  $$
  \ProbOp\left[\abs{\frac{1}{n} \sum_{i=1}^n \rv R_i - \mu} \ge \beta \right] \le 2 \cdot \exp\left(-\frac{2 \cdot \beta^2 \cdot n}{(a-b)^2}\right).
  $$
\end{fact}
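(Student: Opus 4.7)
The plan is to prove the Hoeffding bound by the standard Chernoff-type exponential moment method, which requires as its main analytic ingredient a bound on the moment generating function of a centered bounded random variable.

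First I would reduce to the centered, one-sided case. Let $\rv S_i = \rv R_i - \mu$, so that $\E[\rv S_i] = 0$ and $\rv S_i \in [a - \mu, b - \mu]$, an interval of the same length $b - a$. By the union bound it suffices to prove
\[
\ProbOp\!\left[\tfrac{1}{n}\sum_{i=1}^n \rv S_i \ge \beta\right] \;\le\; \exp\!\left(-\tfrac{2\beta^2 n}{(b-a)^2}\right),
\]
since the same argument applied to $-\rv S_i$ handles the lower tail and a factor of $2$ accounts for both. I would then apply Markov's inequality to the exponential, obtaining for every $s > 0$
\[
\ProbOp\!\left[\sum_i \rv S_i \ge n\beta\right] \;\le\; e^{-s n \beta}\, \prod_{i=1}^n \E\!\left[e^{s \rv S_i}\right],
\]
where independence is used to factor the expectation.

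The technical heart of the proof is Hoeffding's lemma: if $\rv S$ is a mean-zero random variable supported on an interval $[a', b']$, then $\E[e^{s \rv S}] \le \exp(s^2 (b' - a')^2 / 8)$ for every $s \in \RR$. I would prove this by exploiting convexity of $x \mapsto e^{sx}$ to upper-bound $e^{sx}$ on $[a', b']$ by the chord through $(a', e^{sa'})$ and $(b', e^{sb'})$, taking expectations to get a bound purely in terms of $a'$, $b'$, and $s$, and then examining the resulting function of $s$ via its logarithm. Writing $\psi(s) = \log\bigl( \tfrac{b'}{b'-a'} e^{sa'} - \tfrac{a'}{b'-a'} e^{sb'} \bigr)$, one checks $\psi(0) = \psi'(0) = 0$ and $\psi''(s) \le (b'-a')^2/4$ by an arithmetic–geometric argument on the two-point distribution induced by the exponential tilt; Taylor's theorem then gives $\psi(s) \le s^2(b'-a')^2/8$, which is Hoeffding's lemma. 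This lemma is the step I expect to be the main obstacle, since the convexity argument and the second-derivative bound require a bit of care to set up cleanly.

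With Hoeffding's lemma in hand, each factor $\E[e^{s \rv S_i}]$ is bounded by $\exp(s^2 (b-a)^2 / 8)$, yielding
\[
\ProbOp\!\left[\sum_i \rv S_i \ge n\beta\right] \;\le\; \exp\!\left(-s n \beta + \tfrac{n s^2 (b-a)^2}{8}\right).
\]
Optimizing the right-hand side in $s$ by setting $s = 4\beta/(b-a)^2$ minimizes the exponent to $-2 n \beta^2/(b-a)^2$, giving the desired one-sided tail. Combining with the symmetric lower-tail bound via the union bound produces the two-sided inequality in the statement. The whole argument is routine modulo Hoeffding's lemma, and no structural properties beyond independence and boundedness are used.
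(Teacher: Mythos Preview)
Your argument is correct and is the standard proof of Hoeffding's inequality via the exponential moment method and Hoeffding's lemma. The paper, however, does not give its own proof of this statement: it is recorded in the appendix as a cited fact from \cite{ME17} with no accompanying argument, so there is no paper-proof to compare against.
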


\section{Further Properties of Liftings}

We show that a uniformly random odd function $g \colon \set{\pm 1}^k
\to \set{\pm 1}$ yields a parity lifting w.v.h.p. in $k$. Thus, parity
liftings abound and we are not restricted to $k$-XOR in the framework.
In fact, SOS abstracts the specific combinatorial properties of the
lifting function being able to handle them in a unified way.

\begin{lemma}
  Let $k \in \mathbb{N}^{+}$ be odd. For every
  $p,\beta, \theta > 0$ satisfying $\theta \ge \sqrt{\log(2/\beta)}/\sqrt{pk}$,
  $$
  \ProbOp_{g}\left[\abs{\E_{x \sim \textup{Bern}(p)^{\otimes k}} \left[ g(\chi^{\otimes k}(x)) \right]} \ge \beta\right] \le 2 \cdot  k \cdot \exp\left(-\beta^2 \cdot \binom{k}{\lfloor (1-\theta) p k \rfloor}/ 8 \right),
  $$
  where $g \colon \set{\pm 1}^{k} \to \set{\pm 1}$ is a uniformly
  random odd function and $\chi \colon (\mathbb{F}_2,+) \to (\set{\pm 1},\cdot)$ is
  the non-trivial character.
\end{lemma}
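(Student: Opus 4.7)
The plan is to decompose the bias by Hamming-weight layers, use oddness to extract independent Rademacher randomness inside each layer, apply Hoeffding on the ``typical'' layers, and dispose of the ``atypical'' layers by Chernoff. Write $y = \chi^{\otimes k}(x) \in \{\pm 1\}^k$ and let $|y|_-$ be its number of $-1$ coordinates, so $|y|_- = \hamw(x)$. Set $P_j := \Pr_x[\hamw(x)=j] = \binom{k}{j} p^j(1-p)^{k-j}$ and $B_j := \binom{k}{j}^{-1}\sum_{y:\,|y|_-=j} g(y)$, so that $\E_x g(\chi^{\otimes k}(x)) = \sum_{j=0}^k P_j B_j$. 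Since $k$ is odd, the involution $y\mapsto -y$ has no fixed point, so $\{\pm 1\}^k$ is partitioned into $2^{k-1}$ pairs; oddness of $g$ means $g$ is determined by one independent Rademacher per pair, and $B_{k-j} = -B_j$. Consequently, for each $j < k/2$ the $\binom{k}{j}$ values $\{g(y):|y|_-=j\}$ are i.i.d.\ uniform in $\{\pm 1\}$, and the families for distinct $j < k/2$ are mutually independent.

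Using $B_{k-j}=-B_j$, rewrite the bias as $\sum_j P_j B_j = \sum_{j<k/2}(P_j - P_{k-j}) B_j$. Set $a := \lfloor(1-\theta)pk\rfloor$ and split this sum into a \emph{typical} part indexed by $J := \{j : a \le j < k/2\}$ and an \emph{atypical} part indexed by $j < a$. For each $j \in J$, Hoeffding's inequality applied to the average of $\binom{k}{j}$ i.i.d.\ Rademachers gives
\[
\Pr_g\!\left[|B_j| \ge \beta/2\right] \;\le\; 2\exp\!\left(-\tfrac{\beta^2}{8}\binom{k}{j}\right) \;\le\; 2\exp\!\left(-\tfrac{\beta^2}{8}\binom{k}{a}\right),
\]
where the last step uses monotonicity of $\binom{k}{j}$ in $j$ on $[0,k/2]$ (which requires $(1+\theta)pk \le k/2$; the regime of interest is $p$ bounded away from $1/2$ so this is automatic, and otherwise a symmetric argument handles $j$ near $k/2$). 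A union bound over the at most $k$ indices in $J$ yields
\[
\Pr_g\!\left[\max_{j\in J}|B_j| \ge \beta/2\right] \;\le\; 2k\exp\!\left(-\tfrac{\beta^2}{8}\binom{k}{a}\right).
\]

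On the complement of this bad event, the typical part contributes at most $(\beta/2)\sum_{j<k/2}(P_j + P_{k-j}) = \beta/2$. The atypical part is bounded deterministically by
\[
\Big|\sum_{j<a}(P_j - P_{k-j}) B_j\Big| \;\le\; \sum_{j<a}(P_j + P_{k-j}) \;=\; \Pr_x\!\left[\hamw(x) < a\right] + \Pr_x\!\left[\hamw(x) > k-a\right],
\]
and the assumption $p \le 1/2$ together with $k-a \ge (1+\theta)pk$ lets us upper bound this by $\Pr_x[|\hamw(x)-pk| \ge \theta pk]$. By Chernoff, this tail is at most $2\exp(-\Omega(\theta^2 pk))$, and the hypothesis $\theta \ge \sqrt{\log(2/\beta)/(pk)}$ makes it at most $\beta/2$. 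Combining, $|\sum_j P_j B_j| \le \beta$ except on an event of $g$-probability at most $2k\exp(-\beta^2 \binom{k}{a}/8)$, which is the desired conclusion.

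\textbf{Main obstacle.} The conceptual steps are clean, but the bookkeeping of constants is delicate: the exact form of Chernoff needed to turn $\theta^2 pk \ge \log(2/\beta)$ into an atypical-tail bound of $\beta/2$ forces one to use a sharp (KL-form) Chernoff, and one must verify that the monotonicity of $\binom{k}{j}$ applies uniformly on $J$ (handling the boundary case $(1+\theta)pk$ close to $k/2$ may require folding $j \ge k/2$ into the argument via $B_{k-j} = -B_j$ rather than treating those layers separately). These are routine but need to be executed carefully to obtain exactly the stated constants $8$ and $2k$.
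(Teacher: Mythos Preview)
Your approach is essentially the same as the paper's: restrict to $p\le 1/2$, use a Chernoff/Hoeffding tail on the Hamming weight of $x$ to dispose of atypical layers (contributing at most $\beta/2$), then apply Hoeffding to the Rademacher average $B_j$ on each typical layer and union-bound over the at most $k$ layers. Your explicit use of oddness via the pairing $B_{k-j}=-B_j$ and the rewriting $\sum_j P_j B_j=\sum_{j<k/2}(P_j-P_{k-j})B_j$ is slightly more detailed than the paper's terse ``bias within each layer, then union bound,'' but the skeleton is identical, and your identification of the Chernoff constant as the only delicate point matches the one loose end in the paper's argument as well. One small remark: your monotonicity worry is unnecessary, since your typical set $J=[a,k/2)$ already lies entirely below $k/2$, so $\binom{k}{j}\ge\binom{k}{a}$ is immediate without any condition on $(1+\theta)pk$.
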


\begin{proof}
  It is enough to consider $p \in (0,1/2]$ since the case $p \in [1/2,1)$ can be reduced to the
  current case by taking the complement of the bit strings appearing in this analysis.
  Applying the Hoeffding bound~\cref{fact:Hoeffding_bound} yields
  \begin{align*}
    \E_{x \sim \textup{Bern}(p)^{\otimes k}} \left[ g(x) \right] &= \E_{w \sim \textup{Binom}(k,p)} \left[ g(\chi^{\otimes k}(x)) \one_{w \in [pk\pm C\cdot pk]} \right] \pm 2 \cdot \exp(-C^2)\\
                                                        &= \E_{w \sim \textup{Binom}(k,p)} \left[ g(\chi^{\otimes k}(x)) \one_{w \in [pk\pm C\cdot pk]} \right] \pm \frac{\beta}{2},    
  \end{align*}
  where the last equality follows from choosing $C = \theta 
  \sqrt{p k}$ and the assumption that $\theta \ge
  \sqrt{\log(2/\beta)}/\sqrt{pk}$.
  
  Since $p \le 1/2$, $\ell = \binom{k}{\lfloor (1-\theta) \cdot p\cdot k \rfloor}$ is a
  lower bound on the number of binary strings of the Boolean
  $k$-hypercube in a single layer of Hamming weight in the interval
  $[pk\pm C\cdot pk]$. A second application of the Hoeffding
  bound~\cref{fact:Hoeffding_bound} gives that the bias within this
  layer is
  $$
  \ProbOp_{g}\left[ \abs{\E_{x \in \mathbb{F}_2^k \colon \norm{x} = \ell}\left[g(\chi^{\otimes k}(x))\right]}  \ge \beta/2 \right] \le 2 \cdot \exp\left(\beta^2 \cdot \ell/8 \right).
  $$
  By union bound over the layers the result follows.
\end{proof}

\section{Derandomization}

We show how to derandomize the list decoding framework (which amounts
to derandomize~\cref{algo:cover_retrieval}) when the lifting function
is a parity sampler and it satisfies a bounded degree condition
(cf~\cref{lemma:deran_bd_degree}). We observe that this is the setting
of our two concrete instantiations, namely, for HDXs and expander
walks. In the former case, we work with $D$-flat distributions and in
the latter case with walk length and graph degree that are both
functions of $\epsilon$. Roughly speaking, we show that replacing a
random sample by the majority works as long as parity sampling is
sufficiently strong.

\begin{lemma}[Majority Word]\label{lemma:derand_maj}
  Let $z^* \in \set{\pm 1}^{X(1)}$ where $X(1)=[n]$. Suppose that $y^* = \lift_{X(k)}(z^*)$
  satisfy
  $$
  \E_{z \sim \set{\rv Z^{\otimes}\vert_{(S,\sigma)}}}\left[\left\vert \E_{\ess \sim \Pi_k} y_{\ess}^* \cdot \lift(z)_{\ess} \right\vert \right] \ge 3 \cdot \epsilon,
  $$
  and
  \begin{equation}\label{lemma:deran_bd_degree}
    \ProbOp_{\ess \sim \Pi_k}\left[\ess \ni i\right] \le \frac{g(\epsilon)}{n}.
  \end{equation}
  If also $\lift_{X(k)}$ is a $(1-\xi,2\epsilon)$-parity sampler for some $\xi \in (0,1)$, $\xi \ge 2 \exp{\left(- C \cdot \epsilon^2 \cdot g(\epsilon)^2 \cdot n \right)} = o_n(1)$
  where $C > 0$ is an universal constant and $\xi \ge 1/(n (1- \xi - o_n(1)))$, then
  $$
  \abs{\E_{i \in [n]} z^*_i \cdot z_i'} \ge 1 - 7 \sqrt{\xi},
  $$
  where $z' \in \set{\pm 1}^n$ is the majority defined as $z_i' = \argmax_{b\in \set{\pm 1}} \Pr_{\set{\rv Z^{\otimes}\vert_{(S,\sigma)}}}[\rv Z_i=b]$.
\end{lemma}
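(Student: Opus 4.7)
The plan is to convert the hypothesis, which averages a correlation over random samples $z \sim \set{\rv Z^{\otimes}\vert_{(S,\sigma)}}$, into a single statement about the marginal vector $p = (\E[\rv Z_i])_{i \in [n]}$, and then argue that the sign-rounding $z'_i = \sgn(p_i)$ agrees with $z^*$ on all but a $O(\sqrt{\xi})$ fraction of coordinates. Three ingredients are combined: the first-moment bound on $|C(z)| \coloneqq |\E_{\ess \sim \Pi_k} y^*_\ess \lift(z)_\ess|$; the parity sampling assumption to translate lifted correlation into ground-level correlation; and Hoeffding's inequality applied to the product distribution to pass from a ``good'' individual sample $z$ to its mean $p$.

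First I would apply \cref{claim:first_moment_bound} to $|C(z)| \in [0,1]$. Since $\E_z |C(z)| \ge 3\epsilon$, an easy computation gives $\Pr_z[|C(z)| \ge 2\epsilon] \ge \epsilon/(1-2\epsilon) \ge \epsilon$. For the direct sum case (which is the setting of the instantiations), the multiplicativity identity $\lift(z^*)_\ess \cdot \lift(z)_\ess = \lift(z^* \cdot z)_\ess$ for coordinate-wise product $z^* \cdot z \in \set{\pm 1}^n$ yields $C(z) = \bias(\lift(z^* \cdot z))$. The contrapositive of the $(1-\xi, 2\epsilon)$-parity sampler assumption then forces $\bias(z^* \cdot z) > 1-\xi$ whenever $|C(z)| > 2\epsilon$, so
\[
  \Pr_z\!\left[\, \left\lvert \tfrac{1}{n}\textstyle\sum_i z^*_i z_i \right\rvert > 1-\xi \,\right] \ge \epsilon.
\]

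Second, I would use Hoeffding's inequality (\cref{fact:Hoeffding_bound}) on $F(z) \coloneqq \tfrac{1}{n}\sum_i z^*_i z_i$: under the product distribution, the $z^*_i z_i$ are independent $\set{\pm 1}$ variables with mean $z^*_i p_i$, hence $\Pr_z[|F(z) - \bar p| \ge t] \le 2\exp(-nt^2/2)$ where $\bar p \coloneqq \tfrac{1}{n}\sum_i z^*_i p_i$. Choosing $t$ so that this deviation probability is at most $\epsilon/2$ and intersecting with the previous event, we conclude that $|\bar p| > 1 - \xi - t$. After flipping $z^*$ if needed, we may assume $\bar p > 1 - \xi - t$.

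The final step rounds $p$. Let $B = \set{i : z^*_i z'_i = -1} = \set{i : z^*_i p_i \le 0}$. Splitting $\bar p$ across $B$ and $B^c$,
\[
  \bar p \;=\; \frac{1}{n}\sum_{i \in B^c} |p_i| \;-\; \frac{1}{n}\sum_{i \in B} |p_i| \;\le\; \frac{|B^c|}{n},
\]
so $|B|/n \le \xi + t$, whence $\E_i z^*_i z'_i = 1 - 2|B|/n \ge 1 - 2(\xi + t)$. The two parameter inequalities $\xi \ge 2\exp(-C\epsilon^2 g(\epsilon)^2 n)$ and $\xi \ge 1/(n(1-\xi-o_n(1)))$ are calibrated precisely so that the Hoeffding deviation $t$ can be taken below $\tfrac{5}{2}\sqrt{\xi}$, giving the claimed bound $\E_i z^*_i z'_i \ge 1 - 7\sqrt{\xi}$. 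The main obstacle is the parameter bookkeeping: matching the Hoeffding slack $t \sim \sqrt{\log(1/\epsilon)/n}$ against the target $O(\sqrt{\xi})$, and arguing that the bounded-degree hypothesis enters (if at all) only to guarantee that $\Pi_k$ is spread enough that the rounding from $p$ to $z'$ does not create pathological bit flips; the parity sampling step itself uses multiplicativity of $\lift$, which is automatic for direct sum but would need a substitute for genuinely non-multiplicative lifts.
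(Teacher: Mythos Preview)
Your argument is sound in outline but follows a genuinely different route from the paper's proof, and the parameter hypotheses in the lemma statement are tailored to the paper's route rather than yours.

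The paper applies concentration \emph{before} parity sampling: it shows that $f(z)=\lvert C(z)\rvert$ is $O(g(\epsilon)/n)$-Lipschitz in $\ell_1$ (this is exactly where the bounded-degree hypothesis~\eqref{lemma:deran_bd_degree} is used), invokes a McDiarmid-type inequality to get $\Pr_z[f(z)\le\epsilon]\le\exp(-\Theta(\epsilon^2 g(\epsilon)^{\pm 2} n))$, and only then applies parity sampling to the overwhelming event, obtaining $\E_z[\lvert\langle z^*,z\rangle\rvert]\ge 1-\xi-o_n(1)$. From there it squares via Jensen, expands using independence to reach $(\E_i z^*_i p_i)^2+O(1/n)$, and finishes with a Markov step on $1-\lvert p_i\rvert$ before rounding. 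Your approach reverses the order: first-moment bound, then parity sampling on the $\ge\epsilon$-probability event, then ordinary Hoeffding on $F(z)=\tfrac{1}{n}\sum_i z^*_i z_i$, and finally a direct counting argument on the bad set $B$. This is more elementary and, as you note, does not use the bounded-degree hypothesis at all.

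The cost of your route is precisely in the bookkeeping you flag. Your Hoeffding deviation is $t\asymp\sqrt{\log(1/\epsilon)/n}$, and to absorb it into $O(\sqrt{\xi})$ you need roughly $n\gtrsim\log(1/\epsilon)/\xi$. The lemma only supplies $\xi\ge 1/(n(1-\xi-o_n(1)))$, i.e.\ $n\gtrsim 1/\xi$, together with the condition $\xi\ge 2\exp(-C\epsilon^2 g(\epsilon)^2 n)$, which is calibrated to the paper's McDiarmid step (and involves $g(\epsilon)$ precisely because the paper uses it). So from the stated hypotheses your bound comes out as $1-2\xi-O(\sqrt{\xi\log(1/\epsilon)})$ rather than $1-7\sqrt{\xi}$; you would need a mildly stronger assumption on $n$ to match the constant. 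Both proofs implicitly require the multiplicativity $\lift(z)\cdot\lift(z^*)=\lift(z\cdot z^*)$ to invoke parity sampling; you make this explicit, while the paper leaves it tacit.
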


\begin{proof}
  Define $f(z) \coloneqq \abs{\E_{\ess \sim \Pi_k} y_{\ess}^* \cdot
  \lift(z)_{\ess}}$. Then, using~\cref{lemma:deran_bd_degree} we claim
  that $f(z)$ is $O(g(\epsilon)/n)$-Lipschitz with respect to $\ell_1$
  since
  $$
  \abs{f(z) - f(\tilde{z})} \le \sum_{i \in X(1)} 2 \cdot \ProbOp_{\ess \sim \Pi_k}\left[\ess \ni i\right] \cdot \abs{z_i-\tilde{z}_i} \le O\left(\frac{g(\epsilon)}{n}\right) \cdot \norm{z-\tilde{z}}_1.
  $$

  Since the underlying distribution of $\set{\rv
    Z^{\otimes}\vert_{(S,\sigma)}}$ is a product distribution on $\set{\pm 1}^n$ and $f$
  is $O(g(\epsilon)/n)$-Lipschitz, applying Hoeffding's inequality yields
  $$
  \ProbOp_{z \sim \set{\rv Z^{\otimes}\vert_{(S,\sigma)}}}\left[f(z) \le \epsilon \right] \le \ProbOp_{z \sim \set{\rv Z^{\otimes}\vert_{(S,\sigma)}}}\left[\left\vert f(z) - \E_{z \sim \set{\rv Z^{\otimes}\vert_{(S,\sigma)}}} f(z)\right\vert \ge \epsilon \right] \le \exp{\left(-\Theta(g'(\epsilon) \cdot n )\right)},
  $$
  where $g'(\epsilon) = \epsilon^2 \cdot g(\epsilon)^2$.

  Using the assumption that $\lift$ is a $(1-\xi,2\epsilon)$-parity sampler, we obtain
  $$
  \E_{z \sim \set{\rv Z^{\otimes}\vert_{(S,\sigma)}}}\left[\abs{\ip{z^*}{z}} \right] \ge 1-\xi - 2 \exp{\left(-\Theta(g'(\epsilon) \cdot n )\right)}.
  $$

  By Jensen's inequality, 
  $$
  \E_{z \sim \set{\rv Z^{\otimes}\vert_{(S,\sigma)}}}\left[\ip{z^*}{z}^2 \right] \ge \left( \E_{z \sim \set{\rv Z^{\otimes}\vert_{(S,\sigma)}}}\left[\abs{\ip{z^*}{z}} \right]\right)^2 \ge (1- \xi - 2 \exp{\left(-\Theta(g'(\epsilon) \cdot n )\right)})^2.
  $$
  Using indepdence, we get
  $$
  \E_{z \sim \set{\rv Z^{\otimes}\vert_{(S,\sigma)}}}\left[ \E_{i,j \in [n]} z^*_i z_i z_j z_j^*\right] \le \E_{i,j \in [n]} z^*_i \E\left[ z_i \right] \E\left[ z_j \right] z_j^* + \frac{1}{n} = \left(\E_{i \in [n]} z^*_i \E\left[ z_i \right]\right)^2 + \frac{1}{n}.
  $$
  Thus, in particular $\abs{\E_{i \in [n]} z^*_i \E\left[ z_i \right]} \ge (1-\xi - o_n(1)) - 1/((1 - \xi - o_n(1))n) \ge 1-3 \xi$ which implies
  \begin{align*}
    1- 3 \xi  & \le \abs{\E_{i \in [n]} z^*_i\left(\Pr_{\vert_{(S,\sigma)}}[\rv Z_i=1] - \Pr_{\vert_{(S,\sigma)}}[\rv Z_i=-1]\right)} \\
    & \le \E_{i \in [n]} \abs{\Pr_{\vert_{(S,\sigma)}}[\rv Z_i=1]-\Pr_{\vert_{(S,\sigma)}}[\rv Z_i=-1]}.
  \end{align*}
  Since
  $$
  \E_{i \in [n]} 1-\abs{\Pr_{\vert_{(S,\sigma)}}[\rv Z_i=1]-\Pr_{\vert_{(S,\sigma)}}[\rv Z_i=-1]} \le 3 \xi,
  $$
  Markov's inequality yields
  $$
  \ProbOp_{i \in [n]}\left[ 1 - \sqrt{\xi} \ge \abs{\Pr_{\vert_{(S,\sigma)}}[\rv Z_i=1]-\Pr_{\vert_{(S,\sigma)}}[\rv Z_i=-1]}  \right] \le 3 \sqrt{\xi}.
  $$

  Now, let $z' \in \set{\pm 1}^n$ be as in the statement of the lemma. Then,
  $$
  1- 3 \xi - 4 \sqrt{\xi}\le \abs{\E_{i \in [n]} z^*_i \cdot z_i'}.
  $$
  Hence, we conclude that $\abs{\E_{i \in [n]} z^*_i \cdot z_i'} \ge 1 - 7 \sqrt{\xi}$.
\end{proof}

\begin{remark}
  The parity sampling requierment might be slightly stronger with this
  derandomized version but it does not change the asymptotic nature of
  our results. More precisely, we are only asking for
  $(1-\xi,2\epsilon)$-parity sampler for a different constant value $\xi
  > 0$.
\end{remark}

\end{document}